\documentclass[11pt,twoside,a4paper]{article}


\usepackage{fullpage}
\usepackage{enumerate}
\usepackage{subfigure}
\usepackage{enumitem}
\usepackage{longtable,geometry}
\geometry{dvips,a4paper,margin=1.0in}
   
\usepackage[utf8]{inputenc}
\usepackage[T1]{fontenc}
\usepackage{color}
\usepackage{cancel}   
\usepackage[normalem]{ulem}

\usepackage[english]{babel}

\usepackage{enumitem}
\usepackage{amsmath,amsfonts,amssymb,graphicx,bbm,stmaryrd}

\usepackage{amsthm}

\theoremstyle{plain}

\newtheorem{theorem}{Theorem}[section]
\newtheorem{proposition}[theorem]{Proposition}
\newtheorem{lemma}[theorem]{Lemma}

\newtheorem{corollary}[theorem]{Corollary}
\newtheorem{conjecture}[theorem]{Conjecture}
\newtheorem{question}[theorem]{Question}

\theoremstyle{remark}
\newtheorem{remark}{Remark}
\theoremstyle{definition}
\newtheorem{definition}[theorem]{Definition}

\numberwithin{equation}{section}

\def\Pf{{\bf Pf}}
\def\be{\begin{equation}}
\def\ee{\end{equation}}
%
\usepackage{mathtools}

\usepackage{dsfont}

\newcommand{\lr}[2][\quad]{\xleftrightarrow{\:\:}#2 \text{ in $#1$}}

\newcommand{\calC}{\mathcal{C}}

\newcommand{\calE}{\mathcal{E}}

\newcommand{\calK}{\mathcal{K}}
\newcommand{\calL}{\mathcal{L}}
\newcommand{\calM}{\mathcal{M}}
\newcommand{\calN}{\mathcal{N}}

\newcommand{\calP}{\mathcal{P}}

\newcommand{\calR}{\mathcal{R}}
\newcommand{\calS}{\mathcal{S}}



\newcommand{\bbH}{\mathbb{H}}
\newcommand{\bbI}{\mathbb{I}}

\newcommand{\bbP}{\mathbb{P}}

\newcommand{\bbR}{\mathbb{R}}
\newcommand{\bbS}{\mathbb{S}}

\newcommand{\bbZ}{\mathbb{Z}}

\newcommand{\ep}{\varepsilon}

\newcommand{\Z}{\mathbb{Z}}

\newcommand{\g}{\gamma}

\newcommand{\G}{G}
\newcommand{\E}{E}
\newcommand{\V}{V}

\newcommand{\N}{\mathbb{N}}
\newcommand{\R}{\mathbb{R}}



\DeclareMathOperator{\sgn}{sgn}

\newcommand{\n}{{\bf n}}
\newcommand{\m}{{\bf m}}

\renewcommand{\cal}{\mathcal}
\renewcommand{\g}{{\mathbf{\Gamma}}}

\usepackage{constants}
\newconstantfamily{small}{symbol=c}

\usepackage{verbatim}


\usepackage{tabularx}


\usepackage{hyperref}

 \title{Emergent Planarity in two-dimensional Ising Models with finite-range Interactions} 
 
 \author{Michael Aizenman \and Hugo Duminil-Copin \and Vincent Tassion \and  Simone Warzel}
\date{\today}

\begin{document}

\maketitle

\begin{abstract}  
The known Pfaffian structure of the boundary spin correlations, and more generally  order-disorder correlation functions, is given a new  explanation through simple topological considerations within the model's random current representation.  This perspective is then employed in the proof that the Pfaffian structure of boundary correlations emerges asymptotically at criticality in Ising models on  $\mathbb Z^2$ with  finite-range interactions.   The analysis is enabled by new results on the stochastic geometry of the corresponding random currents.    The proven statement establishes an aspect of universality, seen here in the emergence  of  fermionic structures in two dimensions  beyond the solvable cases.
\end{abstract}

\section{Introduction}

\subsection{Outline} 

Among the fascinating discoveries about the two-dimensional Ising model is the existence of non-interacting fermionic excitations, which play a fundamental role in the model's solvability.   A notable manifestation of this phenomenon is the fact that in the planar case the partition function can be expressed as a determinant, and that some of the model's correlation functions are Pfaffians of the corresponding two-point functions.   In particular, this applies to the multi-spin correlation functions of spins located along the boundaries of Ising models with arbitrary nearest-neighbor pair interactions on planar graphs.

The Ising model
is perhaps the most studied example of a system undergoing a phase transition.  It has provided  the testing ground for a large variety of techniques. 
These include exact methods, a number of which have been developed following Onsager's remarkable announcement~\cite{Ons44} of the exact solution of the model on $\Z^2$.

The presence of fermionic structures was noted soon after  the appearance of  Onsager's solution.   
The point was stressed early on in the algebraic analysis of Kaufman~\cite{Kau49},  and then in the reformulation by  Schultz-Mattis-Lieb~\cite{SchMatLie64} of the transfer matrix in terms of free fermions.  Hints of fermionic structure can also be seen in the reductions of the planar Ising model's partition function to Pfaffians, by Hurst-Green \cite{HurGre60}, Kasteleyn~\cite{Kas63} and Fisher~\cite{Fis66}.  
An insightful geometric explanation was added by Kadanoff and Ceva~\cite{Kad66a,KadCev71} who called attention to the fermionic spinor structure seen in the model's order-disorder variables.  The understanding of the theory progressed rapidly through the thorough analysis which was enabled by these exact methods~(cf.~\cite{Bax78,McCWu73} and references therein).   
More recent  results on the subject, which has continued to draw attention,  are mentioned below where directly relevant.

Groeneveld-Boel-Kasteleyn~\cite{GBK78} first pointed out  that for planar  Ising systems, the correlation functions for spins located along the boundaries are given at any temperature by Pfaffians of  the corresponding   two point function, i.e.~satisfy a \emph{fermionic} version of the Wick rule.  
Our discussion starts with the explanation of this property in terms of simple  arguments which are enabled by the model's random current representation~\cite{Aiz82,GriHurShe70}.  Curiously, our derivation of the fermionic rules for planar models  employs the same identities as were previously used to establish that in high dimensions, the scaling limits of  critical correlations of ferromagnetic models obey the \emph{bosonic} Wick rule.

This analysis also shows that the proven statement  does not hold in the strict sense beyond the planar case.   In particular,   it does not extend to non-planar interactions, e.g.~two-dimensional models with pair interactions reaching beyond the range of nearest-neighbors.   
Nevertheless, for such extensions, we then prove that  Pfaffian relations emerge at the models' critical points.   
An  intuitive  explanation of this fact is suggested  by the expected picture of fractality in the stochastic geometry underlying the critical models' correlation functions (in the spirit of the extension of the RSW theory to the Ising model~\cite{CheDumHon13,DumHonNol11},  and the fractality criteria of~\cite{AizBur99}).
In the proof, use is made of the coupling between three different graphical representations of the Ising model (the high-temperature expansion, the random current representation, and the Fortuin-Kasteleyn percolation), which enables one to combine their different convenient features.

The asymptotic  emergence  of  the Pfaffian structure in two dimensions, proven here beyond the solvable nearest-neighbor case, is consistent with the expected picture of universality in critical phenomena.   Universality results were previously obtained for small enough perturbations of the nearest-neighbor model on the square lattice in \cite{GiuGreMas12,PinSpe}. 
In comparison, we do not  discuss here the models' critical exponents, limiting the  analysis to the correlations' multi-particle structure.  However, 
the results are proven for all finite-range models with no restriction on the relative strength of the interaction terms.

\subsection{The Ising model -- definition and notation}\label{sec:def}

In the following discussion, the symbol $\G = (\V(\G), \E(\G)) $ denotes  a finite graph with vertex-set $\V(\G)$  and edge-set  $\E(\G) $.
We focus  on finite graphs as the theory's extension  to infinite graphs does not require more than known arguments (cf.~\cite{AizDumSid15,Geo11}). A {\em planar} graph is a graph embedded in the plane $\R^2$  in such a way that its edges, depicted by bounded simple arcs, intersect  only at their endpoints.  The {\em faces} of the graph are the connected components of the plane minus the edges.

A configuration of Ising spin variables on a graph $\G$ is a binary valued function $\sigma : \V(\G) \mapsto \{-1,1\}$. Following standard notation, we write $\sigma_x=\sigma(x)$ for the {\em spin} at $x$. The Hamiltonian of the model is the function, defined on the configurations $\sigma\in\{\pm1\}^{V(G)}$, given by \be {\bf H}_\G(\sigma) := - \sum_{\{x,y\} \in \E(\G)} J_{x,y} \sigma_x \sigma_y - h \sum_{x \in \V(\G)} \sigma_x. \ee Its parameters $J:=\{J_{x,y}\}_{\{x,y\}\in E(\G)}$ are referred to as the {\em coupling} coefficients, and $h$ is the {\em magnetic field parameter}.   
Throughout the discussion, we shall refer to the set of edges with  non-zero coupling constants as the edge-set of the graph associated with the model (or with the couplings $J$).
Our discussion will focus on the case $h=0$, which is where the model's phase transition lies.   

The corresponding Gibbs equilibrium state at inverse temperature $\beta \ge0$ is the probability measure  such that
$$ \langle f\rangle_{\G,\beta} :=  \frac{\sum_{\sigma\in\{\pm1\}^{\V(\G)}}f(\sigma)\exp(-\beta {\bf H}_G(\sigma)) }{ 2^{|V(\G)|} \, \, Z(\G,\beta)}   $$
for any $f:\{\pm1\}^{\V(\G)}\longrightarrow \mathbb C$,  where the normalizing factor
$$Z(\G,\beta) \  := \ 2^{-|V(\G)|} \sum_{\sigma \in \{\pm1\}^{\V(\G)}} \exp(-\beta {\bf H}_G(\sigma))$$ is referred to as  the {\em partition function}.

\begin{figure}[h]
\begin{center}
\includegraphics [width=1.00 \textwidth]{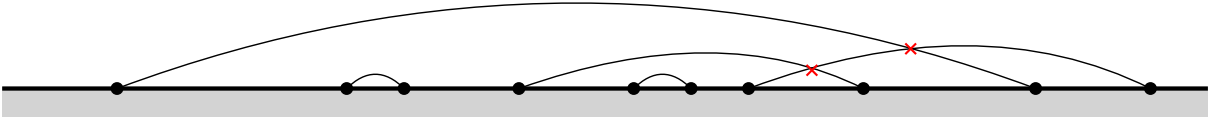}
\caption{An example of a pairing $\pi$ (with $\sgn(\pi)=1$) of points lying along the boundary of a planar region.  The pairing's signature coincides with the parity of the number $K$ of mutual crossings, among the different lines linking the paired sites within the planar domain; i.e.~$\sgn (\pi)$ is equal to  $(-1)^K$.   
}  \label{fig:boundary}
\end{center}
\end{figure}
A guiding example will be the graph whose vertex-set is $ \Z^2$ and the edge-set  consists of  pairs of vertices at Euclidean distance one.  
A \emph{finite-range} extension of the model is  defined by any extensions of $J_{x,y}$ to a larger edge-set, yet one for which $J_{x,y}=0$ for pairs with 
$\|x-y\|>R$.  Such extensions are generically no longer planar, though in many respects still two-dimensional.

 For the planar case, the model's partition function and correlation functions have been solved exactly, but no exact solutions  were found beyond planarity.  Nevertheless, it is generally expected that the model's critical behavior, in particular critical exponents, will be the same among the set of coupling constants  which are ferromagnetic ($J_{x,y} \geq 0$), translation invariant, and transitive (i.e.~with respect to which the graph does not decompose into two uncoupled components).   Partial universality results of this nature have been obtained for weak enough extensions of the nearest-neighbor model on $\Z^2$~\cite{GiuGreMas12}.

\subsection{Pfaffian structure of the boundary spin correlation in 2D Ising models}

Onsager's exact solution of the model on $\Z^2$ \cite{Ons44} utilizes non-commutative algebraic structures whose nature has been successively explored and further clarified in the rich collection of works that followed~\cite{CheSmi12,KadCev71,Kau49,SchMatLie64,McCWu73}.    In particular, Schultz-Mattis-Lieb \cite{SchMatLie64} represented the model's transfer matrix in terms of non-interacting fermions,   and  Groeneveld-Boel-Kasteleyn~\cite{GBK78} noted and proved that in any planar version of the model, the $n$-point correlation functions of spins located along a common boundary line are given by a Pfaffian involving the $2$-point function only.  

For a more explicit statement, let us recall that the Pfaffian of an upper triangular array $A=[A_{i,j}]_{1\le i<j\le 2n}$ is defined  as
\be  \label{def:Pf}
\Pf_n(A)  \ := \
\sum_{\pi \in \Pi_{n}}  \sgn(\pi)  A_{\pi(1),\pi(2)}
\cdots  A_{\pi(2n-1),\pi(2n)} \, ,
\ee
where $\Pi_{n}$ is the collection of pairings of $\{1,\dots,2n\}$ and $\sgn(\pi) $ the pairing's signature.
A 
{\em pairing} of $\{1,\dots,2n\}$ is a permutation $\pi $ such that
$\pi(2j-1)<\pi(2j)$ for every $j\in\{1,\dots,n\}$ and $\pi(2j-1)<\pi(2j+1)$ for every $j\in\{1,\dots,n-1\}$. Pfaffians appeared very early in the exact solution of the free energy of two-dimensional Ising models \cite{Fis66,HurGre60,Kas63} through the   relation $\Pf_n(A)^2  = \text{det}( A)$,
where $ A $ is the antisymmetric matrix whose entries   for $j<k$ are given by the entries $  A_{j,k} $.

The above quoted result of Groeneveld-Boel-Kasteleyn~\cite{GBK78}  reads as follows. 

\begin{theorem}[Pfaffian structure for boundary spin correlation functions]\label{thm:pf_boundary}
Fix a planar graph $\G$, arbitrary nearest-neighbor couplings $J$, and $\beta\ge0$.  
Then, for any cyclically ordered $2n$-tuple $(x_1,\dots,x_{2n})$ of sites located along the boundary of a fixed face of $\G$, we have 
\be \label{Pf_boundary}
 \langle \sigma_{x_1}\cdots\sigma_{x_{2n}}\rangle_{\G,\beta} \ = \
\Pf_n \big( \big[\langle \sigma_{x_i}\sigma_{x_j}\rangle_{\G,\beta}\big]_{1\le i<j\le 2n} \big)
\, .
\ee
\end{theorem}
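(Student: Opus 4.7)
The plan is to establish \eqref{Pf_boundary} via the random current representation, reducing the identity to a topological sign-matching that is enabled by planarity. Recall that $\langle \sigma_A\rangle_{\G,\beta} = Z^A/Z^\emptyset$, where
\[
Z^A \ = \ \sum_{n \, : \, \partial n = A} \, \prod_{e \in \E(\G)} \frac{(\beta J_e)^{n_e}}{n_e!},
\]
the sum ranging over currents $n : \E(\G) \to \Z_{\ge 0}$, and $\partial n$ denoting the set of vertices of odd current-degree. Multiplying \eqref{Pf_boundary} by $(Z^\emptyset)^n$, the right-hand side becomes
\[
\sum_{\pi \in \Pi_n} \sgn(\pi) \prod_{j=1}^n Z^{\{x_{\pi(2j-1)}, x_{\pi(2j)}\}},
\]
which, after expanding each factor, is a signed sum over pairings $\pi$ and ordered tuples of currents $(n_1, \ldots, n_n)$ whose source sets form the pairs of $\pi$. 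Since the pairs are disjoint, the superposition $N := \sum_j n_j$ is a current with $\partial N = \{x_1, \ldots, x_{2n}\}$, matching the source set for the left-hand side.

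After interchanging the order of summation, I would rewrite the expression as a sum over $N$ with $\partial N = \{x_1, \ldots, x_{2n}\}$, weighted by the weight of $N$ times a signed combinatorial count of decompositions of $N$ into ordered pair-sourced subcurrents. Each decomposition is encoded by a pairing of the current half-edges at every vertex of $N$; the resulting trajectories consist of $n$ strands, each linking one pair of sources in $\{x_1, \ldots, x_{2n}\}$ within $\G$, together with ancillary loops. The unsigned weight of the loop configurations reconstructs a factor of $(Z^\emptyset)^{n-1}$ after summation, leaving the question of how the signed sum over strand topologies behaves.

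The central step is to show that this signed sum yields precisely the desired left-hand-side current weight. Here I would invoke the elementary planar sign identity (illustrated in Figure~\ref{fig:boundary}): for strands drawn in the planar domain between $2n$ cyclically ordered boundary sources, $\sgn(\pi) = (-1)^K$ where $K$ counts the mutual crossings of the strands. Given this, $\sgn(\pi) \cdot (-1)^K = 1$ for each strand topology, so the signs collapse and the remaining unsigned count reconstructs the weight of $N$ in $Z^{\{x_1, \ldots, x_{2n}\}}$. Dividing by $(Z^\emptyset)^n$ then gives \eqref{Pf_boundary}. The principal obstacle is the clean bookkeeping that ties the combinatorial count of decompositions (weighted by the factorials in the current weights) to the topological picture of strands-plus-loops, and the verification that planarity forces the sign-matching — without planarity, strands can cross in topologically inequivalent ways and the cancellation fails, which is precisely why the theorem does not extend to non-planar finite-range models and motivates the paper's main results.
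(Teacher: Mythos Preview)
Your proposal has a genuine gap at its central step. Once you group by the total current $N$ with $\partial N=X$, what you need is a per-$N$ identity equating the signed count, over pairings $\pi$, of ordered decompositions of the multigraph $\mathcal{N}$ into $n$ sub-multigraphs with the $\pi$-pair source sets, to the count corresponding to $Z^X(Z^\emptyset)^{n-1}$. You claim this follows by ``pairing the current half-edges at every vertex of $N$'' to obtain $n$ strands plus loops. But a decomposition of $\mathcal{N}$ into $n$ subcurrents is an \emph{edge $n$-coloring} with per-color parity constraints at each vertex, whereas a \emph{half-edge pairing} is an Eulerian partition of $\mathcal{N}$; these are different combinatorial objects with no weight-preserving correspondence between them. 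Each color class $\mathcal{N}_j$ is itself a full two-source current (not one strand) and may carry its own loops, while conversely a half-edge pairing does not induce an ordered $n$-coloring. Consequently the claim that the loops ``reconstruct $(Z^\emptyset)^{n-1}$ after summation'' is not meaningful --- the loops lie inside the fixed $N$ and are not summed freely --- and the crossing identity $\sgn(\pi)=(-1)^K$, which is a statement about \emph{chords} joining cyclically ordered boundary points, has nothing to attach to when the data is an edge-coloring rather than a system of arcs. The ``principal obstacle'' you flag is not a bookkeeping chore; as written, it conflates two inequivalent decompositions.

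The paper's proof is structurally different and sidesteps all of this. It works with a \emph{two}-current system and the switching lemma (Corollary~\ref{cor:switch}) to obtain, on any graph,
\[
\langle\sigma_{x_1}\sigma_{x_\ell}\rangle\,\Big\langle\prod_{j\ne1,\ell}\sigma_{x_j}\Big\rangle
\;=\;\langle\sigma_{x_1}\cdots\sigma_{x_{2n}}\rangle\cdot\mathbf{P}^{X,\emptyset}\big[x_1\stackrel{\n_1+\n_2}{\longleftrightarrow}x_\ell\big].
\]
Summing over $\ell$ with sign $(-1)^\ell$ and invoking the Pfaffian row-expansion $\Pf_n(A)=\sum_\ell(-1)^\ell A_{1,\ell}\Pf_{n-1}([A]_{1,\ell})$ reduces \eqref{Pf_boundary}, by induction on $n$, to the pointwise vanishing of $\sum_{\ell=1}^{2n}(-1)^{\ell+1}\,\mathbb{I}[x_1\leftrightarrow x_\ell]$. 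The planarity input is then a single elementary observation: among the sources in the $(\n_1{+}\n_2)$-cluster of $x_1$, cyclically consecutive ones have indices of opposite parity (the sources lying between them are trapped in other clusters, each containing an even number of sources), so the alternating sum vanishes identically. No strand picture, no crossing counts --- just a parity argument on one cluster.
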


The  relation \eqref{Pf_boundary} can be viewed as  the fermionic counterpart of the Wick rule of the  (bosonic)  correlations of Gaussian fields.   The fermionic denomination comes from the fact that Pfaffians are a familiar feature of  the vacuum expectation values of products of Majorana spinors, and of the thermal equilibrium states  of systems of {\em non-interacting} fermions.

The above statement is usually made in reference to spins located along  the outer boundary of the graph, listed as  $x_1,\dots,x_{2n}$ in the boundary's cyclic order.   
However,  all faces of the graph are equivalent in this respect,  as can be seen
 by properly applying an inversion of the plane onto itself which induces a graph isomorphism, as in the example depicted in Fig.~\ref{fig:boundary2}.

\begin{figure}[h]
\begin{center}
\includegraphics [width = 0.450 \textwidth]{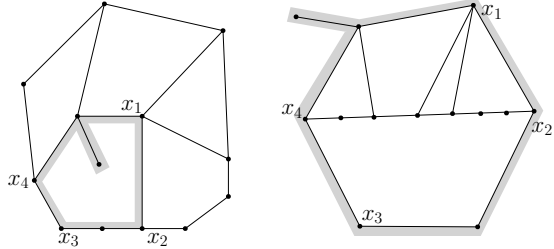}
\caption{A pair of isomorphic planar graphs  related through an inversion mapping under which an inner face (on the left), whose boundary is marked in gray, is turned into an outer face (on the right). }  \label{fig:boundary2}
\end{center}
\end{figure}

The main result presented in this article is the following extension of Theorem~\ref{thm:pf_boundary} to finite-range models  on the half space  $\bbH:=\bbZ\times\Z_+$. 
In the non-planar case the Pfaffian relations no longer hold in the strict sense (a fact which was noted by \cite{GBK78}, and is also easily seen from the argument given below). 
However, we prove that  the Pfaffian  structure of correlations re-emerges  at the critical points,  as an asymptotic relation at large spin separations.

\begin{theorem}\label{thm:Pf_finite_range}
Let $J$ be a set of coupling constants for an Ising model over the  upper half-plane $\bbH$ which are:  
\begin{itemize}[noitemsep]
\item[(i)] ferromagnetic, i.e.~that $J_{x,y}\ge0$ for every $ x,y\in\mathbb \bbH$, 

\item[(ii)]  finite-range  and such that the associated graph is connected, 

\item[(iii)]  translation invariant, i.e.~that $J_{x,y}=J(x-y)$, and 
 
\item[(iv)] reflection invariant: $J_{0,(a,b)}=J_{0,(-a,b)}=J_{0,(a,-b)}=J_{0,(b,a)}$ for every $a,b\in \bbZ$.   \end{itemize}
  Then, for any collection of boundary points  $x_1=(k_1,0),\dots,x_{2n}=(k_{2n},0)$  satisfying $k_1<k_2<\dots<k_{2n}$, we have
 \be \label{Pf_cor}
 \langle \sigma_{x_1}\cdots\sigma_{x_{2n}}\rangle_{\mathbb H,\beta_c} \ = \
\Pf_n \big( \big[\langle \sigma_{x_i}\sigma_{x_j}\rangle_{\mathbb H,\beta_c}\big]_{1\le i<j\le 2n} \big)\big[1 +o(1)\big]
\, ,
\ee
where $\beta_c$ is the critical inverse-temperature of the model, and $o(1)$ is a function of the points $x_1,\dots,x_{2n}$   which tends to zero for configuration sequences with  $\min_{1\le i<j\le 2n} \{|x_i-x_j|\} \to \infty$. 
\end{theorem}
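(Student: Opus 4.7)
The plan is to build on the random current proof of Theorem~\ref{thm:pf_boundary}, which will have been developed in an earlier section, and then to quantify the extent to which the topological arguments that force the exact planar Pfaffian identity survive in the finite-range, generically non-planar setting.  The asymptotic identity~\eqref{Pf_cor} should follow by combining the random current representation with the high-temperature (even-subgraph) expansion and the FK-Ising coupling, and by inputting stochastic-geometric estimates of RSW type which, at criticality and in two dimensions, effectively restore an emergent planarity for the dominant cluster geometry.

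The first step is to express both sides of~\eqref{Pf_cor} in terms of random current partition functions with prescribed source sets: the left-hand side equals $Z_{\beta_c}(\{x_1,\dots,x_{2n}\})/Z_{\beta_c}(\emptyset)$, and each term of the Pfaffian is a product of ratios of the form $Z_{\beta_c}(\{x_i,x_j\})/Z_{\beta_c}(\emptyset)$.  Following the proof of Theorem~\ref{thm:pf_boundary}, one reorganises these ratios by repeated applications of the switching lemma, producing an expansion over pairings of the $2n$ sources.  In the planar case, every current with sources lying on the boundary of a single face decomposes into edge-disjoint clusters realising a non-crossing matching, and the Pfaffian signs match this topology term by term.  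In the finite-range case, however, the analogous decomposition also generates matchings in which two source-joining clusters use long-range bonds that cannot be untangled in the plane; these excess terms are the sole obstruction to the exact identity.

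The second step is to estimate the weight of these non-planar configurations.  For each pair $\{i,j\}$, I would introduce an event $\mathsf{NonPlanar}_{ij}$ recording that the cluster joining $x_i$ to $x_j$ uses a long-range edge that jumps over a cluster connecting a different pair of sources.  Via the coupling of random currents with FK-Ising percolation, $\mathsf{NonPlanar}_{ij}$ is dominated by the event that two macroscopic disjoint clusters simultaneously cross a mesoscopic annulus in $\bbH$ of radius comparable to the minimum inter-cluster distance.  Under hypotheses (i)--(iv), the critical FK-Ising measure on $\bbH$ is expected to satisfy reflection-invariant RSW-type estimates, yielding a polynomial upper bound on such multi-arm crossings; in particular $\mathbb{P}[\mathsf{NonPlanar}_{ij}]\to 0$ as $\min_{i\ne j}|x_i-x_j|\to\infty$.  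Summing over the finitely many pairings and telescoping the switching identities against the product of two-point functions then produces the claimed $1+o(1)$ multiplicative correction.

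The main obstacle I anticipate lies in securing those stochastic-geometry inputs directly at the level of the random current measure, which does not inherit a convenient planar duality and only a limited monotonicity compared to FK percolation, so that the tripartite coupling with the high-temperature and FK representations becomes essential in order to import crossing estimates.  Reflection invariance (assumption (iv)) is precisely what permits the half-plane RSW machinery, either by reflection-positivity comparisons or by a direct finite-range adaptation of the recent RSW theory for critical two-dimensional Ising measures; translation invariance and connectedness of the associated graph then play their customary roles in transferring the estimates across scales.  Once these inputs are in hand, the topological decomposition behind Theorem~\ref{thm:pf_boundary} lifts to the finite-range setting up to asymptotically vanishing error, completing the proof of~\eqref{Pf_cor}.
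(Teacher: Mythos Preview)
Your overall architecture is right: reduce to a switching-lemma identity of the type \eqref{eq:Q1}, identify the ``avoided intersection'' configurations as the sole obstruction, and show their probability is $o(1)$.  But the mechanism you propose for that last step is not the one the paper uses, and in fact the paper explicitly flags your route as one that ``would require some still unaccomplished technical work.''

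Concretely, you propose to dominate the avoided-intersection event by a multi-arm event (two disjoint macroscopic clusters crossing a common annulus) and bound that via RSW for critical FK-Ising on $\bbH$.  The difficulty is that the relevant event lives in the \emph{duplicated} current measure $\mathbf P^{X,\emptyset}_{\bbH}$ with prescribed sources, and the disjointness you need is in $\n_1+\n_2$, not in a single FK configuration.  Neither positive association nor a clean multi-arm calculus is available at that level, and the coupling with FK is conditioned on $\mathcal F_A$, so importing a polynomial multi-arm bound is far from automatic.  The paper sidesteps this entirely: rather than arguing that the $\n_1$-paths are fractal enough to intersect, it uses the \emph{independent sourceless} current $\n_2$ to manufacture connections.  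The key input is Theorem~\ref{thm:condition} (infinitely many odd circuits in a single sourceless current at $\beta_c$), proved via the three-way coupling and an RSW argument for the FK model.  Given a coarse intersection point $y$ of the two $\n_1$-backbone paths, these odd $\n_2$-circuits around $y$, together with a finite-energy sprinkling of the even part of $\n_2$, link the two paths with probability $1-(1-c)^K$ once $K$ circuits are present.  This is the Step~1/Step~2 mechanism in Section~\ref{sec:preliminary}; it exploits independence of $\n_1$ and $\n_2$ in a way your multi-arm picture does not.

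There is also a substantial technical piece you do not anticipate: one must rule out that all coarse intersections occur within a bounded strip $\bbS_M$ near $\partial\bbH$, where the circuit argument for $\n_2$ degenerates.  This is Proposition~\ref{prop:bulk intersection}, proved via backbone decomposition, the chain rule~\eqref{eq:cuu}, and a finite-range Messager--Miracle-Sol\'e inequality (Appendix~A).  Reflection invariance (iv) enters precisely here, and the paper remarks that this boundary step is the only place the half-plane geometry is genuinely used.  Without an analogue of this step, even a successful multi-arm bound in the bulk would not close the argument.
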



Our discussion focuses on $\beta=\beta_c$, but let it be pointed out that the Pfaffian relation holds asymptotically also off criticality.  However, there it is valid for rather trivial reasons: for  $\beta < \beta_c$ a single leading term dominates, and for $\beta >\beta_c$ all terms tend to a non-zero constant.  Hence, Theorem~\ref{thm:Pf_finite_range} concerns the case of main interest.

\subsection{Stochastic geometry meets two-dimensional  topology} 

We prove Theorem~\ref{thm:Pf_finite_range} using a graph theoretical representation of Ising model's correlation functions which is known as the random current representation and is valid for arbitrary graphs.  
For an intuitive outline of the underpinning of this result, it is helpful to first present an 
elementary proof of  Theorem~\ref{thm:pf_boundary} in which  this 
stochastic geometric representation is combined with an elementary topological argument applying only in the planar situation.

While it is unorthodox to include a proof in the introduction, we believe that it will be useful for the discussion of other results. Readers unfamiliar with random currents may return to the proof after reading Section~\ref{sec:RC}.

For a quick grasp of the essence of the argument, let us start with a known expression for the Ursell $4$-point function of Ising spin systems.    Expressing the correlations in terms of the random current representation (see Section~\ref{sec:RC}), and applying the classical switching principle~(Lemma~\ref{lem:switch}), one obtains 
the following formula,  which is valid for any graph and any choice of labeling of a given set of four points,
\begin{align} 
U_4(x_1,\dots, x_4)  &:= \langle \sigma_{x_1}  \sigma_{x_2}  \sigma_{x_3} \sigma_{x_{4}}\rangle -  
\left[    \langle \sigma_{x_1}  \sigma_{x_2} \rangle \langle \sigma_{x_3} \sigma_{x_{4}}\rangle 
+    \langle \sigma_{x_1}  \sigma_{x_3} \rangle \langle \sigma_{x_2} \sigma_{x_{4}}\rangle 
+ \langle \sigma_{x_1}  \sigma_{x_4} \rangle \langle \sigma_{x_2} \sigma_{x_{3}}\rangle
\right]   \nonumber 
\\  
&\phantom{:} = -  2\,  \langle \sigma_{x_1}  \sigma_{x_3}   \sigma_{x_2} \sigma_{x_4}\rangle  \
 \mathbf{E}^{\{x_1,x_2,x_3,x_4\},\emptyset }_{G,\beta} 
(\mathbb{I}[x_1 \stackrel{\n_1+\n_2}{   \longleftrightarrow} x_j, j=2,3,4])      
\nonumber 
\\
& \phantom{:} = -  2\,  \langle \sigma_{x_1}  \sigma_{x_3} \rangle \langle \sigma_{x_2} \sigma_{x_4}\rangle  \
 \mathbf{E}^{\{x_1,x_3\},\{x_2,x_4\} }_{G,\beta} 
 (\mathbb{I}[x_1 \stackrel{\n_1+\n_2}{   \longleftrightarrow} x_2])\,  , \    \label{U4} 
\end{align} 
where we omitted the common subscript $(\G,\beta)$ on the correlation functions.  In these relations:
\begin{enumerate}[noitemsep,nolistsep]
\item[{\it i.}] The expectation value $\mathbf E^{A,B}_{G,\beta} $ is taken over a duplicated system of configurations of edges of $\E(\G)$, counted with multiplicities $\n_1(x,y)$ and $\n_2(x,y)$ for each $\{x,y\}\in \E(\G)$, with the specified  \emph{source sets} $\partial \n_1 =A$ and $\partial \n_2 =B$ which are
defined through:
\be
\partial \n :=  \{x\in \V(\G) \, :\,  (-1)^{\sum_{\{x,y\}\in E(G)} \n(x,y)} =-1 \} \,.
\ee
\item[{\it ii.}] We denote by $ \mathbb{I} [x_i \stackrel{\n_1+\n_2}{ \longleftrightarrow} x_j]$ the indicator functions of the condition that 
 $x_i $ and $x_j$ are connected by a path of edges $\{x,y\}\in E(G)$ along which $\n_1(x,y)+\n_2(x,y)> 0$.   The opposite condition will be indicated by  
$$ \mathbb{I} [x_i \stackrel{\n_1+\n_2}{ \kern+4.5pt\arrownot\kern-4.5pt\longleftrightarrow} x_j]:=1 -  \mathbb{I} [x_i \stackrel{\n_1+\n_2}{ \longleftrightarrow} x_j].$$ 
\end{enumerate}
The explicit law of $(\n_1,\n_2)$ does not matter for the topological argument which follows.
What matters is  that the multigraph obtained by connecting each pair of neighboring sites by $\n_1(x,y)+\n_2(x,y) $ edges can be represented as a union of paths pairing the sources and a collection of loops.

The relation \eqref{U4} plays a key role in the proof  of asymptotic Gaussianness of the correlation function of the critical models in dimensions~$d>4$ \cite{Aiz82}, in which the intersection probability which appears there vanishes asymptotically.  However, for the present purpose let us note that \eqref{U4} may also be rewritten in the following form, which allows a simple explanation of the Pfaffian structure in cases the opposite is the case:    
\begin{align}
 \langle \sigma_{x_1}  \sigma_{x_2}  \sigma_{x_3} \sigma_{x_{4}}\rangle &-  
\Pf_2 \big( [\langle \sigma_{x_i}\sigma_{x_j}\rangle ]_{1\le i<j\le 4} \big)\nonumber
 \\
&=    \langle \sigma_{x_1}  \sigma_{x_2}  \sigma_{x_3} \sigma_{x_{4}}\rangle   -  
\left[    \langle \sigma_{x_1}  \sigma_{x_2} \rangle \langle \sigma_{x_3} \sigma_{x_{4}}\rangle 
-    \langle \sigma_{x_1}  \sigma_{x_3} \rangle \langle \sigma_{x_2} \sigma_{x_{4}}\rangle 
+ \langle \sigma_{x_1}  \sigma_{x_4} \rangle \langle \sigma_{x_2} \sigma_{x_{3}}\rangle
\right]   \nonumber
\\
&=  2\,  \langle \sigma_{x_1}  \sigma_{x_3} \rangle \langle \sigma_{x_2} \sigma_{x_4}\rangle  \
 \mathbf{E}^{ \{x_1,x_3\},\{x_2,x_4\} }_{G,\beta} 
( 
 \mathbb{I} [x_1 \stackrel{\n_1+\n_2}{ \kern+4.5pt\arrownot\kern-4.5pt\longleftrightarrow} x_2 ])\, .   \label{P4_avoided}
\end{align} 
It readily follows that for $n=2$, the Pfaffian relation \eqref{Pf_cor}  holds  if and only if the two sites $x_2$ and $x_3$ are connected in any contributing realization of the random currents  with $\partial \n_1 = \{x_1,x_3\}$ and  
$\partial \n_2 = \{x_2,x_4\}$.   For elementary topological reasons, this condition is  satisfied in the nearest-neighbor model for cyclically labeled boundary sites since the path pairing $x_1$ to $x_3$ must intersect the one pairing $x_2$ to $x_4$.

The above explanation of \eqref{Pf_cor} also makes it clear that this Pfaffian relation does not hold 
in higher dimensions, and that  it also fails in 2D systems with finite-range interactions for which the couplings $J$ give non-zero weight to  paths which leap over each other without making contact.  We refer to such events as \emph{avoided crossings}.

In the following  proof of Theorem~\ref{thm:pf_boundary}, which concerns the strictly planar case, the above argument is extended to all $n$.  
\begin{proof}[Proof of Theorem~\ref{thm:pf_boundary}]
Define the quantity
\begin{equation}
\label{eq:p}Q (x_1,\dots, x_{2n}):= \langle \sigma_{x_1}\cdots\sigma_{x_{2n}}\rangle_{\G,\beta} - \sum_{\ell=2}^{2n} (-1)^{\ell}  \langle \sigma_{x_1} \sigma_{x_{\ell}}\rangle_{\G,\beta}   \, \,
\big\langle \prod_{\substack{1\le j\le 2n \\ j \notin  \{1,\ell\} }}\sigma_{x_j} \big\rangle_{\G,\beta} .\end{equation}

By arguments similar to those leading to \eqref{P4_avoided} (again, they involve the random current representation and the switching lemma described in the next section),   
one finds 
\be \label{eq:Q1}
Q(x_1,\dots,x_{2n})  =   \langle \sigma_{x_1}\cdots\sigma_{x_{2n}}\rangle_{\G,\beta} \
 \mathbf{E}^{\{x_1,\dots, x_{2n}\},\emptyset }_{G,\beta} \Big(\  \sum_{\ell=1}^{2n} (-1)^{\ell+1}  \,
 \mathbb{I} [x_1 \stackrel{\n_1+\n_2}{ \longleftrightarrow} x_\ell ]\Big)\, , 
\ee
where for convenience all sources were moved to $\n_1$.   

This above relation is valid for any graph, and any labeling of the $2n$-tuple of sites.  
However, in  the case of boundary spins of a planar graph  $\G$, listed in a cyclic order,  for any given configuration $ (\n_1,\n_2)$ with the prescribed source constraints,  the sites $x_\ell$  which are connected to $x_1$ by $\n_1+\n_2$ are of alternating parity    (as depicted in Figure~\ref{fig:boundary}).  
It follows that for \emph{every} contributing configuration $ (\n_1,\n_2)$,
the non-zero terms in the sum form an alternating sequence of $\pm 1$, which starts with $+1$ and ends with $-1$, and hence 
\be \label{eq:main_boundary}
\sum_{\ell=1}^{2n} (-1)^{\ell+1}  \,
 \mathbb{I} [x_1 \stackrel{\n_1+\n_2}{ \longleftrightarrow} x_\ell  ]  =  0 \, . 
\ee
Substituting \eqref{eq:main_boundary} in \eqref{eq:Q1}, one learns that for any cyclically ordered sequence of boundary sites,  
$ 
 Q(x_1,\dots,x_{2n}) = 0, 
$
 or more explicitly:
\begin{align} 
 \langle \sigma_{x_1}\cdots\sigma_{x_{2n}}\rangle_{\G,\beta} &= 
  \sum_{\ell=2}^{2n} (-1)^{\ell}  \langle \sigma_{x_1} \sigma_{x_{\ell}}\rangle_{\G,\beta}  \,
\big\langle \prod_{\substack{1\le j\le 2n \\ j \notin  \{1,\ell\} }}\sigma_{x_j} \big\rangle_{\G,\beta} \, .\end{align}
For  $n=2$, the above is exactly the Pfaffian relation~\eqref{Pf_boundary} for the  four-point function.  
By induction, this relation extends to all $n\geq 2$ since the $n$-level  Pfaffian can be determined iteratively through the general relation: 
\be \label{eq:PFcrit}
\Pf_n( A) = \  \sum_{\ell=2}^{2n} (-1)^{\ell}  A_{1,\ell}\, \Pf_{n-1}([A]_{1,\ell})\,.
\ee
\end{proof} 
 
\noindent {\bf Remarks:} 
\begin{enumerate} 
\item A compelling picture emerges from the comparison of  \eqref{P4_avoided} with \eqref{U4}:  
\begin{itemize} 
\item In situations in which \emph{path intersections} play only a negligible role, the correlation function exhibits a  Bosonic (Wick law) structure, which is characteristic of Gaussian processes.  
\item In situations in which \emph{avoided crossings} are ruled out, the correlation functions exhibit a Fermionic (Pfaffian) structure.  
\end{itemize} 
While the Gaussian rule applies to critical models in $d\geq 4$ dimensions, the Pfaffian structure applies to the boundary spins of planar models.  

\item In Section~\ref{sec:OD}, the argument given above is extended to  a proof of the more general statement that in planar models,  the Pfaffian structure is also found in the correlation functions of suitably defined \emph{order-disorder} variables.    

\item During the preparation of the manuscript, we were made aware of the work of Lis~\cite{Lis16} who has  obtained an alternative  random current derivation of 
the Groeneveld-Boel-Kasteleyn formula (Theorem~\ref{thm:pf_boundary}). 
\end{enumerate}

The proof given above will be used here as a springboard for an extension of the Pfaffian structure beyond strict planarity for  $\beta = \beta_c$, which is the case of main interest in this paper and for which it will be shown to hold in an asymptotic sense.    

\subsection{Proof idea of Theorem~\ref{thm:Pf_finite_range}}

We may now outline the proof idea of the new result, Theorem~\ref{thm:Pf_finite_range}, using the terminology which was presented in the above discussion of Theorem~\ref{thm:pf_boundary}.  

For an Ising model with a finite-range interaction in the half-plane  $\bbH:=\bbZ\times\bbZ_+$, let 
$\{x_1,\dots,x_{2n}\} \in \partial \bbH$ be a collection of boundary sites, labeled in the corresponding order.  
Any configuration $\n_1: \E(\bbH) \to \N$ with  $\partial \n_1 = \{x_1,\dots,x_{2n}\} \in \partial \bbH$ can be presented as giving the flux numbers of a collection of (possibly overlapping) loops and paths linking pairwise the specified sources.   If all points are at distances greater than the interaction range $R$, then the pairing paths will be crossing each other, as they do in the strictly planar case.  The difference is that with $R>1$, the paths may cross without intersecting and without being connected by other means (e.g.~the  mentioned loops).   Thus, while the representation \eqref{eq:Q1} still applies, the topological argument which led to \eqref{eq:main_boundary} may no longer be counted upon.  

To quantify the above observation, let us define as \emph{occurrence of an avoided intersection}
 the event, for $i<j<k<\ell$, given by 
%
\be 
A_{i,j,k,\ell}(\n_1+\n_2) \, :=\, \{x_i \stackrel{\n_1+\n_2}{ \longleftrightarrow} x_k\}\cap\{x_j \stackrel{\n_1+\n_2}{ \longleftrightarrow} x_\ell\}\cap\{x_i \stackrel{\n_1+\n_2}{\kern+4.5pt\arrownot\kern-4.5pt\longleftrightarrow} x_j\} \,. 
\ee 
 
For configurations in which avoided intersections do not occur, the sum in \eqref{eq:main_boundary} still vanishes (and regardless of this condition, the sum is dominated by $n$).  It  follows that
 \be\label{eq:p1}
|Q(x_1,\dots,x_{2n})|\ \leq \   n\ \langle\sigma_{x_1}\cdots\sigma_{x_{2n}}\rangle_{G,\beta}\sum_{i<j<k<\ell}\ {\bf P}^{ \{x_1,\dots, x_{2n}\} , \emptyset }_{G,\beta} (A_{i,j,k,\ell})
\ee
with ${\bf P}^{  \{x_1,\dots, x_{2n}\} , \emptyset }_{G,\beta} $ being the probability measure corresponding to the expectation value in \eqref{eq:Q1}.  

Consequently, the Pfaffian rule is {\em approximately} true in situations where the {\em probability} of avoided intersections is small 
for every quadruples in $\{x_1,\dots,x_{2n}\}$ as, under this assumption, the expression \eqref{eq:p} and the bound \eqref{eq:p1} yield 
 \be\label{eq:pf_approx}
 \langle\sigma_{x_1}\cdots\sigma_{x_{2n}}\rangle_{G,\beta} \,  \big[ 1+ o(1) \big] \ = \   \Pf_n \big( \big[\langle \sigma_{x_i}\sigma_{x_j}\rangle_{\mathbb H,\beta_c}\big]_{1\le i<j\le 2n} \big)  \,,
\ee
which is the claimed statement. 

Two different, but not unrelated, mechanisms may make the probability of avoided intersections (for sets of widely separated source points) small for the Ising model at its critical point.   

The first reason is the expected fractality of the connecting paths (and even better, of the connected clusters).    When fractal paths appear to pass over each other, they do so with multiple switchbacks.  Consequently, for paths  whose end points are at a distance  $K\gg1$ apart, there typically would be a large number of opportunities for the actual intersection  (growing at least as $K^c$).  
The probability that all such opportunities to connect would be missed may be expected to vanish at a rate $\exp(-K^s)$ for some $s>0$.  This reason does not apply in case  the fractal paths pass over each other where only a few switchbacks occur, but this is also expected to have small probability, typically of order $K^{-t}$ for some $t>0$. 

While the above is very suggestive, and there exist possibly relevant results for proving path fractality~\cite{ AizBur99,CheDumHon13,DumHonNol11}, a proof along these lines would require some still unaccomplished technical work.   
However,  since what matters are connections with respect to the sum $\n_1+\n_2$ of two currents, it is possible to invoke a  related but slightly simpler argument leading to the same conclusion.  
For that, the essential result to prove is: 
\begin{theorem}\label{thm:condition}
For the Ising model on $\bbZ^2$ with couplings satisfying the conditions\footnote{By this, we mean that we consider the unique extension to $\bbZ^2$ of the coupling constants (defined on $\bbH$) considered in Theorem~\ref{thm:Pf_finite_range}.} (i)-(iv) listed in Theorem~\ref{thm:Pf_finite_range} at $\beta=\beta_c$, the infinite-volume sourceless random current $\n$  (as defined in Section~\ref{sec:infinite_n}) almost surely contains infinitely many odd circuits surrounding the origin.
\end{theorem}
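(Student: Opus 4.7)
\noindent\textbf{Proof plan for Theorem~\ref{thm:condition}.}

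I would first interpret an ``odd circuit surrounding the origin'' as a cycle in the subgraph $\{e\in \E(\Z^2): \n(e) \text{ is odd}\}$ that encloses $0$. Because $\n$ is sourceless, this subgraph is even at every vertex and therefore decomposes into a disjoint union of simple cycles; hence the statement reduces to showing that this even subgraph contains infinitely many circuits around the origin. The plan is to prove that, inside each dyadic annulus $A_k:=[-2^{k+1},2^{k+1}]^2\setminus[-2^k,2^k]^2$, an odd surrounding circuit exists with probability bounded below uniformly in $k$, and then to upgrade this to ``infinitely many a.s.'' via an approximate independence argument across well-separated annuli.

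The first step is to import RSW-type crossing estimates for the critical Ising model with finite-range, reflection-invariant couplings. In the nearest-neighbor case these are due to Duminil-Copin--Hongler--Nolin and to Chelkak--Duminil-Copin--Hongler; under hypotheses (i)--(iv) the analogous estimates for the associated $q=2$ FK-random-cluster measure are available (and are what the paper's ``coupling of three representations'' is designed to deliver). From these, together with quasi-multiplicativity and the FKG inequality, one obtains a constant $c>0$ such that, for every $k$, with probability at least $c$ the FK configuration contains an open circuit inside $A_k$ surrounding the origin.

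Second, I would pass from FK circuits to odd circuits in $\n$ using the coupling between the sourceless random current and the FK representation. Conditionally on the FK cluster configuration, the multiplicity $\n(e)$ on each open edge is a strictly positive random variable whose distribution has a finite-energy, near-product character: in particular, its parity is Bernoulli with probability bounded away from $0$ and $1$, uniformly in the edge and in the cluster. Given an open FK circuit $C$ in $A_k$ around the origin, the parities $\{\n(e)\bmod 2\}_{e\in C}$ therefore stochastically dominate a product of nondegenerate Bernoullis (up to the single global constraint inherited from the cluster), so the probability that an \emph{odd} number of edges of $C$ carry odd current is bounded below by some $c'>0$, uniformly in $k$. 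Combined with the RSW estimate this gives $\P(\text{odd circuit around $0$ in }A_k)\ge cc'$ for every $k$.

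Finally, to pass from positive probability at each scale to infinitely many circuits a.s., I would appeal to a Borel--Cantelli argument along a sparse subsequence of scales $k_j$ with $k_{j+1}-k_j\to\infty$. The events ``odd circuit in $A_{k_j}$ around $0$'' live in disjoint annuli, and under the FK coupling their dependence can be controlled by the polynomial decay of FK connectivities at criticality, which suffices for a second-moment/second-Borel--Cantelli conclusion; the translation-invariance and tail-triviality of the infinite-volume sourceless current (inherited from the FK/random-current coupling) then upgrades the event to full probability. The main obstacle I anticipate is the parity step: one must rule out a ``conspiracy'' in which conditioning on the existence of a surrounding FK circuit biases the parities so strongly that the circuit is forced to be even. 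This is where the sourceless condition $\partial\n=\emptyset$ is crucial, because adding $2$ to $\n$ on a single designated edge of $C$ both preserves sourcelessness and the FK cluster and flips the parity of $C$, at a bounded multiplicative cost in the current weights. Making this resampling quantitative uniformly in $k$ is where the bulk of the technical work will lie.
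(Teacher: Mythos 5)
Your first step assumes the key input that the paper cannot obtain and is structured to avoid. Uniform RSW-type crossing estimates at criticality are known only for the nearest-neighbor (planar) FK--Ising model, where they rely on planar duality; for finite-range, non-planar couplings satisfying (i)--(iv) no such estimates exist in the literature, the three-way coupling does not ``deliver'' them, and the paper explicitly lists uniform crossing estimates for critical currents among its open questions. Section~\ref{sec:5} is organized precisely to sidestep this: it runs a dichotomy on the conditions \eqref{eq:P1} and \eqref{eq:P2}. If crossing probabilities for $\omega$ do not vanish, circuits in $\omega$ are produced by a second-moment bound, a renormalization inequality, and a weak RSW argument valid only along infinitely many scales (note the $\liminf$ statements and Lemma~\ref{lem:key_construction}), with the non-planar gluing handled by Theorems~\ref{thm:gluepaths} and \ref{thm:glueCirc}; if they do vanish while annulus crossings in $\widehat\n$ do not, the ``tortuous'' crossings of two independent currents are combined (Lemma~\ref{lem:P3}) to build circuits in $\widehat{\n_1+\n_2}$. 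You also implicitly assume $\phi_{\Z^2,\beta_c}[0\leftrightarrow\infty]=0$; continuity at $\beta_c$ is not known for finite-range models, and the paper treats the case $\phi_{\Z^2,\beta_c}[0\leftrightarrow\infty]>0$ by a separate argument. Likewise, your Borel--Cantelli step needs polynomial decorrelation of FK connectivities, which is equally unavailable; the paper instead upgrades positive probability to almost-sure occurrence using ergodicity of the infinite-volume measures (and independence across distinct clusters of $\omega$).

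The parity step is also flawed in mechanism. Conditionally on $\omega$, the odd part $\eta$ is a \emph{uniformly chosen even subgraph} of $\omega$ (Theorem~\ref{thm:connection}), not a family of near-independent edge parities, and the event you bound (``an odd number of edges of $C$ carry odd current'') is not the event you need, namely a circuit consisting entirely of odd edges that surrounds the origin. Your proposed repair does not work either: adding $2$ to the current on a designated edge changes no parity at all, while adding $1$ would create sources. The correct resampling, which is the content of Corollary~\ref{rmk:cycle} and Lemma~\ref{lem:flowArgument}, is to replace $\eta$ by $\eta\Delta c$ for a cycle $c\subset\omega$ surrounding $0$: this preserves the conditional law and maps configurations without a surrounding circuit to configurations with one (a flux-parity argument), so each surrounding component of $\omega$ contains an odd surrounding circuit with conditional probability at least $1/2$, independently over components. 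That switching argument is exactly what quantifies your anticipated ``conspiracy'' obstacle, and it simultaneously replaces your mixing/Borel--Cantelli step.
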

By invariance under translations, the assertion made there holds equally well for any preselected point, including points identified as sites of close approach by the paths of $\n_1$ pairing the boundary vertices $x_1,\dots,x_{2n}$.  This observation will be used in the proof of  Theorem~\ref{thm:Pf_finite_range} where it is shown  that  avoided intersections are unlikely due to the existence, with high probability, of many odd circuits in $\n_2$ surrounding places where the previously mentioned paths in $\n_1$ cross. From this, one learns that the  paths in  $\n_1$ would typically be connected to each other through edges with positive $\n_2$-current.  This  argument is somewhat simpler to carry out than the one invoking path fractality, though, at their roots, the two mechanisms are related.   

Let us add that in the proof of Theorem~\ref{thm:condition}, we employ a coupling between three classical percolation models: 
\begin{itemize}[noitemsep,nolistsep]
\item[(i)] the  random even subgraphs obtained from the high-temperature expansion (sometimes called the loop $O(1)$ model), 

\item[(ii)]  the percolation model obtained by taking the trace of one random current, and 

\item[(iii)] the classical Fortuin-Kasteleyn percolation, also called the random-cluster model.   
\end{itemize}
We refer to Section~\ref{sec:coup} for  details on the coupling and its potential applications, of which more may be coming.  

\paragraph{Organization of the paper}   

As outlined above, in  analyzing the model we  employ a number of its known stochastic geometric representations which come with  useful combinatorial identities and  monotonicity  relations.     
For the completeness of presentation,  in Section \ref{sec:RC},  we briefly restate the random current representation and following that, in Section~\ref{sec:coup}, the other two.    
Also, we present there the coupling between the random even subgraphs, the random currents and the random clusters.   

 Section~\ref{sec:4} contains the proof of Theorem~\ref{thm:Pf_finite_range} {\em conditioned on} the validity of  Theorem~\ref{thm:condition}. The latter allows us to show that there are no avoided intersections far from the boundary.   A separate analysis  is needed to show that the claim is not defeated by avoided intersections near the boundary.  
 
 A stronger version (Theorem~\ref{prop:condition}) of Theorem~\ref{thm:condition} is formulated and proved in Section~\ref{sec:5}.   The proof goes 
 through an extension of the Russo-Seymour-Welsh theory to sourceless random currents which is made difficult by the absence of positive association (in particular the FKG inequality) for this model.

Section~\ref{sec:OD} is devoted to the discussion of the Pfaffian structure of order-disorder correlation functions, Theorem~\ref{thm:Pf_OD}.  As explained there, these offer an extension to the bulk of the structure which was first identified to apply to boundary correlation functions. 

In the Appendix, we include a number of relations which are used in our discussion. In Appendix A, an extension of the Messager-Miracle-Sol\'e inequality is proved for finite-range interactions. In Appendix B, we derive relevant ``gluing principles'', whose purpose may be easily grasped but whose proofs require one to struggle with certain technicalities.  
  
 \section{The random current representation}     \label{sec:RC}

The Ising model admits a number of useful stochastic geometric representations, which apply to all finite graphs.   
They are obtained by starting from one of the following three representations of the interaction term: 
\begin{align} \label{eq:J_exp}
\exp(\beta J_{x,y} \sigma_x \sigma_y) &=  \sum_{n=0}^\infty \frac{(\beta J_{x,y})^{n}}{ n!} (\sigma_x \sigma_y)^{n} \notag \\
& =   \cosh(\beta J_{x,y}) \big[ 1+ \tanh(\beta J_{x,y})  \sigma_x \sigma_y \big] \phantom{\sum_{n=0}} \notag\\
  &=   
   e^{-\beta J_{x,y}} + \big[e^{\beta J_{x,y}} - e^{-\beta J_{x,y}} \big] \, \mathbb{I}[\sigma_x = \sigma_y].
\end{align}  
In the case of ferromagnetic interactions, each of the resulting expansions yields a decomposition of 
 the partition function into a sum of positive terms, and along with that comes a decomposition of the expectation value functional which gives the thermal equilibrium state.   
The three resulting representations, called the random current, high temperature and random cluster representations, can be coupled in a manner indicated by the algebraic relations expressed in  \eqref{eq:J_exp}. This powerful tool will be discussed in Section~\ref{sec:coup}. Since the random current will be the main object of interest to us, we start by describing it in detail in this section.

\subsection{Currents and correlation functions}

The so-called random current representation is obtained by expanding the partition function using the first of the relations in \eqref{eq:J_exp}.  
Its early use was made by Griffiths-Hurst-Sherman in~\cite{GriHurShe70}.  The term was coined in  \cite{Aiz82} where it was recognized to offer a useful depiction and a convenient tool for the study of correlations among distant spins.   To cite but a few applications, it was instrumental in the proof  that  the nearest-neighbor Ising model's scaling limits are Gaussian in dimensions $d>4$~\cite{Aiz82}, and that the critical exponents take their mean field values when $d\ge4$~\cite{AizFer86,AizFer88}.  Further applications include proofs of sharpness of the phase transition \cite{AizBarFer87} (see also \cite{DumTas15}), and the vanishing of the spontaneous magnetization at the critical points in dimension 
$d \ge3$ \cite{AizDumSid15}.

Starting from the power series expansion of the exponential function (first line of \eqref{eq:J_exp}), the partition function
$$
Z(G,\beta):=  2^{-|V(G)|} \sum_{\sigma\in\{\pm1\}^{\V(\G)}}\prod_{\{x,y\}\in E(G)}\exp[\beta J_{x,y}\sigma_x\sigma_y] $$
can be presented~\cite{GriHurShe70} as
\begin{align} \label{eq:Z}
Z(G,\beta) &=  \sum_{ \n:  E(\G)  \to \N  } \frac{\displaystyle(\beta J_{x,y})^{{\mathbf n}(x,y)}}{{\mathbf n}(x,y)!}
\prod_{x\in V(\G)} \Big( \tfrac{1}{2}  \sum_{\sigma_x\in\{\pm1\}} \sigma_x^{\sum_y \n(x,y)} \Big)
\nonumber \\
&=    \sum_{ \n: \partial \n = \emptyset } 
w(\mathbf n),
\end{align}
where $\n$ is summed over integer-valued functions
\be 
\n:   E(\G)   \to \N :=  \{0,1,2,\dots\}\, ,
\ee
 which are referred to as {\em currents configurations}, with the weight function 
\be\nonumber
w(\mathbf n)=w_{\beta}(\mathbf n):=\prod_{\{x,y\}\in E(G)}\frac{\displaystyle(\beta J_{x,y})^{{\mathbf n}(x,y)}}{{\mathbf n}(x,y)!} \,   
\ee 
and 
\be 
\partial \n \ := \  \big\{ x\in  V(\G) \,:\,  (-1)^{\sum_y \n(x,y)} = -1 \big\} \, ,  
\ee 
to which we refer as the {\em set of sources} of $\n$.   

Underlying the random current terminology is the observation that 
any configuration $\n$  with a given set of sources $A=\partial \n$ can be associated (in a non-unique way) with a configuration of loops and paths 
 linking pairwise the sites of $A$, in such a way that the values of $\n$ over the edge-set $E(\G)$  give the total number of times the edges of $E(\G)$ are traversed by paths and loops in that set.  In particular, 
a configuration with 
$\partial\n = \{x,y\}$ can be viewed as giving the total ``flux numbers'' of a family  of loops together with a path from $x$ to $y$.

The strategy leading to \eqref{eq:Z} can be applied to spin correlations,  yielding: 
\begin{align} \label{eq:spin correlations}
\langle \prod_{j=1}^k \sigma_{x_j} \rangle_{\G,\beta}  
& :=  \frac{1}{ Z(G,\beta)} \, \, \frac{1}{2^{|V(\G)|} } \sum_{\sigma\in\{\pm1\}^{\V(\G)}}
\big[ \prod_{j=1}^k \sigma_{x_j} \big] 
\, \, \exp\big[{\sum_{\{x,y\} \in E(\G)}\beta  J_{x,y} \sigma_x \sigma_y}\big]
\nonumber \\ 
& \phantom{:}=   
%
\frac{1}{Z(G,\beta)}   \displaystyle \sum_{\n: \, \partial \n = \{x_1,\dots,x_k\} }  w(\n)
\, .
 \end{align}
 
We conclude this section by introducing two important normalized probability distributions. By \eqref{eq:spin correlations}, the following quantity is a probability distribution on random currents constrained by the source condition $\partial \n = A$:
\be \label{eq:Pr}
\mathbf P_{\G,\beta}^{A} \left(\n \right)  :=   
\frac{ w(\n)} {\langle \prod_{x\in A} \sigma_x \rangle_{G,\beta} \, Z(\G,\beta) }  \bbI [\partial \n = A].
\ee 
We should also be interested in stochastic systems consisting of pairs of independent random currents, for which we denote the product probability measure on pairs $(\n_1,\n_2)$   by
\be \label{eq:2Pr}
\mathbf P_{\G,\beta}^{A,B} \  := \ \mathbf P_{\G,\beta}^{A} \otimes \mathbf P_{\G,\beta}^{B}. \ee 
The subscript may be omitted when no confusion can arise, however the superscripts usually need to be kept since the source constraints may vary throughout the discussions of a given system.

\subsection{Random currents as multigraphs, and the switching principle}

A combinatorial symmetry of random currents is most readily recognized in the representation in which  
 a currents configuration $\n$ is presented  as a (multi-)graph $\mathcal N$ obtained by replacing each edge  $\{x,y\} \subset E(\G)$ by $\n(x,y)$ edges, all linking the same pair of vertices.   By default, we shall denote the multigraph corresponding to $\n$ or $\m$ by the appropriate calligraphic script symbol,  
$\mathcal N$ or $\mathcal M$ in the above examples.  
We extend the above correspondence to the  weight and source notation, so that 
$\partial \mathcal N := \partial \mathcal \n$ and $w(  \mathcal N)  := w( \mathcal \n)$, and similarly for $\mathcal M$ in relation to $\m$.   

\begin{figure}[h]
\begin{center}
\includegraphics [width = 0.70 \textwidth]{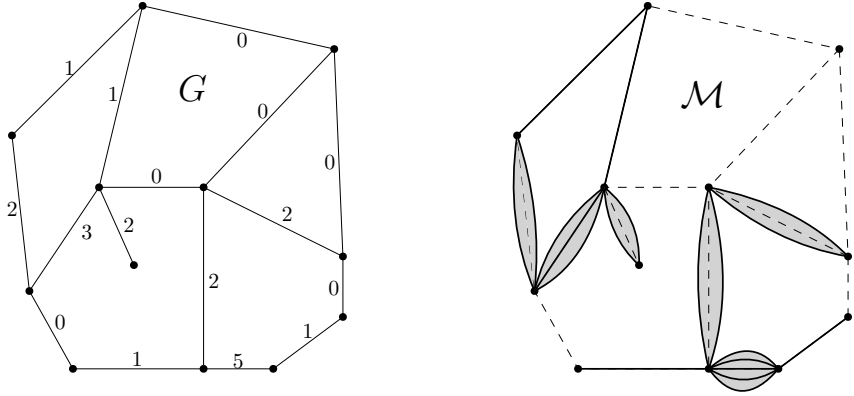}
\caption{On the left, the graph $G$ with a current $\m$ on it (the numbers denote the value of the current on each edge). On the right, the associated graph $\mathcal M$. In the planar case, some faces (denoted in gray) of the new graph do not correspond to any face of the original graph $G$.
\label{fig:graph_M}
}
\end{center}
\end{figure}
One may note that for every fixed $\m$ and for every pair of currents $(\n_1,\n_2)$ with $\m = \n_1+ \n_2$, we find that 
\be \label{comb}
w(\n_1) \, w(\n_2) = w(\m) \prod_{\{x,y\} \in \E(\G)} \binom{\m(x,y)}{\n_1(x,y)}
=: w(\m) \binom{\m}{\n_1}\,,
\ee 
where the combinatorial factor in \eqref{comb} coincides with the number of ways of partitioning   $\mathcal M$ into two multigraphs with a common vertex-set and edge numbers given by $\n_1$ and $\n_2$ respectively.  This can be viewed as a divisibility property of the weight distribution.   

As a consequence, we have two equivalent descriptions of pairs of random currents:  

(i) as  pairs of integer-valued functions $(\n_1,\n_2)$, each defined over $E(\G)$, 

(ii) as pairs of nested multigraphs $\mathcal N \subset \mathcal M$, constructed over the  vertex-set  $V(\G)$.   
The edge multiplicity of  $\mathcal M$ being given by $\n_1+\n_2$, and that of $\mathcal N$ by $\n_1$.   

Equation \eqref{comb} implies the following relation between these two representations:   for any given function $F(\n_1,\n_2)$ defined over pairs of currents, 
\be\label{eq:key}
\sum_{\substack{\n_1:\partial \n_1=A\\ \n_2:\partial \n_2=B }} w(\n_1)w(\n_2) F(\n_1,\n_2)
\  =\  \sum_{\substack{\mathcal M :\, \partial \mathcal M=A\Delta B}} w(\mathcal M)\sum_{\substack{\mathcal N\subset\mathcal M  \\  \partial\mathcal N=A} }F(\mathcal N,\mathcal M\setminus \mathcal N)\, , 
\ee 
where $\Delta $ denotes the symmetric difference operator on sets, and it is taken as understood that 
$\mathcal M$ and $\mathcal N$ range over multigraphs with vertex-set $V(\G)$ with certain constraints.

Thus, the random current representation and its multigraph version yield the following expression for the product of correlation functions:
\begin{align}  \nonumber
\langle \prod_{x\in A} \sigma_{x} \rangle_{\G,\beta}  \times  
\langle \prod_{y\in B} \sigma_{y} \rangle_{\G,\beta}    
&=   
\frac{1}{Z(G,\beta)^2}      \sum_{\substack {\n_1: \, \partial \n_1 = A  \\ \n_2: \, \partial \n_2 = B} } 
         w(\n_1) \,  w(\n_2) \, \, 
      \\
      \label{eq:two_corr}
 &=    \frac{1}{Z(G,\beta)^2}       \sum_{\mathcal M: \, \partial \mathcal M = A\Delta B } 
         w(\mathcal M) \,   \, 
          \sum_{\substack {\mathcal N \subset \mathcal M \\  \partial \mathcal N = A } }  1\, \, 
             \, .  
 \end{align} 
This relation motivates the following elementary combinatorial observation: 

\begin{lemma}[Switching principle] \label{lem:switch}
 For any multigraph $\mathcal M$  with  vertex-set $V(\G)$,  any set $A\subset V(\G)$ and any function $ f $ of a current:
\be \label{eq:switch} 
  \sum_{\substack {\mathcal N \subset \mathcal M \\  \partial \mathcal N = A } }  f(\mathcal{N})  \ = \  
 \begin{cases}   
       \displaystyle \sum_{\substack {\mathcal N \subset \mathcal M \\  \partial \mathcal N = \emptyset } }    f(\mathcal{N} \Delta \mathcal{K}) 
 & \mbox{if there is at least one  $\mathcal K \subset \mathcal M$ with  $ \partial \mathcal K = A$},  \\[6ex]  
 0 & \mbox{otherwise} .
 \end{cases}
\ee 
\end{lemma}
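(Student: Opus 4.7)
The plan is to prove the switching principle as a purely combinatorial statement via an involution on sub-multigraphs of $\mathcal M$. The first step is bookkeeping: one must fix the correct interpretation of ``$\mathcal N \subset \mathcal M$'' and of the symmetric difference $\mathcal N \Delta \mathcal K$ in the multigraph setting. Concretely, I would label the individual edges of the multigraph $\mathcal M$ (so that parallel copies of a given edge of $\G$ are treated as distinct), and then identify sub-multigraphs of $\mathcal M$ with subsets of this labeled edge-set $E(\mathcal M)$. Under this identification, $\mathcal N \Delta \mathcal K$ is just the ordinary set-theoretic symmetric difference, and it is automatically itself a sub-multigraph of $\mathcal M$ whenever $\mathcal N, \mathcal K \subset \mathcal M$.

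The second step is to note that the boundary operator $\partial$ is $\mathbb F_2$-linear with respect to $\Delta$, i.e.
\[
\partial(\mathcal N \Delta \mathcal K) \ = \ \partial \mathcal N \, \Delta \, \partial \mathcal K \, ,
\]
which follows directly from the definition of $\partial$ as the parity, at each vertex, of the number of incident edges. With this in hand, the two cases of the lemma are immediate. If there exists no $\mathcal K \subset \mathcal M$ with $\partial \mathcal K = A$, then in particular the index set $\{\mathcal N \subset \mathcal M : \partial \mathcal N = A\}$ is empty (any such $\mathcal N$ would itself qualify as $\mathcal K$), so the left-hand side is an empty sum and equals zero.

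Otherwise, I fix one such $\mathcal K$ and consider the map $T_{\mathcal K} : \mathcal N \mapsto \mathcal N \Delta \mathcal K$ defined on sub-multigraphs of $\mathcal M$. Since $(\mathcal N \Delta \mathcal K) \Delta \mathcal K = \mathcal N$, the map $T_{\mathcal K}$ is an involution on the set of sub-multigraphs of $\mathcal M$. By the linearity of $\partial$, it sends $\{\mathcal N \subset \mathcal M : \partial \mathcal N = A\}$ bijectively onto $\{\mathcal N \subset \mathcal M : \partial \mathcal N = \emptyset\}$, with inverse equal to itself. Re-indexing the left-hand sum via $\mathcal N' := T_{\mathcal K}(\mathcal N)$, so that $\mathcal N = \mathcal N' \Delta \mathcal K$, yields exactly the right-hand side.

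The argument has essentially no obstacle beyond the combinatorial setup: once multigraphs are viewed as subsets of a labeled edge set so that $\Delta$ is well-defined, the proof collapses to a one-line involution. The only subtlety worth stating explicitly is that this proof is purely set-theoretic and makes no use of the weights $w(\cdot)$; the weighted ``switching lemma'' usually applied to correlation functions is then obtained by combining this combinatorial identity with the divisibility relation $w(\n_1) w(\n_2) = w(\mathcal M) \binom{\mathcal M}{\mathcal N}$ from \eqref{comb}.
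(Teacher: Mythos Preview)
Your proof is correct and follows exactly the same approach as the paper: the bijection $\mathcal N \mapsto \mathcal N \Delta \mathcal K$ between subgraphs with $\partial \mathcal N = A$ and those with $\partial \mathcal N = \emptyset$, together with the observation that the second case is an empty sum. You spell out more of the bookkeeping (labeled edges, $\mathbb F_2$-linearity of $\partial$) than the paper does, but the argument is identical.
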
 
\begin{proof} 
If the multigraph $\mathcal M$ has a subgraph $\mathcal K$ with $ \partial \mathcal K = A$, then the mapping $\mathcal N \mapsto  \mathcal N \Delta \mathcal K$ yields a bijection between the collection of subgraphs of $\mathcal M$ with $\partial \mathcal N = A$ and the collection of subgraphs with $\partial \mathcal N = \emptyset$, hence the first alternative.  The second is evident. 
\end{proof} 
In applying Lemma~\ref{lem:switch}, one may bear in mind that in the case where $A$ contains exactly two points, e.g.~$A = \{x,y\}$, the condition in  \eqref{eq:switch} is equivalent to the fact that $x$ and $y$ are connected in $\mathcal M$, meaning that they are in the same connected component of  $\mathcal M$.
Lemma~\ref{lem:switch} has the following useful   implication.  
\begin{corollary}\label{cor:switch}  For any finite graph, set of couplings, and collection of sites $\{x_1,\dots,x_{2n}\}\subset V(\G)$, we have 
\begin{align}\label{eq:aza}
 \langle \sigma_{x_1} \sigma_{x_{\ell}}\rangle_{\G,\beta}   \, \,
\big\langle \prod_{\substack{1\le j\le 2n \\ j \notin  \{1,\ell\} }}\sigma_{x_j} \big\rangle_{\G,\beta} 
&= \langle \sigma_{x_1}\cdots \sigma_{x_{2n}}\rangle_{\G,\beta}\cdot 
\mathbf P_{\G,\beta}^{\{x_1,\dots,x_{2n}\}, \emptyset} [x_1 \stackrel{\n_1+\n_2} {\longleftrightarrow} x_\ell] \, .
\end{align}
\end{corollary}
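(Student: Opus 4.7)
The plan is to apply the switching principle (Lemma~\ref{lem:switch}) with $f\equiv 1$ to the multigraph reformulation \eqref{eq:two_corr} of the product of correlations appearing on the left-hand side, and then re-expand back into the random current form on the right.

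Concretely, set $A=\{x_1,x_\ell\}$ and $B=\{x_1,\dots,x_{2n}\}\setminus\{x_1,x_\ell\}$, so that $A\Delta B=\{x_1,\dots,x_{2n}\}$. The identity \eqref{eq:two_corr} gives
\begin{equation*}
\langle\sigma_{x_1}\sigma_{x_\ell}\rangle_{\G,\beta}\, \big\langle\prod_{j\notin\{1,\ell\}}\sigma_{x_j}\big\rangle_{\G,\beta}
\;=\;\frac{1}{Z(\G,\beta)^2}\sum_{\mathcal{M}:\,\partial\mathcal{M}=\{x_1,\dots,x_{2n}\}} w(\mathcal{M})\sum_{\substack{\mathcal{N}\subset\mathcal{M}\\ \partial\mathcal{N}=A}}1 .
\end{equation*}

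Next, I would apply Lemma~\ref{lem:switch} to the inner sum. As pointed out in the remark following that lemma, when $|A|=2$ the existence of some $\mathcal{K}\subset\mathcal{M}$ with $\partial\mathcal{K}=A$ is precisely the condition that $x_1$ and $x_\ell$ lie in the same connected component of $\mathcal{M}$. Therefore
\begin{equation*}
\sum_{\substack{\mathcal{N}\subset\mathcal{M}\\ \partial\mathcal{N}=A}}1 \;=\; \mathbb{I}\bigl[x_1\stackrel{\mathcal{M}}{\longleftrightarrow}x_\ell\bigr]\sum_{\substack{\mathcal{N}\subset\mathcal{M}\\ \partial\mathcal{N}=\emptyset}}1.
\end{equation*}

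Substituting this back and reading the resulting double sum through \eqref{eq:key} in the reverse direction, with source sets $\{x_1,\dots,x_{2n}\}$ and $\emptyset$ respectively, gives
\begin{equation*}
\langle\sigma_{x_1}\sigma_{x_\ell}\rangle_{\G,\beta}\, \big\langle\prod_{j\notin\{1,\ell\}}\sigma_{x_j}\big\rangle_{\G,\beta}
\;=\;\frac{1}{Z(\G,\beta)^2}\!\!\sum_{\substack{\n_1:\,\partial\n_1=\{x_1,\dots,x_{2n}\}\\ \n_2:\,\partial\n_2=\emptyset}}\!\!w(\n_1)w(\n_2)\,\mathbb{I}\bigl[x_1\stackrel{\n_1+\n_2}{\longleftrightarrow}x_\ell\bigr].
\end{equation*}
Recognising $\langle\sigma_{x_1}\cdots\sigma_{x_{2n}}\rangle_{\G,\beta}\,Z(\G,\beta)^{-1}\cdot Z(\G,\beta)^{-1}$ as the normalising constant of $\mathbf{P}_{\G,\beta}^{\{x_1,\dots,x_{2n}\},\emptyset}$ (by \eqref{eq:Pr}--\eqref{eq:2Pr}), the claim \eqref{eq:aza} follows.

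There is essentially no obstacle: the argument is a direct bookkeeping exercise, and the only content is the switching identity of Lemma~\ref{lem:switch}. The one place where care is needed is the reduction of the combinatorial existence condition ``there exists $\mathcal{K}\subset\mathcal{M}$ with $\partial\mathcal{K}=\{x_1,x_\ell\}$'' to ``$x_1\leftrightarrow x_\ell$ in $\mathcal{M}$'' — this is the content of the remark made just after Lemma~\ref{lem:switch} and follows from the fact that any multigraph with two prescribed odd-degree vertices must contain a path between them (obtained, e.g., from an Eulerian decomposition of any even-degree submultigraph carrying an odd path between $x_1$ and $x_\ell$).
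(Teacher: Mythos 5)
Your proof is correct and is essentially identical to the paper's own argument: both rewrite the product of correlations via \eqref{eq:two_corr} with $A=\{x_1,x_\ell\}$, apply the switching principle (together with the remark that for $|A|=2$ the existence of $\mathcal{K}\subset\mathcal{M}$ with $\partial\mathcal{K}=A$ amounts to $x_1\stackrel{\mathcal{M}}{\longleftrightarrow}x_\ell$), and then re-expand through \eqref{eq:key} and the normalisation in \eqref{eq:Pr}--\eqref{eq:2Pr}. No gaps to report.
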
 
\begin{proof} 
Through the combination of  \eqref{eq:two_corr} with \eqref{eq:switch}, one gets 
\begin{multline}  \label{eq:101}
\langle \sigma_{x_1} \sigma_{x_{\ell}}\rangle_{\G,\beta}   \, \,
\big\langle \prod_{\substack{1\le j\le 2n \\ j \notin  \{1,\ell\} }}\sigma_{x_j} \big\rangle_{\G,\beta}   
\ = \    \frac{1}{Z(G,\beta)^2}       \sum_{\mathcal M: \, \partial \mathcal M = \{x_1,...,x_{2n}\} } 
         w(\mathcal M) \,   \mathbb{I} [x_1 \stackrel{\mathcal M}  {\longleftrightarrow} x_\ell]
          \sum_{\substack {\mathcal N \subset \mathcal M \\  \partial \mathcal N = \emptyset } }  1\, \\[2ex]  
\ = \   \big\langle \prod_{ 1\le j\le 2n }\sigma_{x_j} \big\rangle_{\G,\beta} 
      \frac{\displaystyle\sum_{\substack{ \n_1: \, \partial \n_1 = \{x_1,...,x_{2n}\} \\ 
      \n_2: \, \partial \n_2 =\emptyset}}
         w(\n_1) \,   w(\n_2)  \, \,  \mathbb{I} [x_1 \stackrel{\n_1+\n_2}  {\longleftrightarrow} x_\ell]}{\displaystyle\sum_{\substack{ \n_1: \, \partial \n_1 = \{x_1,...,x_{2n}\} \\       \n_2: \, \partial \n_2 =\emptyset}}
         w(\n_1) \,   w(\n_2)}\, \, 
             \, .  
             \hspace{2cm} 
 \end{multline} 
The last fraction on the right is, by the definition \eqref{eq:2Pr}, $\mathbf P_{\G,\beta}^{\{x_1,\dots,x_{2n}\},\emptyset} [x_1 \stackrel{\n_1+\n_2} {\longleftrightarrow} x_\ell]$. 
\end{proof} 

\begin{remark}The identity \eqref{eq:aza}  readily implies the relations  \eqref{U4}, \eqref{P4_avoided}, and \eqref{eq:Q1}, whose consequences  were discussed in the introduction.\end{remark}

\subsection{Infinite volume limit}  \label{sec:infinite_n}

It is useful to know that the probability measures  ${\bf P} _{\G,\beta}^A$ (and correspondingly ${\bf P} _{\G,\beta}^{A,B}$) can be defined for infinite graphs (for instance $G$ equal to $\bbZ^2$ and $\bbH$) by taking the limit of measures on finite subgraphs.   For ferromagnetic interactions,  the corresponding measures converge in the natural (weak) sense.   We call any such  limit an  {\em infinite-volume random-current measure}.

In the case of translation invariant interaction and $A=\emptyset$, the probability measure ${\bf P}_{\Z^2,\beta}^\emptyset$ is invariant under translation and ergodic.  Proofs and an application of these  statements can be found in \cite{AizDumSid15}. The general case of $A$ arbitrary follows from the same proofs.

\section{RC coupling with the FK random cluster model and the vdW high-temperature expansion}\label{sec:coup}

\subsection{Three percolation models}

A percolation model is described by a collection of binary-valued random variables $\eta : \ E(\G) \to \{0,1\}$,  each  indicating whether the corresponding bond is {\em open}, i.e.~connecting (if $\eta(x,y) =1$),  or not  ($\eta(x,y) =0$). The configuration $\eta$ can be seen as a subgraph of $G$ with vertex-set $V(G)$ and edge-set $E(\omega):=\{\{x,y\}\in E(G):\eta(x,y)=1\}$. We may therefore speak of connection properties of $\eta$ as being those of the corresponding graph.
We say that $A\leftrightarrow B$ if there exists a connected component of $\eta$ intersecting both $A$ and $B$. We say that $0\leftrightarrow\infty$ if the connected component of 0 is infinite.
 
In this section, we describe three such models which are naturally related with the Ising spin system.  Each has already been employed successfully for some insight on the model.   We bring them up together to emphasize the existence of a coupling between the three percolation models.  Properties of this coupling, stated in Theorem~\ref{thm:connection} below, will be used extensively in the proof of  Theorem~\ref{thm:condition}.   We expect  this structure to be of interest for  many more results on the Ising model.

The three percolation models are: 
 \bigbreak
\noindent{\em A. Random currents.}    For each currents configuration $\n$, 
let  $\widehat\n$ be the percolation configuration defined by
\be 
\widehat\n(x,y) :=  \bbI[ \n(x,y) \neq 0].
\ee
 In other words, with respect to this process, $\{x,y\}$ is open if and only if the current is positive at $\{x,y\}$.
We denote by  $\widehat{\bf P}^A_{G,\beta}$  the push forward of ${\bf P}^A_{G,\beta}$ under the map 
$\n\mapsto \widehat\n$.
 \bigbreak
\noindent{\em B. High-temperature expansion.}  Starting again from a currents configuration $\n$, let a percolation configuration $\eta$ be defined as the parity of $\n$, i.e.
\be 
\eta(x,y)  :=    \bbI[ \n(x,y) \text{ is odd}] .
\ee  
This mapping preserves the set of sources
\be
\partial \eta (\n)  :=\big\{x\in V(G):\sum_{y}\eta(x,y)\text{ odd} \big\}  = \partial \n
\ee 
so that we may define,   
for each $A\subset V(G)$, the push forward ${\rm P}^A_{G,\beta}$ of ${\bf P}^A_{G,\beta}$ by the map $\n\mapsto\eta(\n)$, which is given by
\be {\rm P}^A_{G,\beta}[\eta]=\begin{cases}\displaystyle\frac{x_\beta(\eta)}{Z_{\mathrm{HT}}(G,\beta,A)} &\text{if $\partial\eta=A$},\\
\qquad 0&\text{otherwise},
\end{cases}
\ee
where $Z_{\mathrm{HT}}(G,\beta,A)$ is a normalizing constant and
$$x_\beta(\eta):=\prod_{\{x,y\}\in E(G):\eta(x,y)=1}\tanh(\beta J_{x,y}).$$
This measure naturally emerges by expanding the partition function using 
$$
\exp(\beta J_{x,y}\sigma_x\sigma_y)=\cosh(\beta J_{x,y})(1+\tanh(\beta J_{x,y})\sigma_x\sigma_y)
$$
(i.e.~the second expression in \eqref{eq:J_exp}) instead of the Taylor expansion of $\exp(\beta J_{x,y}\sigma_x\sigma_y)$. 
The resulting expansion is known as the Ising model's {\em high-temperature expansion}.  Its  early appearance can be found in the work of van der Waerden~\cite{vdW}.
 \bigbreak
\noindent{\em C. The random-cluster model.} The random-cluster model (or Fortuin-Kasteleyn percolation) on $G$ with free boundary conditions is defined as follows.  For a percolation configuration $\omega$, set
$$\phi_{G,\beta}[\omega]:=\frac1{Z_{\mathrm{RCM}}(G,\beta)} \big(\prod_{\{x,y\}\in E(G):\omega(x,y)=1} p_{x,y}\big)\cdot\big( \prod_{\{x,y\}\in E(G):\omega(x,y)=0}(1-p_{x,y}) \big)\cdot 2^{k(\omega)},$$
where $k(\omega)$ is the number of connected components of $\omega$, $p_{x,y}:=1-\exp[-2\beta J_{x,y}]$ and $Z_{\mathrm{RCM}}(G,\beta)$ is a normalizing constant.

Among the convenient features of the model, which are discussed in greater detail in \cite{ACCN88,Dum17,Gri06},  one finds: 
\begin{enumerate} 
\item[{\it i)}] There is no source constraint on the percolation configurations.
\item [{\it ii)}]   
The model is {\em positively correlated}, i.e.~that it satisfies the FKG inequality \cite[Theorem~3.8]{Gri06}
\be\label{eq:FKG}
\phi_{G,\beta}[A\cap B]\ge \phi_{G,\beta}[A]\phi_{G,\beta}[B]
\ee
for every increasing events $A$ and $B$ (an event $A$ is {\em increasing} if $\omega\le\omega'$ and $\omega\in A$ implies $\omega'\in A$).  
\item [{\it iii)}] It allows to embed the Ising model within the family of random-cluster models, which also includes  other important models (for instance Bernoulli percolation) of statistical mechanics with which it is linked by useful inequalities.  
\end{enumerate} 

The positive association enables one to define infinite-volume measures \cite[Definition~4.15]{Gri06}.  Of particular interest for us will be the measures $\phi_{\mathbb Z^2,\beta}$ and $\phi_{\mathbb H,\beta}$ defined on the plane $\mathbb Z^2$ and the upper half-plane $\mathbb H$. They are obtained by taking limits of measures in finite volume and are called measures with free boundary conditions.

Within this model, the Ising spin correlation functions are rewritten as follows.
\begin{proposition}\label{prop:coupling} For the Ising model on a finite graph $\G$ and any $A\subset V(G)$, we have
\be 
\big\langle\prod_{x\in A}\sigma_x\big\rangle_{G,\beta}=\phi_{G,\beta}[\mathcal F_A],
\ee 
where $\mathcal F_A$ is the set of percolation configurations for which each connected component intersects $A$ an even number of times. \end{proposition}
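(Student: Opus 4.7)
The plan is to establish this identity via the Edwards--Sokal coupling between the Ising model and the random-cluster model, based on the third expansion in \eqref{eq:J_exp}. Namely, I will introduce auxiliary edge variables $\omega:E(G)\to\{0,1\}$ and construct a joint probability distribution on spin-edge pairs $(\sigma,\omega)$, whose marginal on spins is the Ising measure and whose marginal on edges is $\phi_{G,\beta}$.

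First, I would write
\begin{equation*}
\exp(\beta J_{x,y}\sigma_x\sigma_y) \;=\; \sum_{\omega(x,y)\in\{0,1\}} \bigl[(1-p_{x,y})\,\mathbb{I}[\omega(x,y)=0] + p_{x,y}\,\mathbb{I}[\omega(x,y)=1]\,\mathbb{I}[\sigma_x=\sigma_y]\bigr]\, e^{\beta J_{x,y}},
\end{equation*}
with $p_{x,y}=1-e^{-2\beta J_{x,y}}$, which follows directly from the third line of \eqref{eq:J_exp} after pulling out an overall factor of $e^{\beta J_{x,y}}$ (and using that $e^{\beta J_{x,y}}-e^{-\beta J_{x,y}} = e^{\beta J_{x,y}} p_{x,y}$). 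Substituting into the Ising partition function and interchanging sums over $\sigma$ and $\omega$ yields a joint weight
\begin{equation*}
W(\sigma,\omega) \;=\; \prod_{\{x,y\}:\omega=1} p_{x,y}\;\prod_{\{x,y\}:\omega=0}(1-p_{x,y})\;\prod_{\{x,y\}:\omega=1}\mathbb{I}[\sigma_x=\sigma_y],
\end{equation*}
up to a constant depending only on $(\beta,J)$. The key feature of $W$ is that for each fixed $\omega$, a spin configuration has nonzero weight if and only if $\sigma$ is constant on every connected component of $\omega$; there are exactly $2^{k(\omega)}$ such configurations. Summing over $\sigma$ therefore reproduces (up to normalization) the random-cluster weight $\phi_{G,\beta}[\omega]$.

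Next I would compute $\langle\prod_{x\in A}\sigma_x\rangle_{G,\beta}$ in the coupled measure. For a fixed $\omega$, index the connected components of $\omega$ by $C$ and let $\varepsilon_C\in\{\pm1\}$ denote the common spin value on $C$. Then
\begin{equation*}
\prod_{x\in A}\sigma_x \;=\; \prod_C \varepsilon_C^{\,|A\cap C|},
\end{equation*}
and summing independently over each $\varepsilon_C\in\{\pm1\}$ gives $2$ if $|A\cap C|$ is even and $0$ otherwise. Hence the sum over $\sigma$ of $W(\sigma,\omega)\prod_{x\in A}\sigma_x$ equals $2^{k(\omega)}\,\mathbb{I}[\omega\in\mathcal F_A]$ times the edge-weight prefactor. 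Dividing by the analogous sum without the spin insertion (which gives the partition function $Z_{\mathrm{RCM}}(G,\beta)$ up to the same constants), the ratio becomes exactly $\phi_{G,\beta}[\mathcal F_A]$, proving the claim.

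There is no substantive obstacle: the only thing to watch is bookkeeping of the multiplicative constants $e^{\beta J_{x,y}}$ and $2^{|V(G)|}$ that appear in the definitions of $Z(G,\beta)$ and the Ising measure, to confirm that they cancel identically between numerator and denominator. The essential step — and the one that drives the identity — is the parity cancellation in the sum $\sum_{\varepsilon_C}\varepsilon_C^{|A\cap C|}$ over cluster-constant spin assignments, which converts the spin insertion into the combinatorial condition defining $\mathcal F_A$.
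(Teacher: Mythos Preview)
Your proof is correct and is the standard Edwards--Sokal argument. The paper does not actually supply its own proof of this proposition: it states the identity as a known feature of the random-cluster representation, referring to \cite{ACCN88,Dum17,Gri06} for details. Your write-up fills in precisely the argument one would find in those references, using the third expansion in \eqref{eq:J_exp} to build the joint spin--edge measure and then observing that the cluster-wise sum $\sum_{\varepsilon_C}\varepsilon_C^{|A\cap C|}$ enforces the parity condition defining $\mathcal F_A$.
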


Note that $A$ must have even cardinality for the terms in the previous equality to be non-zero.

\subsection{The three-way  coupling}\label{sec:coupling}

In this section, we discuss a coupling between the high-temperature expansion, random current and random-cluster models. While this coupling was already presented partially before in \cite{GriJan09,LupWer15} (see below for more details), we wish to highlight some
 important applications that will be used in the next sections. 
 
Consider the following coupling between $\eta$, $\widehat \n$ and $\omega$. First, define $\eta$ to be sampled according to ${\rm P}^A_{G,\beta}$. Then, define $\widehat \n$ obtained from $\eta$ by the following formula
$$\widehat \n(x,y):=\max\{ \eta(x,y), \alpha(x,y)\},$$
where the $\alpha(x,y)$ are independent Bernoulli random variables of parameter $1-1/\cosh(\beta J_{x,y})$. Finally, let $\omega$ be obtained from $\widehat \n$ by the formula
$$\omega(x,y):=\max\{ \widehat\n(x,y), \beta(x,y)\},$$
where the $\beta(x,y)$ are independent Bernoulli random variables of parameter $1-\exp(-\beta J_{x,y})$.
Note that by construction, $\omega(x,y)=\max\{\eta(x,y),\gamma(x,y)\}$, where the $\gamma(x,y)$ are independent Bernoulli random variables of parameter $r_{x,y}:=\tanh(\beta J_{x,y})$.

\begin{theorem}\label{thm:connection}
For any finite graph $G$ and any $A\subset V(G)$, consider a triplet $(\eta,\widehat\n,\omega)$ constructed as above. Then, we have that
\begin{itemize}[noitemsep,nolistsep]
\item $\eta$ has law ${\rm P}^A_{G,\beta}$,
\item $\widehat\n$ has law $\widehat{\bf P}^A_{G,\beta}$,
\item $\omega$ has law $\phi_{G,\beta}[\,\cdot\,|\mathcal F_A]$, where $\mathcal F_A$ is defined in Proposition~\ref{prop:coupling}.
\end{itemize}
\end{theorem}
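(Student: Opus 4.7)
The plan is to verify the three marginal laws in sequence. The law of $\eta$ holds by construction. For $\widehat\n$ and $\omega$, the approach is to compute each marginal directly under the coupling, factor the resulting sums over edges, and identify them with the target formulas using only the algebraic identities in \eqref{eq:J_exp} together with an elementary cycle-space count.

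For $\widehat\n$, I would first re-express the target push-forward $\widehat{\mathbf P}^A_{G,\beta}(\hat n)$ as the sum of $w(\n)$ over currents $\n$ with $\widehat\n=\hat n$ and $\partial\n = A$. Since $w$ factors over edges, on each edge $\{x,y\}$ with $\hat n(x,y)=1$ one sums $(\beta J_{x,y})^k/k!$ over positive integers $k$ of prescribed parity $\eta(x,y)\in\{0,1\}$, giving $\sinh(\beta J_{x,y})$ when $\eta(x,y)=1$ and $\cosh(\beta J_{x,y})-1$ when $\eta(x,y)=0$. This yields
\[
\widehat{\mathbf P}^A_{G,\beta}(\hat n)\ =\ \frac{1}{\big\langle \prod_{x\in A}\sigma_x\big\rangle_{G,\beta}\, Z(G,\beta)}\sum_{\substack{\eta\leq \hat n\\ \partial\eta = A}}\ \prod_{\substack{\hat n(x,y)=1\\ \eta(x,y)=1}}\sinh(\beta J_{x,y})\ \prod_{\substack{\hat n(x,y)=1\\ \eta(x,y)=0}}\big(\cosh(\beta J_{x,y})-1\big).
\]
In the coupling, $\widehat\n = \max(\eta,\alpha)$ with $\alpha(x,y)$ independent Bernoulli of parameter $1-1/\cosh(\beta J_{x,y})$; conditioning on $\eta$ and substituting $\tanh = \sinh/\cosh$ recovers exactly the same sum, up to the standard normalization identity $\big\langle \prod_{x\in A}\sigma_x\big\rangle_{G,\beta}\, Z(G,\beta) = \prod_{\{x,y\}\in E(G)}\cosh(\beta J_{x,y})\cdot Z_{\mathrm{HT}}(G,\beta,A)$ obtained from the high-temperature expansion of $Z$.

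For $\omega$, I would use the alternative description $\omega = \max(\eta,\gamma)$ with $\gamma(x,y)$ independent Bernoulli of parameter $\tanh(\beta J_{x,y})$ already noted in the definition of the coupling. Conditioning on $\eta$ yields, up to a $\bar\omega$-independent constant,
\[
\mathbb{P}(\omega = \bar\omega)\ \propto\ \prod_{\bar\omega(x,y)=1}\tanh(\beta J_{x,y})\ \prod_{\bar\omega(x,y)=0}\big(1-\tanh(\beta J_{x,y})\big)\cdot N(\bar\omega, A),
\]
where $N(\bar\omega,A) := \#\{\eta\leq\bar\omega:\partial\eta = A\}$. A cycle-space argument identifies the admissible $\eta$'s with the empty set if $\bar\omega\notin\mathcal F_A$, and otherwise with an affine coset of the cycle space of $(V(G),E(\bar\omega))$, giving $N(\bar\omega,A) = \mathbb{I}[\bar\omega\in\mathcal F_A]\cdot 2^{|E(\bar\omega)| - |V(G)| + k(\bar\omega)}$. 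The elementary identity $p_{x,y}/(1-p_{x,y}) = 2\tanh(\beta J_{x,y})/(1-\tanh(\beta J_{x,y}))$, obtained from $p_{x,y} = 1-e^{-2\beta J_{x,y}}$, then converts the product above into the Fortuin--Kasteleyn weight $\prod p_{x,y}^{\bar\omega(x,y)}(1-p_{x,y})^{1-\bar\omega(x,y)}\cdot 2^{k(\bar\omega)}$ after absorbing the factor $2^{|E(\bar\omega)|}$, while the indicator $\mathbb{I}[\bar\omega\in\mathcal F_A]$ realises the conditioning on $\mathcal F_A$. Matching normalizations completes the argument.

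The only combinatorial subtlety is the count $N(\bar\omega,A)$, which rests on the observation that the set of admissible $\eta$'s is either empty or an affine coset of the cycle space of $\bar\omega$, of dimension equal to the cyclomatic number $|E(\bar\omega)| - |V(G)| + k(\bar\omega)$. All remaining steps are algebraic manipulations of the hyperbolic identities encoded in \eqref{eq:J_exp}; no correlation inequalities or positivity arguments enter.
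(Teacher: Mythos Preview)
Your proof is correct and follows essentially the same approach as the paper's. Both proofs are direct computations: the second bullet is verified by factoring over edges and using $\tanh=\sinh/\cosh$ together with $1-1/\cosh=(\cosh-1)/\cosh$, and the third bullet rests on the cycle-space count $N(\bar\omega,A)=\mathbb{I}[\bar\omega\in\mathcal F_A]\cdot 2^{|E(\bar\omega)|-|V(G)|+k(\bar\omega)}$ combined with the identity $2r_{x,y}/(1-r_{x,y})=p_{x,y}/(1-p_{x,y})$. The only organizational difference is that for the third bullet the paper computes the \emph{joint} law of $(\eta,\omega)$ and matches it to a second coupling $\widetilde{\mathbb P}$ built by first sampling $\omega\sim\phi_{G,\beta}[\,\cdot\,|\mathcal F_A]$ and then $\eta$ uniformly in $S_A(\omega)$, whereas you sum out $\eta$ immediately and identify the $\omega$-marginal directly; the ingredients and the count are identical.
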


In the case $A=\emptyset$, the connection between $\eta$ and $\omega$ was proved in \cite{GriJan09} by Grimmett and Janson, and the connection between $\widehat \n$ and $\omega$ was presented in \cite{LupWer15} by Lupu and Werner. Since we mention the coupling in the general case of arbitrary $A$, and since we wish to highlight some consequences below, we include the proof for completeness.

\begin{proof}
The first bullet is trivial by construction. For the second bullet, the expansion leading to $\eta$ can be understood as a partially resumed expansion in currents so that $\eta(x,y)=1$ if and only if $\n(x,y)$ is odd. Now, conditioned on $\n(x,y)$ being even, we readily obtain that the probability of being equal to 0 divided by the probability of being equal to a strictly positive (even) number is equal to
$1/\cosh(\beta J_{x,y})$. This implies that conditioned on $\n(x,y)$ being even, $\n(x,y)>0$ with probability $1-1/\cos(\beta J_{x,y})$ independently of $\eta(x,y)$. The second bullet follows.

It remains to prove the third bullet. In order to do this, we inspect the joint law $\mathbb P$ of $(\eta,\omega)$. First, observe that it is supported on the space $\Omega_A$ of pairs $(\eta,\omega)$ satisfying that $\eta\le \omega$, $\partial\eta=A$, and $\omega\in \mathcal F_A$. Indeed, the last assertion follows from the fact that $\eta\in \mathcal F_A$ due to its source constraints, and that therefore $\omega\ge\eta$ must be in $\mathcal F_A$ (since this event is increasing).

Furthermore, using that $r_{x,y}=\tanh(\beta J_{x,y})$ and $p_{x,y}=1-e^{-2\beta J_{x,y}}$, we find that 
\begin{align*}\mathbb P[(\eta,\omega)]&=\frac{\mathbb I[(\eta,\omega)\in \Omega_A]
}{Z_0}\Big(\prod_{\{x,y\}\in E(G):\eta(x,y)=1}\tanh(\beta J_{x,y})\Big)\\
&\qquad\qquad\qquad\qquad\times\Big(\prod_{\{x,y\}\in E(G):\omega(x,y)=1,\eta(x,y)=0} r_{x,y}\Big)\Big(\prod_{\{x,y\}\in E(G):\omega(x,y)=0} (1-r_{x,y})\Big)\\
&=\frac{\mathbb I[(\eta,\omega)\in \Omega_A]
}{Z_0}\Big(\prod_{\{x,y\}\in E(G):\omega(x,y)=1}r_{x,y}\Big)\Big(\prod_{\{x,y\}\in E(G):\omega(x,y)=0} (1-r_{x,y})\Big)\\
&=\frac{1}{Z_0\prod_{\{x,y\}\in E(G)}(1+e^{-2\beta J_{x,y}})/2}\times\mathbb I[(\eta,\omega)\in \Omega_A]\\
&\qquad\qquad\qquad\qquad\qquad\qquad
\times \Big(\prod_{\{x,y\}\in E(G):\omega(x,y)=1} \frac{p_{x,y}}2\Big)\Big(\prod_{\{x,y\}\in E(G):\omega(x,y)=0}(1-p_{x,y})\Big),\end{align*}
where $Z_0$ is a normalizing constant. 

Now, define the following measure $\widetilde {\mathbb P}$ on $\Omega_A$ as follows. Choose $\widetilde \omega$ according to the law $\phi_{G,\beta}(\cdot|\mathcal F_A)$ and then choose  $\widetilde\eta$ uniformly in the set $$S_A(\widetilde\omega):=\{\widetilde\eta\subset\widetilde \omega:\partial\widetilde\eta=A\}.$$
The number of even subgraphs (i.e.~graphs $\widetilde \eta$ such that $\widetilde\eta\subset\widetilde\omega$ and $\partial\widetilde\eta=\emptyset$) of $\tilde\omega$ is given by $2^{k(\widetilde\omega)+|E(\widetilde\omega)|-|V(\widetilde\omega)|}$, where $E(\widetilde\omega)$ and $V(\widetilde\omega)=V(G)$ are the sets of edges and vertices of $\widetilde \omega$. Since $\widetilde\omega\in \mathcal F_A$, there exists a graph $\widetilde\eta_0\subset\widetilde\omega$ with $\partial\widetilde\eta_0=A$. The mapping $\widetilde\eta\mapsto\widetilde\eta\Delta \widetilde\eta_0$ (the symmetric different should be understood in terms of edge-sets of the respective graphs here) bijectively maps the set of even subgraphs of $\widetilde\omega$ to $S_A(\widetilde\omega)$ so that
$$|S_A(\widetilde \omega)|=2^{k(\widetilde\omega)+|E(\widetilde\omega)|-|V(G)|}.$$
Since $\widetilde\eta$ is chosen uniformly in $S_A(\widetilde\omega)$, we deduce that
\begin{align*}\widetilde{\mathbb P}[(\widetilde\eta,\widetilde\omega)]&=\frac{\mathbb I[(\widetilde\eta,\widetilde\omega)\in \Omega_A]
}{Z_1}\Big(\prod_{\{x,y\}\in E(G):\widetilde\omega(x,y)=1}p_{x,y} \Big)\\
&
\qquad\qquad\qquad\qquad\quad\quad\ \times\Big(\prod_{\{x,y\}\in E(G):\widetilde\omega(x,y)=0}(1-p_{x,y})\Big)\times 2^{k(\widetilde\omega)}\times 2^{-k(\widetilde\omega)-|E(\widetilde\omega)|+|V(G)|}\\
&=\frac{
2^{|V(G)|}\mathbb I[(\widetilde\eta,\widetilde\omega) \in \Omega_A]}{Z_1}\times \Big(\prod_{\{x,y\}\in E(G):\widetilde\omega(x,y)=1}\frac{p_{x,y}}{2}\Big)\times \Big(\prod_{\{x,y\}\in E(G):\widetilde\omega(x,y)=0}(1-p_{x,y})\Big),\end{align*}
where $Z_1$ is a normalizing constant.

Altogether, the measures $\mathbb P$ and $\widetilde{\mathbb P}$ are equal. Since the second marginal of $\widetilde{\mathbb P}$ has law $\phi_{G,\beta}[\cdot|\mathcal F_A]$, this implies the third bullet and concludes the proof.
\end{proof}

The proof of Theorem~\ref{thm:connection} is instructive since it tells us what is the recipe to obtain a high-temperature expansion in terms of the random-cluster configuration: one should take uniform subgraphs with $\partial\eta=A$. Note that for arbitrary $A$, picking a uniform subgraph with $\partial\eta=A$ boils down to picking a uniform even subgraph and then taking the symmetric difference with a subgraph $\eta_0$ with $\partial\eta_0=A$ chosen in advance.

For $A=\emptyset$, the space $S_\emptyset(\omega)$ of even subgraphs of $\omega$ can be seen as a finite $(\mathbb Z/2\mathbb Z)$-vector space with the symmetric difference $\Delta$ working as a sum. Therefore, a way of sampling uniformly a random even subgraph of $\omega$ consists in first choosing a basis $(\eta_1,\dots,\eta_k)$, second, choosing  each element of this basis independently with probability 1/2 and, third, taking the symmetric difference of these subgraphs.

\begin{remark}There is a natural way of choosing a basis. First, pick a spanning tree $T$ of $\omega$. Second, observe that for each pair $\{x,y\}\in E(\omega)$ which is not in $T$, there exists a unique cycle $\ell(x,y)$ composed of $\{x,y\}$ together with edges of $T$.  The family of cycles $\ell(x,y)$ with $\{x,y\}$ running over every edge of $\omega\setminus T$ is then a basis for our vector space.
\end{remark}

We now list a few easy consequences of this coupling which will be very useful.
\begin{corollary}\label{rmk:cycle}
Conditioned on $\omega$, the law of $\eta$ and $\eta\Delta c$ are the same for each cycle $c\subset \omega$.
\end{corollary}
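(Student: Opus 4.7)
The plan is to deduce the corollary directly from the explicit description of the conditional law of $\eta$ given $\omega$ that was, in fact, already established in the proof of Theorem~\ref{thm:connection}. Recall that the measure $\widetilde{\mathbb P}$ constructed there samples $\widetilde\omega$ according to $\phi_{G,\beta}[\cdot\,|\,\mathcal F_A]$ and then $\widetilde\eta$ uniformly in $S_A(\widetilde\omega):=\{\widetilde\eta\subset\widetilde\omega:\partial\widetilde\eta=A\}$, and that $\widetilde{\mathbb P}$ was shown to coincide with the joint law $\mathbb P$ of $(\eta,\omega)$. Consequently, conditionally on $\omega$, the configuration $\eta$ is uniformly distributed on $S_A(\omega)$. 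This is the only input from earlier in the paper that I would need.

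Once this is in hand, the rest is a one-line symmetry argument. Fix $\omega$ and let $c\subset\omega$ be a cycle, i.e.\ a subset of edges of $\omega$ with $\partial c=\emptyset$. Consider the map $T_c:\eta\mapsto\eta\Delta c$ on subgraphs of $\omega$. Since $c\subset\omega$, if $\eta\subset\omega$ then $\eta\Delta c\subset\omega$, and since $\partial(\eta\Delta c)=\partial\eta\,\Delta\,\partial c=\partial\eta$, the source set is preserved. Thus $T_c$ maps $S_A(\omega)$ to itself, and it is clearly an involution (hence a bijection) since $c\Delta c=\emptyset$.

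Because the conditional law of $\eta$ given $\omega$ is the uniform measure on $S_A(\omega)$, and $T_c$ is a bijection of $S_A(\omega)$, the push-forward of this uniform law under $T_c$ is again the uniform law on $S_A(\omega)$. Therefore $\eta$ and $T_c(\eta)=\eta\Delta c$ have the same conditional distribution given $\omega$, which is the claim. There is no serious obstacle here: the entire content of the corollary is the uniform conditional distribution, which the proof of Theorem~\ref{thm:connection} already supplies, together with the trivial observation that symmetric difference with a cycle preserves both the source set and the inclusion in $\omega$.
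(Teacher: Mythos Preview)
Your proof is correct and takes essentially the same approach as the paper: both rely on the fact, established in the proof of Theorem~\ref{thm:connection}, that the conditional law of $\eta$ given $\omega$ is uniform on $S_A(\omega)$, together with the observation that $\eta\mapsto\eta\Delta c$ is a bijection of $S_A(\omega)$. The paper phrases this via the incomplete basis theorem for the $(\mathbb Z/2\mathbb Z)$-vector space $S_\emptyset(\omega)$ (extending $c$ to a basis and noting that uniform sampling means independent Bernoulli$(1/2)$ coefficients), while you invoke directly that uniform measure is preserved by any bijection; these are cosmetic variants of the same idea.
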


\begin{proof}
The theorem of the incomplete basis applied to the previous $(\mathbb Z/2\mathbb Z)$-vector space allows to pick, for each cycle $c\subset\omega$, a basis including $c$. The proposition follows readily.\end{proof}

Another simple corollary of the previous coupling is the following useful infinite-volume convergence. Note that it is a priori difficult to define infinite-volume versions of the law of $\eta$ and $\widehat\n$.
\begin{corollary}
If $\phi_{\mathbb Z^2,\beta}[0\leftrightarrow \infty]=0$, then ${\rm P}_{G,\beta}^\emptyset$ and $\widehat{\bf P}_{G,\beta}^\emptyset$ converge weakly as $G$ tends to $\mathbb Z^2$ to two measures ${\rm P}_{\mathbb Z^2,\beta}^\emptyset$ and $\widehat{\bf P}_{\mathbb Z^2,\beta}^\emptyset$. The same holds for the upper half-plane $\mathbb H$ (we write ${\rm P}_{\mathbb H,\beta}^\emptyset$ and $\widehat{\bf P}_{\mathbb H,\beta}^\emptyset$ for the respective measures).
\end{corollary}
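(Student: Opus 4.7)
The plan is to exploit the three-way coupling of Theorem~\ref{thm:connection} in its reversed form: when $A=\emptyset$ the conditioning on $\mathcal F_\emptyset$ is vacuous, so we may sample $\omega$ from $\phi_{G,\beta}$ and then draw $\eta$ uniformly from the set $S_\emptyset(\omega)$ of even subgraphs of $\omega$. Moreover, $\widehat{\mathbf n}$ is obtained from $\eta$ by independent edgewise Bernoulli variables, so weak convergence of the law of $\eta$ upgrades automatically to weak convergence of $\widehat{\mathbf n}$. The finite-volume random-cluster measures $\phi_{G,\beta}$ with free boundary conditions converge weakly to $\phi_{\mathbb Z^2,\beta}$ (respectively $\phi_{\mathbb H,\beta}$) by the classical monotonicity arguments for $q\ge 1$, so only the passage from $\omega$ to $\eta$ needs to be controlled in the limit.

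The key observation is that the uniform measure on $S_\emptyset(\omega)$ factorizes over connected components: writing $\omega$ as a disjoint union $\bigsqcup_i C_i$ of its clusters, a subgraph is even iff its restriction to each $C_i$ is even, so the uniform law on $S_\emptyset(\omega)$ is the product over $i$ of the uniform laws on $S_\emptyset(C_i)$. Consequently, for any finite edge set $E_0$ and any event $\mathcal E$ depending only on $\eta|_{E_0}$, the conditional probability $\mathbb{P}[\mathcal E\mid\omega]$ depends on $\omega$ only through the random union $\mathcal C(\omega,E_0)$ of clusters of $\omega$ meeting at least one endpoint of an edge of $E_0$. The hypothesis $\phi_{\mathbb Z^2,\beta}[0\leftrightarrow\infty]=0$, combined with translation invariance and a union bound over the finitely many vertices of $E_0$, then yields $\phi_{\mathbb Z^2,\beta}[\mathcal C(\omega,E_0)\subset B_n]\longrightarrow 1$ as $n\to\infty$, where $B_n$ denotes the box of radius $n$ centred at the origin.

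Fix $\varepsilon>0$ and choose $n$ so that $\phi_{\mathbb Z^2,\beta}[\mathcal C(\omega,E_0)\not\subset B_n]<\varepsilon$. The event $\{\mathcal C(\omega,E_0)\not\subset B_n\}$ is increasing, and $\phi_{G,\beta}$ is stochastically dominated by $\phi_{\mathbb Z^2,\beta}$ for every $G\subset\mathbb Z^2$ (a standard feature of free-boundary random-cluster measures with $q\geq 1$), so the same bound holds for $\phi_{G,\beta}$ as soon as $G\supset B_n$. On the complementary event $\mathbb{P}[\mathcal E\mid\omega]$ equals a fixed function $f(\omega|_{B_n})$, namely the uniform-even-subgraph probability of $\mathcal E$ computed inside the finite configuration $\omega|_{B_n}$, the relevant clusters being identifiable from $\omega|_{B_n}$ alone as those meeting $V(E_0)$ and not touching $\partial B_n$. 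Therefore
\[ \bigl|{\rm P}^\emptyset_{G,\beta}[\mathcal E] - \phi_{G,\beta}[f(\omega|_{B_n})]\bigr|\le\varepsilon \]
for every $G\supset B_n$, and weak convergence of $\phi_{G,\beta}$ on the local functional $f(\omega|_{B_n})$ produces a limit which, after $\varepsilon\to 0$, defines ${\rm P}^\emptyset_{\mathbb Z^2,\beta}[\mathcal E]$. The same argument applies verbatim to $\mathbb H$, using $\phi_{\mathbb H,\beta}\le\phi_{\mathbb Z^2,\beta}$ to inherit the absence of infinite clusters. The main delicacy is precisely this last locality step: one must verify that on $\{\mathcal C(\omega,E_0)\subset B_n\}$ the conditional law of $\eta|_{E_0}$ given $\omega$ can be read off from $\omega|_{B_n}$ alone, independently of the ambient graph $G$, so that the quantity to which one applies weak convergence of the random-cluster measure is genuinely a local function.
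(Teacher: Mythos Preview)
Your argument is correct and is precisely the expanded version of the paper's one-line proof, which simply says that ``the coupling can be defined locally in each of the (finite) connected components of $\omega\sim\phi_{\mathbb Z^2,\beta}$.'' You have unpacked this locality statement carefully: factorization of the uniform even-subgraph law over clusters, uniform control of cluster sizes via stochastic domination by $\phi_{\mathbb Z^2,\beta}$, and reduction to a local functional of $\omega$ to which the known weak convergence of the random-cluster measures applies.
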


\begin{proof}
The proof is straightforward since the coupling can be defined locally in each of the (finite) connected components of $\omega\sim\phi_{\mathbb Z^2,\beta}$.
\end{proof}

The coupling is built in such a way that $\eta$ somehow encodes the dependencies in $\omega$. We illustrate this fact in the following corollary.

\begin{corollary}\label{cor:correlation}
For any $n\le N$ and any two increasing events $A$ and $B$ depending on edges with both endpoints in $V$ and $W$ respectively, we have
\be\label{eq:1a}
\phi_{G,\beta}[A]\phi_{G,\beta}[B]\le \phi_{G,\beta}[A\cap B]\le \phi_{G,\beta}[A]
\phi_{G,\beta}[B]+\mathrm{P}_{G,\beta}^\emptyset[V\stackrel{\eta}{\longleftrightarrow} W].\ee
\end{corollary}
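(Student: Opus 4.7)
The lower bound is the FKG inequality~\eqref{eq:FKG} for $\phi_{G,\beta}$ applied to the increasing events $A$ and $B$. For the upper bound, the plan is to exploit the three-way coupling of Theorem~\ref{thm:connection}. Sample $\eta\sim\mathrm P^\emptyset_{G,\beta}$ together with an independent family $\gamma=(\gamma(e))_{e\in E(G)}$ of Bernoulli variables, and set $\omega:=\eta\cup\gamma$, which then has law $\phi_{G,\beta}$. As implicitly understood, $V\cap W=\emptyset$, so the edge-sets $E(V)$ and $E(W)$ of edges with both endpoints in $V$, respectively $W$, are disjoint. Conditionally on $\eta$, the event $A$ depends only on $\gamma|_{E(V)}$ and $B$ only on $\gamma|_{E(W)}$; since these families of $\gamma$'s are independent, $A$ and $B$ become conditionally independent given $\eta$. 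Writing $f(\eta|_{E(V)}):=\phi[A\mid\eta]$ and $g(\eta|_{E(W)}):=\phi[B\mid\eta]$, both taking values in $[0,1]$, one obtains
\[
\phi[A\cap B]-\phi[A]\phi[B]\;=\;\mathrm{Cov}_{\mathrm P^\emptyset}\!\bigl(f(\eta|_{E(V)}),\,g(\eta|_{E(W)})\bigr).
\]

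The task is thus to bound this covariance by $\mathrm P^\emptyset[V\stackrel{\eta}{\longleftrightarrow} W]$. The plan is to condition further on $\omega$ and use that $\eta\mid\omega$ is uniform on the $\mathbb Z/2\mathbb Z$-vector space of even subgraphs of $\omega$. Expanding $\eta$ via independent $\{0,1\}$-bits on a fundamental cycle basis attached to a spanning forest of $\omega$, as in Section~\ref{sec:coupling}, the restrictions $\eta|_{E(V)}$ and $\eta|_{E(W)}$ depend on disjoint collections of bits \emph{except} for those indexing cycles that contain edges in both $E(V)$ and $E(W)$. Any such ``crossing'' cycle is, by connectedness, a path in $\omega$ from $V$ to $W$, so that its inclusion in $\eta$ immediately witnesses $V\stackrel{\eta}{\longleftrightarrow} W$. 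In particular, on the event that no such crossing cycle is ``on'', the quantities $f(\eta|_{E(V)})$ and $g(\eta|_{E(W)})$ are conditionally independent given $\omega$ and make zero contribution to the covariance.

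The main obstacle is to handle potential cancellations: if several crossing cycles are simultaneously flipped on, their symmetric difference might fail to provide a $V$-to-$W$ path in $\eta$ even though the covariance contribution persists. I would bypass this by exploring the crossing cycles in a prescribed order and invoking the cycle-swap invariance of Corollary~\ref{rmk:cycle} to re-randomise the bits corresponding to cycles revealed after the first non-trivial flip; any pair of configurations contributing non-trivially to the conditional covariance can then be matched with one in which a preselected crossing cycle lies entirely in $\eta$, thereby providing the desired connection. Combined with the pointwise bound $|f-\mathbb E f|\cdot|g-\mathbb E g|\leq 1$ and averaged over $\omega$, this gives the claimed inequality.
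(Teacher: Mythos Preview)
Your covariance identity
\[
\phi_{G,\beta}[A\cap B]-\phi_{G,\beta}[A]\,\phi_{G,\beta}[B]\;=\;\mathrm{Cov}_{\mathrm P^\emptyset}\!\bigl(f(\eta|_{E(V)}),\,g(\eta|_{E(W)})\bigr)
\]
is correct and is a pleasant reformulation. The remainder of the argument, however, does not close. Two concrete gaps:

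\textbf{(i) The law of total covariance produces an unaddressed term.} Once you ``condition further on $\omega$'', you get
\[
\mathrm{Cov}_{\mathrm P^\emptyset}(f,g)\;=\;E_\phi\bigl[\mathrm{Cov}(f,g\mid\omega)\bigr]\;+\;\mathrm{Cov}_\phi\bigl(E[f\mid\omega],\,E[g\mid\omega]\bigr).
\]
You only discuss the first term. The second is a covariance under $\phi_{G,\beta}$ of two functions of $\omega$ that are \emph{not} local to $E(V)$ and $E(W)$: the marginal of a uniform even subgraph on $E(V)$ depends on the global cycle structure of $\omega$, so $E[f\mid\omega]$ genuinely sees edges far from $V$. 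There is no evident reason this cross term should be small, let alone bounded by $\mathrm P^\emptyset[V\stackrel{\eta}{\longleftrightarrow}W]$.

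\textbf{(ii) The crossing-cycle argument is not a proof.} Even restricting to the first term, the event ``some crossing bit is on'' neither implies nor is implied by $\{V\stackrel{\eta}{\longleftrightarrow}W\}$, because symmetric differences of cycles can cancel edges along the would-be connection. Your proposed fix (explore crossing cycles, re-randomise via Corollary~\ref{rmk:cycle}, match configurations) is a sketch of an intention rather than an argument; in particular, re-randomising bits after conditioning on the first nonzero one breaks the uniform-even-subgraph structure you rely on, and no injection or measure-preserving map is actually exhibited.

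The paper sidesteps all of this by \emph{not} conditioning on $\omega$ but instead exploring the $\eta$-cluster $\mathcal S$ of $V$. The key fact is a spatial Markov property of $\mathrm P^\emptyset_{G,\beta}$: conditioned on $\mathcal S=S$, the law of $\eta$ on $G\setminus S$ is $\mathrm P^\emptyset_{G\setminus S,\beta}$, and hence (through the coupling) the law of $\omega$ on $G\setminus S$, given also $\omega$ inside $S$, is $\phi_{G\setminus S,\beta}$. Monotonicity in the domain, $\phi_{G\setminus S,\beta}[B]\le\phi_{G,\beta}[B]$, then gives
\[
\mathbb P\bigl[\{\omega\in A\cap B\}\cap\{V\stackrel{\eta}{\kern+4.5pt\arrownot\kern-4.5pt\longleftrightarrow}W\}\bigr]\ \le\ \phi_{G,\beta}[A]\,\phi_{G,\beta}[B],
\]
which is exactly the missing piece. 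This monotonicity step is the idea your approach is lacking: the cycle-basis viewpoint does not naturally produce a comparison with a product measure.
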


Note that the left-hand side is simply the FKG inequality \eqref{eq:FKG}. The non-trivial point is the right-hand side.

\begin{remark}\label{rem:3}
Later, we will use the following direct consequence of Corollary~\ref{cor:correlation}:
  \begin{equation}
\phi_{\mathbb Z^2,\beta}[A]\phi_{\mathbb Z^2,\beta}[B]\le \phi_{\mathbb Z^2,\beta}[A\cap B]\le \phi_{\mathbb Z^2,\beta}[A] \phi_{\mathbb Z^2,\beta}[B]+\widehat{\mathbf{P}}_{\mathbb Z^2,\beta}^\emptyset[V\stackrel{\widehat\n}{\longleftrightarrow} W].\label{eq:1}
\end{equation}
To prove the last inequality, first, use that $\eta$ is stochastically dominated by $\widehat\n$ to obtain a finite-volume version of \eqref{eq:1a}. Second, consider a sequence of finite graphs converging to $\mathbb Z^2$ and use the definition of $\widehat{\bf P}_{\bbZ^2,\beta}^\emptyset$.
\end{remark}

\begin{proof}
As mentioned above, the left-hand side is a consequence of \eqref{eq:FKG}. For the right-hand side, consider the coupling $\mathbb P$  between $\omega$ and $\eta$ introduced in Theorem~\ref{thm:connection}.

Consider the random set $\mathcal S$ of vertices in $G$ connected to $V$ in $\eta$. Conditioned on $\mathcal S=S$, the distribution of $\eta$ outside $S$ is the same as the one of the  sourceless high temperature expansion in $G\setminus S$. Since the coupling between $\eta$ and $\omega$ consists in opening edges independently of $\eta$, we deduce that, conditioned on $\mathcal S=S$ and anything related to the random-cluster configuration inside $S$, the distribution of the random-cluster model outside $S$ is the same as $\phi_{G\setminus S,\beta}$. Therefore,
\begin{align*}\mathbb P[\{\omega\in A\cap B\}\cap \{V\stackrel{\eta}{\kern+4.5pt\arrownot\kern-4.5pt\longleftrightarrow}W\}]&=\sum_{S\cap F=\emptyset}  \mathbb P[\{\omega\in A\}\cap\mathcal S=S]\times\phi_{G\setminus S,\beta}[B]\\
&\le  \sum_{S\cap F=\emptyset} \mathbb P[\{\omega\in A\}\cap\mathcal S=S]\times\phi_{G,\beta}[B]\\
&= \mathbb P[\{\omega\in A\}\cap \{V\stackrel{\eta}{\kern+4.5pt\arrownot\kern-4.5pt\longleftrightarrow}W\}]\times\phi_{G,\beta}[B]\le \phi_{G,\beta}[A]\phi_{G,\beta}[B].\end{align*}
In the first inequality, we used that $W\subset G\setminus S$ and the monotonicity with respect to the coupling constants for the random-cluster model; see e.g.~\cite[Lemma 4.14]{Gri06}.
Altogether, we find that
$$\phi_{G,\beta}[A\cap B]\le \mathbb P[\{\omega\in A\cap B\}\cap \{V\stackrel{\eta}{\kern+4.5pt\arrownot\kern-4.5pt\longleftrightarrow}W\}]+\widehat{\bf P}^\emptyset_{G,\beta}[V\stackrel{\eta}{\longleftrightarrow}W],$$
which, combined with the last displayed inequality, concludes the proof readily.
\end{proof}

\section{Proof of Theorem~\ref{thm:Pf_finite_range}}\label{sec:4}

\subsection{Notation}\label{sec:4.1}

Below, $J$ will be a fixed set of coupling constants on $\mathbb Z^2$ satisfying the hypotheses (i)--(iv) of Theorem \ref{thm:Pf_finite_range} (which also make sense for coupling constants defined over the entire $\mathbb Z^2$), and let $\G$ be the associated graph; that is, the vertex-set of $\G$ is $\mathbb Z^2$ (or $\mathbb H$ as will be clear from the context), and the edge-set consists of the pairs $\{x,y\}$ with $J_{x,y} >0$. 
  
 For $N>0$, we denote the \emph{box} of size $N$ around $0$ and the one centered at $x$  by $\Lambda_N:=[-N,N]^2$ and $\Lambda_N(x):=x+\Lambda_N$, correspondingly.  We denote its boundary by $\partial\Lambda_N$ (seen as a subset of $\mathbb R^2$).

Below, based on this graph structure, we define the notion of paths (which correspond to edge-self-avoiding paths in the discrete graph) and continuous paths (parametrized continuous curves in $\mathbb R^2$ associated with these paths).    

\begin{definition}We call \emph{path} a sequence of vertices $\pi=(v_0,\dots,v_k)$ such that the unordered pairs $\{v_0,v_1\}, \{v_1,v_2\},\ldots, \{v_{k-1},v_k\}$ are distinct edges of $G$. \end{definition}
A path is always assumed to be edge-avoiding but not necessarily vertex-avoiding. Given a percolation configuration $\omega$ on $E$, the path $\pi$ is said to be \emph{open in $\omega$} if $\omega(v_{i-1},v_{i})=1$ for every $i$. Given a current $\mathbb \n$ on $E$,  the path $\pi$ is said to be \emph{odd in $\n$} if $\n(v_{i-1},v_{i})$ is odd for every $1\le i\le k$. 

To every path $\pi$, we associate a continuous (parametrized) curve  $\gamma:[a,b]\rightarrow \bbR^2$ which proceeds in  straight segments, and at constant speed, between $v_{i-1}$ and $v_i$, for every $1\le i\le k$. 

\begin{definition} We call  \emph{continuous path} any restriction to an interval $[c,d]\subset [a,b]$  of a continuous curve associated with a path from a point $a$ to $b$. \end{definition}

With this definition, a continuous path may start and end anywhere on the projection of an edge. A continuous path is said to be \emph{open} in $\omega$, resp.\@  \emph{odd} in $\n$, if it is the restriction of a continuous curve associated with an open path in $\omega$, resp.\@  odd in $\n$.

The notion of  continuous path is important when discussing topological properties of paths (for instance, existence of circuits, existence of a path from $A$ to $B$ passing below a certain point, etc.).   Two continuous paths may intersect (as planar curves) while their corresponding discrete paths do not (in the sense that they do not share a vertex). This phenomenon is the main difficulty arising when working with a general finite-range models on $\mathbb Z^2$ (compared to planar models).   
\begin{definition}
 A {\em circuit} (or a \emph{cycle}) is a path  $\pi=(v_0,\dots,v_k)$ with $v_k=v_0$. A circuit is said to {\em surround} a bounded connected set $S$ if the associated continuous path cannot be retracted into a point in $\bbR^2\setminus S$. 
\end{definition}

\subsection{Setting of the proof}

In the whole section, we assume that the phase transition is continuous, or more precisely that $\langle\sigma_0\sigma_x\rangle_{\mathbb H,\beta}$ tends to zero as $x\in\partial\mathbb H$ tends to infinity. Note that this assumption is harmless. If this is not the case, there exists $m>0$ such that $\langle\sigma_{x_i}\sigma_{x_j}\rangle_{\mathbb H,\beta}$ tends to $m^2$ uniformly in $\|x_i-x_j\|\rightarrow\infty
$. The mixing property \cite[Chapter 4]{Gri06} of the random-cluster measure then
 implies that $\langle\sigma_{x_1}\cdots\sigma_{x_{2n}}\rangle_{\mathbb H,\beta}\longrightarrow m^{2n}$, which is coherent with the Pfaffian formula of Theorem~\ref{thm:Pf_finite_range}. In conclusion, the case where $\langle\sigma_0\sigma_x\rangle_{\mathbb H,\beta}$ tends to 0 is the only case in which we need to prove something non-trivial. Since this can occur at the critical point only (this fact is easy to prove using the result of \cite{ManRao16}), $\beta$ will be assumed to be equal to $\beta_c$ (we drop it from the notation).

As mentioned in the introduction, the proof runs along the same lines as the proof of Theorem~\ref{thm:pf_boundary} except that \eqref{eq:main_boundary} is not true anymore. In fact, we will work with a slightly weaker inequality than \eqref{eq:p1}, which makes the notion of pairing sources in $\n_1$ more concrete. For this, we shall use what is known as the current's backbone~\cite{Aiz82}.  Its definition involves 
an arbitrarily preselected order $\prec$ on the collection of the graph's oriented edges, but this choice does not affect the analysis. Below, we say that two paths (which are necessarily edge-self-avoiding by definition) are {\em disjoint} if they do not use any common edge. 

\begin{definition}
Let  $\n$ be a currents configuration with only finite clusters, and $A= \partial\n$ the set of its sources.  Its {\em backbone}, denoted  $\g=\g(\n)$, is the collection of oriented disjoint paths $\gamma_1,\dots,\gamma_n$ such that:
\begin{itemize}[noitemsep,nolistsep]
\item the paths are supported on edges $\{x,y\}$ with $\n(x,y)$ odd, 
\item the paths start and end in $A$, and pair  the vertices in $A$, 
\item the collection of paths is minimal for the lexicographic order (induced by   the preselected order  $\prec$) among collections of paths having the two first properties. 
\end{itemize}
\end{definition}

We will base our analysis on the following inequality.

\begin{lemma}\label{lem:pp}For any pair of currents $(\n_1,\n_2)$ with $\partial\n_1=X$ and $\partial\n_2=\emptyset$, we have that\begin{align} \label{eq:main_boundary_finite_range}
\Big|\sum_{\ell=1}^{2n} (-1)^{\ell}  \,
 \mathbb{I} [x_1 \stackrel{\n_1+\n_2}{ \longleftrightarrow} x_\ell  ]\Big|
 &\le n\sum_{i<j<k<\ell} \ \mathbb I[\{x_i \stackrel{\g_1}{ \longleftrightarrow} x_k\}\cap\{x_j \stackrel{\g_1}{ \longleftrightarrow} x_\ell\}\cap\{x_i \stackrel{\n_1+\n_2}{\kern+4.5pt\arrownot\kern-4.5pt\longleftrightarrow} x_j\}],
\end{align}
where $\g_1$ is the backbone of $\n_1$.
\end{lemma}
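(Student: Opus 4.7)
The plan is to refine the topological argument in the proof of Theorem~\ref{thm:pf_boundary}, now accounting for the contributions from avoided intersections. Observe first that $\g_1$ realizes a fixed-point-free involution $\sigma$ on $\{1,\dots,2n\}$ pairing $x_i$ with $x_{\sigma(i)}$; since $\g_1$ uses only edges where $\n_1$ is odd (hence positive), every pair of backbone partners lies in a common cluster of $\n_1+\n_2$. Denoting by $\mathcal C$ the cluster of $x_1$ in $\n_1+\n_2$ and by $\mathcal P(\mathcal C)$ the set of backbone paths whose two endpoints lie in $\mathcal C$, the left-hand side of the asserted inequality rewrites as
\[
S \;:=\; \sum_{\ell=1}^{2n}(-1)^{\ell}\,\mathbb{I}[x_1\stackrel{\n_1+\n_2}{\longleftrightarrow} x_\ell] \;=\; \sum_{P\in\mathcal P(\mathcal C)}\bigl((-1)^{a_P}+(-1)^{b_P}\bigr),
\]
where $(a_P,b_P)$ are the two indices paired by $P$. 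Each summand equals $0$ when $a_P,b_P$ have opposite parity and $\pm 2$ otherwise.

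The key combinatorial input is a sublemma about the matching $\sigma$. Call two backbone paths \emph{linked} if their index-pairs interleave in $\{1,\dots,2n\}$, and let the \emph{linking graph} be the graph on $\{P_1,\dots,P_n\}$ with edges between linked pairs. The sublemma asserts that for every connected component $\mathcal B$ of the linking graph, the endpoint set $Y(\mathcal B):=\bigcup_{P\in\mathcal B}\{a_P,b_P\}$ contains equal numbers of odd and even indices, so that $\sum_{P\in\mathcal B}\bigl((-1)^{a_P}+(-1)^{b_P}\bigr)=0$. To prove this I would analyze the gap structure $Y(\mathcal B)=\{y_1<\dots<y_{2m}\}$: any chord $(c,d)$ with one endpoint in some gap $(y_i,y_{i+1})$ and the other outside it must be linked with the chord through $y_{i+1}$ (by a short case analysis on where $\sigma(y_{i+1})$ lies), hence belongs to $\mathcal B$, contradicting the fact that its endpoint $c$ falls in a gap disjoint from $Y(\mathcal B)$. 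Consequently each gap is a disjoint union of entire other linking components and so has even cardinality; this forces consecutive $y_i,y_{i+1}$ to alternate in parity and yields the claimed balance.

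Granted the sublemma, the event $A_{i,j,k,\ell}$ is precisely that of a pair of linked backbone paths failing to lie in a common cluster of $\n_1+\n_2$. If no such event occurs, $\mathcal P(\mathcal C)$ is closed under linking, hence is a union of complete linking components; each contributes zero to $S$ by the sublemma, so $S=0$. In the general case, grouping $S$ by linking component $\mathcal B$ and again using $\sum_{P\in\mathcal B}\bigl((-1)^{a_P}+(-1)^{b_P}\bigr)=0$, each partial sum has magnitude at most $2\min\bigl(|\mathcal B\cap\mathcal P(\mathcal C)|,\,|\mathcal B\setminus\mathcal P(\mathcal C)|\bigr)\le|\mathcal B|$, and summing over $\mathcal B$ gives the universal bound $|S|\le n$. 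Combining the two cases yields the inequality: either $S=0$ (no avoided intersection) or $|S|\le n\le n\sum\mathbb{I}[A_{i,j,k,\ell}]$ (at least one avoided intersection). The main obstacle is establishing the sublemma cleanly; the rest is straightforward bookkeeping.
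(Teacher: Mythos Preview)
Your overall strategy is correct and, once the sublemma is in hand, the case split and the bookkeeping go through exactly as you describe. (Incidentally, the bound $|S|\le n$ is even more elementary than you indicate: the number of even indices in $\mathcal C$ and the number of odd indices are each at most $n$, so their difference is at most $n$ in absolute value. The paper uses this trivial bound; your derivation via linking components is correct but unnecessary.)

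The genuine issue is your proof sketch of the sublemma. The assertion that a chord $(c,d)$ with $c$ in the gap $(y_i,y_{i+1})$ and $d$ outside must be linked with \emph{the chord through $y_{i+1}$} is false. Take $\mathcal B=\{(1,5),(4,6)\}$ (a linking component in the full matching $(1,5),(4,6),(2,3),(7,8)$ of $\{1,\dots,8\}$). Here $Y(\mathcal B)=\{1,4,5,6\}$, the chord through $y_{i+1}=4$ is $(4,6)$, and a hypothetical chord $(2,7)$ with $2$ in the gap $(1,4)$ satisfies $(4,6)\subset(2,7)$, hence is \emph{not} linked with $(4,6)$; it is, however, linked with $(1,5)$. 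In general $(c,d)$ is linked with \emph{some} chord of $\mathcal B$, but not a prescribed one. To see this cleanly, place $1,\dots,2n$ on a circle and draw the chords of $\mathcal B$ as arcs; since $\mathcal B$ is a linking component these arcs form a connected set, and a simple path in this set from $y_i$ to $y_{i+1}$, together with the boundary arc through $c$, encloses a Jordan region containing $c$ but not $d$. The chord $(c,d)$ must therefore cross some chord of $\mathcal B$. This shows every \emph{cyclic} gap is closed under the matching (your linear gaps need not be: in $\mathcal B=\{(2,6),(3,7)\}$ the linear gaps $\{1\}$ and $\{8\}$ each have odd size, but the cyclic gap $\{8,1\}$ is even). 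Even cyclic gaps force consecutive elements of $Y(\mathcal B)$ to alternate in parity, and since $|Y(\mathcal B)|$ is even this yields the balance you want.

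Your route differs from the paper's in the ``no avoided intersection'' case: the paper argues by iteratively deleting backbone arcs outside $\mathcal C$ and asserting the parity pattern of $\mathcal C$ is unaffected, whereas you identify the linking-component decomposition as the invariant structure. Your approach makes the combinatorics more transparent; it just needs the sublemma proved correctly.
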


\begin{remark}
One may wonder why we did not replace the condition on the right by one of the following weaker ones for which the previous claim would look more intuitive:
\begin{align*}&{\bf C1}\quad \{x_i \stackrel{\n_1+\n_2}{ \longleftrightarrow} x_k\}\cap\{x_j \stackrel{\n_1+\n_2}{ \longleftrightarrow} x_\ell\}\cap\{x_i \stackrel{\n_1+\n_2}{\kern+4.5pt\arrownot\kern-4.5pt\longleftrightarrow} x_j\}~(=A(i,j,k,\ell)),\\
&{\bf C2}\quad\{x_i \stackrel{\n_1}{ \longleftrightarrow} x_k\}\cap\{x_j \stackrel{\n_1}{ \longleftrightarrow} x_\ell\}\cap\{x_i \stackrel{\n_1+\n_2}{\kern+4.5pt\arrownot\kern-4.5pt\longleftrightarrow} x_j\}.\end{align*}
The reason will be apparent in the next steps of the proof: {\bf C1} would not be adapted since we want the connectivity conditions on $x_i\longleftrightarrow x_k$ and $x_j\longleftrightarrow x_\ell$ to depend only on $\n_1$, so that one may use Theorem~\ref{thm:condition} for $\n_2$ freely to show that $x_i$ not connected to $x_j$ in $\n_1+\n_2$ is unlikely. Condition {\bf C2} could work for the previous purpose. Nevertheless, Lemma~\ref{lem:pp} is not simpler to prove in this case, while a few additional technical difficulties arise in the proof of Proposition~\ref{prop:bulk intersection} stated below. We therefore prefer to work with $\g_1$ instead of $\n_1$.
\end{remark}

\begin{proof}
When the sum on the right is non-zero, the term is larger than or equal to $n$, which is clearly larger than the sum on the left. Now, assume that the sum on the right is zero. In this case, look at the connected component $\mathcal C$ of $x_1$ in $\n_1+\n_2$. If $\mathcal C$ contains all the sources, then the sum on the left is an alternating sum of $+1$ and $-1$, which is therefore zero. If $\mathcal C$ does not contain all the sources, remove one by one the arcs of the backbones between sources $x_i$ and $x_j$ that are not in $\mathcal C$. Since they are not connected to $x_1$, we know that for each $x_\ell$ in $\mathcal C$, either $1<\ell< i<j$ or $1<i<j<\ell$ (otherwise the sum on the right-hand side would not be zero), so that this operation does not alter the sum on the left. After removing all these arcs, we end up in the first case where $\mathcal C$ contains all the remaining sources of the current, so that the sum is zero by the previous argument.  
\end{proof}

Of course, contrarily to the planar case, the right-hand side of \eqref{eq:main_boundary_finite_range} has no reason to be zero. Nonetheless, we will show that the expectation of each indicator function on the right is small.
The idea is that while $x_i$ and $x_j$ are not necessarily connected in $\n_1+\n_2$, they create {\em coarse intersections}, i.e.~vertices $y\in\mathbb H$ such that $x_i$ and $x_j$ are connected to $\Lambda_R(y)$ in $\g_1$.
Now,  the odd part of the second current $\n_2$, which is independent of the first one, will create, by Theorem~\ref{thm:condition}, many circuits around each one of these coarse intersections. Finally, the even-positive part of the second current (note that even edges of the current $\n_2$ are even-positive with probability which is independent of everything so far) will have large probability of connecting the connected components of $x_i$ and $x_j$ to the same odd circuit of $\n_2$, and therefore connect $x_i$ to $x_j$ in $\n_1+\n_2$.

The previous reasoning works perfectly when the coarse intersections are far from the boundary, but is harder to implement for coarse intersections close to the boundary. In order to guarantee the existence of such coarse intersections, we will need to show that there are unlikely to remain (as the distance between the $x_i$ tend to infinity) all in the horizontal strip $\mathbb S_M:=\mathbb Z\times \llbracket 0, M\rrbracket$ of fixed height $M$ ($\bbS_M$ is the $M$-neighborhood of the boundary). This technical statement (Proposition~\ref{prop:bulk intersection}) will be slightly tricky to prove. We therefore choose to present the proof of Theorem~\ref{thm:Pf_finite_range} in two steps. In Section~\ref{sec:preliminary}, we present the proof of the theorem without proving Proposition~\ref{prop:bulk intersection} below, and in Section~\ref{sec:proof intersection}, the proof of the proposition.

\subsection{Proof of Theorem~\ref{thm:Pf_finite_range} using Proposition~\ref{prop:bulk intersection}}\label{sec:preliminary}

The set $X$ will always denote $\{x_1,\dots,x_{2n}\}$, and $d(X):=\min\{\|x_r-x_s\|,r\ne s\}$. We will write limits as $d(X)$ tends to infinity as limits which are uniform in the choice of $X$ provided that $d(X)$ tends to infinity.

Theorem~\ref{thm:Pf_finite_range} can be derived following exactly the proof of Theorem~\ref{thm:pf_boundary} line by line -- except that we use \eqref{eq:main_boundary_finite_range} instead of \eqref{eq:main_boundary} --  if we show that for each $i<j<k<\ell$,
\be \label{eq:21}
\lim_{d(X)\rightarrow \infty}{\bf P}_{\mathbb H}^{X,\emptyset}[\{x_i \stackrel{\g_1}{ \longleftrightarrow} x_k\}\cap\{x_j \stackrel{\g_1}{ \longleftrightarrow} x_\ell\}\cap\{x_i \stackrel{\n_1+\n_2}{\kern+4.5pt\arrownot\kern-4.5pt\longleftrightarrow} x_j\}]~=~ 0.\ee

For $K>0$ and a set $Y\subset \bbH$, let $E_{K,Y}$ be the event that there exist $K$ strongly disjoint odd circuits surrounding  some $y\in Y$, where  {\em strongly disjoint} means that any two such circuits remain at a distance $2R$ of each other. Also, for $i<j<k<\ell$, set
$$B=B(i,j,k,\ell):=\{x_i \stackrel{\g}{ \longleftrightarrow} x_k\}\cap\{x_j \stackrel{\g}{ \longleftrightarrow} x_\ell\}\cap\{x_i \stackrel{\n}{\kern+4.5pt\arrownot\kern-4.5pt\longleftrightarrow} x_j\}.$$
For $\n\in B$, let ${\bf Y}={\bf Y}(\n)$ be the set of coarse intersections of $\n$.

\medbreak
We proceed in three steps to show \eqref{eq:21}. The first one consists in saying that if there are many strongly disjoint circuits of odd edges in $\n_2$ surrounding the coarse intersections of $\n_1$, then the probability that the even-positive part of the current $\n_2$ does not connect $x_i$ to $x_j$ is small. The second step shows that the probability of having many such circuits in $\n_2$ is large when there is at least one coarse intersection in $\n_1$ which is far from the boundary. The last step claims that the probability that there is no coarse intersection in $\n_1$ which is far from the boundary is small.
\medbreak\noindent
{\bf Step 1.} There exists a constant $c=c(J,\beta,R)>0$ such that for every $K>0$, 
\be\label{eq:ffff}{\bf P}_{\mathbb H}^{X,\emptyset}[x_i \stackrel{\n_1+\n_2}{\kern+4.5pt\arrownot\kern-4.5pt\longleftrightarrow} x_j~|~\n_1\in B,\n_2\in E_{K,{\bf Y}(\n_1)}]\le (1-c)^K.
\ee
Indeed, the event on which we condition depends only on $\n_1$ and the odd part of $\n_2$. The pairs $\{x,y\}$ which are not in the odd part of $\n_2$ have positive (even) current $\n_2$ with probability $1-1/\cosh(\beta J_{x,y})$ independently of everything else. Therefore, there exists a constant $c=c(J,\beta,R)>0$ such that, conditioned on $\n_1$ and the odd part of $\n_2$ in the previous event, each of the $K$ strongly disjoint circuits of odd edges in $\n_2$ is connected in $\n_2$ using two boxes of size $R$ to both paths of ${\bf \Gamma}_1$ between $x_i$ and $x_k$, and $x_j$ and $x_\ell$, with probability larger than $c$.
Since the circuits are strongly disjoint, this happens independently for each circuit and we deduce \eqref{eq:ffff}.
\medbreak\noindent
{\bf Step 2.} For every $K>0$ and $\ep>0$, there exists $M=M(K,\ep)>0$ such that 
\be\label{eq:fffff} {\bf P}_{\mathbb H}^{X,\emptyset}[\n_2\notin E_{K,{\bf Y}(\n_1)}~|~\n_1\in B, {\bf Y}(\n_1)\not\subset \bbS_M]\le \ep.\ee
Indeed, note that the event on which we condition depends only on $\n_1$. The current $\n_2$ is therefore independent of the conditioning. Now, Theorem~\ref{thm:condition} together with the convergence of measures ${\bf P}_G^\emptyset$ to the infinite-volume measure ${\bf P}_{\mathbb Z^2}^\emptyset$ implies that for $M=M(K,\varepsilon)$ large enough, we have that for any $x\notin \mathbb S_M$,
\be\label{eq:condition3}{\bf P}_{\mathbb H}^\emptyset[E_{K,\{x\}}]\ge 1-\varepsilon,\ee
This immediately implies \eqref{eq:fffff}.
\medbreak\noindent
{\bf Step 3.} Fix $\ep>0$, choose $K=K(\ep)>0$ such that $(1-c)^K\le \ep$, where $c$ is given by Step 1. Then, set $M=M(K,\ep)$ defined in Step 2. We have that 
$${\bf P}_{\mathbb H}^{X,\emptyset}[\{x_i \stackrel{\g_1}{ \longleftrightarrow} x_k\}\cap\{x_j \stackrel{\g_1}{ \longleftrightarrow} x_\ell\}\cap\{x_i \stackrel{\n_1+\n_2}{\kern+4.5pt\arrownot\kern-4.5pt\longleftrightarrow} x_j\}]\le 2\ep+{\bf P}_{\mathbb H}^X[B\cap \{{\bf Y}(\n)\subset \bbS_M\}].$$
 Therefore, the theorem follows from the next proposition.
 \begin{proposition}\label{prop:bulk intersection}
Fix $i<j<k<\ell$. For any $M>0$,
\be\nonumber
\lim_{d(X)\rightarrow \infty}{\bf P}_{\mathbb H}^X\left[B\cap\{{\bf Y}(\n)\subset\bbS_M\}\right]~=~ 0.
\ee
\end{proposition}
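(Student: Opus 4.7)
On $B$ the backbone $\g$ contains edge-disjoint arcs $\gamma_{ik}$ from $x_i$ to $x_k$ and $\gamma_{j\ell}$ from $x_j$ to $x_\ell$. Since the boundary indices interleave ($k_i<k_j<k_k<k_\ell$), the continuous realisations of the two arcs must cross in $\bbH$, and the condition $\{{\bf Y}(\n)\subset\bbS_M\}$ forces every such crossing to occur within height $M$ of the boundary; outside $\bbS_M$ the two arcs therefore remain at distance greater than $R$ from each other. My first step would be to apply source-switching (Corollary~\ref{cor:switch}) to peel off the backbone structure, reducing the proposition to a geometric estimate on the typical shape of $(\gamma_{ik},\gamma_{j\ell})$ under $\mathbf P_\bbH^X$. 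The key topological observation is that since $\gamma_{j\ell}$ separates $x_k$ from infinity in $\bbH$ while $\gamma_{ik}$ terminates at $x_k$, any coarse intersection of the two arcs sits close to a place where $\gamma_{j\ell}$ crosses the vertical half-line above $x_k$; confining all such crossings to $\bbS_M$ amounts to asking that certain odd portions of $\gamma_{j\ell}$ (and symmetrically of $\gamma_{ik}$) stay within height $M+R$, i.e.\@ one of the two arcs must remain strip-confined over a horizontal scale comparable to $d(X)$.

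\textbf{Strip-confinement via the FK coupling.} The quantitative input I would establish is that, under $\mathbf P_\bbH^X$, the backbone arc between two boundary points at large separation is unlikely to stay inside $\bbS_M$. To obtain such a bound I would pass to the FK-Ising side via the three-way coupling of Theorem~\ref{thm:connection}: the existence of a backbone arc joining $u$ to $v$ implies an FK-cluster connection $u\leftrightarrow v$, and confinement of this arc to $\bbS_M$ is upper-bounded, via the coupling, by the event that the FK connection itself takes place inside $\bbS_M$. The working assumption $\langle\sigma_0\sigma_x\rangle_{\bbH,\beta_c}\to 0$ recorded at the start of Section~\ref{sec:4.1} translates, through Proposition~\ref{prop:coupling}, into decay of the FK two-point function in $\bbH$. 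Combining this with the FKG inequality~\eqref{eq:FKG} and a half-plane RSW analysis of the critical FK-Ising model, one finds that strip-confined crossings cost exponentially in the length (the strip is effectively subcritical), so
\[
\phi_{\bbH,\beta_c}[u\leftrightarrow v\text{ through a path inside }\bbS_M]\big/\phi_{\bbH,\beta_c}[u\leftrightarrow v]\longrightarrow 0\quad\text{as } |u-v|\to\infty.
\]
Pulling this bound back through the coupling yields the desired strip-confinement estimate for the backbone arc.

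\textbf{Assembly and main obstacle.} Combining the topological observation of the first paragraph with the strip-confinement bound of the second forces $\mathbf P_\bbH^X[B\cap\{{\bf Y}(\n)\subset\bbS_M\}]\to 0$ as $d(X)\to\infty$. The principal obstacle is the lack of positive association for random currents, which prevents a direct RSW argument and forces the detour through the FK coupling; some care is needed to cast the ``strip-confined backbone'' event in a form that becomes a monotone event for $\omega$ after coupling, for which Corollary~\ref{cor:correlation} and Remark~\ref{rem:3} are well-suited. A secondary subtlety is the topological bookkeeping needed to turn ``all coarse intersections lie in $\bbS_M$'' into the cleaner statement ``a sub-arc of $\gamma_{ik}$ or $\gamma_{j\ell}$ of length comparable to $d(X)$ is confined to $\bbS_M$''; this demands a case analysis on whether the two arcs make their high excursions in overlapping or disjoint horizontal ranges, with the truly delicate case being the one where both arcs climb to height $\gg M$ but do so in spatially separated regions.
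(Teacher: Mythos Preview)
Your strategy has two genuine gaps, and they are precisely the places where the paper's argument diverges sharply from yours.

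\textbf{The RSW input you invoke does not exist here.} You write that ``a half-plane RSW analysis of the critical FK-Ising model'' yields
\[
\phi_{\bbH,\beta_c}[u\leftrightarrow v\text{ through a path inside }\bbS_M]\big/\phi_{\bbH,\beta_c}[u\leftrightarrow v]\longrightarrow 0.
\]
For finite-range interactions no such RSW statement is available; in fact the paper lists uniform crossing estimates as an open question (Section~7). The RSW-type arguments that do appear in Section~\ref{sec:5} are conditional on auxiliary hypotheses and are used for a different purpose (proving Theorem~\ref{thm:condition}), not as input to Proposition~\ref{prop:bulk intersection}. Even the passage from ``backbone arc confined to $\bbS_M$'' to an FK event under $\phi_{\bbH}$ is not clean: the coupling of Theorem~\ref{thm:connection} with sources $X$ produces $\omega$ with law $\phi_{\bbH}[\,\cdot\,|\mathcal F_X]$, not $\phi_{\bbH}$, and undoing that conditioning costs exactly the correlation factor you are trying to control.

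\textbf{The topological reduction you propose breaks in the case you yourself flag as delicate.} When both arcs climb to height $\gg M$ in spatially separated horizontal ranges, neither arc is strip-confined on any scale comparable to $d(X)$, so your quantitative estimate has nothing to bite on. The paper does not try to force strip-confinement. Instead it introduces the notion of a \emph{flight} of $\Gamma_{13}$ (a maximal excursion above $\bbS_M$) and classifies flights as being ``above $x_2$'', ``above $x_3$'', or ``above $\{x_2,x_3\}$''. The point is that on $B\cap\{{\bf Y}\subset\bbS_M\}$, whenever $\Gamma_{13}$ has a flight above $x_2$, the arc $\Gamma_{24}$ is forced to pass through a box of size $M$ around one of the flight's two landing points in $\bbS_M$, and these landing points are far from $x_2$. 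This turns the problem into bounding the probability that $\Gamma_{24}$ visits a specific small neighbourhood far from its endpoints---a question that can be answered by summing backbone weights (the $\rho_\bbH$ formalism of \eqref{eq:cu}--\eqref{eq:ki}) and comparing two-point functions via the extended Messager--Miracle-Sol\'e inequality \eqref{eq:lower bound coarse}, proved for finite-range interactions in Appendix~A. A final case analysis on the innermost flight above $\{x_2,x_3\}$ handles the remaining configurations. No RSW, no FK coupling, no strip-confinement ratio: the entire argument runs on backbone-weight arithmetic plus MMS monotonicity.
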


\begin{remark}
Proposition~\ref{prop:bulk intersection} seems like a technical step that could probably be removed if one could prove that a point on the boundary is typically surrounded by odd paths in $\n_2$ going from boundary to boundary instead of circuits (note that these paths are necessarily part of circuits by the source constraint on $\n_2$). Interestingly, the predicted scaling limits of long odd paths in $\n_2$ is such that this is not conjectured to happen. More precisely, the SLE(3) and CLE(3) processes which are describing the scaling limit of odd paths in $\n_2$ for the nearest-neighbor model (see \cite{BenHon16,CheDumHonKemSmi14}) are such that no macroscopic loop intersects the boundary. The predicted universality would imply that this also occurs in our context. This means that, without Proposition~\ref{prop:bulk intersection}, one would need to use the {\em combined} fractality of $\n_1$ and $\n_2$ to prove that avoided intersections are unlikely, which would make the proof much more complicated.  
\end{remark}

\subsection{Proof of Proposition~\ref{prop:bulk intersection}}\label{sec:proof intersection}

We start this section by some properties of the Ising model that will be useful in the proof (more precisely, a few classical properties of backbones and a classical inequality on spin-spin correlations, called the Messager-Miracle-Sol\'e inequality). After this, we present the proof of Proposition~\ref{prop:bulk intersection}.

\subsubsection{Preliminaries}
\paragraph{Two useful properties of backbones.} Consider a collection $\gamma$ of paths, and $\partial \gamma$ the sources of the subgraph obtained by the union of these paths. Let us introduce the {\em weight} of $\gamma$ as
$$\rho_G(\gamma)=\rho_G(\beta,J,\gamma):=\frac{1}{Z(G,\beta)}\sum_{\n:\partial\n=\gamma}w(\n)\mathbb{I}[\Gamma(\n)=\gamma],$$
where the sum is over all currents $\n$ in $G$. While this definition works only for finite graphs, one can easily take the limit as $G$ tends to $\bbH$ to obtain the definition of weights in the case of $G=\bbH$. With this notation, one gets that for any $A\subset V(G)$,
\be\label{eq:cu}\displaystyle \langle \prod_{x\in A}\sigma_x\rangle_{G,\beta}=\sum_{\partial\gamma=A}\rho_G(\gamma).\ee
Also, note that if the backbone $\gamma$ is the concatenation of two backbones $\gamma_1$ and $\gamma_2$ (this is denoted by $\gamma=\gamma_1\circ\gamma_2$), then
\be\label{eq:cuu}\rho_G(\gamma)=\rho_G(\gamma_1)\rho_{G\setminus \overline\gamma_1}(\gamma_2),\ee
where $\overline\gamma_1$ is the set of $\{x,y\}\in E(G)$ whose $\n_1$-parity is determined by the fact that $\gamma_1$ is the beginning of the backbone (this includes $\{x,y\}$ with both endpoints in $\gamma_1$ together with some pairs $\{x,y\}$ with only one vertex in $\gamma_1$ for which $\n_1(x,y)$ must be even), and $G\setminus \overline\gamma_1$ denotes the original graph $G$ with coupling constants $J_{x,y}$ set to 0 for each $\{x,y\}\in \overline\gamma_1$ (this is a slight abuse of notation, but we believe it will not lead to any confusion later on). In particular, for any fixed $\gamma_1$ from $x$ to $y$, summing over every $\gamma_2$ from $y$ to $z$ in $G\setminus\overline\gamma_1$ and applying  Griffiths' inequality \cite{Gri67} gives that
\begin{align}\label{eq:ki}\sum_{\gamma_2}\rho_{\mathbb H}(\gamma_1\circ\gamma_2)&=\rho_{\mathbb H}(\gamma_1)\sum_{\gamma_2}\rho_{\mathbb H\setminus \overline{\gamma_1}}(\gamma_2)=\rho_{\mathbb H}(\gamma_1)\langle\sigma_{y}\sigma_{z}\rangle_{\bbH\setminus\overline{\gamma_1}}\le\rho_{\mathbb H}(\gamma_1)\langle\sigma_{y}\sigma_z\rangle_{\bbH}.\end{align}
\paragraph{Monotonicity of spin-spin correlations.} The following inequality generalizes the classical Messager-Miracle-Sol\'e inequality (see \cite{Heg77,MesMir77,Sch77}) to the upper half-plane and to finite-range models: there exists some constant $ c_1 \in (0,\infty ) $ such that for all $ x$ and $y$ with the same second coordinate and first coordinates satisfying $ 0\le x_1 \leq y_1 $, we have
\be
 \langle\sigma_0\sigma_y\rangle_{\bbH}\le c_1\langle\sigma_0\sigma_x\rangle_{\bbH}.
\ee 
In case $ x_1$ and $ y_1 $ have the same parity, this is an immediate consequence of Theorem~\ref{thm:mono} in the Appendix. The general case follows using Griffiths' inequality as in~\eqref{eq:Griffiths} below.

The previous inequality will in fact be used under the following form. For fixed $K>0$, there exists $c_0=c_0(K,J,\beta)<\infty$  such that for every $x=(x_1,0)$, every $z\in \Lambda_K(y)$ such that $y=(y_1,0)$ satisfies $0\le x_1\le y_1$, we have   
\be\label{eq:lower bound coarse}
\langle\sigma_{0}\sigma_{z}\rangle_{\mathbb H}\le c_0\langle\sigma_{0}\sigma_{x}\rangle_{\mathbb H}.
\ee
For a proof, we simply apply Griffiths' inequality to obtain
\be\label{eq:Griffiths}
\langle\sigma_{0}\sigma_{y}\rangle_{\mathbb H}\ge \langle\sigma_{0}\sigma_{z}\rangle_{\mathbb H}\langle\sigma_{z}\sigma_{y}\rangle_{\mathbb H}
\ee
and set $c_0:=c_1\max\{\langle\sigma_{y}\sigma_{z}\rangle_{\mathbb H}^{-1}:z\in\Lambda_K(y)\}$>0.

\begin{remark}
The above inequality is the only place where we use that we are in the upper half-plane. If one could find an alternative proof of Proposition~\ref{prop:bulk intersection} which is not based on this inequality, we would obtain a result valid for any planar domain $G$ and not only the upper half-plane.
\end{remark}

\subsubsection{Proof of Proposition~\ref{prop:bulk intersection} for $n=2$}

We first restrict our attention to $n=2$. In this case, $X=\{x_1,x_2,x_3,x_4\}$ and $(i,j,k,\ell)=(1,2,3,4)$.
Let $\Gamma_{13}=\Gamma_{13}(\n)$ be the path in the backbone $\g(\n)=\g$ of $\n$ from $x_1$ to $x_3$, and $\Gamma_{24}=\Gamma_{24}(\n)$ the one from $x_2$ to $x_4$. Note that $\g$ is the union of $\Gamma_{13}$ and $\Gamma_{24}$ and that they are well-defined on the event that $x_1$ and $x_4$ are not connected. Also, in such case, we may construct them in the order we wish, since they depend on distinct connected components of $\widehat\n$.

We start by excluding the case in which $\Gamma_{13}$ comes close to $x_2$. A similar statement holds for the case in which $\Gamma_{24}$ comes close to $x_3$.
\bigbreak
\noindent{\em {\bf Claim 1.} Fix $K>0$. There exists $C=C(K)>0$ such that for any $x_1,x_2,x_3,x_4$, \begin{align}\label{eq:cc}
{\bf P}_{\mathbb H}^X[B\cap\{\Gamma_{13}\cap \Lambda_K(x_2)\ne \emptyset \}]&\le C\langle\sigma_{x_2}\sigma_{x_3}\rangle_\bbH.
\end{align}}
\begin{proof}[Proof of Claim 1]
By decomposing with respect to possible values for $\Gamma_{13}$, we find
\begin{align}\nonumber
{\bf P}_{\mathbb H}^X[B\cap&\{\Gamma_{13}\cap \Lambda_K(x_2)\ne \emptyset \}]\le \frac1{\langle\sigma_{x_1}\sigma_{x_2}\sigma_{x_3}\sigma_{x_4}\rangle_{\mathbb H}}\sum_{x\in\Lambda_K(x_2)}\sum_{\gamma\ni x}\rho_{\mathbb H}(\gamma)\\
&\le \frac1{\langle\sigma_{x_1}\sigma_{x_2}\sigma_{x_3}\sigma_{x_4}\rangle_{\mathbb H}}\sum_{x\in\Lambda_K(x_2)}\sum_{\gamma_1,\gamma_2,\gamma_3}\rho_{\mathbb H}(\gamma_1\circ\gamma_2\circ\gamma_3),
\end{align}
where $\gamma_1$ is the part of the backbone from $x_1$ to $x$, $\gamma_2$ the one from $x$ to $x_3$, and $\gamma_3$ the one from $x_2$ to $x_4$. Using \eqref{eq:ki} when summing over $\gamma_3$, we deduce that
\begin{align}\nonumber
{\bf P}_{\mathbb H}^X[B\cap\{\Gamma_{13}\cap \Lambda_K(x_2)\ne \emptyset \}]
&\le\frac{\langle\sigma_{x_2}\sigma_{x_4}\rangle_\mathbb H}{\langle\sigma_{x_1}\sigma_{x_2}\sigma_{x_3}\sigma_{x_4}\rangle_{\mathbb H}} \sum_{x\in\Lambda_K(x_2)}\sum_{\gamma_1,\gamma_2}\rho_{\mathbb H}(\gamma_1\circ\gamma_2),
\end{align}
Using \eqref{eq:ki} when summing over $\gamma_2$ and then $\gamma_1$, we deduce that
\begin{align}\nonumber
{\bf P}_{\mathbb H}^X[B\cap\{\Gamma_{13}\cap \Lambda_K(x_2)\ne \emptyset \}]
&\le\frac{\langle\sigma_{x_2}\sigma_{x_4}\rangle_\mathbb H}{\langle\sigma_{x_1}\sigma_{x_2}\sigma_{x_3}\sigma_{x_4}\rangle_{\mathbb H}} \sum_{x\in\Lambda_K(x_2)}\langle\sigma_{x_1}\sigma_{x}\rangle_{\mathbb H}\langle\sigma_{x}\sigma_{x_3}\rangle_{\mathbb H}.
\end{align}
Using \eqref{eq:lower bound coarse} in the first two inequalities, and finally Griffiths' inequality \cite{Gri67} in the last one, we deduce that
\begin{align*}\nonumber
{\bf P}_{\mathbb H}^X[B\cap\{\Gamma_{13}\cap \Lambda_K(x_2)\ne \emptyset \}]
&\le\frac{c_0^2\langle\sigma_{x_2}\sigma_{x_4}\rangle_\mathbb H}{\langle\sigma_{x_1}\sigma_{x_2}\sigma_{x_3}\sigma_{x_4}\rangle_{\mathbb H}} |\Lambda_K(x_2)|\langle\sigma_{x_1}\sigma_{x_2}\rangle_{\mathbb H}\langle\sigma_{x_2}\sigma_{x_3}\rangle_{\mathbb H}\nonumber\\
&\le\frac{c_0^3\langle\sigma_{x_3}\sigma_{x_4}\rangle_\mathbb H}{\langle\sigma_{x_1}\sigma_{x_2}\sigma_{x_3}\sigma_{x_4}\rangle_{\mathbb H}} |\Lambda_K(x_2)|\langle\sigma_{x_1}\sigma_{x_2}\rangle_{\mathbb H}\langle\sigma_{x_2}\sigma_{x_3}\rangle_{\mathbb H}\nonumber\\
&\le
c_0^3 |\Lambda_K(x_2)|\langle\sigma_{x_2}\sigma_{x_3}\rangle_{\mathbb H}\le C\langle\sigma_{x_2}\sigma_{x_3}\rangle_{\mathbb H}.
\end{align*}
\end{proof}
Let $\gamma=(\gamma(0),\dots,\gamma(k))$ be a path in the upper half-plane. We say that $\gamma[r,s]$ (with $s> r+1$) is a {\em flight (at height $M$)} of $\gamma$ if $\gamma(r),\gamma(s)\in \mathbb S_M$ and $\gamma(u)\notin \mathbb S_M$ for $r<u<s$. We say that a flight $\gamma[r,s]$ is {\em above }$x_2$ (resp. $\{x_2,x_3\}$) if, among $\gamma(s)$ and $\gamma(r)$, one is on the left of $x_2+(0,1)\mathbb R$ and the other between $x_2+(0,1)\mathbb R$ and $x_3+(0,1)\mathbb R$ (resp.~on the right of $x_3+(0,1)\mathbb R$); see Fig.~\ref{fig:S_M}. Note that a flight above $\{x_2,x_3\}$ is {\em not} a flight above $x_2$.
\bigbreak
\noindent{\em {\bf Claim 2.}
$\displaystyle
\lim_{d(X)\rightarrow\infty}{\bf P}_{\mathbb H}^X[B\cap\{{\bf Y}(\n)\subset\bbS_M\}\cap\{{\Gamma_{13}}\text{ contains a flight above }x_2\}]=0.
$
}
\begin{proof}[Proof of Claim 2]
Define $\mathcal V(\gamma):=\Lambda_M(\gamma(r))\cup\Lambda_M(\gamma(s))$, where $\gamma[r,s]$ is the first flight above $x_2$ (see Fig.~\ref{fig:S_M} for an illustration). Note that for $B\cap\{{\bf Y}(\n)\subset\bbS_M\}$ to occur, $\Gamma_{24}$ must intersect $\mathcal V(\Gamma_{13})$ since it cannot come within a distance $R$ of $\gamma[r,s]$ (other otherwise coarse intersection outside of $\mathbb S_M$), but must go from $x_2$ to $x_4$.

Therefore, if we decompose into possible values $\gamma_1$ and $\gamma_2$ for $\Gamma_{13}$ and $\Gamma_{24}$ compatible with the occurrence of the event under consideration, we find, by integrating successively the different parts of the backbone,
\begin{align}
{\bf P}_{\mathbb H}^X[B\cap\{{\bf Y}(\n)\subset\bbS_M\}\cap\{\Gamma_{13}\text{ contains }&\text{a flight above }x_2\}\cap\{\Gamma_{13}\cap \Lambda_K(x_2)= \emptyset\}]\nonumber \\
&\le  \frac1{\langle\sigma_{x_1}\sigma_{x_2}\sigma_{x_3}\sigma_{x_4}\rangle_{\mathbb H}}\sum_{\gamma_1}\rho(\gamma_1)\sum_{x\in \mathcal V(\gamma_1)}\sum_{\gamma_2\ni x}\rho_{\mathbb H\setminus \overline\gamma_1}(\gamma_2)\nonumber\\
&\le  \frac1{\langle\sigma_{x_1}\sigma_{x_2}\sigma_{x_3}\sigma_{x_4}\rangle_{\mathbb H}}\sum_{\gamma_1}\rho(\gamma_1)\sum_{x\in \mathcal V(\gamma_1)}\langle\sigma_{x_2}\sigma_x\rangle_\mathbb H\langle\sigma_{x}\sigma_{x_4}\rangle_\mathbb H\nonumber \\
&\le  \frac{2|\Lambda_M|\cdot \ep(K)\cdot c_0\langle\sigma_{x_3}\sigma_{x_4}\rangle_{\mathbb H}}{\langle\sigma_{x_1}\sigma_{x_2}\sigma_{x_3}\sigma_{x_4}\rangle_{\mathbb H}}\sum_{\gamma_1}\rho(\gamma_1)\nonumber \\
&= \frac{2|\Lambda_M|\cdot \ep(K)\cdot c_0\langle\sigma_{x_3}\sigma_{x_4}\rangle_{\mathbb H}\langle\sigma_{x_1}\sigma_{x_3}\rangle_{\mathbb H}}{\langle\sigma_{x_1}\sigma_{x_2}\sigma_{x_3}\sigma_{x_4}\rangle_{\mathbb H}}\nonumber\\
&\le  c_1\ep(K)\frac{\langle\sigma_{x_3}\sigma_{x_4}\rangle_{\mathbb H}\langle\sigma_{x_1}\sigma_{x_2}\rangle_{\mathbb H}}{\langle\sigma_{x_1}\sigma_{x_2}\sigma_{x_3}\sigma_{x_4}\rangle_{\mathbb H}}\le c_1\ep(K),\label{eq:cccc}
\end{align}
where $\ep(K):=\max_{y\notin \Lambda_{K-M}}\langle\sigma_0\sigma_y\rangle_{\bbH}$ and $c_1=c_1(M)>0$.  In the third inequality, we used \eqref{eq:lower bound coarse} and the fact that,  {\em by definition of a flight above $x_2$}, $\mathcal V(\Gamma_{13})$ is the union of two boxes of size $M$ whose centers are on the left of $x_3$. In the last inequality, we also used Griffiths' inequality one more time. 

Now, fix $\ep>0$. Since we assumed that $\langle\sigma_0\sigma_x\rangle_\bbH$ tends to zero, we can pick $K=K(M)$ large enough that $c_1\ep(K)$ is smaller than $\ep$. The first claim enables us to pick $d(X)$ large enough that ${\bf P}_{\mathbb H}^X[B\cap\{\Gamma_{13}\cap \Lambda_K(x_2)\ne \emptyset\}]\le \ep$. This concludes the proof since $\ep$ is chosen arbitrarily.
\end{proof}
\begin{figure}[h]
\begin{center}
\includegraphics [width = .9 \textwidth]{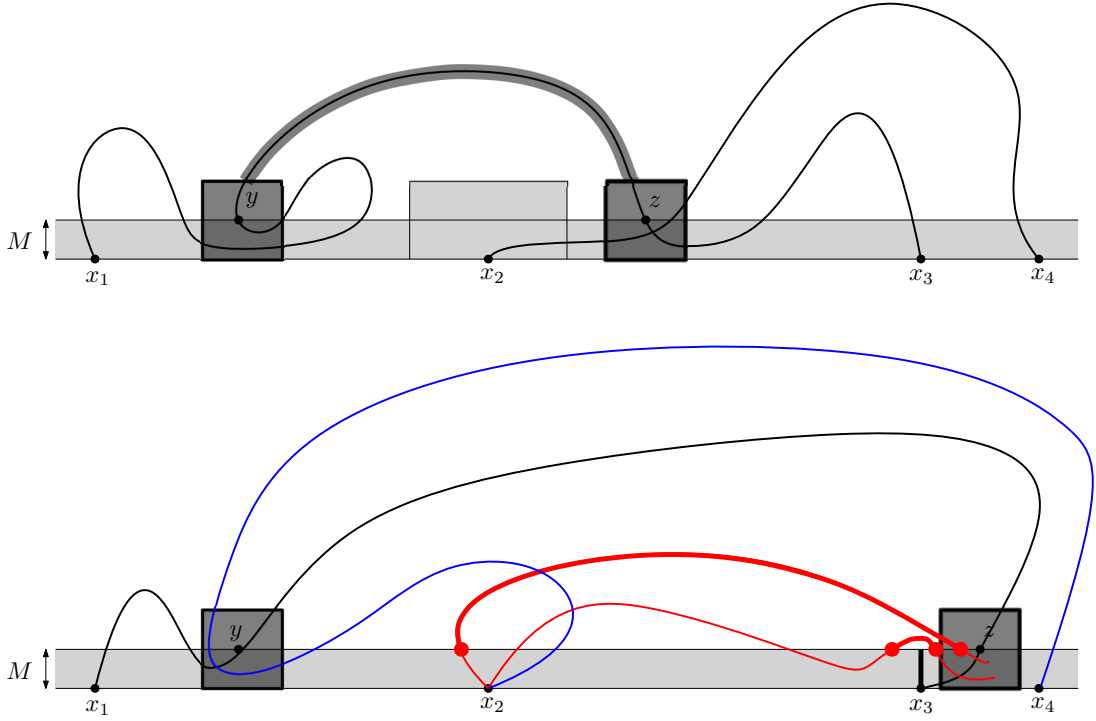}
\caption{On top, the set $\mathcal V(\gamma)$ (the two dark gray boxes) for a flight above $x_2$. By definition, $x_2$ is disconnected from $x_4$ by the union of $\mathcal V(\gamma)$ and the $R$-neighborhood of the flight. We also depicted the box of size $K$ around $x_2$. Below, we drew the set $\mathcal V(\gamma)$. The curve in blue is $\Gamma_{24}$. The two curves in red are possible paths exiting through the right side. Both contain either a flight above $x_3$, or a flight above $\{x_2,x_3\}$ (simply look at one flight of $\Gamma_{13}$ arriving on the right -- we highlighted such an example -- which is of either kind).}  \label{fig:S_M}
\end{center}
\end{figure}

We conclude the proof of Proposition~\ref{prop:bulk intersection}. Fix $M,K>0$ and consider the event that\begin{itemize}[noitemsep,nolistsep]
\item $B\cap\{{\bf Y}(\n)\subset\bbS_M\}$ occurs,
\item $\Gamma_{13}\cap \Lambda_K(x_2)=\emptyset$,
\item  $\Gamma_{24}\cap \Lambda_K(x_3)=\emptyset$,
\item $\Gamma_{13}$ contains no flight of height $M$ above $x_2$,
\item $\Gamma_{24}$ contains no flight of height $M$ above $x_3$.
\end{itemize}
On this event, $\Gamma_{13}$ and $\Gamma_{24}$ must have flights above $\{x_2,x_3\}$, since they cannot cross (recall that $\n\in B$). This enables us to define an inner most flight above $\{x_2,x_3\}$ in $\Gamma_{13}\cup\Gamma_{24}$. Let $B_{13}$ (respectively $B_{24}$) be the intersection of all the events above with the fact that the innermost flight in $\Gamma_{13}\cup\Gamma_{24}$ is in $\Gamma_{13}$ (respectively in $\Gamma_{24}$).
We wish to prove that for $K=K(M)$ large enough,
\begin{equation}{\bf P}_\mathbb H^X[B_{13}]\le \varepsilon\label{eq:bf}\end{equation}
when $d(X)$ is large enough.

We claim that this would conclude the proof. Indeed, on the one hand, one could prove in a similar fashion that ${\bf P}_\mathbb H^X[B_{24}]\le \ep$ provided $d(X$) is large enough. On the other hand, the previous two claims, which apply to $\Gamma_{13}$ as well as to $\Gamma_{24}$, imply that $${\bf P}_\mathbb H^X[B\cap\{{\bf Y}(\n)\subset\bbS_M\}\setminus(B_{13}\cup B_{24})]\le 4\ep$$ provided $d(X)$ is large enough, where the factor 4 comes from the fact that we bound the probability of the intersection of $B$ with the complement of each event in the four last bullets above by $\ep$ (for the second and third ones, we also use that by assumption, $\langle\sigma_0\sigma_x\rangle_{\bbH}$ tends to 0). In conclusion, we would get that for $d(X)$ large enough,
$${\bf P}_\mathbb H^X[B\cap\{{\bf Y}(\n)\subset\bbS_M\}]\le{\bf P}_\mathbb H^X[B\cap\{{\bf Y}(\n)\subset\bbS_M\}\setminus(B_{13}\cup B_{24})]+{\bf P}_\mathbb H^X[B_{13}]+{\bf P}_\mathbb H^X[B_{24}]\le 6\ep,$$ an inequality which finishes the proof since $\ep$ was chosen arbitrarily.

Let us turn to the proof of \eqref{eq:bf}. Define $\mathcal V_{\rm left}(\Gamma_{13})$ to be the box of size $M$ around the extremity of the innermost flight above $\{x_2,x_3\}$ which is on the left of $x_2$. In the event $B_{13}$, $\Gamma_{24}$ must intersect $\mathcal V_{\rm left}(\Gamma_{13})$. Indeed, $\Gamma_{24}$ is leaving from $x_2$, which is disconnected from $x_4$ by the union of the innermost flight above $\{x_2,x_3\}$ with the balls of size $M$ around its extremities. Since $\Gamma_{24}$ cannot come within a distance $R$ from the flight itself, if it does not intersect $\mathcal V_{\rm left}(\Gamma_{13})$, then it must intersect $\mathbb S_M$ near the extremity on the right. Yet, this would imply that either $\Gamma_{24}$ contains a flight above $x_3$ or a flight above $\{x_2,x_3\}$ which is inside the innermost flight above $\{x_2,x_3\}$ of $\Gamma_{13}$ (see Fig.~\ref{fig:S_M}), two facts which are excluded since we are in $B_{13}$.

Now, since $\mathcal V_{\rm left}(\Gamma_{13})$ is at a distance at least $K-M$ of $x_2$, the same reasoning as for \eqref{eq:cccc} applies and \eqref{eq:bf} follows by choosing $K=K(M)$ large enough. This concludes the proof of Proposition~\ref{prop:bulk intersection} for $n=2$.

\subsubsection{Proof of Proposition~\ref{prop:bulk intersection} for general $n$}
Fix $X=\{x_1,\dots,x_{2n}\}$. The proof works exactly in the same way as for $n=2$ except that we need to decompose with respect to the different possible backbones, which now pair all $\{x_1,\dots,x_{2n}\}$.

Let $\pi$ be some pairing and $E_\pi$ be the event that $x_{\pi(2r-1)}$ is paired with $x_{\pi(2r)}$ by the backbone for every $r\le n$. Also, let $E_{\pi,u,v}\subset E_\pi$ be the event that the vertices $x_r$ in the connected component of $x_i$ are explored by the backbone in a certain order $u$, and those in the connected component of $x_j$ in a certain order $v$.

Then, we may prove that for any $\pi$ pairing $x_i$ with $x_k$ and $x_j$ with $x_\ell$,
$${\bf P}_\mathbb H^X[B\cap \{{\bf Y}(\n)\subset\bbS_M\}\cap E_{\pi,u,v}]\le \varepsilon\cdot \frac{\prod_{j=1}^n\langle\sigma_{x_{\pi(2j-1)}}\sigma_{x_{\pi(2j)}}\rangle_\mathbb H}{\langle\sigma_{x_1}\dots\sigma_{x_{2n}}\rangle_\mathbb H}\le \varepsilon.$$
The proof follows exactly the same lines as before, except that we work in the plane minus the paths of the backbone $\gamma_1,\dots,\gamma_{p-1}$ and $\tilde\gamma_1,\dots,\tilde\gamma_{q-1}$ corresponding to vertices in the connected components of $\{x_i,x_k\}$ and $\{x_j,x_\ell\}$ already explored before exploring the two paths from $x_i$ to $x_k$ and from $x_j$ to $x_\ell$.

One may be worried that  \eqref{eq:lower bound coarse} cannot be used anymore since it relied on the Messager-Miracle-Sol\'e inequality (and therefore on invariance under translations of coupling constants). Nonetheless, the inequality can still be used since anyway we bound the quantities by what happens in the half-plane, meaning that we may first use Griffiths inequality \cite{Gri67} to set back the coupling constants to the original ones (we delete the effect of depletion by the first backbones), and then use \eqref{eq:lower bound coarse}. Also, note that in order for the previous claims to be true, we used the fact that we may condition on $\Gamma_{13}$ before $\Gamma_{24}$, or vice-versa, and obtain the same paths. This is not valid in general (in a connected component, the backbone may depend heavily on the order in which sources are discovered), nevertheless, the claim is still true here since $x_i$ and $x_j$ are not in the same connected component of $\n$.

The claim follows by summing on all possible pairings mapping $x_i$ to $x_k$ and $x_j$ to $x_\ell$, and all possible ordering $u$ and $v$.

\section{Nested circuits of odd current, an existence result}\label{sec:5}

In this section, we borrow the notation from Section~\ref{sec:4.1} (in particular, the definitions of paths, continuous paths, and circuits surrounding a set). 

\subsection{A refined version of Theorem~\ref{thm:condition}}
This section contains the proof of the following statement, which clearly implies Theorem~\ref{thm:condition}.

\begin{theorem}\label{prop:condition}
Let $E$ be the event that there exist infinitely many disjoint circuits surrounding 0. Then, for any $\beta\ge\beta_c$, we have
\be
{\rm P}^\emptyset_{\mathbb Z^2,\beta}[\eta\in E]={\bf P}_{\mathbb Z^2,\beta}^\emptyset[\widehat\n\in E]=\phi_{\mathbb Z^2,\beta
}[\omega\in E]=1.
\ee
\end{theorem}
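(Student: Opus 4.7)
I would first observe that, by the three-way coupling of Theorem~\ref{thm:connection}, $\eta\le\widehat\n\le\omega$ holds pointwise in the coupled realization; hence any circuit in $\eta$ is automatically a circuit in $\widehat\n$ and in $\omega$, and it suffices to prove the statement for $\eta$. My plan is to first establish abundance of circuits for $\omega$, which enjoys the FKG inequality~\eqref{eq:FKG}, and then to transfer the conclusion to $\eta$ using the cycle-flipping symmetry of Corollary~\ref{rmk:cycle}.

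\textbf{Step 1: circuits in $\omega$.}
I would combine Russo--Seymour--Welsh crossing estimates with FKG to produce many disjoint $\omega$-circuits around $0$ in each dyadic annulus $A_n:=\Lambda_{2^{n+1}}\setminus\Lambda_{2^n}$. For the nearest-neighbor FK-Ising model such RSW bounds are known~\cite{DumHonNol11,CheDumHon13}; for the finite-range case I would extend them using the reflection symmetry of hypothesis (iv) together with the extension of the Messager--Miracle--Sol\'e inequality proven in Appendix~A. The aim is: for any $K$ and all $n$ large, with $\phi_{\bbZ^2,\beta}$-probability at least $1-\ep$, the configuration $\omega$ contains $K$ vertex-disjoint circuits around $0$ inside $A_n$. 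Together with ergodicity of $\phi_{\bbZ^2,\beta}$ this already settles the $\omega$-part of the statement.

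\textbf{Step 2: transfer to $\eta$.}
Conditionally on $\omega$ having vertex-disjoint circuits $c_1,\ldots,c_K$ around $0$ in $A_n$, these are linearly independent in the $\bbZ/2$ cycle space of $\omega$, and by Corollary~\ref{rmk:cycle} the conditional law of $\eta$ is invariant under each flip $\eta\mapsto\eta\,\Delta\,c_i$. A direct combinatorial count in the orbit $\{\eta\,\Delta\sum_i\alpha_ic_i:\alpha\in\{0,1\}^K\}$ shows that the number of orbit elements containing some $c_i$ as a full subset is $2^K-2^{K-|T|}$, where $T:=\{i:\eta\cap c_i\in\{\emptyset,c_i\}\}$ is an orbit invariant. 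Choosing the $c_i$ as innermost $\omega$-circuits in $K$ well-separated sub-annuli of $A_n$ controls the remaining cycle structure of $\omega$ and should, with high conditional probability, make $|T|$ a positive fraction of $K$; this yields an $\eta$-circuit around $0$ in $A_n$ with probability $1-O(\ep)+o_K(1)$, uniformly in $n$. A Borel--Cantelli/ergodicity argument (with the spatial Markov property of $\omega$ and the decoupling of Remark~\ref{rem:3}) then produces infinitely many disjoint $\eta$-circuits around $0$ almost surely.

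\textbf{Main obstacles.}
Two points require real work. First, the finite-range RSW estimates are delicate because the dihedral symmetry of the nearest-neighbor square lattice is unavailable; indeed the paper's own outline identifies an extension of RSW to sourceless random currents themselves, which lack FKG, as the core technical contribution. Second, in Step~2 a quantitative lower bound on $|T|$ is non-trivial: if $\omega$'s cycle space in $A_n$ contains cycles beyond the chosen $c_i$'s that share edges with them, the naive $2^{-K}$ bound must be refined, and one must exploit the geometric freedom in placing the $c_i$'s in disjoint sub-annuli to decouple them from the remaining cycle space. I expect the first of these to be the principal obstacle.
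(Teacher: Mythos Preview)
Your Step~2 is where the approach diverges most from the paper's, and the divergence creates a real difficulty. You want the specific cycles $c_i$ to appear as full subsets of $\eta$, which forces you to control $|T|$; but for a uniform even subgraph $\eta$ of $\omega$, the event $\eta\cap c_i\in\{\emptyset,c_i\}$ is typically exponentially unlikely in the length of $c_i$ once the vertices of $c_i$ meet other cycles of $\omega$, and placing the $c_i$ in well-separated sub-annuli does not help since at criticality each sub-annulus will carry many $\omega$-cycles through the vertices of $c_i$. The paper bypasses this entirely with a parity (flow) argument: if a finite connected component $\mathcal C$ of $\omega$ surrounds $0$, pick any cycle $c\subset\mathcal C$ surrounding $0$; then if $\eta$ restricted to $\mathcal C$ has no cycle around $0$, the configuration $\eta\Delta c$ necessarily does, because the flux of $\eta$ across any cut of $\mathcal C$ separating $0$ from $\infty$ changes parity. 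Since $\eta$ and $\eta\Delta c$ have the same conditional law by Corollary~\ref{rmk:cycle}, each such component carries an $\eta$-circuit around $0$ with conditional probability at least $1/2$, independently across components. No control of $|T|$ is needed, and one does not ask for any specific cycle to appear in $\eta$.

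That argument, however, needs the $\omega$-circuits to lie in distinct finite components, i.e.\ $\phi_{\bbZ^2,\beta}[0\leftrightarrow\infty]=0$. The paper therefore splits into two regimes. When there is no infinite cluster, one is reduced to your Step~1, and here your acknowledged obstacle is real and is \emph{not} resolved by a direct RSW for finite-range FK-Ising. Instead the paper runs a dichotomy: either square-crossing probabilities stay bounded away from zero along a subsequence (condition \eqref{eq:P1}), in which case a gluing argument produces circuits in $\omega$; or a certain annulus-crossing probability for the single random current fails to vanish (negation of \eqref{eq:P2}), in which case one constructs circuits directly in the sum $\widehat{\n_1+\n_2}$ of two independent currents and transfers back to a single current by a second application of the switching/flow argument. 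The RSW-type input is thus established for random currents rather than for FK, and only modulo this dichotomy. Finally, when $\phi_{\bbZ^2,\beta}[0\leftrightarrow\infty]>0$ (covering $\beta>\beta_c$), the finite-component independence is unavailable and the paper gives a separate argument combining uniqueness of the infinite cluster with the cycle-flip; your proposal does not address this case.
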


 This theorem  implies Theorem~\ref{thm:condition} since a circuit in $\eta$ is an odd circuit in $\n$. Note that the statement of Theorem~\ref{prop:condition} is wrong for $\beta<\beta_c$ since the connectivity probabilities in $\phi_{\mathbb Z^2}$ decay exponentially fast (see \cite{AizBarFer87,DumTas15}).
 
As in the proof of the previous theorem, the main discussion focuses on the case  $\beta=\beta_c$ and $\phi_{\mathbb Z^2,\beta_c}[0\leftrightarrow \infty]=0$.   The latter condition is expected to hold for any finite-range Ising model at its critical point (for nearest-neighbor models, that is proven in \cite{AizDumSid15,DumMar17,Yan52}). \\

The proof of Theorem~\ref{prop:condition} is organized as follows. The proof in the case $\beta=\beta_c$ and $\phi_{\mathbb Z^2,\beta_c}[0\leftrightarrow \infty]=0$ is presented in Section~\ref{sec:outline}, except that the proofs of Lemmas~\ref{lem:P2} and \ref{lem:P3} are postponed to Sections~\ref{sec:P2} and \ref{sec:P3} respectively. 
 The proof in the case $\phi_{\mathbb Z^2,\beta}[0\leftrightarrow \infty]>0$ (which covers the range $\beta>\beta_c$ and possibly $\beta=\beta_c$ in the (unexpected) case where the phase transition would be discontinuous) is postponed to Section~\ref{sec:proof-case-phiPositive}.

Before diving into the proof, we attract the attention of the reader on the fact that for $\beta>\beta_c$, $\omega$ percolates in a very strong sense (finite connected components have exponential tails, see \cite{ManRao16}) and therefore the statement is far from optimal for $\omega$. For $\widehat\n$, the result is also expected to be suboptimal for $\beta>\beta_c$, since the following conjecture should be true.
\begin{conjecture}
For $\beta>\beta_c$, we have that ${\bf P}_{\mathbb Z^2,\beta}^\emptyset[0\longleftrightarrow \infty\text{ in }\widehat\n]>0
$.
\end{conjecture}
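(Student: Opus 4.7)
The plan is to deduce the conjecture from the classical positivity of the spontaneous magnetization $m^*(\beta)>0$ for $\beta>\beta_c$, which is known in the required generality through the sharpness of the phase transition \cite{AizBarFer87,DumTas15}, using the ghost-site version of the random current representation. Concretely, I would enlarge $\mathbb Z^2$ to $\G_g:=\mathbb Z^2\cup\{g\}$ with auxiliary couplings $J^{(h)}_{x,g}=h$ for every $x\in\mathbb Z^2$, so that the zero-field Ising model on $\G_g$ is equivalent to the Ising model on $\mathbb Z^2$ with magnetic field $h$, and the identity $\langle\sigma_0\sigma_g\rangle_{\G_g,\beta}=\langle\sigma_0\rangle_{\mathbb Z^2,\beta,h}$ combined with FKG yields $\liminf_{h\downarrow 0}\langle\sigma_0\sigma_g\rangle_{\G_g,\beta}\ge m^*(\beta)>0$.

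\textbf{Translation into a connectivity statement.} The switching identity from Corollary~\ref{cor:switch} (case $n=1$ applied to the pair $\{0,g\}$) gives
\[
\mathbf P^{\{0,g\},\emptyset}_{\G_g,\beta}\!\left[0\stackrel{\n_1+\n_2}{\longleftrightarrow}g\right] \;=\; \langle\sigma_0\sigma_g\rangle_{\G_g,\beta},
\]
so the double sourced random current percolates from $0$ to $g$ with probability bounded below by $m^*(\beta)-o_h(1)$. To upgrade this to a single sourceless current, I would condition on the backbone path $\gamma_1$ of $\n_1$ from $0$ to $g$: after removing $\gamma_1$, one is left with two independent sourceless currents on $\G_g\setminus\overline{\gamma_1}$, and the three-way coupling of Theorem~\ref{thm:connection} should force, with positive conditional probability, the connection $0\leftrightarrow g$ to be already realised inside the trace $\widehat\n_1$ alone (rather than needing the help of $\n_2$). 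Taking $h\downarrow 0$ along a subsequence, and observing that every $\widehat\n$-path from $0$ to $g$ must traverse a macroscopic excursion in $\mathbb Z^2$ before hitting a ghost edge (whose density vanishes linearly in $h$), one would then conclude $\mathbf P^\emptyset_{\mathbb Z^2,\beta}[0\stackrel{\widehat\n}{\longleftrightarrow}\infty]>0$, which is the conjecture.

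\textbf{Main obstacle.} The critical difficulty is the passage from double-current to single-current percolation. The soft principle ``union of two percolating configurations $\Rightarrow$ one of them percolates'' is false already for independent Bernoulli percolation just below $p_c$, and random currents are known to lack a genuine FKG inequality, so no general probabilistic argument suffices. One must instead exploit the finer coupling structure developed in Section~\ref{sec:coupling}: the random cluster $\omega$ does satisfy FKG and percolates for every $\beta>\beta_c$, and a promising route is to combine this with the near-independence bound of Remark~\ref{rem:3}, which implies that if $\widehat\n$ had only finite clusters then large-scale increasing events in $\omega$ would be asymptotically independent, in tension with the existence of a unique infinite $\omega$-cluster. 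Making this quantitative, and in particular extending the coupling of Theorem~\ref{thm:connection} and its consequences to the source-constrained measure $\mathbf P^{\{0,g\},\emptyset}$ uniformly in small $h$, is the genuinely technical heart of the proof; it is also where the analysis departs from the sourceless setting exploited in Theorem~\ref{prop:condition}, and where one should expect to need an RSW-type input beyond the one developed in Section~\ref{sec:5}.
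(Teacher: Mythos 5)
This statement is not proved in the paper at all: it is stated there as an open conjecture, and the authors explicitly say they do not know how to prove it for finite-range interactions (only that $\beta_c$ is known to be the percolation threshold for the \emph{sum} $\n_1+\n_2$ of two sourceless currents). Your proposal does not close that gap, and in fact its quantitative starting point is wrong. The identity you display,
$\mathbf P^{\{0,g\},\emptyset}_{\G_g,\beta}\bigl[0\stackrel{\n_1+\n_2}{\longleftrightarrow}g\bigr]=\langle\sigma_0\sigma_g\rangle_{\G_g,\beta}$,
is not what the switching principle gives: since $\partial\n_1=\{0,g\}$, the sources $0$ and $g$ are automatically connected already in $\n_1$, so the left-hand side equals $1$ (Corollary~\ref{cor:switch} with $2n=2$ reads $\langle\sigma_0\sigma_g\rangle=\langle\sigma_0\sigma_g\rangle\cdot\mathbf P^{\{0,g\},\emptyset}[0\leftrightarrow g]$, i.e.\ a tautology). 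Hence your ``translation into a connectivity statement'' carries no information about the sourceless trace $\widehat\n$, and the positivity of $m^*(\beta)$ has not actually been converted into anything about single-current percolation.

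The step you label the ``main obstacle''---passing from percolation of the double current (equivalently, from $m^*(\beta)>0$) to percolation of the trace of a single sourceless current---is precisely the content of the conjecture, and your proposal leaves it unresolved: the suggested routes (playing Remark~\ref{rem:3} against uniqueness of the infinite $\omega$-cluster, conditioning on the backbone to $g$, an RSW input beyond Section~\ref{sec:5}) are programmatic, not arguments. Two further points show the difficulty is real rather than technical bookkeeping: the paper's own remark notes that on the hexagonal lattice the odd part $\eta$ does \emph{not} percolate for any $\beta$, so any proof must exploit the even-positive edges of $\n$ in an essential way; and your $h\downarrow0$ limit is delicate, since every connection $0\leftrightarrow g$ uses ghost edges whose density vanishes with $h$, so one needs uniform control ensuring a macroscopic connection inside $\mathbb Z^2$ survives the limit, which is asserted but not established. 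In short, the proposal is a plausible research plan for an open problem, not a proof, and its one concrete identity is incorrect as stated.
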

The previous conjecture should be compared to the following known result: $\beta_c$ corresponds to the phase transition for the percolation of the sum of two independent currents, i.e.~for $\n_1+\n_2$ with law ${\bf P}_{\mathbb Z^2,\beta}^{\emptyset,\emptyset}$.
The conjecture that it also corresponds to the phase transition for a single current is interesting from the point of view of phase transition, since it means that $\beta_c$ probably corresponds to a critical point for currents above which loops unfold into infinite loops.

\begin{remark}We wish to highlight the fact that ${\rm P}_{\mathbb Z^2,\beta}^\emptyset[0\longleftrightarrow \infty\text{ in }\eta]>0$ is not necessarily true, as can be seen for the hexagonal lattice in which $\eta$ corresponds to the boundary walls in the high-temperature Ising model on the dual triangular lattice, a model for which it is possible to show that all boundary walls are finite for any value of $\beta$. Nevertheless, in this case the edges with even (and positive) current are sufficient to create an infinite cluster and $\widehat\n$ percolates as soon as $\beta>\beta_c$. We actually believe that the conjecture can be proved with standard techniques in the planar case, but we do not know how to prove it for finite-range interactions, or for larger dimensions.\end{remark}

\subsection{Proof of Theorem~\ref{prop:condition} when $\beta=\beta_c$ and $\phi_{\mathbb Z^2,\beta_c}[0\leftrightarrow \infty]=0$}\label{sec:outline}

From now on, we fix $\beta=\beta_c$ and remove $\beta$ from the notation. 
The proof is based on four important lemmas. We start by a key observation, which is based on the switching principle for even subgraphs of a fixed graph.
\begin{lemma}\label{lem:flowArgument} Assume that $\phi_{\mathbb Z^2}[0\leftrightarrow \infty]=0$, then
\begin{enumerate}
\item If $\phi_{\mathbb Z^2}[\omega\in E]=1$, then ${\rm P}^\emptyset_{\mathbb Z^2}[\eta\in E]=1$,
\item If ${\bf P}_{\mathbb Z^2}^{\emptyset,\emptyset}[\widehat{\n_1+\n_2}\in E]=1$, then ${\bf P}_{\mathbb Z^2}^\emptyset[\widehat{\n}\in E]=1$.\end{enumerate} \end{lemma}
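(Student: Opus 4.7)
The plan is to reduce both parts to a single auxiliary lemma: \emph{if $H$ is a finite graph containing a circuit surrounding $0$, then a uniformly chosen element of $S_\emptyset(H)$ (the $\mathbb F_2$-vector space of even subgraphs of $H$ under symmetric difference) contains a circuit surrounding $0$ with probability at least $\tfrac12$}. Once this is available, the hypothesis $\phi_{\bbZ^2}[0\leftrightarrow\infty]=0$ is used to ensure that the clusters of $\omega$ (for Part~1) and those of $\widehat{\n_1+\n_2}$ (for Part~2) are all finite almost surely, so that infinitely many disjoint circuits must come from infinitely many distinct clusters and the conclusion follows by independence.

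To prove the auxiliary lemma I would use a winding-parity homomorphism. Fix a generic ray $r$ from $0$ to infinity and set $w(\epsilon):=|\epsilon\cap r|\bmod 2$ for $\epsilon\subset E(H)$. This map is manifestly $\mathbb F_2$-linear, and on $S_\emptyset(H)$ it coincides with the winding number around $0$ modulo $2$ of the associated continuous 1-chain (well-defined using the continuous-curve formalism of Section~\ref{sec:4.1}). Assuming $H$ contains some circuit $c_0$ with $w(c_0)=1$, extracted from a given circuit surrounding $0$ by a short topological reduction to an innermost simple sub-circuit, $w$ is surjective on $S_\emptyset(H)$, so exactly half of its elements $\epsilon$ satisfy $w(\epsilon)=1$. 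Any such $\epsilon$ admits an edge-disjoint decomposition into graph cycles whose winding numbers sum mod $2$ to $1$, so at least one of those cycles has odd (hence non-zero) winding and therefore surrounds $0$; in particular $\epsilon$ contains a circuit surrounding $0$.

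For Part~1, Theorem~\ref{thm:connection} (with $A=\emptyset$) realizes $\eta$, conditionally on $\omega$, as a uniformly random element of $S_\emptyset(\omega)$. Under $\phi_{\bbZ^2}[0\leftrightarrow\infty]=0$ every cluster of $\omega$ is almost surely finite, and since a finite cluster supports only finitely many circuits, the event $\omega\in E$ forces infinitely many distinct clusters $C_1,C_2,\dots$ of $\omega$ to each contain a circuit surrounding $0$. The restrictions $\eta|_{C_k}$ are conditionally independent given $\omega$, so applying the auxiliary lemma cluster by cluster produces independent events of probability $\ge\tfrac12$ that $\eta|_{C_k}$ contains a circuit surrounding $0$; Borel--Cantelli then yields infinitely many successful indices $k$, and the corresponding circuits, living in disjoint clusters, are automatically disjoint, giving $\eta\in E$ almost surely.

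For Part~2, the same template is applied to the parity field $\eta_1(e):=\n_1(e)\bmod 2$ in place of $\eta$ and to $\widehat{\n_1+\n_2}$ in place of $\omega$. The key observation is that, conditionally on $\calM=\calN_1+\calN_2$, the law of $\n_1$ has weight proportional to $\binom{\m}{\n_1}$ subject to $\partial\n_1=\emptyset$ (cf.~\eqref{comb}); the identities $\sum_{k\text{ even}}\binom{\m(e)}{k}=\sum_{k\text{ odd}}\binom{\m(e)}{k}=2^{\m(e)-1}$ valid for $\m(e)\ge1$ then make $\eta_1$ a uniform even subgraph of $\widehat{\n_1+\n_2}$, conditional on $\m$ and hence on $\widehat{\n_1+\n_2}$. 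Since $\eta_1\subset\widehat{\n_1}$, producing a circuit surrounding $0$ in $\eta_1$ suffices. The main remaining obstacle I anticipate is verifying that the clusters of $\widehat{\n_1+\n_2}$ are almost surely finite under $\mathbf{P}_{\bbZ^2}^{\emptyset,\emptyset}$; this should follow from the switching principle (which relates connection probabilities in $\widehat{\n_1+\n_2}$ to squared spin--spin correlations $\langle\sigma_0\sigma_x\rangle^2_{\bbZ^2}$) combined with the assumed decay $\langle\sigma_0\sigma_x\rangle_{\bbZ^2}\to 0$ and a standard uniqueness-of-infinite-cluster argument. The secondary delicate point, the extraction of an odd-winding sub-circuit alluded to above, requires care in the finite-range setting where edges may cross without sharing a vertex.
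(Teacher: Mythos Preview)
Your proposal is correct and follows essentially the same approach as the paper. Both arguments reduce to the key fact that a uniform even subgraph of a finite graph containing a circuit around $0$ itself contains such a circuit with probability at least $\tfrac12$; the paper proves this via the involution $\eta\mapsto\eta\Delta c$ (Corollary~\ref{rmk:cycle}) and a flow-parity observation, which is exactly your winding homomorphism $w$ in disguise. For Part~2 the paper works directly with the multigraph $\calM$ and the switching principle, whereas you pass to the parity field $\eta_1$ and the binomial identity to recognize it as a uniform even subgraph of the simple graph $\widehat{\n_1+\n_2}$; this is a slightly cleaner repackaging of the same idea, and your identification of the remaining ingredients (finiteness of clusters via $\langle\sigma_0\sigma_x\rangle^2\to0$, independence across clusters, Borel--Cantelli) matches the paper.
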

\begin{proof}
  We begin with the proof of the first item. Assume that $\omega$ contains infinitely many disjoint circuits surrounding $0$. Since there is no infinite connected component in $\omega$, there are infinitely many connected components surrounding $\Lambda_R$ but not intersecting it.

For each such connected component $\calC$, choose a cycle $c\subset\calS$ surrounding $0$. Note that $\eta\mapsto \eta\Delta c$ maps\footnote{Observe that if $\eta$ does not contain a cycle surrounding the origin, the flows of $\eta$ and $\eta\Delta c$ through a fixed set of edges $E\subset\calC$ such that $\calC\setminus E$ does not contain any circuit surrounding $0$  changes from even to odd, which guarantees the existence of a cycle in $\eta\Delta c$ surrounding $0$.} even subgraphs without cycle around $0$ to subgraphs with a cycle around $0$. Since $\eta$ and $\eta\Delta c$ have the same law (Corollary~\ref{rmk:cycle}), we deduce that, conditioned on $\omega$, the probability that a connected component surrounding $0$ but not intersecting it contains a cycle in $\eta$ surrounding $0$ is larger than or equal to 1/2.

Since this is true independently for any connected component of $\omega$ surrounding $0$ and not intersecting it, we deduce that $\phi_{\mathbb Z^2}[\omega\in E]=1$ implies ${\rm P}_{\mathbb Z^2}^\emptyset[\eta\in E]=1$.

We now prove the second item using the same technique as for the first item, but with the graph representation of currents instead of the random-cluster model. Let $\n_1$ and $\n_2$ be two independent random currents. Let $\mathcal M$ be the multigraph with vertex-set $\mathbb Z^2$, obtained by putting exactly $\n_1(x,y)+\n_2(x,y)$ edges between $x$ and $y$, for every $x,y\in\bbZ^2$. Then, let $\mathcal N$ be chosen uniformly among all the even subgraphs of $\cal M$ with $\n_2(x,y)$ edges between $x$ and $y$. Conditionally on $\mathcal M$, the law of $\mathcal N$ is simply the law of a random even subgraph of $\mathcal M$. 

With this observation, one can conclude that Item 2 holds, exactly as we did for Item 1. Indeed, if $\widehat{\n_1+\n_2}\in E$, then the graph $\mathcal M$ associated with $\n_1+\n_2$ has infinitely many connected components surrounding $0$ since the assumption that $\phi_{\bbZ^2}[0\leftrightarrow\infty]=0$ classically implies that $\widehat{\n_1+\n_2}$ has no infinite connected component (since the phase transition for Ising corresponds to the percolation phase transition for the sum of two currents, see \cite{Aiz82}).  By the flow argument mentioned above, each such connected component contains a circuit of $\mathcal N$ surrounding $0$ with probability larger than or equal to 1/2, independently for different connected components. Therefore, $\widehat\n_2$ contains infinitely many circuits surrounding the origin.
\end{proof}

\begin{remark}As mentioned above, the proofs of Items 1 and 2 both rely on switching circuits (once with Corollary~\ref{rmk:cycle}, and the other with the switching principle). Notice the strength of the switching principle: if many circuits surrounding the origin exist in $\widehat{\n_1+\n_2}$, then many circuits already exist in $\widehat\n_1$ and $\widehat\n_2$.\end{remark}

A direct consequence of this lemma is that it suffices to show that $\omega\in E$ almost surely, or that $\widehat{\n_1+\n_2}\in E$ almost surely.
The next lemma provides a sufficient condition to show the former.

\begin{definition}\label{def:crossed} A rectangle $R:=[a,b]\times[c,d]$ is said to be {\em crossed horizontally in $f$} if there exists a {\em horizontal crossing in $f$}, i.e.~an open (in $f$) continuous path staying in $R$ and going from the left side $\{a\}\times[c,d]$ to the right side $\{b\}\times[c,d]$ of the rectangle. Equivalently, $R$ is said to be {\em crossed vertically in $f$} if there exists a {\em vertical crossing in $f$}.
\end{definition}

\begin{lemma}\label{lem:P1}
Assume that $\phi_{\mathbb Z^2}[0\leftrightarrow \infty]=0$. Then, if 
\begin{equation}
 \label{eq:P1}\tag{{\bf P1}} \limsup_{n\ge0} \phi_{\mathbb Z^2}[\Lambda_n\text{ is crossed horizontally in }\omega]>0,
\end{equation}
then $\phi_{\mathbb Z^2}[\omega\in E]=1$.
\end{lemma}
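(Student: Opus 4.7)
The plan is to upgrade the subsequential hypothesis \eqref{eq:P1} into a uniform lower bound on the probability of an $\omega$-open continuous circuit surrounding the origin in a nested sequence of disjoint annuli, and then to promote the existence of one such circuit to infinitely many via the domain Markov property of the random-cluster model. First, along the subsequence $(n_k)$ on which $\phi_{\mathbb Z^2}[\Lambda_{n_k}\text{ is crossed horizontally in }\omega]\ge\delta_0>0$, I combine the hypothesis $\phi_{\mathbb Z^2}[0\leftrightarrow\infty]=0$ with the Russo-Seymour-Welsh theory developed for the critical Ising random-cluster model (as in \cite{DumHonNol11,CheDumHon13}, adapted to the finite-range setting through the continuous-path notion introduced in Section~\ref{sec:4.1}) to conclude that for every aspect ratio $\rho>0$ there exists $\delta(\rho)>0$ such that horizontal crossings of $\rho n_k\times n_k$ rectangles have $\phi_{\mathbb Z^2}$-probability at least $\delta(\rho)$ along a further subsequence.

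Second, with crossings of rectangles of arbitrary aspect ratio available, I glue four long crossings around the origin to construct a continuous circuit in the annulus $A_k:=\Lambda_{3n_k}\setminus\Lambda_{n_k}$. Specifically, I place four overlapping rectangles of aspect ratio $3$ along the four sides of $A_k$; the conjunction of horizontal crossings of the top and bottom rectangles together with vertical crossings of the left and right rectangles forces an $\omega$-open continuous circuit surrounding $0$. Since each such crossing is an increasing event, the FKG inequality \eqref{eq:FKG} yields
\[
\phi_{\mathbb Z^2}\bigl[A_k \text{ contains a continuous open circuit surrounding } 0\bigr]\ \ge\ \delta_1\ >\ 0
\]
uniformly in $k$.

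Third, I convert a single-circuit estimate into infinitely many disjoint ones. Pick a sparse subsequence $k_1<k_2<\cdots$ so that $3n_{k_j}<n_{k_{j+1}}$ for all $j$; the annuli $A_{k_j}$ are then pairwise disjoint and nested outward. For fixed $N$, I bound the probability that none of $A_{k_1},\dots,A_{k_N}$ contains a surrounding circuit by conditioning on the configuration outside the outermost annulus $A_{k_N}$. By the domain Markov property, the conditional law on the edges of $A_{k_N}$ is a finite-volume random-cluster measure with boundary conditions read off from the exterior. Since the event of a surrounding circuit in $A_{k_N}$ is increasing and free boundary conditions minimize the finite-volume random-cluster measure in the stochastic order, the conditional probability of a circuit in $A_{k_N}$ is at least the free-b.c. probability, hence at least $\delta_1$ by Step~2. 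Iterating from the outside inwards,
\[
\phi_{\mathbb Z^2}\Bigl[\bigcap_{j=1}^{N}\{\text{no circuit surrounding }0\text{ in }A_{k_j}\}\Bigr]\ \le\ (1-\delta_1)^{N}\ \xrightarrow[N\to\infty]{}\ 0,
\]
so almost surely infinitely many $A_{k_j}$ carry a surrounding $\omega$-open circuit, proving $\phi_{\mathbb Z^2}[\omega\in E]=1$.

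The main obstacle is the Russo-Seymour-Welsh step: in the finite-range regime on $\mathbb Z^2$ one loses the clean planar duality available in the nearest-neighbor Ising case, so rectangular crossings must be handled at the level of continuous paths and the standard RSW machinery has to be invoked in a form compatible with the weaker symmetries of the coupling constants together with the no-percolation input $\phi_{\mathbb Z^2}[0\leftrightarrow\infty]=0$. Once a uniform lower bound on rectangular crossings is secured, the FKG gluing and the domain-Markov iteration are routine random-cluster tools.
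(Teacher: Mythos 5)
Your outline founders on the two places where the finite-range (non-planar) setting differs from the nearest-neighbor one, and both are precisely where the paper's proof of Lemma~\ref{lem:P1} puts its effort. First, the assertion that the conjunction of four rectangle crossings around an annulus ``forces'' an $\omega$-open circuit surrounding $0$ is exactly what fails beyond planarity: two open continuous paths whose planar curves cross need not share a vertex, since with range $R>1$ clusters can leap over one another without being connected. This avoided-crossing phenomenon is the central difficulty of the whole paper, and it is handled not by FKG alone but by the surgery estimate of Theorem~\ref{thm:glueCirc} (proved in the appendix via a finite-energy modification map and a reconstruction-of-preimages argument based on Lemma~\ref{lem:combi}), which is why the paper's bound reads $\phi_{\bbZ^2}[C_n]\ge c_{\rm glue}\,\phi_{\bbZ^2}[\Lambda_n\text{ is crossed horizontally}]^{20}$ rather than a clean product of crossing probabilities; calling this step routine skips the heart of the lemma. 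Second, your opening appeal to RSW to upgrade square crossings to crossings of rectangles of arbitrary aspect ratio is not available here: the results of \cite{DumHonNol11,CheDumHon13} are for the nearest-neighbor FK-Ising model and use planar duality, and for finite-range couplings the paper only obtains a much weaker RSW-type statement, inside a proof by contradiction, in the proof of Lemma~\ref{lem:P2} (with the unconditional crossing estimate left as an open question). The paper avoids needing any aspect-ratio boost: a horizontal crossing of $\Lambda_n$ passes either above or below the origin, so by reflection symmetry each of these events has probability at least half the square-crossing probability, and combining four such events by FKG \eqref{eq:FKG} with Theorem~\ref{thm:glueCirc} already gives $\limsup_n\phi_{\bbZ^2}[C_n]>0$ directly from \eqref{eq:P1}.

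Your third step also does not close as written. Conditioning on the configuration outside the outermost annulus and using monotonicity in boundary conditions bounds the conditional probability of a surrounding circuit from below by the probability under the random-cluster measure on the annulus with \emph{free} boundary conditions; but your uniform constant $\delta_1$ was obtained for the full-plane measure $\phi_{\bbZ^2}$, which stochastically dominates the free finite-volume measure, so the comparison goes the wrong way and the iteration yielding $(1-\delta_1)^N$ is not justified. The paper sidesteps any quantitative decoupling between annuli: from $\limsup_n\phi_{\bbZ^2}[C_n]>0$ together with $\phi_{\bbZ^2}[0\leftrightarrow\infty]=0$ it deduces $\phi_{\bbZ^2}[E]>0$, and then concludes $\phi_{\bbZ^2}[E]=1$ by ergodicity of the infinite-volume random-cluster measure.
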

In spirit, the proof of this lemma consists in constructing circuits surrounding the origin in $\Lambda_n$ using horizontal and vertical crossings in $\Lambda_n$. We refer to Fig.~\ref{fig:A7} for an illustration of the method in the planar case. Some additional care is required in our context due to the fact that we do not work with a planar model since we have finite-range interactions. Nevertheless, the work has been (almost) entirely done for us: a {\em theory for ``gluing paths''} was developed in \cite{DumSidTas16} and later in \cite{ManRao16,NewTasWu15}, and our context will only require a mild variant of it.

\begin{proof}Consider the event $H_n^+$ that there exists a horizontal crossing of $\Lambda_n$ that passes above the origin. By symmetry, the probability of $H_n^+$ is equal to the probability of the event $H_n^-$ that there exists a horizontal crossing in $\Lambda_n$ that passes below the origin. Furthermore, if $\Lambda_n$ is crossed from left to right, either $H_n^+$ or $H_n^-$ must occur. Therefore,
\begin{equation}
  \label{eq:17}
  \phi_{\bbZ^2}[\Lambda_n \text{ is crossed horizontally in $\omega$}]\le\phi_{\bbZ^2}[H_n^-]+  \phi_{\bbZ^2}[H_n^+]=2 \phi_{\bbZ^2}[H_n^+].
\end{equation}
Let $E_n$ be the event that
\begin{itemize}[noitemsep,nolistsep]
\item there exist horizontal crossings of $\Lambda_n$ passing below and above the origin,
\item there exist vertical crossings of $\Lambda_n$ passing on the left and the right of the origin.
\end{itemize}
By symmetry and the FKG inequality \eqref{eq:FKG},
\begin{equation*}
  \label{eq:8}
  \phi_{\bbZ^2}[E_n]\ge \phi_{\bbZ^2}[H_n^+]^4.
\end{equation*}
Imagine for a moment that $E_n$ is contained in the event $C_n$ that there exists a circuit surrounding 0 which is in a connected component intersecting $\partial\Lambda_n$ (which is the case if the graph is planar; see Fig.~\ref{fig:A7}). By \eqref{eq:17} and \eqref{eq:P1}, we would deduce that 
\begin{equation}\label{eq:a19}\limsup \phi_{\bbZ^2}[C_n]>0.\end{equation}
Together with $\phi_{\mathbb Z^2}[0\leftrightarrow \infty]=0$, this would imply that $\phi_{\bbZ^2}[E]>0$, and therefore by ergodicity of the infinite-volume random-cluster measure \cite[Theorem~4.19]{Gri06}, that $\phi_{\bbZ^2}[E]=1$. 

Unfortunately, the event $E_n$ does not quite imply the occurrence of a circuit around $0$ due to the fact that long edges allow paths to ``jump over each other''. Nevertheless, a technical result presented in the Appendix shows that a slight modification of this argument implies that with positive probability, paths do not jump over each other and simply intersect. More precisely, Theorem~\ref{thm:glueCirc} implies that there exists a constant $c_{\rm glue}>0$ such that for every $n\ge1$,
\begin{align*}\phi_{\bbZ^2}[C_n]&\ge c_{\rm glue} \cdot \phi_{\bbZ^2}[\Lambda_n \text{ is crossed horizontally in $\omega$}]^{20}.\end{align*}
Together with \eqref{eq:P1}, this implies \eqref{eq:a19} and concludes the proof.
\end{proof}
\begin{figure}[htbp]
  \centering
  \includegraphics[width=0.3\linewidth]{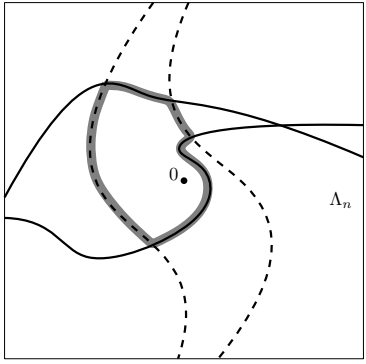}
   \caption{Combining four crossings of $\Lambda_n$, each one passing on one well-chosen side of the origin, creates a circuit surrounding the origin (in light gray). }
  \label{fig:A7}
\end{figure}

We now turn to a sufficient condition to obtain \eqref{eq:P1}.

\begin{definition}An annulus $\Lambda_n\setminus\Lambda_m$ is said to be {\em crossed from inside to outside in $f$} if there exists an open (in $f$) continuous path staying in $\Lambda_n\setminus\Lambda_m$ and going from $\partial\Lambda_n$ to $\partial\Lambda_m$.
\end{definition}

\begin{lemma}\label{lem:P2}
The condition \eqref{eq:P1} is satisfied as soon as
\begin{equation}
\label{eq:P2}\tag{{\bf P2}} \lim_{n\ge0} {\bf P}_{\mathbb Z^2}^\emptyset[\Lambda_{2n}\setminus \Lambda_n\text{ is crossed from inside to outside in }\widehat\n]=0.
\end{equation}

\end{lemma}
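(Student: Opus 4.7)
The strategy is to argue by contradiction. Assume (P2) holds but (P1) fails, so $p_n := \phi_{\mathbb Z^2}[\Lambda_n \text{ crossed horizontally in }\omega] \to 0$; the goal is to derive a contradiction with the hypothesis $\beta=\beta_c$ via sharpness of the Ising phase transition. The first ingredient is to recast (P2) as a quantitative decorrelation statement: by Remark~\ref{rem:3}, for any increasing events $A$ and $B$ supported on edges with both endpoints inside $V\subset\Lambda_n(z)$ and $W\subset\Lambda_{2n}(z)^c$ respectively,
$$\phi_{\mathbb Z^2}[A \cap B] \le \phi_{\mathbb Z^2}[A]\phi_{\mathbb Z^2}[B] + \widehat{\mathbf{P}}_{\mathbb Z^2}^\emptyset[V \stackrel{\widehat\n}{\longleftrightarrow} W],$$
and by (P2) the error tends to $0$ uniformly in the center $z$. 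Combined with the FKG inequality \eqref{eq:FKG}, this yields an asymptotic factorization of increasing events localized in far-apart regions under $\phi_{\mathbb Z^2}$.

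The second step is a renormalization built on this asymptotic factorization. A horizontal crossing of a long rectangle of length $kn$ and height $n$ forces the existence of crossings through $k$ sub-boxes of side $n$ separated by annular buffers of thickness comparable to $n$; applying the decorrelation estimate above $k$ times bounds the probability of the compound event by $p_n^k$ plus a cumulative error that vanishes under (P2). The gluing principles from Appendix B (Theorem~\ref{thm:glueCirc} and its variants) ensure that, despite the possibility in the finite-range setting of paths ``jumping over'' each other, such sub-crossings actually concatenate into a genuine long crossing with probability bounded below by a suitable power of the product. Choosing $k$ appropriately, one obtains an estimate of the form $p_{kn}\le (Cp_n)^k + o(1)$, so that the hypothesis $p_n\to 0$ propels the crossing probability into exponential decay $p_n\le e^{-cn}$ for some $c>0$.

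The third step turns this exponential decay into a contradiction with $\beta=\beta_c$. Since $p_n$ dominates the one-arm probability $\phi_{\mathbb Z^2}[0 \leftrightarrow \partial\Lambda_n]$ up to polynomial factors (by a union bound on possible entry points of a horizontal crossing), exponential decay propagates to the two-point connection function, and hence by Proposition~\ref{prop:coupling} to the spin correlation $\langle \sigma_0 \sigma_x \rangle_{\mathbb Z^2}$. But sharpness of the phase transition for finite-range Ising models, see \cite{AizBarFer87,AizDumSid15,DumTas15}, asserts that exponential decay of spin correlations implies $\beta<\beta_c$, contradicting the standing hypothesis $\beta=\beta_c$.

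The principal obstacle is the renormalization in the second step: the sourceless random current $\widehat\n$ does not enjoy positive association, so the classical FK-style RSW iteration cannot be run on it directly. We circumvent this by running the renormalization on the positively associated measure $\omega$ and invoking $\widehat\n$ only through the decorrelation bound of Remark~\ref{rem:3} supplied by (P2). Combining this decorrelation input with the finite-range gluing lemmas from Appendix B to produce clean exponential decay at every scale constitutes the delicate technical core of the argument.
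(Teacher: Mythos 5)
There is a genuine gap, and it sits exactly where the lemma is hard. Your renormalization step rests on the claim that a horizontal crossing of a large square forces horizontal crossings of $k$ well-separated boxes of scale $n$, giving $p_{kn}\le (Cp_n)^k+o(1)$. That geometric claim is false: a left--right crossing of a big square only forces \emph{easy-direction} crossings of thin vertical strips, and a staircase-shaped path crosses such a strip without ever crossing any $n$-box in the hard direction, so the strip crossings are not controlled by $p_n$. One can repair the geometry by working with hard-direction crossings of long rectangles (a hard crossing of a $kn\times n$ rectangle does force $k$ square crossings), but then the second obstruction appears: the decorrelation supplied by \eqref{eq:P2} through Remark~\ref{rem:3} is only additive, $\phi[A\cap B]\le\phi[A]\phi[B]+\varepsilon_n$ with $\varepsilon_n\to0$ at an unknown rate. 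Iterating it $k$ times leaves an error of order $k\varepsilon_n$ (or, in the doubling scheme $q_{2k}\le q_k^2+\varepsilon_n$, a fixed point of size $\varepsilon_n$), so at a fixed scale $n$ you can conclude at best that crossing probabilities become \emph{small}, never that they decay exponentially in the distance. Consequently the final appeal to sharpness (exponential decay of $\langle\sigma_0\sigma_x\rangle$ implies $\beta<\beta_c$) is out of reach. In addition, the comparison you invoke between crossings and arms goes the wrong way: the union bound over entry points gives $p_n\le \mathrm{poly}(n)\cdot\sup_z\phi_{\mathbb Z^2}[z\leftrightarrow\partial\Lambda_n(z)]$, not the inequality $\phi_{\mathbb Z^2}[0\leftrightarrow\partial\Lambda_n]\le\mathrm{poly}(n)\,p_n$ that you need, and the latter is itself an RSW-type statement with no soft proof.

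Structurally, your argument assumes the two ingredients that constitute the actual content of the lemma. The paper does not derive a contradiction with sharpness; it (i) proves a positive input at criticality — via a second-moment argument using the switching lemma, the finite-range Messager--Miracle-Sol\'e inequality (Theorem~\ref{thm:mono}) and the Simon--Lieb inequality — namely that annulus crossings in $\omega$ have positive $\limsup$; (ii) uses \eqref{eq:P2} only through the one-step renormalization $u_m\le C(u_n^2+\varepsilon_n)$ on annulus-crossing probabilities to upgrade this to a uniform lower bound; and (iii) runs a genuine RSW-type argument (the events $A_n,B_n,C_n(\alpha),D_n(\alpha)$ and Lemma~\ref{lem:key_construction}, together with the gluing results of Appendix B) to show that failure of \eqref{eq:P1} would force the annulus-crossing probabilities to have $\liminf$ zero, a contradiction. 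Your proposal replaces (i) and (iii) by assertions — exponential decay from a qualitative $o(1)$ decorrelation, and an arm-versus-crossing comparison — that do not follow from the available hypotheses, so the delicate core of the proof is missing rather than circumvented.
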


Condition \eqref{eq:P2} should be understood, in the light of \eqref{eq:1}, as a statement about correlations in the random-cluster model. Roughly speaking, if \eqref{eq:P2} occurs, then the correlations in the random-cluster model between $\Lambda_n$ and the outside of $\Lambda_{2n}$ are small. This useful fact allows us to show the lemma in three steps:
\begin{enumerate}
\item First, we show, using a second-moment method, that
\be\limsup_{n\to \infty} \phi_{\mathbb Z^2}[\Lambda_{2n}\setminus \Lambda_n\text{ is crossed from inside to outside in }\omega]>0.\ee
\item Second, we use \eqref{eq:P2} to perform a renormalization on crossing probabilities to upgrade the first step to
\be\inf_{n\ge0} \phi_{\mathbb Z^2}[\Lambda_{2n}\setminus \Lambda_n\text{ is crossed from inside to outside in }\omega]>0.\ee
\item Finally, we use a Russo-Seymour-Welsh type argument to show that the previous step implies \eqref{eq:P1}. The Russo-Seymour-Welsh theory was developed by Russo \cite{Rus78} and Seymour-Welsh \cite{SeyWel78} in the case of Bernoulli percolation to relate the probability of crossing long rectangles in the hard direction with the probability of crossing them in the easy direction. This theory was found to be very useful to study planar models, and we refer to \cite{DumTas16} and references therein for a presentation of the current knowledge on the subject. Here, we rely on a version for finite-range interactions of a result \cite{Tas15} extending (in a slightly weaker version) the Russo-Seymour-Welsh theory to planar percolation models satisfying the FKG inequality.
\end{enumerate}
We postpone the proof of this lemma to Section~\ref{sec:P3} and present the last ingredient of the proof.
\begin{lemma}\label{lem:P3}
Assume $\phi_{\mathbb Z^2}[0\leftrightarrow\infty]=0$. If \eqref{eq:P1} and \eqref{eq:P2} are not satisfied, then
$${\bf P}_{\mathbb Z^2}^{\emptyset,\emptyset}[\widehat{\n_1+\n_2}\in E]=1.$$
\end{lemma}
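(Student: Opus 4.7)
The plan is to leverage the failure of \eqref{eq:P2}, which yields a subsequence $n_k\to\infty$ and $\delta>0$ such that $\mathbf P^\emptyset_{\mathbb Z^2}[\widehat\n \text{ crosses } \Lambda_{2n_k}\setminus\Lambda_{n_k}]\ge\delta$, and to combine such annular crossings with the independence of the two sourceless currents $\widehat\n_1$ and $\widehat\n_2$ in order to manufacture circuits of $\widehat{\n_1+\n_2}=\widehat\n_1\cup\widehat\n_2$ around the origin at arbitrarily large scales. The role of the assumption that \eqref{eq:P1} also fails is to allow the intermediate Russo--Seymour--Welsh type estimates in this non-FKG setting to go through, via the correlation inequality of Remark~\ref{rem:3}.

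The central step is an RSW-type statement for sourceless random currents: along the subsequence $n_k$, crossings of rectangles of aspect ratio, say, $3{:}1$ and short side $n_k$ have probability bounded below by a constant $c>0$ in $\widehat\n$, uniformly in $k$ and in the direction of the crossing. Since $\widehat\n$ does not satisfy the FKG inequality, I cannot apply the classical RSW induction directly; instead, I would carry it out in the coupled random-cluster configuration $\omega$ (which does satisfy \eqref{eq:FKG}) and transfer the conclusions back to $\widehat\n$ through the three-way coupling of Theorem~\ref{thm:connection}, with the transfer errors at each scale controlled by \eqref{eq:1}. The failure of \eqref{eq:P1}, i.e.~that horizontal crossings of boxes in $\omega$ tend to $0$, is exactly what keeps these errors from accumulating during the induction and provides the non-trivial input at each scale $n_k$.

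Given the RSW input, I would construct circuits in $\widehat{\n_1+\n_2}$ around $0$ inside each annulus $\Lambda_{2n_k}\setminus\Lambda_{n_k}$ by covering the annulus with four overlapping long rectangles situated above, below, to the left, and to the right of the origin, arranged so that the simultaneous lengthwise crossing of all four forces a topological circuit around $0$. To secure a positive joint probability despite the lack of positive association, I distribute the four rectangles between the two currents, pairing \emph{spatially disjoint} rectangles within each current (for instance top and right in $\widehat\n_1$, bottom and left in $\widehat\n_2$); a second application of the correlation inequality of Remark~\ref{rem:3} then decouples the two events within a single current up to a cost governed by $\widehat\n$-connection probabilities between disjoint rectangles, once again tamed by \eqref{eq:P1} failing. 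By the independence of $\widehat\n_1$ and $\widehat\n_2$, the four crossings then occur jointly with probability $\ge c'>0$ in $\widehat{\n_1+\n_2}$. Applying the gluing principle (Theorem~\ref{thm:glueCirc}, in its natural adaptation to paths in $\widehat{\n_1+\n_2}$ from the Appendix) promotes these near-intersecting finite-range crossings into an actual circuit around $0$, yielding $\mathbf P^{\emptyset,\emptyset}_{\mathbb Z^2}[\widehat{\n_1+\n_2}\text{ contains a circuit around $0$ in }\Lambda_{2n_k}\setminus\Lambda_{n_k}]\ge c''>0$ for every $k$.

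To upgrade this to infinitely many circuits almost surely, I would pass to a sub-subsequence with $n_{k_{j+1}}>2n_{k_j}$, so that the corresponding annuli are pairwise disjoint; reverse Fatou applied to the events $A_j=\{\widehat{\n_1+\n_2}\text{ has a circuit around $0$ in }\Lambda_{2n_{k_j}}\setminus\Lambda_{n_{k_j}}\}$ then gives $\mathbf P^{\emptyset,\emptyset}_{\mathbb Z^2}[\widehat{\n_1+\n_2}\in E]\ge\mathbf P^{\emptyset,\emptyset}_{\mathbb Z^2}[\limsup_j A_j]\ge c''>0$, and a standard $0$--$1$ law using the ergodicity of $\mathbf P^{\emptyset,\emptyset}_{\mathbb Z^2}$ (together with $\phi_{\mathbb Z^2}[0\leftrightarrow\infty]=0$ and the coupling of Section~\ref{sec:coupling} to control the tail) upgrades the probability to $1$. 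The main obstacle throughout is the RSW-without-FKG step for $\widehat\n$: it is precisely where the simultaneous failure of \eqref{eq:P1} and \eqref{eq:P2} is exploited, the former to keep the transfer errors between $\omega$ and $\widehat\n$ under control, the latter to provide the initial positive crossing probability at scales $n_k$.
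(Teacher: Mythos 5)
Your central step fails: the RSW-type lower bound you want for $\widehat\n$ (crossings of $3{:}1$ rectangles bounded below ``uniformly in the direction of the crossing'') is not merely hard to prove in the absence of FKG --- it is incompatible with the standing hypothesis that \eqref{eq:P1} fails. Indeed, {\bf non}\eqref{eq:P1} says that $\phi_{\bbZ^2}[\Lambda_n\text{ is crossed horizontally}]\to0$, and since $\widehat\n\le\omega$ in the coupling of Theorem~\ref{thm:connection}, square crossings (hence hard-direction crossings of long rectangles, which contain a square crossing) have probability tending to $0$ in $\widehat\n$ as well. For the same reason, your plan to run the classical RSW induction in $\omega$ and transfer it back is blocked at the start: there is no positive crossing input for $\omega$ to feed the induction, and \eqref{eq:1} transfers decorrelation estimates, not lower bounds on crossings (which in any case would not pass from $\omega$ down to the smaller configuration $\widehat\n$). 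Consequently the four-rectangle construction of a circuit around $0$, which requires simultaneous lengthwise (hard-direction) crossings of four long rectangles, cannot be carried out: each of those crossings has vanishing probability under {\bf non}\eqref{eq:P1}. You have the roles of the two hypotheses inverted: {\bf non}\eqref{eq:P2} supplies the positive easy-direction crossing probability \eqref{eq:27}, while {\bf non}\eqref{eq:P1} is an \emph{upper} bound that kills straight crossings; it is not a tool for controlling ``transfer errors''.

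The missing idea is precisely how the paper exploits this tension: since annuli are crossed in $\widehat\n$ with probability bounded below along a subsequence (from {\bf non}\eqref{eq:P2}), while squares and all ``straight'' crossing events ($B_n$, $E_n$, $F_n(k)$) have vanishing probability (from {\bf non}\eqref{eq:P1}), the surviving crossings must be tortuous and wind around the origin. One then shows that, with probability bounded below, $\widehat\n_1$ contains a path in the half-strip returning to the line $\ell_n$ after dipping \emph{below} $0$ (event $W_n$), and independently $\widehat\n_2$ contains a path crossing the strip \emph{above} $0$ and below the endpoints of the first path (event $V_n(x,y)$); a local modification of $\n_2$ near two points (a finite-energy surgery with a uniform constant, not a gluing of macroscopic hard crossings) then produces a genuine circuit in $\widehat{\n_1+\n_2}$ inside a cluster reaching $\partial\Lambda_n$, giving $\limsup_n{\bf P}^{\emptyset,\emptyset}_{\bbZ^2}[H_n]>0$, after which monotonicity of $H_{\ge n}$ and ergodicity of the product measure upgrade this to probability $1$ (your final limsup/ergodicity step is sound in spirit, but it sits on top of the construction that is missing). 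As written, your argument would prove lower bounds that contradict your own hypotheses, so the proposal has a genuine gap at its core.
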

The proof is based on the fact that, on the one hand, by {\bf non}\eqref{eq:P1}, the probability of seeing squares which are crossed in $\omega$ (which contains $\widehat\n$) is small. But on the other hand, {\bf non}\eqref{eq:P2} implies that the probability that the annulus $\Lambda_{2n}\setminus\Lambda_n$ is crossed  from inside to outside  in $\widehat\n$  is larger than some $c>0$ for infinitely many integers $n$. For such values of $n$, we deduce from the two previous claims that the crossings of $\Lambda_{2n}\setminus\Lambda_n$ in $\widehat\n$ are necessarily tortuous. Combining such tortuous paths in $\widehat\n_1$ and $\widehat\n_2$ will allow us to construct circuits in $\widehat{\n_1+\n_2}$.
The complete proof is presented in Section~\ref{sec:P3}.

We are now ready to wrap up the proof of Theorem~\ref{prop:condition} when $\beta=\beta_c$ and $\phi_{\bbZ^2}[0\leftrightarrow\infty]=0$.
\begin{proof}[Proof of Theorem~\ref{prop:condition} for the case $\beta=\beta_c$ and $\phi_{\bbZ^2}{[0\leftrightarrow\infty]}=0$.]   

The first item of Lemma~\ref{lem:flowArgument} and the coupling between $\eta\subset\widehat \n\subset\omega$ show that it suffices to show that $\phi_{\mathbb Z^2}[\eta\in E]=1$. 
To show this, assume first that \eqref{eq:P1} and \eqref{eq:P2} are not satisfied. Since $\phi_{\mathbb Z^2}[0\leftrightarrow \infty]=0$, Lemma~\ref{lem:P3} and the second item of Lemma~\ref{lem:flowArgument} give that $\phi_{\mathbb Z^2}[\eta\in E]=1$. We may therefore assume that \eqref{eq:P1} or \eqref{eq:P2} are satisfied. By Lemma~\ref{lem:P2}, \eqref{eq:P2} implies \eqref{eq:P1} so that \eqref{eq:P1} is necessarily satisfied. Lemma~\ref{lem:P1} gives that $\phi_{\mathbb Z^2}[\omega\in E]=1$. 
\end{proof}

To conclude the proof of this theorem, we need to prove Lemmas~\ref{lem:P2} and \ref{lem:P3}, and treat the case where $\phi_{\bbZ^2}[0\leftrightarrow\infty]>0$.

\subsection{Proof of Lemma~\ref{lem:P2}}\label{sec:P2}

We use the strategy outlined in Section~\ref{sec:outline}, and proceed  in three steps.

\paragraph{Step 1: Second moment method.}
We use a second-moment method to show that
\begin{equation}
\limsup_{n\rightarrow\infty} \phi_{\mathbb Z^2}^\emptyset[\Lambda_{2n}\setminus \Lambda_n\text{ is crossed from inside to outside in }\omega]>0.\label{eq:16}
\end{equation}

Let $\mathsf N$ be the number of pairs of vertices $x\in\Lambda_n$ and $y\in \Lambda_n':=(3n,0)+\Lambda_n$ which are connected in $\omega$. Proposition~\ref{prop:coupling} implies that
\begin{align*}\phi_{\mathbb Z^2}[\mathsf N]&=\sum_{x\in \Lambda_n}\sum_{y\in \Lambda_n'}\phi_{\mathbb Z^2}[x\leftrightarrow y] =\sum_{x\in \Lambda_n}\sum_{y\in \Lambda_n'}\langle\sigma_x\sigma_y\rangle_{\mathbb Z^2}\end{align*}
and
\begin{align*}
\phi_{\mathbb Z^2}[\mathsf N^2]&=\sum_{x,z\in \Lambda_n}\sum_{y,t\in \Lambda_n'}\phi_{\mathbb Z^2}[x\leftrightarrow y,z\leftrightarrow t]\\
& \le \sum_{x,z\in \Lambda_n}\sum_{y,t\in \Lambda_n'}\phi_{\mathbb Z^2}[\mathcal F_{\{x,z,y,t\}}]\\
&\le \sum_{x,z\in \Lambda_n}\sum_{y,t\in \Lambda_n'}\langle\sigma_x\sigma_{z}\sigma_y\sigma_{t}\rangle_{\mathbb Z^2}.
\end{align*}
A classical application of the switching principle gives  (cf.~\eqref{U4}) 
\begin{align*}\langle\sigma_x\sigma_{z}\sigma_y\sigma_{t}\rangle_{\mathbb Z^2}&-\langle\sigma_x\sigma_{z}\rangle_{\mathbb Z^2}\langle\sigma_y\sigma_{t}\rangle_{\mathbb Z^2}-\langle\sigma_x\sigma_y\rangle_{\mathbb Z^2}\langle\sigma_{z}\sigma_{t}\rangle_{\mathbb Z^2}-\langle\sigma_x\sigma_{t}\rangle_{\mathbb Z^2}\langle\sigma_{z}\sigma_y\rangle_{\mathbb Z^2}\\
&=-2\langle\sigma_x\sigma_{z}\sigma_y\sigma_{t}\rangle_{\mathbb Z^2}\cdot{\bf P}^{\{x,z,y,t\},\emptyset}_{\bbZ^2}[x,z,y,t\text{ are all connected in }\widehat{\n_1+\n_2}]\le 0,\end{align*}
which, in turn, implies that$$\langle\sigma_x\sigma_{z}\sigma_y\sigma_{t}\rangle_{\mathbb Z^2}\le\langle\sigma_x\sigma_{z}\rangle_{\mathbb Z^2}\langle\sigma_y\sigma_{t}\rangle_{\mathbb Z^2}+\langle\sigma_x\sigma_y\rangle_{\mathbb Z^2}\langle\sigma_{z}\sigma_{t}\rangle_{\mathbb Z^2}+\langle\sigma_x\sigma_{t}\rangle_{\mathbb Z^2}\langle\sigma_{z}\sigma_y\rangle_{\mathbb Z^2}.$$
Summing over $x,z\in \Lambda_n$ and $y,t\in \Lambda_n'$ gives
\begin{align*}
\phi_{\mathbb Z^2}[\mathsf N^2]&\le 2\phi_{\mathbb Z^2}[\mathsf N]^2 + \Big(\sum_{x,z\in \Lambda_n}\langle\sigma_x\sigma_{z}\rangle_{\mathbb Z^2}\Big)^2.
\end{align*}
The Cauchy-Schwarz inequality (used in the second inequality) then yields
$$\phi_{\mathbb Z^2}[\Lambda_{2n}\setminus\Lambda_n\text{ is crossed from inside to outside}]\ge\phi_{\mathbb Z^2}[\mathsf N>0]\ge \frac{\phi_{\mathbb Z^2}[\mathsf N]^2}{\phi_{\mathbb Z^2}[\mathsf N^2]},$$
so that the claim follows from the two previous inequalities if there exists $c_1>0$ such that for infinitely many $n$,
\begin{equation}\label{eq:ag}\phi_{\mathbb Z^2}[\mathsf N]\ge c_1\sum_{x,z\in \Lambda_n}\langle\sigma_x\sigma_{z}\rangle_{\mathbb Z^2}
.\end{equation}
We now use the monotonicity of spin-spin correlations, which is known in the nearest-neighbor case as the Messager-Miracle-Sole inequality \cite{Sch77,MesMir77,Heg77} and is extended to finite-range interactions in Theorem~\ref{thm:mono} in the Appendix. This theorem (and its subsequent remark on straightforward generalizations to reflections with respect to diagonals),  together with Griffiths' inequality, implies that there exists a constant $c_2>0$ such that 
\begin{equation}\label{eq:MMS}\langle\sigma_{0}\sigma_{y}\rangle_{\mathbb Z^2}\le c_2\langle\sigma_{0}\sigma_{x}\rangle_{\mathbb Z^2}\end{equation}
for any two vertices $ x=(x_1,x_2) $ and $ y =(y_1,y_2) \in \Z^2 $ such that $0 \leq x_1 \leq y_1 $ and $y_2 \geq x_1 + x_2 - y_1 $. 
Applying this inequality to the left side of~\eqref{eq:ag} yields \begin{equation}\label{eq:kkk1}\phi_{\mathbb Z^2}[\mathsf N]\ge c_3n^3 u_{7n},\end{equation} where we abbreviated  $u_k:=k\langle\sigma_0\sigma_{(k,0)}\rangle_{\bbZ^2}$. The same inequality holds true with $u_{7n+i}$, $0\le i\le 6$, instead of $u_{7n}$. 
Moreover, the right side of \eqref{eq:ag} is estimated by
\begin{align}\label{eq:akk}
&\sum_{x,z\in \Lambda_n}\langle\sigma_x\sigma_{z}\rangle_{\bbZ^2}\le n^2\sum_{x\in\Lambda_{2n}}\langle\sigma_0\sigma_x\rangle_{\bbZ^2}\le c_4 n^2\Big(1+\sum_{k=1}^{2n} u_k\Big).
\end{align}
Since $ \beta = \beta_c $, the Simon-Lieb inequality~\cite{Lie80,Sim80} classically implies that
$$ \sum_{x\in \Lambda_{n+R}\setminus\Lambda_n}\langle\sigma_0\sigma_x\rangle_{\bbZ^2}\ge1$$
for every $n\ge1$ (see also the discussion involving $\varphi_\beta(\Lambda_n)$ in \cite{DumTas15}). Therefore, \eqref{eq:MMS} implies that
$(u_n)$ is bounded from below.  
In particular, this guarantees the existence of an infinite number of integers $n$ such that $ (7n+i)^{-1} \sum_{k=1}^{7n+i} u_k \leq 2 u_{7n+i} $ for some $0\le i\le 6$. 
For such integers, one may get \eqref{eq:ag} by plugging the previous bound in \eqref{eq:akk} and then the inequality thus obtained in \eqref{eq:kkk1}. This concludes the proof of Step 1.

\paragraph{Step 2: Renormalization argument}
We implement a renormalization argument to show that
\begin{equation}
\inf_{n\ge0} \phi_{\mathbb Z^2}[\Lambda_{2n}\setminus \Lambda_n\text{ is crossed from inside to outside in }\omega]>0.\label{eq:7}
\end{equation}
For every $n\ge1$, set $$ u_n:= \phi_{\mathbb Z^2}[\Lambda_{2n}\setminus \Lambda_n\text{ is crossed from inside to outside in }\omega].$$
Below, we obtain a recursive inequality for $u_n$ telling us that as soon as $u_n$ drops below a certain value, then $u_n$ decays rapidly.

Let us start by observing that there exists a constant $C>0$ independent of $n$ such that for every $6n\le m\le 36 n$,
\begin{equation}
  \label{eq:11}
  u_m\le C\max_{\substack{x\in\partial \Lambda_{m}\\y\in\partial \Lambda_{2m}}}\phi_{\mathbb Z^2}[\Lambda_n(x)\leftrightarrow \Lambda_n(y)].
\end{equation}
To see this, consider a covering of $\partial \Lambda_{m}$ and $\partial\Lambda_{2m}$ by at most $O(m/n)$ boxes of size $n$, and observe that one of the boxes covering $\partial \Lambda_{m}$ must be connected to one of the boxes covering $\partial \Lambda_{2m}$ when the annulus $\Lambda_{2m}\setminus\Lambda_{m}$ is crossed from inside to outside.

Applying the decorrelation inequality~\eqref{eq:1} with $A:=\{\Lambda_n(x)\leftrightarrow\partial\Lambda_{2n}(x)\}$ and  $B:=\{\Lambda_n(y)\leftrightarrow\partial\Lambda_{2n}(y)\}$, we find that for every $x\in \partial\Lambda_{m}$ and $y\in\partial \Lambda_{2m}$,
\begin{align*}
  \label{eq:9}
  \phi_{\mathbb Z^2}[\Lambda_n(x)\leftrightarrow\Lambda_n(y)]&\le \phi_{\bbZ^2}[A\cap B]\\
&\le  \phi_{\mathbb Z^2}[A]\phi_{\mathbb Z^2}[B] + {\bf P}^\emptyset_{\mathbb Z^2}[\Lambda_{2n}(x)\stackrel{\widehat\n}{\longleftrightarrow}\Lambda_{2n}(y)]\\
&\le u_n^2+\ep_n,
\end{align*}
where 
$$\ep_n:={\bf P}_{\mathbb Z^2}^\emptyset[\Lambda_{4n}\setminus\Lambda_{2n}\text{ is crossed from inside to outside in }\widehat\n].$$ (We used that $m\ge6n$ to guarantee that $\Lambda_{4n}(x)$ and $\Lambda_{2n}(y)$ do not intersect.) Plugging the previous inequality in \eqref{eq:11}, we obtain that for every $n\ge1$ and every $6n\le m\le 36 n$,
\begin{equation}
  \label{eq:13}
  u_m\le C(u_n^{2}+\ep_n).
\end{equation}
By \eqref{eq:P2}, $\ep_n$ converges to $0$ when $n$ tends to infinity. If we assumed that $\inf_{n\ge1}u_n=0$, then the previous equation applied to $m=6n$ implies that there exists $n$ such that  $u_n$ converges to $0$ along powers of $K6^k$ for some $K>0$. Then, \eqref{eq:13} directly implies the convergence to zero for every $n$, which contradicts \eqref{eq:16}. We therefore deduce that $\inf_{n\ge1}u_n>0$, as wanted.

\paragraph{Step 3: Russo-Seymour-Welsh (RSW) argument.}
In this last step, we show that the crossing estimate \eqref{eq:7} implies that \textbf{(P1)} holds.  We proceed by contradiction and show that if the probability of crossing a square tends to 0, then the infimum of the probabilities of crossing annuli is zero. Since we work only with the random-cluster measure, we drop the ``in $\omega$'' in the events below.
The proof is decomposed in two steps. First, we show that 
  \begin{equation}
  \label{eq:58}
  \lim_{n\to \infty} \phi_{\bbZ^2}[\Lambda_n \text{ is crossed horizontally}]=0
\end{equation}
implies
\begin{equation}
  \label{eq:67}
  \liminf_{n\to\infty} \phi_{\mathbb Z^2}\big[ [-3n,3n]\times[-2n,2n] \text{ is crossed vertically}\big]=0.
\end{equation}
Second, we show that \eqref{eq:67} implies
\begin{equation}
  \label{eq:62}
  \liminf_{n\to\infty} \phi_{\mathbb Z^2}[\Lambda_{2n}\setminus \Lambda_n\text{ is crossed from inside to outside}]=0.
\end{equation}
\begin{figure}[htbp]
  \centering
  \includegraphics[width=1\linewidth]{RSW}
   \caption{The events involved in \eqref{eq:58}, \eqref{eq:67} and \eqref{eq:62}. }
  \label{fig:crossingEvents}
\end{figure}

\begin{remark}Before proving these implications, let us digress a bit and explain why we call this step an RSW argument. At first sight, this implication may not look like a standard RSW statement which usually gives a lower bound on the probability to cross a rectangle in the long direction, provided a lower bound on the probability to cross a square. To see a connection here with this type of statement, we shall look at a dual version of it. Observe that the absence of a horizontal crossing in the box $\Lambda_n$ corresponds to the existence of a closed dual surface from top to bottom that prevents the existence of any horizontal crossing\footnote{In the planar case, such a closed dual surface corresponds to a closed path  in the dual graph.}.
The complement of the events under consideration in \eqref{eq:58}, \eqref{eq:67} and \eqref{eq:62} are therefore the existence of dual surfaces with a certain topology. From this point of view, the statement is indeed an RSW-type result.
\end{remark}

Let us now dive into the proof. We assume that \eqref{eq:58} holds and our goal is to show that large rectangles are also crossed in the easy direction with low probability. We proceed in two steps:
\begin{enumerate}
\item We define four sequences of crossing events $(A_n)$, $(B_n)$, $(C_n(\alpha_n))$ and $(D_n(\alpha_n))$ and prove that  under the assumption of \eqref{eq:58}, their probabilities tend to 0 as $n$ tends to infinity (this will follow from simple constructions).
\item We use these events in a key construction to show that large rectangles are crossed in the easy direction with low probability. This key construction, presented in Lemma~\ref{lem:key_construction}, is valid only for infinitely many $n$ (and not a priori for every $n$), which explains the presence of a liminf in \eqref{eq:67} and \eqref{eq:62}. \end{enumerate}

\begin{remark}  Let us highlight the fact that some of 
the statements presented  below are made under assumptions which  we are being disproved by contradiction  (e.g., 
\eqref{eq:58}).   If read out off context,  
some of the statements made below may appear wrong or at least in variance with the expected fractal behavior at criticality.  To avoid confusion, it is therefore recommended to read the proof  given below  step by step.
\end{remark}
\begin{figure}[htbp]
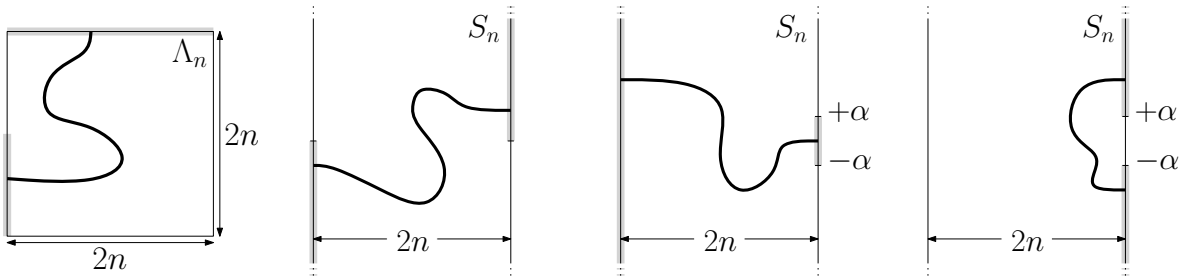

  \centering
  \includegraphics[width=.23\linewidth]{An}
  \hfill
  \includegraphics[width=.23\linewidth]{Bn}
  \hfill
  \includegraphics[width=.23\linewidth]{Cn}
  \hfill
  \includegraphics[width=.23\linewidth]{Dn}
\hfill
  \caption{Diagrammatic representations of the events $A_n$, $B_n$, $C_n(\alpha)$ and $D_n(\alpha)$. }
  \label{fig:ABCD}
\end{figure}
\bigbreak\noindent
{\bf Step 1: Bounds on the probability of the events $A_n$, $B_n$, $C_n(\alpha_n)$ and $D_n(\alpha_n)$  assuming \eqref{eq:58}, i.e.~assuming that crossing probabilities for squares tend to 0.}
\bigbreak\noindent
Fix $n,\alpha\ge0$. Let $S_n:=[-n,n]\times\bbR$ be the vertical strip of width $2n$. Define
\begin{align*}A_n&:=\{\exists\text{ open continuous path in $\Lambda_n$ from $\{-n\}\times [-n,0]$ to the top side of $\Lambda_n$}\},\\
B_n&:=\{\exists\text{ open continuous  path in $S_n$ from $\{-n\}\times\bbR_-$ to $\{n\}\times \mathbb R_+$}\},\\
C_n(\alpha)&:=\{\exists\text{ open continuous  path in $S_n$ from $\{-n\}\times\bbR$ to $\{n\}\times[-\alpha,\alpha]$}\},\\
D_n(\alpha)&:=\{\exists\text{ open continuous  path in $S_n$ from $\{n\}\times(-\infty,-\alpha]$ to $\{n\}\times[\alpha,\infty)$}\}.
\end{align*}
Let $\tilde A_n$ be the symmetric mirror image of $A_n$ with respect to the $x$-axis. The FKG inequality \eqref{eq:FKG} implies that 
$$\phi_{\bbZ^2}[\tilde A_n\cap A_n]\ge \phi_{\bbZ^2}[A_n]^2.$$
As before (see the proof of Lemma~\ref{lem:P1}), this event does not quite imply the existence of a vertical crossing of $\Lambda_n$, yet, a ``gluing principle'' (Theorem~\ref{thm:gluepaths}), applied to $A_n$ and $\tilde A_n$, gives that
$$\phi_{\bbZ^2}[\Lambda_n\text{ is crossed vertically}]\ge c_{\rm glue}\phi_{\bbZ^2}[A_n]^2.$$
We deduce from \eqref{eq:58} that
  \begin{equation}
    \label{eq:23}
    \lim_{n\to\infty}\phi_{\bbZ^2}[A_n]=0.
  \end{equation}
To bound the probability of $B_n$, distinguish between the cases depending on whether the continuous path crossing $S_n$ from $\{-n\}\times\bbR_-$ to $\{n\}\times\mathbb R_+$ remains or not in the box $\Lambda_n$ to get $$
    \phi_{\bbZ^2}[B_n]\le \phi_{\bbZ^2}[\Lambda_n \text{ is crossed horizontally}] +2 \phi_{\bbZ^2}[A_n],
$$
  which, together with \eqref{eq:58} and \eqref{eq:23}, directly implies
  \begin{equation}
    \label{eq:25}
    \lim_{n\to \infty}\phi_{\bbZ^2}[B_n]=0.
  \end{equation}
Bounding the probability of $C_n(\alpha)$ and $D_n(\alpha)$ is slightly more subtle. First of all, these events satisfy
  \begin{equation}
    \label{eq:611}
    \phi_{\bbZ^2}[B_n]\ge\frac{c_{\mathrm{glue}}}2 \phi_{\bbZ^2}[C_n(\alpha)]\cdot \phi_{\bbZ^2}[D_n(\alpha)].
  \end{equation}
To prove this, first observe that, by symmetry, there is an open continuous path from $\{-n\}\times\bbR_-$ to $\{n\}\times[-\alpha,\alpha]$ inside $S_n$ with probability larger than $\phi_{\bbZ^2}[C_n(\alpha)]/2$, and then use the ``gluing principle'' (Theorem~\ref{thm:gluepaths}) to combine it, using the FKG inequality \eqref{eq:FKG}, with an open path guaranteeing the occurrence of $D_n(\alpha)$.

The equation above shows that either the probability of $C_n(\alpha)$ or the probability of $D_n(\alpha)$ is controlled by the probability of $B_n$. In order to get information on both probabilities, we choose $\alpha$ in such a way that the two probabilities on the RHS of \eqref{eq:611} are close to each other.
 More precisely, define
  \begin{equation}
    \label{eq:65}
    \alpha_n:=\min\big\{\alpha\ge0 \text{ such that } \phi_{\bbZ^2}[C_n(\alpha)] \ge  \phi_{\bbZ^2}[D_n(\alpha)]\big\}.
  \end{equation}
By applying \eqref{eq:611} to $\alpha=\alpha_n$, we get from \eqref{eq:25} that
 \begin{equation}    \lim_{n\to\infty}\phi_{\bbZ^2}[D_n(\alpha_n)]=0.
    \end{equation}
Similarly, by applying \eqref{eq:611} to $\alpha=\alpha_n-1$ ($\alpha_n\ge1$ follows from the finite-energy and the fact that $\phi_{\bbZ^2}[D_n(\alpha_n)]$ is small), we get from \eqref{eq:25} that
$\lim\phi_{\bbZ^2}[C_n(\alpha_n-1)]=0.
$
    From this, one may easily deduce -- for instance since $\phi_{\bbZ^2}[C_n(\alpha+1)]\le \phi_{\bbZ^2}[C_n(2\alpha)]\le 2\phi_{\bbZ^2}[C_n(\alpha)]$ -- that
  \begin{equation}    \lim_{n\to\infty}\phi_{\bbZ^2}[C_n(\alpha_n)]=0.
    \end{equation}
 \bigbreak\noindent
    {\bf Step 2: The key construction.}
Here and below, we used the notation $\alpha_n$ defined in \eqref{eq:65}. Set
\begin{equation*}
      \varepsilon_n:=\max\big\{\phi_{\bbZ^2}[\Lambda_n\text{ is crossed horizontally}],\phi_{\bbZ^2}[A_n],\phi_{\bbZ^2}[B_n],\phi_{\bbZ^2}[C_n(\alpha_n)],\phi_{\bbZ^2}[D_n(\alpha_n)]\big\}.
 \end{equation*}
In this second step, we wish to bound from above the probability to cross a rectangle in the easy direction in terms of $\ep_n$. To do this, we will use a bound (Lemma~\ref{lem:key_construction} below) valid only when $n$ is such that $\alpha_{2n}\le 4\alpha_{n}$. Before focusing on this bound, observe that the set $\{n:\alpha_{2n}\le 4\alpha_{n}\}$ is infinite, since otherwise $\alpha_n$ would grow super linearly in $n$, which is in contradiction with the fact that $\alpha_n\le n$. To see this last fact, simply combine the definition \eqref{eq:65} of $\alpha_n$ with the trivial inequalities $$ \phi_{\bbZ^2}[C_n(n)]\ge  \phi_{\bbZ^2}[\Lambda_n\text{ is crossed horizontally}]\ge \phi_{\bbZ^2}[D_n(n)].$$
Hence, the proof of \eqref{eq:58}$\Rightarrow$\eqref{eq:67} follows from the next lemma and the fact that, by Step 1, \eqref{eq:58} implies that $\varepsilon_n$ tends to 0.
\begin{lemma}\label{lem:key_construction}
  For every $n\in\mathbb N$ such that $\alpha_{2n}\le4\alpha_n$, we have
  \begin{equation}
    \label{eq:68}
    \phi_{\bbZ^2}\big[ [-3n,3n]\times[-2n,2n] \text{ is crossed vertically}\big]\le 14 \varepsilon_n+2\ep_{2n}.
  \end{equation}
\end{lemma}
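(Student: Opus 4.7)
The plan is to exploit the diagonal symmetry of the couplings (assumption~(iv) of Theorem~\ref{thm:Pf_finite_range}) to identify a vertical crossing of $[-3n,3n]\times[-2n,2n]$ with a horizontal crossing of the tall rectangle $R':=[-2n,2n]\times[-3n,3n]$, which sits inside the strip $S_{2n}$. By the reflection symmetry $y\mapsto -y$, at a cost of a factor $2$, we may restrict to crossings whose right endpoint lies in the upper half $\{2n\}\times[0,3n]$; this accounts for two copies of each scale-$2n$ event detected below and ultimately produces the factor of $2$ in front of $\ep_{2n}$.

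Next I split on the location of the right endpoint. If it lies in the central window $\{2n\}\times[0,\alpha_{2n}]$, the crossing is a sub-event of $C_{2n}(\alpha_{2n})$ and contributes at most $\ep_{2n}$. The remaining case is that the endpoint lies in the outer window $\{2n\}\times(\alpha_{2n},3n]$, and this is where the hypothesis $\alpha_{2n}\le 4\alpha_n$ is used. Because the window of admissible endpoints is now pushed at least $\alpha_{2n}\le 4\alpha_n$ above $y=0$, the crossing must, inside the right half $[0,2n]\times[-3n,3n]$ of $R'$, travel from a point on $\{0\}\times[-3n,3n]$ upward to $\{2n\}\times(\alpha_{2n},3n]$; an analogous observation holds for the left half.

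In this outer-window case I tile $R'$ by six scale-$n$ boxes arranged in a $2\times 3$ grid, each a translate of $\Lambda_n$, together with the corresponding translated strips of width $2n$, and inspect how the crossing passes through them. Depending on the route, the horizontal crossing induces on these boxes and sub-strips a sub-event matching one of ``$\Lambda_n$ crossed horizontally,'' $A_n$, $B_n$, $C_n(\alpha_n)$, or $D_n(\alpha_n)$ (up to a translation or reflection), each with probability at most $\ep_n$; the gluing principle of Theorem~\ref{thm:gluepaths} is invoked in each route to rule out the long-edge ``jumping-over'' configurations. A union bound over the finitely many topological routes produces the term $14\ep_n$, while the central-window case and its $y$-reflection produce the $2\ep_{2n}$ term.

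The main obstacle is the combinatorial case analysis underlying the $14\ep_n$ term: one must enumerate all topological types of horizontal crossings in the tiled $R'$ whose right endpoint lies in the outer upper window, match each type to one of the five small-probability events at scale~$n$ (modulo translations and reflections), and check that the total count of contributions is exactly $14$. The hypothesis $\alpha_{2n}\le 4\alpha_n$ is exactly what guarantees that the local sub-crossings really hit a window of the right size: when the right endpoint lies above height $\alpha_{2n}$ and $\alpha_{2n}$ is comparable to $\alpha_n$, the induced right-hand endpoints in the scale-$n$ translated strips fall into a window of size $O(\alpha_n)$ and thus produce $C_n(\alpha_n)$ or $D_n(\alpha_n)$ rather than the uncontrolled full-side event~$B_n$.
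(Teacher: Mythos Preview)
Your opening moves are fine: the diagonal reflection turns the problem into bounding a horizontal crossing of $[-2n,2n]\times[-3n,3n]$ inside the strip $S_{2n}$, and the case where the right endpoint lands in $\{2n\}\times[-\alpha_{2n},\alpha_{2n}]$ is indeed a sub-event of $C_{2n}(\alpha_{2n})$, accounting for the $2\varepsilon_{2n}$ contribution after the $y\mapsto -y$ symmetry.

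The outer-window case, however, is where the real content lies, and here your proposal is only a sketch with two substantive gaps. First, the claimed $2\times 3$ tiling by scale-$n$ boxes together with the sentence ``the induced right-hand endpoints in the scale-$n$ translated strips fall into a window of size $O(\alpha_n)$'' is not justified: knowing that the right endpoint of the full crossing sits above height $\alpha_{2n}$ gives you no control over where the crossing meets the \emph{internal} lines $\{0\}\times\bbR$ or $\{\pm n\}\times\bbR$, so there is no mechanism producing a $C_n(\alpha_n)$- or $D_n(\alpha_n)$-type restriction at scale $n$. Second, invoking Theorem~\ref{thm:gluepaths} to ``rule out jumping-over configurations'' is backwards: that theorem gives \emph{lower} bounds on connection probabilities and plays no role in the upper bound of this lemma.

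The paper takes a different decomposition. It covers $[-3n,3n]\times[-2n,2n]$ by two overlapping translates $S=(-n,0)+\Lambda_{2n}$ and $S'=(n,0)+\Lambda_{2n}$, whose intersection $S\cap S'=[-n,n]\times[-2n,2n]$ has width $2n$. The right side of $S\cap S'$ is partitioned into $J_1=\{n\}\times[-2n,-\alpha_{2n}]$, $J_2=\{n\}\times[-\alpha_{2n},\alpha_{2n}]$, $J_3=\{n\}\times[\alpha_{2n},2n]$, and similarly for the reflected $J_i'$. A topological case analysis shows that any vertical crossing of the big rectangle forces one of: (i) $J_i'\leftrightarrow J_j$ in $S\cap S'$ with $i\ne j$ (bounded by $4\phi[B_n]$); (ii) $J_2'\leftrightarrow J_2$ in $S\cap S'$ (this is where $\alpha_{2n}\le 4\alpha_n$ enters, giving $\phi[C_n(4\alpha_n)]\le 4\phi[C_n(\alpha_n)]$); or (iii) a connection inside $S$ from $I\cup J_1$ to $J_3\cup K$ or the $S'$-analogue (bounded by square crossings, $A_n$, and $D_{2n}(\alpha_{2n})$). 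The assumption $\alpha_{2n}\le 4\alpha_n$ is thus used in case~(ii) to convert a scale-$2n$ window $[-\alpha_{2n},\alpha_{2n}]$ on the boundary of a width-$2n$ strip into a scale-$n$ event $C_n(4\alpha_n)$ --- not, as you suggest, to control endpoint locations after a tiling.
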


\begin{figure}[htbp]
  \centering
  \includegraphics[width=1\linewidth]{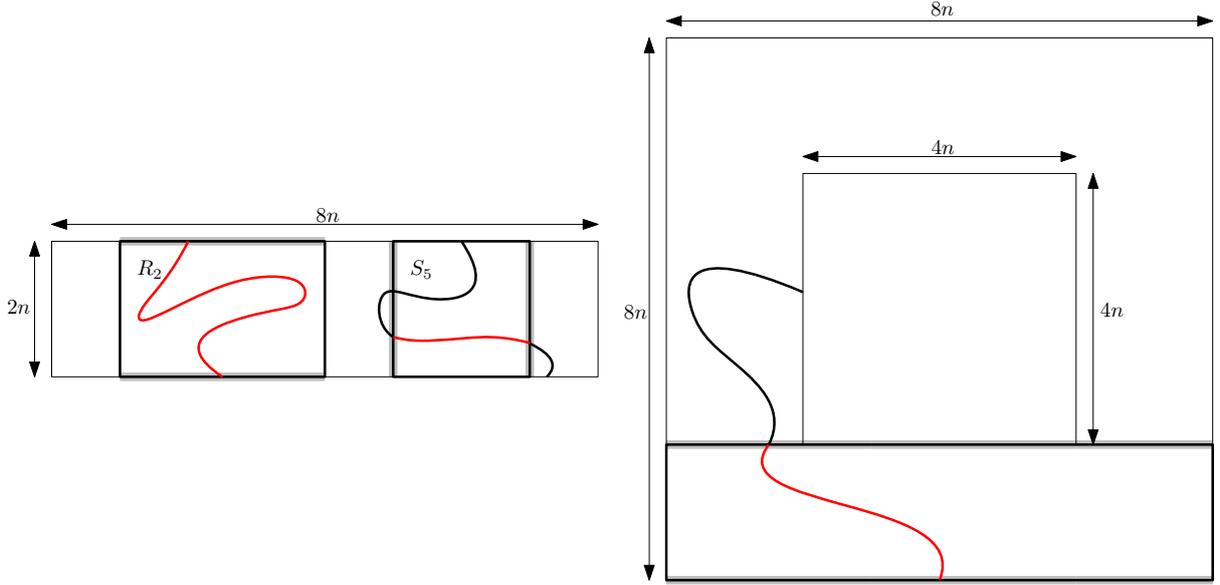}
   \caption{On the left, the rectangle $R_2$ as well as the square $S_5$. One example of vertical crossing of $\overline R$ crossing $R_2$ vertically but not $S_i$ horizontally, as well as one example crossing $S_5$ horizontally but no $R_i$ vertically. On the right, for the annulus to be crossed from inside to outside, one of four rectangles isomorphic to $R$ (two are rotated by $90$ degrees) must be crossed in the easy direction. An example of such a crossing, with the vertical crossing of the bottom rectangle highlighted.}
  \label{fig:ABCDE}
\end{figure}

Before proving the lemma, let us show \eqref{eq:67}$\Rightarrow$\eqref{eq:62}, so that the proof of \eqref{eq:58}$\Rightarrow$\eqref{eq:62} is complete. This part is fairly easy and is summarized in Fig.~\ref{fig:ABCDE}.

Fix $n\ge1$. Consider the rectangle $\overline R:=[-8n,8n]\times[-2n,2n]$ and cover it with the following six translates of the rectangle $R:=[-3n,3n]\times[-2n,2n]$:
\begin{equation}
  \label{eq:66}
  R_i:=(-5n+2in,0)+R,\quad 0\le i\le 5.
\end{equation}
For $0\le i\le 4$, the overlap of $R_i$ and $R_{i+1}$ is a square denoted $S_i$. If $\overline R$ is crossed vertically, then at least one of the rectangles $R_i$ is crossed vertically, or one of the squares $S_i$ is crossed horizontally. Therefore, using translation invariance and the fact that a horizontal crossing in the square $S_i$ occurs with probability lower than a vertical crossing in $R$, we obtain
\begin{equation}
  \label{eq:69}
  \phi_{\bbZ^2}[\overline R \text{ is crossed vertically}]\le 11 \cdot\phi_{\bbZ^2}\big[R \text{ is crossed vertically}\big].
\end{equation}
Observe that  the annulus $\Lambda_{8n}\setminus \Lambda_{4n}$ can be covered by four rectangles isomorphic to $\overline R$ in such a way that any crossing of the annulus must cross at least one of the four rectangles in the easy direction. Hence,  \eqref{eq:69} implies that  \begin{equation}
\label{eq:70}
 \phi_{\bbZ^2}[\Lambda_{8n}\setminus \Lambda_{4n}\text{ is crossed from inside to outside}]\le 44\cdot \phi_{\bbZ^2}\big[R \text{ is crossed vertically}\big],
\end{equation}
which, together with \eqref{eq:67},  concludes the proof of \eqref{eq:62}.
\bigbreak
It only remains to prove Lemma~\ref{lem:key_construction}, which we do now.

\begin{proof}[Proof of Lemma~\ref{lem:key_construction}]
  Consider the square $S:=(-n,0)+\Lambda_{2n}$. Let $I$  and $K$ be its top and bottom sides respectively, and  define the following vertical subsegments of its right  boundary
  \begin{equation}
    \label{eq:71}
    J_1:=\{n\}\times[-2n,-\alpha_{2n}],\quad J_2:=\{n\}\times[-\alpha_{2n},\alpha_{2n}],\quad  J_3:=\{n\}\times[\alpha_{2n},2n].
  \end{equation}
  Then, let $S'$, $I'$, $J'_1$, $J_2'$, $J_3'$ and $K'$ be the images of the sets above through the vertical reflection on the $y$-axis (see Fig.~\ref{fig:SandSprime}).
  \begin{figure}[htbp]
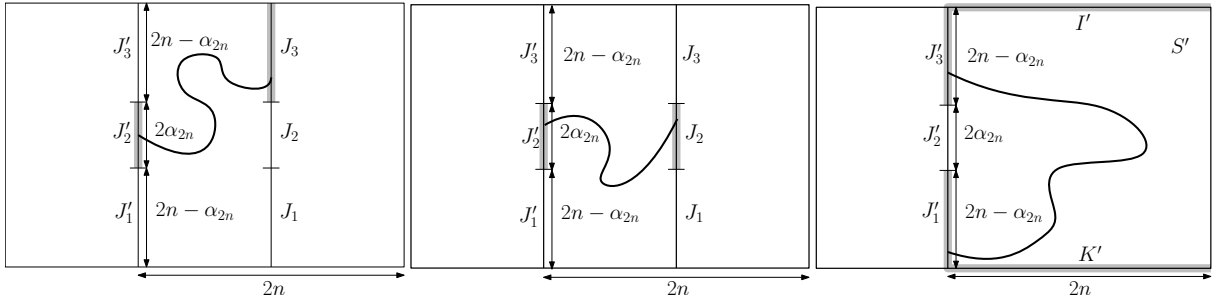

    \centering
    \includegraphics[width=.33\linewidth]{RSW3a}\hfill
      \includegraphics[width=.33\linewidth]{RSW3b}\hfill
        \includegraphics[width=.33\linewidth]{RSW3c}
    \caption{The three cases (i), (ii) and (iii). In the third case, we only depicted one of the two sub-cases in which $I'\cup J_1'$ is connected to $J_3'\cup K'$ in $S'$.}
    \label{fig:SandSprime}
  \end{figure}

 We claim that the existence of a vertical crossing in $[-3n,3n]\times[-2n,2n]$ implies that at least one of the following three events occurs:
  \begin{itemize}[noitemsep]
  \item[(i)] there exist $i\neq j$ such that $J_i'$ is connected to $J_j$ in $S\cap S'$,
  \item[(ii)] the set $J_2'$ is connected to $J_2$ in $S\cap S'$,
  \item[(iii)] either $I\cup J_1$ is connected to $J_3\cup K$ in $S$, or $I'\cup J_1'$ is connected to $J_3'\cup K'$ in $S'$.
  \end{itemize}
  To prove this, consider a vertical crossing $\gamma$ of $R$. Assume that (i) and (ii) do not occur. We wish to show that (iii) occurs. The assumption that (i) and (ii) do not occur implies that every horizontal crossing of $S\cap S'$ must be of one of the following types:
  \begin{description}[noitemsep]
  \item[type 1]  the continuous path crosses $S\cap S'$ from $J_1'$ to $J_1$,
  \item[type 2] the continuous path crosses $S\cap S'$ from $J_3'$ to $J_3$.
  \end{description}
We wish to prove that (iii) necessarily occurs if $\gamma$ contains only continuous paths of types 1 and 2. We distinguish between the different possible cases:
 \begin{itemize}
 \item  {\em $\gamma$ has no subpath of type 1 or 2.} Then, $\gamma$ does not cross $S\cap S'$  horizontally. Therefore, $\gamma$ is either contained in $S$ or in $S'$, and (iii) occurs because either $S$ or $S'$ is crossed vertically.

 \item {\em $\gamma$ has a subpath of type 1 but no subpath of type 2.} Then, consider the last subpath $\pi$ of type 1 when following $\gamma$ from the top to the bottom of $R$. Without loss of generality, we assume that $\pi$ crosses $S\cup S'$ from left to right. In this case, $\pi$ must be connected to the bottom side of $S'$ inside $S'$. Hence, $J_1'$ is connected to the bottom side of $S'$ inside $S'$, which implies that (iii) occurs.

 \item {\em  $\gamma$ has a subpath of type 2 but no subpath of type 1.} This case is treated similarly to the previous one.

\item {\em  $\gamma$ has subpaths of both types.} Then, consider two consecutive subpaths $\pi_1$ and $\pi_2$ of different types. By symmetry, we may assume that $\pi_1$ is of type 1 and crosses $S\cap S'$ from left to right (this implies that $\pi_2$ is of type 2 and crosses $S\cap S'$ from right to left) and that $\pi_1$ is visited before $\pi_2$ when following $\gamma$ from bottom to top.   
In this case, we see that the part of $\gamma$ between the beginning of $\pi_1$ and the end of $\pi_2$ must be contained in $S'$. Hence, $J_1'$ is connected to $J_3'$ inside $S'$, which implies that (iii) occurs.
\end{itemize}
Overall, we just proved that if $[-3n,3n]\times[-2n,2n]$ is crossed vertically, then either (i), (ii) or (iii) occurs. We deduce that
\begin{align}
\nonumber
  \phi_{\bbZ^2}[[-3n,3n]\times[-2n,2n] &\text{ is crossed vertically}]\\
  &\le  \phi_{\bbZ^2}[\text{\textrm{(i)} occurs}]+\phi_{\bbZ^2}[\text{\textrm{(ii)} occurs}]+\phi_{\bbZ^2}[\text{\textrm{(iii)} occurs}]  \label{eq:72}
\end{align}
and we just need to bound the probability of the events on the right-hand side in terms of $\ep_n$.

First, if (i) occurs, then at least one of four symmetric versions of $B_n$ must occur, hence
\begin{equation}
  \label{eq:74}
   \phi_{\bbZ^2}[\text{\textrm{(i)} occurs}]\le 4 \phi_{\bbZ^2}[B_n]\le 4\ep_n.
\end{equation}
To bound the probability that (ii) occurs, we use our assumption that $\alpha_{2n}\le 4\alpha_{n}$. Thanks to this inequality, we see that (ii) implies, in particular, that $\{n\}\times[-4\alpha_n,4\alpha_n]$ is connected in $\bbS_n$ to $\{-n\}\times\bbR$. Hence,
\begin{equation}
  \label{eq:75}
  \phi_{\bbZ^2}[\text{\textrm{(ii)} occurs}]\le \phi_{\bbZ^2}[C_n(4\alpha_n)]\le 4 \phi_{\bbZ^2}[C_n(\alpha_n)]\le 4\ep_n.
\end{equation}
Finally, by symmetry, we have
\begin{align}
  \label{eq:76}
  \phi_{\bbZ^2}[\text{\textrm{(iii)} occurs}]&\le 2\phi_{\bbZ^2}[I\cup J_1 \leftrightarrow J_3\cup K\text{ in } S]\notag\\
&\le 2\phi_{\bbZ^2}[ I\leftrightarrow K \text{ in } S] + 4 \phi_{\bbZ^2}[  J_1 \leftrightarrow K \text{ in } S] +  2 \phi_{\bbZ^2}[  J_1\leftrightarrow J_3 \text{ in } S] \notag\\
&\le 2\phi_{\bbZ^2}[ \Lambda_n\text{ is crossed horizontally}]+4\phi_{\bbZ^2}[A_n]+2\phi_{\bbZ^2}[D_{2n}(\alpha_{2n})]\notag\\
&\le 6\ep_n+2\ep_{2n}.
\end{align}
Plugging the three bounds \eqref{eq:74}, \eqref{eq:75} and \eqref{eq:76} in \eqref{eq:72} concludes the proof of the lemma.
\end{proof}

\subsection{Proof of Lemma~\ref{lem:P3}}\label{sec:P3}

The proof of this lemma is fairly technical and relies on tricky constructions involving crossings. Before diving into it, let us summarize what the hypothesis that \eqref{eq:P1} and \eqref{eq:P2} do not hold means here.

On the one hand, {\bf non}\eqref{eq:P2} implies
$$\limsup_{n\to \infty} \mathbf P_{\mathbb Z^2}^\emptyset[\Lambda_{2n}\setminus\Lambda_n\text{ is crossed from inside to outside in }\widehat\n]>0.$$
Using a covering of the annulus with four $4n\times n$ rectangles as in the proof of \eqref{eq:67}$\Rightarrow$\eqref{eq:62}, we directly get that
\begin{equation}
  \label{eq:27}
 \limsup_{n\to \infty} \mathbf P_{\mathbb Z^2}^\emptyset\big[[0,n]\times[0,4n]\text{ is crossed horizontally in $\widehat\n$} ]>0.
\end{equation}
On the other hand, {\bf non}\eqref{eq:P1} can be rewritten as
 \begin{equation}
  \label{eq:14}
  \lim_{n\to \infty} \phi_{\bbZ^2}[\Lambda_n \text{ is crossed horizontally}]=0.
\end{equation}
By the coupling between the random-cluster $\omega$ and the random current $\widehat{\mathbf n}$, this implies that large squares are crossed with low probability in $\widehat{\mathbf n}$ also.

To summarize the situation, in the configuration $\widehat\n$, some large rectangles are crossed in the easy direction with a positive probability by~\eqref{eq:27}, while squares are crossed with a low probability by~\eqref{eq:14}. This second fact constrains the geometry of potential crossings: they cannot be too straight (otherwise, some squares would be crossed) and are therefore forced to ``oscillate'' a lot. The core of the argument will consist in using these oscillations to construct circuits in the sum of two currents.

The general idea of the proof of Lemma~\ref{lem:P3} goes as follows. Let $H_n$ be the event that there exists a circuit in $\widehat{\n_1+\n_2}$ surrounding 0 in a connected component intersecting $\partial\Lambda_n$. We aim at proving the following result:
\begin{equation}
  \label{eq:27a}
 \limsup_{n\to \infty} \mathbf P_{\mathbb Z^2}^{\emptyset,\emptyset}\big[H_n]>0.
\end{equation}
This would conclude the proof. Indeed, since there is no infinite connected component in $\widehat {\n_1+\n_2}$, we deduce that $E$ is the intersection on every $n\ge1$ of the event $H_{\ge n}$ that $\widehat{\n_1+\n_2}$ contains a circuit surrounding 0 which is contained in a connected component of radius larger or equal to $n$. Since the sequence $H_{\ge n}$ is decreasing, we deduce that 
$$ \mathbf P_{\mathbb Z^2}^{\emptyset,\emptyset}[E]=\lim_{n\rightarrow\infty} \mathbf P_{\mathbb Z^2}^{\emptyset,\emptyset}[H_{\ge n}]\ge \limsup_{n\to \infty}  \mathbf P_{\mathbb Z^2}^{\emptyset,\emptyset}[H_n]>0.$$
 Since $ \mathbf P_{\mathbb Z^2}^{\emptyset,\emptyset}$ is ergodic (as product of two measures $\mathbf P_{\mathbb Z^2}^{\emptyset}$, which were proved to be ergodic in \cite{AizDumSid15}), we deduce that $ \mathbf P_{\mathbb Z^2}^{\emptyset,\emptyset}[E]=1$ as wanted.
 
To prove \eqref{eq:27a}, we proceed as follows. In a first step, we do as in the previous section: we work with the random-cluster measure and show that the probabilities of two families of events tend to 0. In a second step, we work with one current and prove that two families of events have probabilities which are {\em not} tending to 0. In a third step, we use the first two steps to combine two paths in two independent currents $\widehat\n_1$ and $\widehat\n_2$ to create a circuit in $\widehat{\n_1+\n_2}$ around the origin.

\begin{figure}[htbp]
  \centering
\includegraphics[width=0.80\textwidth]{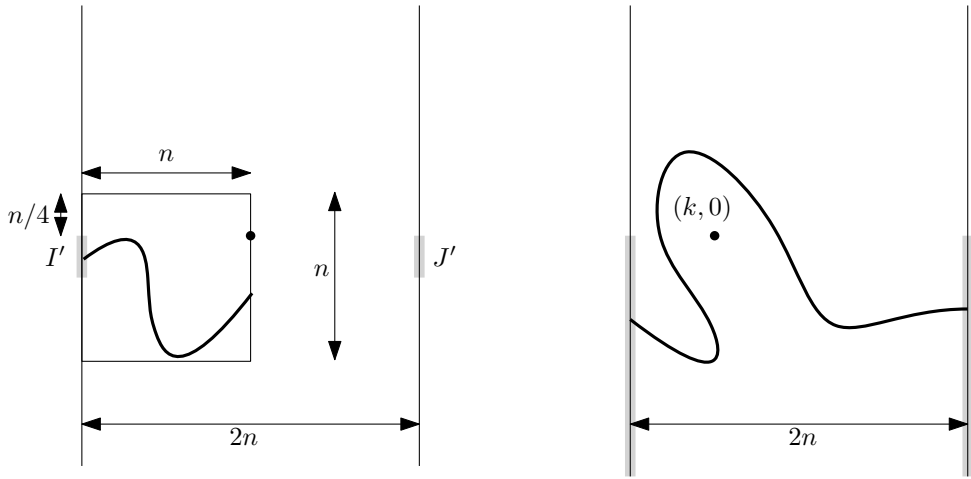}
 \caption{The events $E_n$ and $F_n(k)$.}
  \label{fig:circuit}
\end{figure}

\paragraph{Step 1: Construction of low probability events under the assumption of {\bf non}\eqref{eq:P1}} 
 
Fix two integers $n\ge0$ and $-n\le k\le n$. Recall that $S_n$ is the strip $[-n,n]\times\bbR$. Also, recall the definition of  $B_n$ from the previous section, and the fact that assuming {\bf non}\eqref{eq:P1},
its probability tends to $0$ by \eqref{eq:25}.

Set vertical segments $I':=\{-n\}\times[-n/4,0]$ and $J':=\{n\}\times[-n/4,0]$ (see Fig.~\ref{fig:circuit}) and define
\begin{align*}
E_n&:=\{\exists\text{ open continuous path in $S_n$ from $I'$ to $\{0\}\times\bbR$ not intersecting $[-n,n]\times\{n/4\}$}\},\\
F_n(k)&:=\{\exists\text{ open  continuous path in $S_n$ from $\{-n\}\times\bbR_-$ to $\{n\}\times \mathbb R_-$ passing above $(k,0)$}\}.
\end{align*}
Let $\Lambda_n'$ be the box $[-n,0]\times[-3n/4,n/4]$. We immediately have that
 $$\phi_{\bbZ^2}[E_n]\le \phi_{\bbZ^2}[\Lambda_n'\text{ crossed horizontally}]+\phi_{\bbZ^2}[I'\leftrightarrow [-n,0]\times\{-3n/4\}\text{ in }\Lambda_n'].$$
 Using the FKG inequality \eqref{eq:FKG} and the ``gluing principle'' (Theorem~\ref{thm:gluepaths}), we find that
 $$\phi_{\bbZ^2}[\Lambda'_n\text{ crossed vertically}]\ge c_{\rm glue}\phi_{\bbZ^2}[I'\leftrightarrow [-n,0]\times\{-3n/4\}\text{ in }\Lambda_n']^2.$$
 The last two displayed equations, together with the fact that the probability of $\Lambda_n'$ being crossed tends to 0 by \eqref{eq:14}, give that
    \begin{equation}
    \label{eq:50}
    \lim_{n\to \infty}\phi_{\bbZ^2}[E_n]=0.
  \end{equation}
Using the ``gluing principle'' (see \eqref{eq:190a} in Remark~\ref{rmk:h} after the proof of Theorem~\ref{thm:gluepaths}) once again, one can prove that for every $k$, the event $B_n$ occurs with probability larger than $c_{\rm glue}\phi_{\bbZ^2}[F_n(k)]^2$ (simply use $B_n$ and its reflection with respect to the $x$-axis). 
Hence,
\begin{equation}
  \label{eq:26}
   \lim_{n\to \infty}\max_{-n\le k \le n}\phi_{\bbZ^2}[F_n(k)]=0.
\end{equation}
From now on, define
$$\varepsilon_n:=\max\{\phi_{\mathbb Z^2}[B_n],\phi_{\mathbb Z^2}[E_n],\phi_{\mathbb Z^2}[F_n(k)]:-n\le k\le n\}$$
and observe that \eqref{eq:25} (which follows from \eqref{eq:14}), \eqref{eq:50} and \eqref{eq:26} imply that $\varepsilon_n$ tends to 0.

 \begin{figure}[htbp]
  \centering
\includegraphics[width=1.00\textwidth]{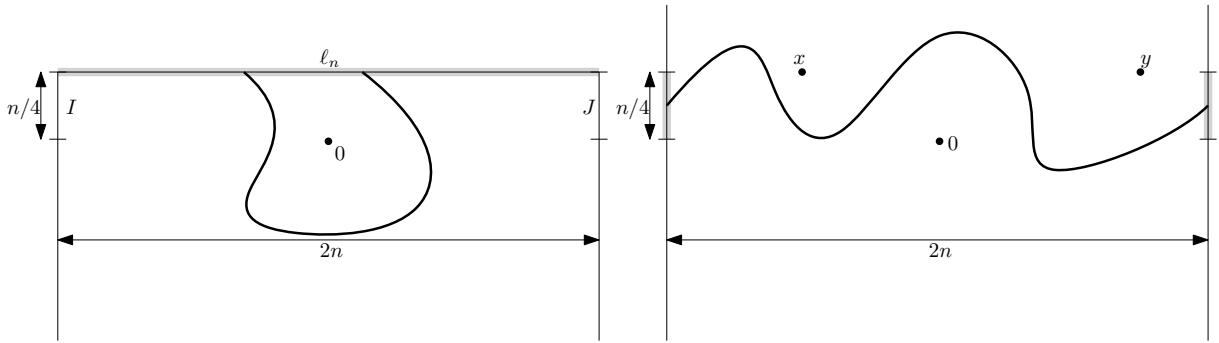}
 \caption{From left to right, the events $W_n$ and $V_n(x,y)$.}
  \label{fig:circuit2}
\end{figure}

\paragraph{Step 2:  Construction of events with good probability under {\bf non}\eqref{eq:P1} and {\bf non}\eqref{eq:P2}.} Set $S_n^+$ to be the half-strip $[-n,n]\times(-\infty,n/4]$. Also, define the vertical segments $I:=\{-n\}\times [0,n/4]$ and $J:=\{n\}\times[0,n/4]$.

Below, we use the notion of continuous path passing {\em below $x$} or {\em above $x$}. We believe that the reader can come up with a proper version of this notion on its own and do not focus too much on the details here, especially that similar considerations are used in the proof of Theorem~\ref{thm:glueCirc} in the appendix, where details are provided.
Define the horizontal line $\ell_n:=\bbR\times\{n/4\}$ (see Fig.~\ref{fig:circuit2}). For $x,y\in \ell_n$, define 
\begin{align*}
U_n&:=\{\exists\text{ open continuous path from $I$ to $J$}\},\\
V_n(x,y)&:=\{\exists\text{ open continuous path in $S_n$ from $I$ to $J$ passing above 0 and below $x$ and $y$}\},\\
W_n&:=\{\exists\text{ open continuous path from $\ell_n$ to $\ell_n$ in $S^+_n$ passing below 0}\}.
\end{align*}
Let us start by bounding from below the probability of the event $U_n$. To do so, observe that
$$\mathbf P_{\mathbb Z^2}^\emptyset[I\leftrightarrow \{2n\}\times\mathbb R]\ge \frac{\mathbf P_{\mathbb Z^2}^\emptyset\big[[-n,n]\times[0,8n]\text{ is crossed horizontally in $\widehat\n$} ]}{32}.$$
Furthermore, we find
\begin{align*}\mathbf P_{\mathbb Z^2}^\emptyset[U_n]&\ge \mathbf P_{\mathbb Z^2}^\emptyset[I\leftrightarrow \{n\}\times\mathbb R]-\mathbf P_{\mathbb Z^2}^\emptyset[I\leftrightarrow (\{n\}\times\mathbb R)\setminus J]\\
&\ge \mathbf P_{\mathbb Z^2}^\emptyset[I\leftrightarrow \{n\}\times\mathbb R]-2\phi_{\bbZ^2}[B_n].\end{align*}
We deduce from
 \eqref{eq:27} and the fact that $\varepsilon_n$ tends to 0 that
\begin{equation}\label{eq:aag}
\limsup_{n\rightarrow\infty} \mathbf P_{\mathbb Z^2}^\emptyset[U_n]>0.
\end{equation}
We now wish to bound the probability of $V_n(x,y)$ and $W_n$ in terms of the probability of $U_n$. On the one hand, for each $x$ and $y$, we have that
\begin{equation}\label{eq:gb} \mathbf P_{\mathbb Z^2}^\emptyset[V_n(x,y)] \ge\mathbf P_{\mathbb Z^2}^\emptyset[U_n]-3\max_{k\in[-n,n]}\mathbb P_{\mathbb Z^2}^\emptyset[F_n(k)]\ge \mathbf P_{\mathbb Z^2}^\emptyset[U_n]-3\varepsilon_n.\end{equation}
On the other hand, note that the event $W_n$  includes the event that
\begin{itemize}[noitemsep,nolistsep]
\item[(i)] there exists a continuous path from $I'$ to $J'$ (see Step~1 for definitions) passing below 0,
\item[(ii)] $E_n$ as well as the reflection $\widetilde E_n$ of $E_n$ with respect to the $y$-axis do not hold.
\end{itemize}
Indeed, consider a continuous path $\gamma$ from $I'$ to $J'$ passing below $0$. Let $x$ and $y$ be respectively the first and last intersection points of $\gamma$ with $[-n,n]\times\{n/4\}$ when following $\gamma$ from the left side to the right side of the rectangle. Note that since $E_n$ and $\widetilde E_n$ do not occur, $x$ and $y$ exist and $x$ is connected to $I'$ in $[-n,0]\times\bbR$, while $y$ is connected to $J'$ in $[0,n]\times\bbR$. This observation immediately implies that $\gamma$ contains a subpath $\gamma'$ from $\ell_n$ to $\ell_n$ passing below $0$.

We deduce that
\begin{equation}
  \label{eq:53}
  \mathbf P_{\mathbb Z^2}^\emptyset[W_n] \ge  \mathbf P_{\mathbb Z^2}^\emptyset[U_n]-\phi_{\bbZ^2}[F_n(0)]-2\phi_{\bbZ^2}[E_n]\ge\mathbf P_{\mathbb Z^2}^\emptyset[U_n]-3\varepsilon_n\end{equation}
($\mathbf P_{\mathbb Z^2}^\emptyset[U_n]-\phi_{\bbZ^2}[F_n(0)]$ is a lower bound for the probability of (i) happening). Combining \eqref{eq:aag}, \eqref{eq:gb} and \eqref{eq:53} with the fact that $\varepsilon_n$ tends to 0 implies that
\begin{equation}
\label{eq:599}
\limsup_{n\rightarrow\infty} \min_{x,y\in\ell_n}\mathbf P_{\mathbb Z^2}^\emptyset[V_n(x,y)] \mathbf P_{\mathbb Z^2}^\emptyset[W_n]>0.
\end{equation}

\paragraph{Step 3: Creation of infinitely many circuits in the sum of two currents.} Consider a pair of currents $(\n_1,\n_2)$ with law ${\bf P}^{\emptyset,\emptyset}_{\bbZ^2}$ (with this definition, $\n_1$ and $\n_2$ are two independent sourceless currents on $\mathbb Z^2$). On the event $\{\widehat\n_1\in W_n\}$, define $\Gamma_1$ as the minimal (for an arbitrary ordering on paths) continuous path in $\widehat \n_1\cap S^+_n$ that starts and ends at $\ell_n$, and passes below $0$. Summing over the possible values for $\Gamma_1$, we find\footnote{The sum on $\gamma_1$ in \eqref{eq:19} and  \eqref{eq:29} refers to the sum on possible values for $\Gamma_1$ on the event $\widehat\n_1\in W_n$.}

  \begin{align}
    \label{eq:19}
\sum_{\gamma_1}{\bf P}^{\emptyset,\emptyset}_{\bbZ^2}[\Gamma_1 =\gamma_1]&\ge  {\bf P}^{\emptyset,\emptyset}_{\bbZ^2} [\widehat\n_1 \in W_n]={\bf P}_{\bbZ^2}^\emptyset[W_n].
  \end{align}
  Fix a continuous path $\gamma_1$ for which ${\bf P}^{\emptyset,\emptyset}_{\bbZ^2}[\Gamma_1 =\gamma_1]>0$, and denote by $o(\gamma_1)$ and $e(\gamma_1)$ the first and last vertices of $\gamma_1$ (notice that by definition, these two points lie on $\ell_n$).    When $\Gamma_1=\gamma_1$ and $\widehat\n_2\in V_n(o(\gamma_1),e(\gamma_1))$, we almost created a circuit around the origin in the sense that $\gamma_1$ together with any continuous path from $I'$ to $J$ that passes below $o(\gamma_1)$ and $e(\gamma_1)$ and above $0$ almost create a circuit.

In order to create a true circuit in $\widehat{\n_1+\n_2}$, we apply a local modification argument to $\n_2$. Pick $x_1$ and $x_2$ to be the first and last vertices of $\gamma_1$ within the interaction range $R$ of a path in $\widehat\n_2$ connecting them and passing above the origin.  
Consider the map $T$ that associates with $\n_2$ the configuration $\n_2$ except on edges $\{x,y\}$ with $J_{x,y}>0$ and $x,y$ at a distance less than $R$ of $x_1$ or $x_2$, for which the current is turned to 2. Note that for any $\widehat\n_2\in V_n(o(\gamma_1),e(\gamma_1))$, $\n_1+T(\n_2)$ contains a circuit surrounding the origin. Also, one may easily check that there exists a constant $c=c(\beta,J)>0$  such that
\begin{equation}\label{eq:199}{\bf P}_{\bbZ^2}^\emptyset[T(V_n(o(\gamma_1),e(\gamma_1)))]\ge c{\bf P}_{\bbZ^2}^\emptyset[V_n(o(\gamma_1),e(\gamma_1))].\end{equation}
Therefore,
by \eqref{eq:19}, \eqref{eq:199} and independence, we obtain   \begin{align}
    \label{eq:29}
{\bf P}^{\emptyset,\emptyset}_{\bbZ^2}[H_n]
&\ge \sum_{\gamma_1}{\bf P}^{\emptyset,\emptyset}_{\bbZ^2}[\Gamma_1=\gamma_1,\widehat\n_2\in T(V_n(o(\gamma_1),e(\gamma_1)))]\\
&= \sum_{\gamma_1}{\bf P}_{\bbZ^2}^\emptyset[\Gamma_1=\gamma_1]{\bf P}_{\bbZ^2}^\emptyset[T(V_n(o(\gamma_1),e(\gamma_1)))]\nonumber\\
&\ge  {\bf P}_{\bbZ^2}^\emptyset [ W_n] \min_{x,y\in\ell_n} {\bf P}_{\bbZ^2}^\emptyset [ V_n(x,y)].\nonumber
  \end{align}
Inequality~\eqref{eq:27a} thus follows from this inequality combined with \eqref{eq:599}. This ends the proof of the lemma, and therefore of Theorem~\ref{prop:condition} in the case where $\phi_{\mathbb Z^2}[0\leftrightarrow\infty]=0$.

\subsection{Proof of  Theorem~\ref{prop:condition} in the case $\phi_{\mathbb Z^2,\beta}[0\leftrightarrow\infty]>0$}
\label{sec:proof-case-phiPositive}

Below, we fix $\beta\ge\beta_c$ and drop it from the notation. Also, we assume that $\phi_{\mathbb Z^2,\beta}[0\leftrightarrow\infty]>0$. Recall that this is not expected to be the case when $\beta=\beta_c$ so that the proof of this section is mostly relevant  to the case $\beta>\beta_c$. 

Recall from \cite{AizDumSid15} that one can define an infinite-volume measure ${\bf P}_{\mathbb Z^2}^\emptyset$ for the random (sourceless) current on $\mathbb Z^2$. Furthermore, it is ergodic and the infinite connected component, when it exists, is unique almost surely. Since the high-temperature expansion corresponds to odd currents, we can construct an ergodic infinite-volume measure ${\rm P}_{\mathbb Z^2}^\emptyset$ on random even subgraphs of $\bbZ^2$.

Observe that ${\rm P}_{\mathbb Z^2}^\emptyset$ satisfies a form of finite-energy formula in the sense that  for any finitely supported $\eta_0$ with $\partial\eta_0=\emptyset$, there exists $c(\eta_0)>0$ such that for any $\eta$,
  \begin{equation}
    \label{eq:2}
    {\mathrm P}^\emptyset_{\mathbb Z^2}[\eta\Delta\eta_0]\ge c(\eta_0) {\mathrm P}^\emptyset_{\mathbb Z^2}[\eta].
  \end{equation}

\begin{proof}[Proof of Theorem~\ref{prop:condition}] We prove Theorem~\ref{prop:condition}  by contradiction. Assume for a moment that
\begin{equation} \label{eq:af}{\rm P}^\emptyset_{\mathbb Z^2}  [\exists\text{ infinitely many circuits surrounding $0$ in }\eta]<1.\end{equation}
This implies that there exists $n\ge0$ with
${\rm P}^\emptyset_{\mathbb Z^2}[\exists\text{ a circuit surrounding $\Lambda_n$}]<1.$
In such case, \eqref{eq:2} immediately implies that
\begin{equation}\label{eq:5}
{\rm P}^\emptyset_{\mathbb Z^2}[\exists\text{ no circuit surrounding $0$ in }\eta]=:c>0.\end{equation}
Our assumption $\phi_{\mathbb Z^2}[0\leftrightarrow\infty]>0$ and the fact that the infinite connected component in $\omega$ is unique almost surely \cite[Theorem~4.63]{Gri06} imply that for every $n$,
\begin{equation}
  \label{eq:4}
  \lim_{m\to \infty}\phi_{\mathbb Z^2}[\Lambda_m \text{ is crossed from left to right without intersecting $\Lambda_n$}]=1. 
\end{equation}
Combined with~\eqref{eq:3} in the appendix, we deduce that $\Lambda_n$ is almost surely surrounded by a circuit of $\omega$. Hence, one can always choose a circuit $\gamma=\gamma(\omega)$ in $\omega$ that surrounds the box $\Lambda_n$. Now, if there is no circuit surrounding $0$ in $\eta$, then a parity argument shows that there must exist a circuit surrounding $0$ with diameter larger than $n$ in the configuration $\eta\Delta\gamma$ (notice that the circuit may not surround the box $\Lambda_n$). Since $(\omega,\eta\Delta\gamma)$ has the same law as $(\omega,\eta)$ by Corollary~\ref{rmk:cycle}, we deduce that
    \begin{equation}
      \label{eq:6}
     {\rm P}^\emptyset_{\mathbb Z^2}[\exists\text{ a circuit of diameter larger than $n$ surrounding $0$ in $\eta$}]\ge c/2.
    \end{equation}

The equation above guarantees the existence of a large circuit around $0$, but this circuit may come close to the origin. As stated in the Lemma~\ref{lem:p} below, we will prove that it is not always the case and that the circuit has a positive probability to surround a large box around $0$.  This lemma, combined with the  ergodicity of ${\rm P}_{\mathbb Z^2}^\emptyset$, implies that $0$ is surrounded almost surely by a circuit in $\eta$, which contradicts \eqref{eq:af} and concludes the proof of Theorem~\ref{prop:condition}.\end{proof}

    \begin{lemma}\label{lem:p}
       For every $k\ge 0$, we have
  $ {\rm P}_{\mathbb Z^2}^\emptyset[\exists\text{ a circuit surrounding $\Lambda_{k}$ in $\eta$}]\ge c/4.
   $
    \end{lemma}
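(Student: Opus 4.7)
The plan is to reuse the parity-and-coupling argument behind the derivation of~\eqref{eq:6}, but with a circuit in $\omega$ surrounding $\Lambda_k$ in place of $\Lambda_n$. Since $\phi_{\mathbb Z^2}[0\leftrightarrow \infty]>0$ and the infinite cluster of $\omega$ is almost surely unique \cite[Theorem~4.63]{Gri06}, the same gluing argument as in~\eqref{eq:3} of the appendix gives that, almost surely, there is a circuit $\gamma_k=\gamma_k(\omega)$ in $\omega$ surrounding $\Lambda_k$; I pick such a $\gamma_k$ measurably (for instance the innermost one).

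A winding-parity argument modulo $2$ then shows the following. On the event $U_k:=\{\Lambda_k\text{ lies in the unbounded face of }\mathbb R^2\setminus\eta\}$, every face inside $\Lambda_k$ has winding parity $0$ with respect to $\eta$, so (since $\gamma_k$ surrounds $\Lambda_k$) winding parity $1$ with respect to $\eta\Delta\gamma_k$; moreover, under $U_k$ the edges of $\eta$ do not separate the faces of $\Lambda_k$, and $\gamma_k$ avoids $\Lambda_k$ by construction, so all faces of $\Lambda_k$ lie in a common face of $\mathbb R^2\setminus(\eta\Delta\gamma_k)$. That common face is bounded, and its outer boundary is a circuit of $\eta\Delta\gamma_k$ surrounding $\Lambda_k$. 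Combining this implication with Corollary~\ref{rmk:cycle}, which gives that $(\omega,\eta)$ and $(\omega,\eta\Delta\gamma_k)$ have the same law, yields
\[
    {\rm P}^\emptyset_{\mathbb Z^2}[\exists\text{ circuit around }\Lambda_k\text{ in }\eta]\;\ge\;{\rm P}^\emptyset_{\mathbb Z^2}[U_k].
\]

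The remaining task is to show ${\rm P}^\emptyset_{\mathbb Z^2}[U_k]\ge c/4$, using the bound ${\rm P}[U_0]=c$ from~\eqref{eq:5} together with the finite-energy inequality~\eqref{eq:2} and the strengthened statement~\eqref{eq:6} (a circuit of diameter $>n$ around $0$ with probability $\ge c/2$). The idea is that starting from a configuration in the event underlying~\eqref{eq:6}, the obstruction preventing $\Lambda_k$ from lying in the unbounded face consists of finitely many local features which can be removed by a bounded number of local flips, each costing a constant through~\eqref{eq:2}; the factor $1/4$ accounts for this fixed total cost.

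\textbf{Main obstacle.} The central difficulty is the last step: producing a lower bound on ${\rm P}[U_k]$ with a constant independent of $k$. A naive use of finite-energy to flip every potentially obstructing cycle near $\Lambda_k$ gives a constant that deteriorates with $k$. The resolution is to exploit~\eqref{eq:6} and the reflection invariance of the couplings to reduce the problem to a bounded, $k$-uniform collection of local modifications, whose combined energetic cost is controlled by~\eqref{eq:2}. Carrying out this topological accounting uniformly in $k$ is the crux of the argument.
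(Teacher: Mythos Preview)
Your argument has a genuine gap, and you have correctly identified it yourself: you have not proved that ${\rm P}^\emptyset_{\mathbb Z^2}[U_k]\ge c/4$ with a constant independent of~$k$, and the sketch you offer for this step does not work. The event $U_k$ (in either reading: ``$\Lambda_k$ lies in the unbounded face of $\mathbb R^2\setminus\eta$'' or the weaker ``every face of $\Lambda_k$ has winding parity~$0$ with respect to~$\eta$'') asks for the \emph{simultaneous} absence of surrounding circuits at every face of~$\Lambda_k$. All you know from~\eqref{eq:5} is that a \emph{single} point has no surrounding circuit with probability~$c$; since $\eta$ does not satisfy FKG, there is no way to leverage this into a $k$-uniform bound for the intersection. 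Finite energy~\eqref{eq:2} only lets you flip a \emph{bounded} sourceless pattern $\eta_0$, so it cannot kill arbitrarily many small circuits scattered across~$\Lambda_k$; reflection invariance does not reduce the number of obstructions either. In short, ${\rm P}[U_k]$ may well tend to~$0$ as $k\to\infty$, and nothing you have written rules that out.

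The paper sidesteps this obstacle by a dichotomy on whether $\eta$ itself percolates. If ${\rm P}^\emptyset_{\mathbb Z^2}[0\leftrightarrow\infty\text{ in }\eta]=0$, one chooses $n$ so large that $\Lambda_k$ reaches $\partial\Lambda_n$ in $\eta$ with probability at most $c/4$; on the complement, any circuit around $0$ of diameter exceeding $2n$ (supplied by~\eqref{eq:6}) is forced to surround all of $\Lambda_k$, giving the bound directly. If instead $\eta$ has an infinite component, its almost-sure uniqueness (Burton--Keane) guarantees that for $n$ large the annulus $\Lambda_n\setminus\Lambda_k$ is crossed by a \emph{unique} connected component of $\eta$; one then conditions on this together with~\eqref{eq:5} and the a.s.\ existence of an $\omega$-circuit around $\Lambda_n$, and the switching $\eta\mapsto\eta\Delta\gamma$ produces a circuit that must surround $\Lambda_k$ because the uniqueness of the crossing component prevents it from threading back inside. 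The case split is precisely what supplies the missing $k$-uniformity.
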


\begin{proof}
The strategy of the proof will be different depending on whether there is an infinite connected component in $\eta$ or not.

Let us first assume that ${\rm P}^\emptyset_{\mathbb Z^2} [0\leftrightarrow\infty]=0$. Then, one can fix $n$ large enough such that $\Lambda_k$ is connected to $\partial \Lambda_n$ with probability smaller than $c/4$. If this event does not occur, then a circuit around $0$ of diameter larger than $2n$ necessarily surrounds $\Lambda_k$. For this choice of $n$, \eqref{eq:6} implies that
$$
{\rm P}^\emptyset_{\mathbb Z^2} [\exists\text{ a circuit of diameter at least $2n$ surrounding $\Lambda_k$ in $\eta$}]\ge \frac c2-\frac c4 =\frac c4,
$$
which establishes the claim in this case.

Let us now assume that ${\rm P}^\emptyset_{\mathbb Z^2} [0\leftrightarrow\infty]>0$. To prove the claim in this case, we apply a refined version of the argument used to prove \eqref{eq:6}. The difference here is that we work on an event which ensures that the circuit constructed by ``switching'' the path $\gamma$ always surrounds $\Lambda_k$.

Consider pairs of configurations $(\omega,\eta)$ coupled according to the measure $\bbP$ introduced in Theorem~\ref{thm:connection}. Let $\mathcal E$ be the event that
\begin{itemize}[noitemsep,nolistsep]
\item $0$ is not surrounded by a circuit in $\eta$,
\item in the random-cluster configuration $\omega$, there exists a circuit that surrounds the box $\Lambda_n$,
\item in the configuration $ \eta$ restricted to the annulus $\Lambda_{n}\setminus\Lambda_k$, there exists a unique connected component crossing the annulus $\Lambda_{n}\setminus\Lambda_k$ from inside to outside.
\end{itemize}
One may choose $n=n(k)$ such that $\bbP[\calE]\ge c/2$. Indeed, by \eqref{eq:5}, the first item occurs with probability $c$. As explained below \eqref{eq:4}, the second item occurs almost surely independently of the choice of $n$. Finally, by uniqueness of the infinite connected component in $\eta$ (which can be proved using the standard Burton-Keane argument exactly as for $\widehat\n$ in \cite{AizDumSid15}), one can choose $n$ large enough such that the third item occurs with probability larger than $1-c/2$.

Now, fix $(\omega,\eta)\in \calE$ and choose the minimal (for the lexicographical order induced by $\prec$) circuit $\gamma=\gamma(\omega)$ in $\omega$ that surrounds the box $\Lambda_n$.  Then, $\eta\Delta\gamma$ contains a circuit surrounding $\Lambda_k$ (the existence of a circuit surrounding 0 is obtained by a reasoning spelled out several times already, and the fact that there exists such a circuit surrounding $\Lambda_k$ is guaranteed by the third item of $\calE$). Since $(\omega,\eta\Delta\gamma)$ has the same law as $(\omega,\eta)$ by Corollary~\ref{rmk:cycle}, we find that\begin{equation}
      {\rm P}_{\mathbb Z^2}^\emptyset[\exists \text{   a circuit surrounding $\Lambda_{k}$ in $\eta$}]\ge  \mathbb P[\calE] \ge c/2.
    \end{equation}
 \end{proof}
 \begin{remark}
 In the last proof, it was natural to think that the uniqueness of the infinite connected component in $\eta$ should imply directly the existence of a circuit around $\Lambda_k$. The main difficulty was that ${\rm P}_{\mathbb Z^2}^\emptyset$ does not satisfy the FKG inequality, and that it is therefore not  straightforward to show that the infinite connected component contains a circuit that surrounds $\Lambda_k$.
 \end{remark}
 
\medskip

\section{Pfaffian structure of order-disorder correlation functions} \label{sec:OD}

\subsection{Order-disorder variables and their correlation functions} 

A  sine qua non for the Pfaffian structure of multi-spin corrections is that there is a natural  distinction between even and odd pairings of the variables.  Such a well-defined signature does exist for sites lined along the boundary of a planar region.  However, this fact does not extend to general configurations of $2n$ points in the plane (for instance, the length-minimizing pairing  of four points is not  invariant under the plane's conformal mappings).  Thus, the Pfaffian structure of spin correlation is not natural for the critical correlations of spins in the bulk.
Nevertheless, the Pfaffian relation of the boundary correlation functions \eqref{Pf_boundary} does admit an extension to the bulk: it can be found in the correlation functions of  Ising model's fermionic  order-disorder operators to which we turn in this section. 

Attention to this structure was called in the work of  Kadanoff-Ceva~\cite{KadCev71}.  They pointed out that disorder operators act as anti-commuting spinors in their effect on the functional sum yielding  the partition function.  They also stressed that this is in line with Kaufmann's algebraic approach~\cite{Kau49} to  Onsager's solution of the model on $\Z^2$.   

Still, even for fermions, the Pfaffian structure is rather exceptional since it implies that all the higher correlations can be algebraically determined from the two-point function.   Examples where that happens  are found in  the thermal equilibrium states of systems of \emph{non-interacting} fermions.   For the nearest-neighbor Ising model on $\Z^2$, such a representation was found early on in the fermionic representation of the transfer matrix of  Schultz-Mattis-Lieb \cite{SchMatLie64}.   It took a bit longer to realize that Pfaffian correlations of these particular operators are a feature of {\em all} planar Ising models.   Traces of related statements, though perhaps not quite in this explicit form, were made in the literature  ever since  the observation by Hurst and Green that the Ising model's partition function itself is given by the Pfaffian of the Kasteleyn matrix~\cite{HurGre60}.   
In~\cite{McCPerWu81} these were  linked with non-linear difference equations obeyed by the correlation functions of various combinations of order and disorder variables.   Somewhat more specific statements on Pfaffian correlation functions 
from a combinatorial perspective can be found in the  recent papers \cite{AW17,CheCimKas16,Lis16}.  Our goal in this section is to present an elementary derivation of the Pfaffian structure of  the correlations of paired order-disorder operators from the random current perspective.

The \emph{disorder operators} $\mu_{\ell}$ are associated with   lines $\ell$ drawn in the plane in such a way that they avoid the vertices of $\V(\G)$ and cross\footnote{In the whole section, we consider the standard notion of crossing used in topology: a point at the intersection of two curves is a {\em crossing} if the first curve passes from one side of the second curve to the other.} only finitely many times the edges of $E(\G)$ (cf.~Figure~\ref{fig:ord_disord}).   Associated with $\mu_\ell$ is the change in the Hamiltonian 
${\bf H}_\G \, \mapsto R_{\ell} {\bf H}_\G$  in which each edge crossing $\ell$ signifies a sign flip 
of the coupling $ J_e \mapsto -J_e$ of the crossed edge $ e \in \E(\G) $.

Correspondingly,  the correlation function involving a collection of disorder variables $(\mu_{\ell_j})_{j\le n}$ and a function $F:\{\pm1\}^{\V(\G)}\rightarrow\mathbb C$  is defined by
\be  \label{eq:def_tau}
\big\langle  F  \prod_{j=1}^n \mu_{ \ell_j}  \big\rangle_{\G , \beta}
\ := \
\frac{1}{2^{|V(\G)|}Z(\G,\beta)}\sum_{\sigma\in\{\pm1\}^{\V(\G)}}   F(\sigma)   \  \exp\Big(-\beta \, \big[(\prod_{j=1}^n R_{\ell_j} ) H_\G\big](\sigma) \Big)\, .
\ee

One may note the partial homotopy invariance in the dependence of the  disorder operator $\mu_\ell$   on the line: for lines  with the same end points,  $\mu_\ell$ and $\mu_{\ell'}$ differ by just a gauge transformation under which the spins are flipped in a region enclosed by the loop\footnote{The region {\em enclosed by a loop} is the set of points of the complement of the loop in $\bbR^2$ that can be reached from infinity by crossing the loop an odd number of times.} obtained by concatenating the two.

To simplify the discussion, it is convenient to adopt a convention in which the disorder operators are associated with sites.  For that, one of the end-points of the disorder variables will always be placed within a specified cell, to which we refer as the grand-central.  
More precisely, consider a collection of paired order-disorder variables
\be 
 \tau_j :=  \sigma_{x_j} \cdot \mu_{\ell_j} 
\ee  
in which $x_j \in V(\G)$ and the corresponding line $\ell_j $ links the grand-central with a site $x'_j$ lying within a cell whose border includes $x_j$.
\begin{figure}[h]
\begin{center}
\includegraphics [width = 0.6 \textwidth]{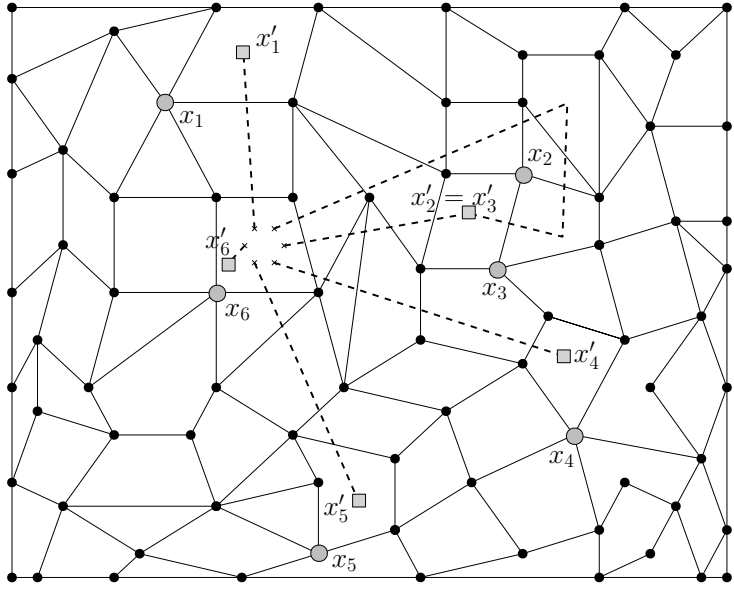}
\caption{Paired order-disorder variables.  Disorder lines are depicted in dots, and  spin sites $x_1,\dots,x_6$ in gray bullets.  Each disorder line has one end in the common ``grand-central'' cell.  The lines do not intersect and are labeled  $ \ell_1, \dots,\ell_6 $ cyclically.
\label{fig:ord_disord}
}
\end{center}
\end{figure}

The following theorem extends Theorem~\ref{thm:pf_boundary} to order-disorder variables.

\begin{theorem}[Pfaffian structure for order-disorder correlation functions]\label{thm:Pf_OD}
Fix a planar graph $G$, arbitrary nearest-neighbor couplings $J$, and $\beta\ge0$. Let $(\tau_1,\dots,\tau_n)$ be a collection of paired order-disorder variables for which the disorder lines are disjoint (except possibly at their endpoints) and  are labeled in a cyclic order. Then, we have
\be  \label{eq:Pf_gen}
 \langle \tau_{1}\cdots\tau_{2n}\rangle_{\G,\beta} = 
\Pf_n \big( [\langle \tau_i\tau_j\rangle_{\G,\beta}]_{1\le i<j\le 2n} \big) .
\,
\ee
\end{theorem}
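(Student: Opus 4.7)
The plan is to follow the proof of Theorem~\ref{thm:pf_boundary} step by step, with three adaptations to accommodate the signs introduced by the disorder operators. Expanding~\eqref{eq:def_tau} through the first identity in~\eqref{eq:J_exp}, and observing that the action of $\prod_j R_{\ell_j}$ replaces each coupling $J_e$ by $(-1)^{c(e)} J_e$, where $c(e)\in\{0,1\}$ records the parity of the number of crossings of the edge $e\in E(\G)$ by the bouquet $L:=\bigcup_j \ell_j$, one arrives at a signed random current representation
\be
\langle \tau_1 \cdots \tau_{2n}\rangle_{\G,\beta}\ =\ \frac{1}{Z(\G,\beta)}\sum_{\n:\,\partial \n = X} s(\n)\,w(\n),
\ee
with $X:=\{x_1,\dots,x_{2n}\}$ and $s(\n):=(-1)^{\sum_e \n(e)\, c(e)}$. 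Because $s$ depends only on the parities of the $\n(e)$ and satisfies $s(\n_1+\n_2)=s(\n_1)\,s(\n_2)$, the combinatorial identity~\eqref{eq:key} and hence the switching lemma (Lemma~\ref{lem:switch}) extend verbatim to the signed setting.

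Carrying out the signed version of the manipulation leading to~\eqref{eq:101}, with disorder lines for $\tau_1\tau_\ell$ restricted to $L_A:=\ell_1\cup\ell_\ell$ and those for $\prod_{j\ne 1,\ell}\tau_j$ to $L_B:=L\setminus L_A$, I would then obtain the following analogue of Corollary~\ref{cor:switch}:
\be
\langle \tau_1\tau_\ell\rangle_{\G,\beta}\,\big\langle \prod_{j\ne 1,\ell}\tau_j\big\rangle_{\G,\beta}\ =\ \langle \tau_1\cdots \tau_{2n}\rangle_{\G,\beta}\cdot \mathbf{E}^{X,\emptyset}_{\G,\beta}\big[\chi_\ell(\n_1+\n_2)\,\mathbb{I}[x_1\stackrel{\n_1+\n_2}{ \longleftrightarrow} x_\ell]\big],
\ee
with $\chi_\ell\in\{\pm1\}$ the sign encoding the parity of crossings of the connecting cluster of $\n_1+\n_2$ with the disorder lines that separate $x_1$ from $x_\ell$. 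Setting
\be
Q(\tau_1,\dots,\tau_{2n})\ :=\ \langle \tau_1 \cdots \tau_{2n}\rangle_{\G,\beta} - \sum_{\ell=2}^{2n}(-1)^\ell \langle \tau_1\tau_\ell\rangle_{\G,\beta}\,\big\langle \prod_{j\ne 1,\ell}\tau_j\big\rangle_{\G,\beta},
\ee
the task reduces to establishing the configuration-by-configuration identity
\be \label{eq:main_OD_plan}
\sum_{\ell=1}^{2n}(-1)^{\ell+1}\chi_\ell(\n_1+\n_2)\,\mathbb{I}[x_1\stackrel{\n_1+\n_2}{ \longleftrightarrow} x_\ell]\ =\ 0,
\ee
which replaces~\eqref{eq:main_boundary}.

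The main obstacle lies in verifying~\eqref{eq:main_OD_plan}. The guiding picture is that the bouquet of disorder lines emerging cyclically from the grand-central cell plays the role of an effective boundary: after cutting the plane along this bouquet one obtains a simply connected region whose topological boundary visits $x_1,\dots,x_{2n}$ in the prescribed cyclic order. The parity sign $\chi_\ell$ supplies exactly the bookkeeping needed to restore the alternating-parity argument of Theorem~\ref{thm:pf_boundary}: in each realization of $\n_1+\n_2$, the indices $\ell\in\{2,\dots,2n\}$ for which $x_1\stackrel{\n_1+\n_2}{ \longleftrightarrow} x_\ell$ are expected to contribute with strictly alternating signs $(-1)^{\ell+1}\chi_\ell$, so that the sum in~\eqref{eq:main_OD_plan} telescopes to zero. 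Once this vanishing is in place, the identity $Q=0$ provides the recursion which, combined with~\eqref{eq:PFcrit}, proves the Pfaffian formula~\eqref{eq:Pf_gen} by induction on $n$, exactly as at the end of the proof of Theorem~\ref{thm:pf_boundary}.
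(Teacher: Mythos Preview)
Your overall architecture matches the paper's: reduce to the recursion via~\eqref{eq:PFcrit}, use a signed random current representation, switch, and verify a pointwise alternating-sign identity. But there is a genuine gap at the step where you introduce $\chi_\ell$.

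The sign that emerges after switching is \emph{not} a function of $\n_1+\n_2$ alone. Carrying out the computation explicitly (as the paper does in its equation~\eqref{eq:onetermod}), the factor that remains after pulling out $(-1)^{(\n_1|\mathcal L)}$ is
\[
(-1)^{(\n_1+\n_2\,|\,\ell_1+\ell_j)}\,(-1)^{(\mathbf k_j\,|\,\mathcal L)},
\]
where $\mathbf k_j$ is an arbitrary subgraph of the multigraph $\mathcal M$ associated with $\n_1+\n_2$ having sources $\{x_1,x_j\}$. Different choices of $\mathbf k_j$ can differ by a loop in $\mathcal M$, and if that loop crosses $\mathcal L$ oddly the sign flips. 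So your $\chi_\ell(\n_1+\n_2)$ is not well-defined as written, and the identity~\eqref{eq:main_OD_plan} cannot even be stated without further work.

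The paper resolves this with an intermediate lemma you are missing: it shows (their Lemma~\ref{S_switch}) that configurations $\n_1+\n_2$ containing a loop that crosses $\mathcal L$ an odd number of times contribute zero to the signed sum, so one may impose the so-called $S_{\mathcal L}$-condition for free. Under that condition the sign $(-1)^{(\mathbf k_j|\mathcal L)}$ is independent of the choice of $\mathbf k_j$, and only then does the quantity playing the role of your $\chi_\ell$ become well-defined. The subsequent topological verification (their Lemma~\ref{lem:topology}) is also more delicate than the ``cut along the bouquet'' heuristic you sketch: one must build a closed loop out of $\ell_1$, $\ell_j$, a chosen path $\gamma_j\subset\mathcal M$, and short connectors, then count its crossings with the remaining disorder lines and with $\m-\gamma_j$, reducing the claim to $\sum_j(-1)^{(\gamma_j|\m-\gamma_j)}\mathbb I[x_1\leftrightarrow x_j]=1$. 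This last identity is what actually exhibits the alternating-sign structure, cluster by cluster of the dual partition induced by $\m$, and it genuinely uses the $S_{\mathcal L}$-condition to guarantee each connected component carries an even number of sources. Your proposal does not yet contain either of these two ingredients.
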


As explained above, the assumptions made here on the lines $\ell_j$ do not amount to  loss of generality, since any deformation of the lines is simply compensated by a   gauge transformation. 

Theorem~\ref{thm:Pf_OD}  includes Theorem~\ref{thm:pf_boundary}  as a special case:   For any collection of spins on the boundary of a single face, one may take the grand central to be this face, and all the disorder lines $\ell_j$  confined to it (as depicted by  $x'_6$ in Figure~\ref{fig:ord_disord}).  In this case, the disorder operators act as an identity, and 
\eqref{eq:Pf_gen} reduces to \eqref{Pf_boundary}.

Another noteworthy example is the case where the vertices  $ \{ x_{2j-1},x_{2j} \}$ are pairwise adjacent,
with the disorder lines chosen so that the actions of $\ell_{2j-1}$  and $\ell_{2j}$  overlap (i.e.~the two traverse the same cells) and thus cancel each other.   In that case, the pairwise product of two ``order-disorder''  operators reduces to the  \emph{energy operator}   
$\tau_{2j-1} \tau_{2j}  =  \sigma_{x_{2j-1}} \sigma_{x_{2j}}$, so that
\be
 \big\langle  \prod_{j=1}^{2n} \tau_j  \big\rangle_{\G, \beta}  =  \big\langle  \prod_{j=1}^{n} \sigma_{x_{2j-1}} \sigma_{x_{2j}} \big\rangle_{\G, \beta} \, .
\ee

\subsection{Proof of Theorem~\ref{thm:Pf_OD}}\label{sec:3}

Like Theorem~\ref{thm:pf_boundary},  Theorem~\ref{thm:Pf_OD} will be established here by induction on $n\ge1$.  Thanks to \eqref{eq:PFcrit}, the key step is to show that $\langle\tau_1\cdots\tau_{2n}\rangle_{\G,\beta} =  Q(x_1,\dots, x_{2n}) $ with 
 \be \label{eq:suff3}
 Q(x_1,\dots, x_{2n})   := 
  \sum_{j=2}^{2n} (-1)^{j+1}  \,  
\langle \prod_{\substack{1\le i\le 2n \\ i \notin  \{1,j\} }}\tau_i \rangle_{\G,\beta} \,   \langle \tau_1 \tau_j \rangle_{\G,\beta}  \, .
\ee

In the following discussion we identify disorder lines $\ell_j$ with the multiset composed of edges of $E(G)$ taken with the multiplicity of the number of times they are crossed\footnote{Note that we are counting the number of crossings (in the topological sense) between continuous curves, which can be smaller than the number of intersections.}  by the disorder line. From now on, we always make this identification. Also, since multisets can be interpreted as integer-valued functions, sums and differences of disorder lines can be defined as sums and differences of the corresponding multisets.

Since the disorder operators simply reverse the couplings, whose contribution to the weight $w(\n)$ includes the factor $J_{x,y}^{\n(x,y)}$, the random current expression for the expectation value of the product of order-disorder operators is given by
\be\label{RC_withDisorder}
 \langle \prod_{i=1}^{k}  \sigma_{x_i} \,  \prod_{j=1}^{n}  \mu_{\ell_j} \rangle_{\G,\beta}
= \frac{1}{ Z(G,\beta)}  \sum_{\n:\partial \n = \{x_1,\dots,x_k\} }   w(\n)  (-1)^{(\n | \mathcal{L}) } \, , 
 \ee
where $\mathcal L:=\ell_1+\dots+\ell_n$ and, for a multiset $\mathcal E$ of edges in $E(G)$, $(\n | \mathcal E) $  is  given by (in the sum below, $\{x,y\}$ appears as many times as it appears in the multiset $\mathcal E$)
\be 
 (\n | \mathcal{E}) := \sum_{\{x,y\}\in\mathcal E}\n(x,y).
\ee 
Using the abbreviation $ X := \{ x_1,\dots, x_{2n} \} $, we obtain that for each $j\in \{2,\dots,2n\}$, the corresponding term in \eqref{eq:suff3} takes the form (we use $(\calN|\mathcal E):=(\n|\calE)$ for $\calN$ associated with $\n$)
\begin{align}\label{eq:onetermod}
Z(G,\beta)^2  \ &  \langle \prod_{\substack{1\le i\le 2n \\ i \notin  \{1,j\} }}\tau_i \rangle_{\G,\beta}  \ \langle \tau_1 \tau_j \rangle_{\G,\beta}   = \sum_{\substack{\n_1: \partial \n_1 = X \backslash \{x_1, x_j\} \\ \n_2:\partial \n_2 = \{x_1, x_j\} }}   w(\n_1) w(\n_2)  \,    (-1)^{(\n_1 | \mathcal{L}-(\ell_1+\ell_j)) }      (-1)^{(\n_2 | \ell_1+ \ell_j ) }   \notag \\
& =   \sum_{\mathcal{M} : \partial \mathcal{M} = X } w(\mathcal{M}) \sum_{\substack{\mathcal{N} \subset \mathcal{M} \\ \partial \mathcal{N} =\{ x_1,x_j\} } } (-1)^{(\mathcal{N} | \mathcal{L} ) } (-1)^{(\mathcal{M}| \ell_1+\ell_j) }  \notag \\
& = \sum_{\mathcal{M} : \partial \mathcal{M} = X } w(\mathcal{M}) \;  \mathbb I[x_1 \stackrel{\mathcal{M} }   {\longleftrightarrow} x_j ]  \sum_{\substack{\mathcal{N} \subset \mathcal{M} \\ \partial \mathcal{N} =\{ x_1,x_j\} } }  (-1)^{(\mathcal{M}|\ell_1+\ell_j)} (-1)^{(\mathcal{N} \Delta \mathcal{K}_j | \mathcal{L})} \notag \\
& =   \sum_{\substack{\n_1: \partial \n_1 = X  \\ \n_2:\partial \n_2 = \emptyset }}   w(\n_1)   w(\n_2)  (-1)^{ (\n_1 | \mathcal{L} ) }  (-1)^{(\n_1 + \n_2 | \ell_1+\ell_j )}  (-1)^{ (\mathbf k_{j} | \mathcal{L} ) } \; \mathbb I[x_1 \stackrel{\n_1+\n_2}   {\longleftrightarrow} x_j ] \, .
\end{align}  
Here, the second line follows from~\eqref{eq:key} and the additivity property $(\n|\calE_1+\calE_2)=(\n|\calE_1)+(\n|\calE_2)$.
The third inequality invokes the switching principle~\eqref{eq:switch} in which $ \mathcal{K}_j $ represents an arbitrary subgraph of $ \mathcal{M} $ with $  \partial \mathcal{K}_j = \{ x_1,x_j \} $. The latter is identified with $ \mathbf k_{j} \leq \n_1 + \n_2 $ for which  $ \partial \mathbf k_{j} = \{ x_1,x_j \} $.  The last line uses \eqref{eq:key} and the additivity property $(\n+\m|\calE)=(\n|\calE)+(\m|\calE)$.   

It is worth noting that the  switching which leads to \eqref{eq:onetermod} allows an arbitrary  choice of the $\mathbf k_j \le \mathbf n_1+\mathbf n_2$ with $\partial \mathbf k_j=\{x_1,x_j\}$.   It is explained by the fact that currents $\n_1+\n_2$ contributing to the last sum of \eqref{eq:onetermod} may be limited by the following consistency condition, which has been noted to be  of  relevance in discussions of Ising models with couplings of mixed signs on arbitrary graphs,  regardless of planarity.   

\begin{definition} [The $S_{\calE}$-condition]
Given a  graph $\G$ and a multiset  ${\mathcal E}$ of edges in $E(\G)$, we say that a currents configuration 
$\m$ satisfies the  \emph{$S_{\mathcal E}$-condition} (written $\m\in S_{\cal E}$) if  each loop $\gamma$ supported on the corresponding multigraph $\mathcal M$  satisfies $(-1)^{(\gamma|\calE)}=1$. 
\end{definition}  
Equivalently, we could have said that the loop $\gamma$ intersects an even number of times the edges appearing an odd number of times in $\calE$. 
The following lemma will play a role in the proof of Theorem~\ref{thm:Pf_OD}.

\begin{lemma}  \label{S_switch}  For any graph $\G$, any $A,B \subset \V(\G)$,  any two multisets $\mathcal E_1,\mathcal E_2$ of edges in $\E(\G)$,
and any function $f$ on currents, we have 
\begin{multline}\sum_{\substack{\n_1:\partial \n_1=A \\ \n_2:\partial \n_2=B }} f(\n_1+\n_2)\,
(-1)^{(\n_1 |\mathcal E_1 )}w_\beta(\n_1) (-1)^{(\n_2 | \mathcal E_2 )}
w_\beta(\n_2)\\
=\ \sum_{\substack{\n_1:\partial \n_1=A \\ \n_2:\partial \n_2=B}}f(\n_1+\n_2)\,
(-1)^{(\n_1 | \mathcal E_1 )}
w_\beta(\n_1) (-1)^{(\n_2 | \mathcal E_2 )}w_\beta(\n_2)\,  \mathbb I\left[\n_1+\n_2\in S_{\mathcal E_1+\mathcal E_2}  \right] \, .
\end{multline}
\end{lemma}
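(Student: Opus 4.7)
The plan is to apply the multigraph identity \eqref{eq:key} to the LHS and then exhibit an involution, in the spirit of the switching principle \eqref{eq:switch}, that cancels the contributions violating the $S$-condition.

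Setting $\mathcal{E}_0 := \mathcal{E}_1 + \mathcal{E}_2$, I would first observe the mod-$2$ identity
\be
(\n_1|\mathcal{E}_1) + (\n_2|\mathcal{E}_2) \;\equiv\; (\n_1|\mathcal{E}_0) + (\n_1+\n_2|\mathcal{E}_2) \pmod{2}\,,
\ee
which follows from the additivity of $(\cdot|\cdot)$ in its second argument together with $(\n_2|\mathcal{E}_2)\equiv(\n_1+\n_2|\mathcal{E}_2)-(\n_1|\mathcal{E}_2)$. Using that $f$ depends only on $\n_1+\n_2$, the identity \eqref{eq:key} then recasts the left-hand side as
\be
\sum_{\substack{\mathcal{M}:\,\partial\mathcal{M}=A\Delta B}} w(\mathcal{M})\, f(\mathcal{M})\, (-1)^{(\mathcal{M}|\mathcal{E}_2)} \sum_{\substack{\mathcal{N}\subset\mathcal{M}\\ \partial\mathcal{N}=A}}(-1)^{(\mathcal{N}|\mathcal{E}_0)} \,.
\ee

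The core step would then be to show that the inner sum vanishes whenever $\mathcal{M}\notin S_{\mathcal{E}_0}$. If the $S_{\mathcal{E}_0}$-condition fails, by definition there exists a cycle $\gamma\subset\mathcal{M}$ with $(\gamma|\mathcal{E}_0)$ odd; viewed as a sub-multigraph this $\gamma$ satisfies $\partial\gamma=\emptyset$. The map $\mathcal{N}\mapsto\mathcal{N}\Delta\gamma$ is then a fixed-point-free involution on $\{\mathcal{N}\subset\mathcal{M}:\partial\mathcal{N}=A\}$ that preserves the source constraint (since $\partial(\mathcal{N}\Delta\gamma)=\partial\mathcal{N}\,\Delta\,\partial\gamma=A$) and multiplies the summand by $(-1)^{(\gamma|\mathcal{E}_0)}=-1$. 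Pairing up terms shows the inner sum equals zero for such $\mathcal{M}$, so effectively the outer sum may be restricted to $\mathcal{M}\in S_{\mathcal{E}_0}$.

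To conclude, I would apply \eqref{eq:key} in the reverse direction: the restricted outer sum is transformed back into a sum over pairs $(\n_1,\n_2)$ with $\partial\n_1=A$, $\partial\n_2=B$, together with the extra indicator $\mathbb{I}[\n_1+\n_2\in S_{\mathcal{E}_0}]$, which matches the RHS exactly.

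The main technical obstacle is the careful handling of the involution on the multigraph level: one must interpret $\Delta$ as the edge-wise symmetric difference mod $2$ on the edges of $\mathcal{M}$ (viewing each copy of a multi-edge as distinct), and verify that failure of the $S_{\mathcal{E}_0}$-condition as stated — which is phrased in terms of a single cycle — is indeed the right notion. For the latter, an Eulerian decomposition of any sub-multigraph $\mathcal{G}\subset\mathcal{M}$ with $\partial\mathcal{G}=\emptyset$ into edge-disjoint cycles writes $(\mathcal{G}|\mathcal{E}_0)$ as a sum of cycle contributions, so an odd $\mathcal{G}$ forces at least one cycle constituent to be odd, making the above involution available whenever needed.
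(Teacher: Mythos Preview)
Your proof is correct and follows essentially the same route as the paper: pass to the multigraph representation via \eqref{eq:key}, and show the inner sum over $\mathcal{N}\subset\mathcal{M}$ vanishes whenever $\mathcal{M}\notin S_{\mathcal{E}_1+\mathcal{E}_2}$ by the sign-reversing involution $\mathcal{N}\mapsto\mathcal{N}\Delta\gamma$ with $\gamma$ a loop of odd $\mathcal{E}_0$-parity. The only cosmetic difference is that the paper keeps the summand in the form $(-1)^{(\mathcal{N}|\mathcal{E}_1)}(-1)^{(\mathcal{M}-\mathcal{N}|\mathcal{E}_2)}$ and checks the sign flip directly, whereas you first use the mod-$2$ identity to factor out $(-1)^{(\mathcal{M}|\mathcal{E}_2)}$ and reduce the inner summand to $(-1)^{(\mathcal{N}|\mathcal{E}_0)}$; also, the paper phrases the conclusion as ``the bad half of the split sum is zero'' rather than ``apply \eqref{eq:key} in reverse'', but these are equivalent.
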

\begin{proof} 
We split the sum on the left-hand side into two sums over currents with $ \n_1 + \n_2 \in S_{\mathcal E_1+\mathcal E_2}   $ on one hand and with $ \n_1 + \n_2 \notin S_{\mathcal E_1+\mathcal E_2}$ on the other hand, and prove that the latter sum is zero. 
Indeed, we get, in the notation of~\eqref{eq:key}, 
\begin{align*} 
\sum_{\substack{\n_1:\partial \n_1=A \\ \n_2:\partial \n_2=B}}f(\n_1+\n_2)\,&
(-1)^{(\n_1 | \mathcal E_1 )}
w_\beta(\n_1) (-1)^{(\n_2 | \mathcal E_2 )}w_\beta(\n_2)\,  \mathbb I\left[\n_1+\n_2\not \in S_{\mathcal E_1+\mathcal E_2}  \right] \\ 
&= \sum_{\m :\, \partial \m =A\Delta B} f(\m) w_\beta(\m)  \mathbb I\left[\m  \not \in S_{\mathcal E_1+\mathcal E_2}  \right] \sum_{\substack{ \mathcal N \subset \mathcal M \\ \partial \mathcal N = A} }
(-1)^{(\mathcal N  | \mathcal E_1 )}
  (-1)^{(\mathcal M  - \mathcal N  | \mathcal E_2 )}  = 0\,.
\end{align*} 
The last equality follows from the observation that when the $ S_{\mathcal E_1+\mathcal E_2}$-condition is violated, there exists a loop $\gamma \subset \mathcal M$ such that the summands change sign under the change of variable $\mathcal N \mapsto \mathcal N \Delta \gamma$.  
\end{proof}

In the present context, ${\mathcal E} \subset   E(\G)$  is given by $\mathcal L(=\ell_1+\dots+\ell_n)$. 
In this case, the $S_{\mathcal L}$-condition  is satisfied if and only if each loop supported on $\mathcal M$ encloses an even number of endpoints of the disorder lines. 
The above considerations allow one to reduce the proof of Theorem~\ref{thm:Pf_OD} to a simple topological argument.  

\begin{proof}[Proof of Theorem~\ref{thm:Pf_OD}]
Combining~\eqref{eq:onetermod} and Lemma~\ref{S_switch}, we may write the quantity $Q(x_1,\dots, x_{2n}) $ defined in \eqref{eq:suff3} as   
$$
Q(x_1,\dots, x_{2n})    =   \frac{1}{Z(G,\beta)^2 }   \sum_{\substack{\n_1: \partial \n_1 = \{x_1,\dots,x_{2n}\}  \\ \n_2: \partial \n_2 = \emptyset }}   w_\beta(\n_1)   w_\beta(\n_2)  (-1)^{ (n_1 | \mathcal{L} ) } \, R(\n_1 + \n_2 )
$$
with
$$ 
R( \m ) :=  \mathbb I\left[\m\in S_{{\mathcal L}} \right]  \,   \sum_{j=2}^{2n} (-1)^{j}  \,  
 (-1)^{(\m | \ell_1 +\ell_j )}\,(-1)^{ (\gamma_{j} | \mathcal{L} ) } \, 
\mathbb I[x_1 \stackrel{\m}{\longleftrightarrow} x_j ] \, , 
$$
where $\gamma_j$ are edge-disjoint paths  supported on $\mathcal M$ linking $x_j$ with $x_1$.   For each $j$, such a path exists under the imposed condition  $x_1 \stackrel{\m}{\longleftrightarrow} x_j $.  
The selection of the path $\gamma_j$  is not unique, but  under the $ S_{{\mathcal L}}$-condition, all choices yield the same parity factor $ (-1)^{ (\gamma_{j} | \mathcal{L} ) }$.  

The proof is completed with the help of Lemma~\ref{lem:topology} below together with the representation formula~\eqref{RC_withDisorder} for $\langle\tau_1\cdots\tau_{2n}\rangle_{\G,\beta}$.
\end{proof}

\begin{lemma}\label{lem:topology}   
Fix a planar  graph $\G$ and a collection of paired order-disorder operators as in Theorem~\ref{thm:Pf_OD}.  Then, for any  currents configuration $\m$ with $\partial {\m} = \{x_1,\dots,x_{2n}\}$ satisfying the corresponding 
$S_{\mathcal L}$-condition,  we have
\be  \label{top_claim1}
R( \m )  =  1 \, .
\ee
\end{lemma}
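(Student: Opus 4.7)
The plan is to mimic the proof of Theorem~\ref{thm:pf_boundary} by interpreting the phase factors as the effect of cutting the plane along the disorder lines. As a preliminary step, I extend the sum in the definition of $R$ by including a $j=1$ term. With the convention that $\gamma_1$ is trivial, this term equals $-1$ (since $(-1)^{(\m|2\ell_1)}=1$), so the claim $R(\m)=1$ is equivalent to
$$\sum_{j:\,x_j\in C_1} (-1)^{j}(-1)^{(\m|\ell_1+\ell_j)+(\gamma_j|\mathcal L)} \;=\;0,$$
where $C_1$ denotes the connected component of $x_1$ in $\mathcal M$. The $S_{\mathcal L}$-condition plays a first role here: it ensures that $(\gamma_j|\mathcal L) \bmod 2$ is independent of the choice of $\gamma_j\subset C_1$ from $x_1$ to $x_j$, since any two such paths differ by a union of loops $\lambda\subset\mathcal M$ each satisfying $(\lambda|\mathcal L)\equiv 0 \pmod 2$.

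I then interpret the phase $\phi(j):=(\m|\ell_1+\ell_j)+(\gamma_j|\mathcal L)$ topologically via the closed curve $L_j:=\gamma_j\cdot\ell_1\cdot\overline{\ell_j}$. Because the disorder lines are pairwise disjoint away from the grand-central, $(L_j|\ell_k)\equiv(\gamma_j|\ell_k) \pmod 2$ for $k\notin\{1,j\}$. Combined with the classical mod-$2$ formula relating the parity of crossings of a closed curve with an arc to the mod-$2$ winding numbers at the arc's endpoints (after resolving the grand-central by perturbing it in the direction from which $\ell_k$ emerges), one obtains that $(-1)^{\phi(j)}$ is the product of mod-$2$ winding factors of $L_j$ around the $x_k$'s and a purely combinatorial sign determined by the cyclic position of $j$ relative to $1$ in the star of disorder lines at the grand-central.

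The final step is a cutting construction: slit the plane along the tree $T:=\bigcup_{k=1}^{2n}\ell_k$ to obtain a simply connected region $D$ whose boundary carries the sources $x_1,\ldots,x_{2n}$ in cyclic order $1,2,\ldots,2n$. In $D$, only edges of $\mathcal M$ crossing some $\ell_k$ fail to be drawn planarly, each contributing a mod-$2$ twist; the $S_{\mathcal L}$-condition guarantees coherence of these twists, so that the structure of $C_1$ in $D$ yields a non-crossing pairing of its source indices on $\partial D$ exactly as in the planar boundary case. Consequently, the indices in $C_1\cap\{x_1,\dots,x_{2n}\}$ alternate parity along $\partial D$, just as in the proof of Theorem~\ref{thm:pf_boundary}. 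The winding factors from the previous paragraph are precisely those needed to convert the signs $(-1)^{\phi(j)}$ into this cyclic alternation, so that the identity \eqref{eq:main_boundary} applies verbatim and the signed sum vanishes.

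The main obstacle will be this last sign-matching step: verifying that the topological phases combine exactly to reproduce the alternating pattern on $\partial D$, in particular when $\gamma_j$ is not a simple curve (it may revisit vertices of the multigraph) and when wrapping edges of $\mathcal M$ must be resolved through the $S_{\mathcal L}$-condition. A careful accounting of the contributions of $(\m|\ell_1)$, $(\m|\ell_j)$, and the angular order of the lines $\ell_k$ at the grand-central will be needed to match the two sign conventions.
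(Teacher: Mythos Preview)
Your proposal shares the right opening moves with the paper --- adding the $j=1$ term, noting that the $S_{\mathcal L}$-condition makes $(\gamma_j|\mathcal L)\bmod 2$ well-defined, and forming a closed curve from $\gamma_j$ together with $\ell_1$ and $\ell_j$ --- but it does not complete the argument. You yourself flag the ``main obstacle'': the sign-matching that would turn your winding-number phases into the alternating $(-1)$'s of the boundary proof is not carried out, only asserted. The cutting-along-$T$ picture is suggestive, but the sources $x_j$ are not actually on $\partial D$ (only the $x'_j$ are near it), and you have not explained concretely how edges of $\mathcal M$ that cross the slits are reconciled with the parity you need. As written, this is a plan rather than a proof.

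The paper closes this gap by a more direct manoeuvre that avoids winding numbers and cutting entirely. Instead of tracking $L_j$ against individual points $x_k$, it pairs your loop $\mathcal C=L_j$ against a \emph{second} family of closed loops built from the remaining disorder lines \emph{together with} $\mathcal M\setminus\gamma_j$ (closing up via short arcs $x_i\leftrightarrow x'_i$ and arcs in the grand-central). Since any two closed planar loops cross an even number of times, one line of additivity gives the exact identity
\[
(-1)^{j}\,(-1)^{(\m|\ell_1+\ell_j)}\,(-1)^{(\gamma_j|\mathcal L)} \;=\; (-1)^{(\gamma_j|\m-\gamma_j)} .
\]
This eliminates the disorder lines from the problem altogether. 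The remaining sum $\sum_j (-1)^{(\gamma_j|\m-\gamma_j)}\mathbb I[x_1\leftrightarrow x_j]$ is then handled by looking at the connected components of the \emph{complement} of the lines of $\m$ in the plane: the $S_{\mathcal L}$-condition forces each such component to contain an even number of the dual sites $x'_k$, and the signs $(-1)^{(\gamma_j|\m-\gamma_j)}$ alternate along each component's boundary, reproducing the cancellation of Theorem~\ref{thm:pf_boundary}. The key conceptual difference from your outline is that the relevant planar decomposition is induced by $\m$, not by the tree of disorder lines; once the sign identity above is in hand, the disorder lines play no further role.
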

\begin{proof}
In this paragraph, we see disorder lines $\ell_i$ and paths $\gamma_j\subset\calM$ as continuous curves in the plane. Consider the loop $\calC$ (in $\bbR^2$) obtained by concatenating
\begin{itemize}[noitemsep,nolistsep]
\item the disorder line $ \ell_1$, 
\item a segment between $x'_1$ and $x_1$ (staying in the face of $x'_1$), 
\item the path $\gamma_j$, 
\item a segment between $x_j$ and $x'_j$ (staying in the face of $x'_j$), 
\item the disorder line  $\ell_j$, and finally
\item a segment connecting the endpoints of $\ell_1$ to $\ell_j$ within the grand central. 
\end{itemize}
This operation creates a loop out of two disorder lines and the path $\gamma_j$ in $\calM$ pairing $x_1$ and $x_j$. In a similar fashion, one can see the remaining disorder lines and the edges of $\calM\setminus \gamma_j$ as a family of loops by adding paths pairing the $x_i$ with $x'_i$ for every $i\in\{1,\dots,2n\}\setminus\{1,j\}$, and paths pairing the endpoints of the disorder lines in the grand central in the natural order.

Since the parity of the number of crossings between  $\calC$ and these other loops is even (as it is for any intersections between collections of loops in the plane), we deduce that the parity of the number of crossings between $\calC$ and the union of the disorder lines and the edges of $\calM\setminus\gamma_j$ is of the same parity as $j$.
Also, note that $\ell_1+\ell_j$ does not intersect $\cal L-\ell_1-\ell_j$ since the disorder lines are avoiding each other. We deduce from all these facts, using the additivity property of $(\n|\calE)$ again, that\begin{align*} (-1)^{j} \,  (-1)^{(\m | \ell_1+ \ell_j )}  (-1)^{ (\gamma_{j} | \mathcal{L} ) }&= \  (-1)^{(\ell_1+\ell_j | \m-\gamma_j )+(\gamma_j | \mathcal{L}-\ell_1-\ell_j)+(\gamma_j | \m- \gamma_j )}(-1)^{(\m | \ell_1+ \ell_j )}(-1)^{ (\gamma_{j} | \mathcal{L} ) }  \notag \\
&= \ (-1)^{(\gamma_j |  \m- \gamma_j )}  \, . 
\end{align*}
The claim made in \eqref{top_claim1}  is, therefore, equivalent to the statement that, under the $S_\calL$-condition,
\be \label{j_sum} 
1- \sum_{j=2}^{2n} (-1)^{(\gamma_j |  \m- \gamma_j )} \mathbb I[x_1 \stackrel{\m}{\longleftrightarrow} x_j ]   = 0 \, . 
\ee 

To prove that, notice that the complement in the plane of the set of lines of $\m$ naturally decomposes  into connected components (each of those points  are connected by a line which does not cross an edge with $\m(x,y) >0$). 
Now, consider the  clustering of  $\{ x_j \}$ according to the connected components of the dual sites $x'_1,\dots,x'_{2n}$ (which, by assumption, cannot lie on the partitioning lines). 

By the $S_\calL$-condition, each connected component contains an even number of sites. 
The claim thus follows from the fact that in each connected component, the signs $(-1)^{(\gamma_j |  \m- \gamma_j )}$ of those sources contributing on the left-hand side of~\eqref{j_sum} alternate when following the 
boundary of the connected component; see Fig.~\ref{fig:alt_signs}. \end{proof}

\begin{figure}[h]
\begin{center}
\includegraphics [angle=270 , width = 0.5 \textwidth]{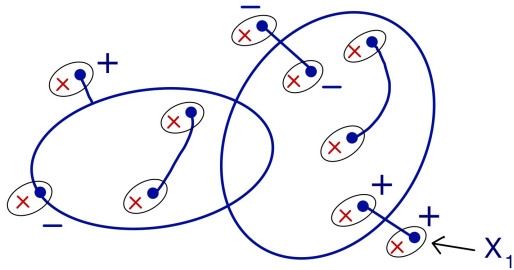} \caption{The relative signs associated with the pairing of sources by a currents configuration for which  the $S_\calL$-condition holds.  The currents $\m$,  depicted here schematically by lines,  partition the plane into connected components.   For sites $x_j$ with $x'_j$  within a given  connected component, the signs of \eqref{j_sum}  
alternate   in the order that $x_j$ are connected to the region's boundary (we adopted the convention that  the sign given to $x_1$ is $ +1 $). 
Sites not connected to $x_1$ make no contribution to the sum. 
Given the currents configuration, the relative signs in the sum on the left-hand side of \eqref{j_sum}  are not affected by the location of the  disorder lines and the choice of the $\gamma_j$.  These are, therefore, omitted here.  
\label{fig:alt_signs}
}
\end{center}
\end{figure}

\begin{remark}Note that in the above argument, the loops of the random currents play a role in partitioning the plane into domains and in ordering the sites connected to $x_1$ in an $\m$-dependent cyclic order.  Within each cluster, the relevant cancellation occurs  by a mechanism similar to that which we encountered earlier in the discussion of the boundary spin variables. \end{remark}

Let us end this section by noting that through a combination of the methods used in this section with those used in the proof of Theorem~\ref{thm:Pf_finite_range}, one may deduce the following statement.  

\begin{corollary}   
Let $J$ be a set of coupling constants for an Ising model over the  upper half-plane $\bbH$ which are:  
\begin{itemize}[noitemsep]
\item[(i)] ferromagnetic, i.e.~that $J_{x,y}\ge0$ for every $ x,y\in\mathbb \bbH$, 

\item[(ii)]  translation invariant, i.e.~that $J_{x,y}=J(x-y)$, and 

\item[(iii)]  finite-range and such that the associated graph is connected,
 
\item[(iv)] reflection invariant: $J_{0,(a,b)}=J_{0,(-a,b)}=J_{0,(a,-b)}=J_{0,(b,a)}$ for every $a,b\in \bbZ$.   \end{itemize}
 Then, the $n$-point  correlation functions of the order-disorder pairs $\{(x_j,x'_j)\}$ (defined as in Theorem~\ref{thm:Pf_OD}, but not restricted to the boundary), satisfy 
 \be \label{OD_bulk}
 \big| \langle \tau_{1}\cdots\tau_{2n}\rangle_{\Z^2,\beta_c} \ - \  \Pf_n \big( [\langle \tau_i\tau_j\rangle_{\Z^2,\beta_c}]_{1\le i<j\le 2n} \big) \big| \  \leq \  
  o(1)  \,\langle \sigma_{x_1}\cdots\sigma_{x_{2n}}\rangle_{\Z^2,\beta_c} ,
\,
\ee
where  $o(1)$ is a function of the points $x_1,\dots,x_{2n}$   which tends to zero for configuration sequences with  $\min\{|x_i-x_j|,1\le i<j\le 2n\} \to \infty$. 
\end{corollary}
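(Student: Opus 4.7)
The plan is to combine the algebraic/topological machinery used in the proof of Theorem~\ref{thm:Pf_OD} with the random-current stochastic geometry developed for Theorem~\ref{thm:Pf_finite_range}. The bulk setting on $\Z^2$ is in fact strictly simpler than the half-plane setting of Section~\ref{sec:4}, since there is no boundary to worry about. Proceeding by induction on $n\ge 1$ and using the recursive Pfaffian identity~\eqref{eq:PFcrit}, the bound~\eqref{OD_bulk} reduces to showing that the quantity
\[ \tilde Q \ := \ \langle \tau_1\cdots \tau_{2n}\rangle_{\Z^2,\beta_c} \ - \ \sum_{j=2}^{2n}(-1)^{j}\,\langle \tau_1\tau_j\rangle_{\Z^2,\beta_c}\,\big\langle \prod_{i\notin\{1,j\}}\tau_i\big\rangle_{\Z^2,\beta_c} \]
is bounded by $o(1)\,\langle\sigma_{x_1}\cdots \sigma_{x_{2n}}\rangle_{\Z^2,\beta_c}$ as $d(X):=\min_{i\ne j}|x_i-x_j|\to\infty$; the residual error produced by the induction step is controlled by Griffiths' inequality together with the pointwise bound $|\langle \tau_{i_1}\cdots\tau_{i_k}\rangle|\le \langle \sigma_{x_{i_1}}\cdots\sigma_{x_{i_k}}\rangle$. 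Running the algebraic identities~\eqref{eq:onetermod} and Lemma~\ref{S_switch} without invoking planarity then yields the random-current representation
\[ \tilde Q \ = \ \frac{1}{Z(\Z^2,\beta_c)^2}\!\sum_{\substack{\partial\n_1 = X\\ \partial\n_2 = \emptyset}}\!w_{\beta_c}(\n_1)\,w_{\beta_c}(\n_2)\,(-1)^{(\n_1|\mathcal L)}\,\mathbb I[\n_1+\n_2\in S_{\mathcal L}]\,\bigl(1-R(\n_1+\n_2)\bigr), \]
with $R(\m)$ the sum appearing in Lemma~\ref{lem:topology} (up to the overall sign dictated by the Pfaffian recursion convention) and $X=\{x_1,\dots,x_{2n}\}$.

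The next step is a finite-range analogue of Lemma~\ref{lem:topology}: for $\m\in S_{\mathcal L}$, if for every $i<j<k<\ell$ the avoided-crossing event $B(i,j,k,\ell)$ of Section~\ref{sec:proof intersection} fails for the backbone of $\n_1$, then the connected-component and alternating-sign argument of Lemma~\ref{lem:topology} yields $R(\m)=1$ verbatim. Indeed, the parity $(-1)^{(\gamma_j|\mathcal L)}$ is a mod-$2$ linking number, well-defined by the $S_{\mathcal L}$-condition and independent of the choice of $\gamma_j\subset\m$; once all coarse crossings are realised as honest intersections of the multigraph, the partition of $\{x_2,\dots,x_{2n}\}$ induced by the connected components of $\R^2\setminus\supp(\m)$ coincides with the partition induced by $\m$-connectivity, so the planar sign-alternation along each component boundary applies. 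Combining this with the triangle inequality and~\eqref{eq:spin correlations} gives
\[ |\tilde Q| \ \le\ C_n\,\langle \sigma_{x_1}\cdots\sigma_{x_{2n}}\rangle_{\Z^2,\beta_c}\!\!\sum_{i<j<k<\ell}\!\!\mathbf P^{X,\emptyset}_{\Z^2,\beta_c}\bigl[B(i,j,k,\ell)\bigr]. \]
Each probability on the right is then shown to vanish as $d(X)\to\infty$ via the scheme of Steps~1--2 of Section~\ref{sec:preliminary}: since $\Z^2$ has no boundary, every coarse intersection of the backbone paths of $\n_1$ is automatically in the bulk and Proposition~\ref{prop:bulk intersection} is not needed at all. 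At such a coarse intersection point $y$, Theorem~\ref{thm:condition} (applied at $y$ by translation invariance of $\mathbf P^\emptyset_{\Z^2}$) furnishes, with high probability, arbitrarily many strongly disjoint odd circuits surrounding $y$ in the independent current $\n_2$; the even-positive part of $\n_2$ then closes at least one such circuit through both backbone paths, forcing a genuine $\n_1+\n_2$-connection and ruling out $B(i,j,k,\ell)$.

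The main obstacle lies in the finite-range topological statement sketched above, in particular the compatibility of the alternating-sign argument with the presence of disorder lines drawn in $\R^2$. Under the complement of all the events $B(i,j,k,\ell)$ and the $S_{\mathcal L}$-condition, one can slightly perturb the embedding of $\supp(\m)$ (and, if necessary, the disorder lines $\ell_j$ within their homotopy class, using the gauge invariance discussed after~\eqref{eq:def_tau}) so as to put it into general position with respect to itself and to $\mathcal L$; this is a deterministic operation that does not affect any probabilistic estimate and after which the proof of Lemma~\ref{lem:topology} transcribes directly. Modulo this topological verification, no further stochastic-geometric input beyond Theorem~\ref{thm:condition} is required, and the argument parallels that of Theorem~\ref{thm:Pf_finite_range} with the simplification afforded by the absence of boundary.
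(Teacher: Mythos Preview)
Your proposal is correct and follows precisely the route the paper indicates: the paper does not spell out a proof of this corollary beyond the one-line remark that it follows ``through a combination of the methods used in this section with those used in the proof of Theorem~\ref{thm:Pf_finite_range}.'' Your observation that the full-plane setting on $\Z^2$ dispenses with Proposition~\ref{prop:bulk intersection} entirely (since every coarse intersection is automatically in the bulk, so Steps~1--2 of Section~\ref{sec:preliminary} together with Theorem~\ref{thm:condition} already suffice) is a genuine simplification that the paper does not make explicit.

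One small comment on the topological verification you flag as the main obstacle: the phrasing in terms of ``perturbing the embedding to general position'' is not quite the right diagnosis. The crossing counts $(\gamma_j\mid\m-\gamma_j)$ and the loop-parity identity leading to~\eqref{j_sum} are purely topological and hold for any drawing of the edges in $\R^2$; what can fail in the finite-range case is the match between the graph-connectivity indicator $\mathbb I[x_1\stackrel{\m}{\longleftrightarrow}x_j]$ and the planar partition of $\R^2\setminus\supp(\m)$ used in the alternating-sign argument. That mismatch is exactly what the absence of the events $B(i,j,k,\ell)$ rules out, so the analogue of Lemma~\ref{lem:pp} (bounding $|1-R(\m)|$ by $C_n$ times the sum of avoided-crossing indicators) goes through without any perturbation of the embedding. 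This is a cosmetic point; your overall scheme is the intended one.
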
 
However,  the arguments contained in the previous sections also indicate that at the critical point, 
$$
\big| \langle \tau_{1}\cdots\tau_{2n}\rangle_{\Z^2,\beta_c} \big|  \ll \   \langle \sigma_{x_1}\cdots\sigma_{x_{2n}}\rangle_{\Z^2,\beta_c} \, .
$$
Hence, the above bound is short of the statement that the Pfaffian structure of the bulk order-disorder correlation functions is universally shared by finite-range interactions.

\section{Discussion}
 
As mentioned before in this article, we did not manage to prove the fractal properties of the random current. In fact, we did not even prove that the phase transition is continuous (we simply showed Theorem~\ref{thm:Pf_finite_range} under both scenarios --  continuous and discontinuous).  Let us, therefore, list here a number of problems and directions in which the analysis could perhaps be extended.  

\begin{question}[continuity]
Show that the order parameter of the finite-range Ising model is continuous, i.e.~vanishes, at the critical temperature.   
\end{question}

For the nearest-neighbor interactions, such a continuity result was proven for the square lattice  by Yang~\cite{Yan52}, and for all planar lattices in the recent work~\cite{DumMar17}.  In higher dimensions, the continuity was proved for the nearest neighbor models on  $\Z^d$ for all $d>2$: for $d\ge4$ in \cite{AizFer86}, with also some results for long-range interactions, and $d=3$ in \cite{AizDumSid15}, with the help of some of  the techniques which are mentioned in Section~\ref{sec:RC}.

The next step would be to prove mixing properties and a Russo-Seymour-Welsh type statement for random currents. We believe that such a statement would have far-reaching consequences for the study of universality in the two-dimensional Ising model. Recall that ${\bf P}^\emptyset_{\bbZ^2,\beta_c}$ denotes the (single) random current measure in the full plane at $\beta_c$.

\begin{question}[polynomial mixing of currents]
Show that there exist $c,C>0$ such that for any two events $A$ and $B$ depending on edges in the box of size $n$ and outside the box of size $N\ge n$ respectively, we have
$$|{\bf P}_{\bbZ^2,\beta_c}^\emptyset[A\cap B]-{\bf P}^\emptyset_{\bbZ^2,\beta_c}[A]{\bf P}^\emptyset_{\bbZ^2,\beta_c}[B]|\le C\big(\tfrac nN\big)^c\cdot{\bf P}^\emptyset_{\bbZ^2,\beta_c}[A]{\bf P}^\emptyset_{\bbZ^2,\beta_c}[B].$$ 
\end{question}

This polynomial mixing is closely related to the following crossing estimate, which we derive in this article for infinitely many values of $n$ only (see Definition~\ref{def:crossed} for a definition of being crossed).

\begin{question}[crossing estimates of currents]
Show that there exists $c>0$ such that for every $n$, we have
$${\bf P}^\emptyset_{\bbZ^2,\beta_c}[\text{the box of size $n\times 2n$ is crossed in the hard direction by a path of positive current}]\ge c.$$ 
\end{question}

Last but not least, let us mention that the most natural question left open by this article is to replace the $o(1)\langle \sigma_{x_1}\cdots\sigma_{x_{2n}}\rangle_{\bbZ^2,\beta_c}$ in \eqref{OD_bulk} by $o(1)|\langle\tau_1\cdots\tau_{2n}\rangle_{\bbZ^2,\beta_c}|$.
\begin{question}
Extend the Pfaffian relation of Theorem~\ref{thm:Pf_OD} to finite-range Ising models on $\bbZ^2$.
\end{question}

\appendix 
\bigskip
\noindent {\Large \bf Appendix}

\section{Extension to finite-range models of the Messager Miracle-Sol\'e inequality}

In Sections~\ref{sec:4} and \ref{sec:5}, use is made of a generalization of the Messager-Miracle-Sol\'e inequality \cite{Heg77,MesMir77,Sch77} to finite-range interactions.  The original inequality states that for Ising models on graphs with a reflection symmetry $\cal R$, and correspondingly symmetric nearest-neighbor interactions,  the correlation function $\langle \sigma_A \sigma_B\rangle $ at sites which are separated by a reflection plane, can only increase when $B$ is replaced by its image, $\mathcal{R}(B)$, on the side of the reflecting plane containing $A$, i.e.
\be 
\langle \sigma_A \sigma_B\rangle  \ \leq \ \langle \sigma_A \sigma_{\mathcal R(B)}\rangle. 
\ee   The example of principal interest here is the model on $\bbZ^2$ or on the half-plane $\mathbb H$, with the reflections $\mathcal{R}: (a, L - b) \mapsto (a, L + b)$  for some fixed $L$.   Following is the  statement which is used here.


\begin{theorem}\label{thm:mono}  
Let $\G$ be the graph $\Z^d$, or the corresponding half spaces $\Z\times  \Z_+^{d-1}$, and 
$J$ a set of coupling constants for an Ising model which are:  

\begin{enumerate}[noitemsep]
\item[(i)]  ferromagnetic ($J_{x,y}\ge0$ $\forall  x,y\in \G$), 
\item[(ii)]  translation invariant, 
\item[(iii)] finite-range and such that the associated graph is connected, 
\item[(iv)] reflection invariant:  $ J_{A,B} = J_{\mathcal{R}_L(A), \mathcal{R}_L(B)}$ for any $ L \in \mathbb{Z} $, where $\mathcal{R}_L $ stands for the reflection 
$\mathcal{R}_L(L-u,\tilde x):=(L+u,\tilde x)$ and we write 
$x  \in \Z^d$ as $(x_1,\tilde x)$.
\end{enumerate} 
Then, there exists  $ C>0$ such that for every point $x=(x_1,\tilde x) \in \G$  with 
$x_1 >L$, we have
\be\label{eq:MMSh}
 \langle\sigma_0\sigma_x\rangle_{\G,\beta}\ \le \ C \langle\sigma_0\sigma_{\mathcal{R}_L(x)}\rangle_{\G,\beta}.
\ee 
\end{theorem}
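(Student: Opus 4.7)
The plan is to extend the classical Messager--Miracle--Sol\'e (MMS) proof, which rests on reflection positivity (RP), to the finite-range setting by compensating for the cross-plane bonds that break strict RP. I would set $\theta := \mathcal R_L$, $\Omega_\pm := \{x \in \G : \pm(x_1 - L - \tfrac12) > 0\}$, and decompose $-\beta H = -\beta H_+ - \beta H_- + \mathcal I$, where $\mathcal I$ collects the interactions across the plane. By~(iv), $H_- = \theta H_+$ and each crossing bond $\{y,z\}$ with $y\in\Omega_-$, $z\in\Omega_+$ pairs with the reflected bond $\{\theta y,\theta z\}$. In the nearest-neighbor case $-\beta J\,\sigma_y\sigma_{\theta y} = -\beta J\,\sigma_y(\theta\sigma_y)$ is already of the RP form $-C\,\theta C$ with $C\in\mathcal A_-$; the obstruction for $R>1$ is that the combined pair contribution $\beta J_{y,z}(\sigma_y\sigma_z + \sigma_{\theta y}\sigma_{\theta z})$ is not.

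The key algebraic identity I would exploit is
\[
\sigma_y\sigma_z + \sigma_{\theta y}\sigma_{\theta z} \;=\; A_{y,z}\,\theta A_{y,z} \;-\; \bigl(\sigma_y\sigma_{\theta y} + \sigma_z\sigma_{\theta z}\bigr),
\]
where $A_{y,z}:=\sigma_y + \sigma_{\theta z} \in \mathcal A_-$ and $\theta A_{y,z} = \sigma_{\theta y} + \sigma_z \in \mathcal A_+$. This suggests introducing the enlarged, still-ferromagnetic Hamiltonian
\[
\tilde H \;:=\; H \;-\; \sum_{\{y,z\} \in \mathcal E_c} J_{y,z}\bigl(\sigma_y\sigma_{\theta y} + \sigma_z\sigma_{\theta z}\bigr),
\]
whose crossing contribution $\sum \beta J_{y,z}\,A_{y,z}\,\theta A_{y,z}$ is manifestly of RP form. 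Since the added ``mirror'' couplings are ferromagnetic, Griffiths' second inequality gives $\langle \sigma_0\sigma_x\rangle_H \le \langle \sigma_0\sigma_x\rangle_{\tilde H}$, while the classical MMS Cauchy--Schwarz estimate in the duplicated $(\sigma,\sigma')$-system with $\tau_z := \sigma_z\sigma'_z$, applied to $\tilde H$, yields $\langle \sigma_0\sigma_x\rangle_{\tilde H} \le \langle \sigma_0\sigma_{\theta x}\rangle_{\tilde H}$, using that $0,\theta x \in \Omega_-$ while $x \in \Omega_+$.

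The main obstacle is the reverse comparison $\langle \sigma_0\sigma_{\theta x}\rangle_{\tilde H} \le C\,\langle \sigma_0\sigma_{\theta x}\rangle_H$ with $C=C(\beta,J,R)$ uniform in $L$ and $x$. The total mass of added mirror couplings is infinite in infinite volume, so a pointwise Dobrushin-type bound on $\tilde H - H$ diverges, and Griffiths' inequality only provides the wrong direction. My plan is to pass to the random current representation of Section~\ref{sec:RC}: expand $\langle \sigma_0\sigma_{\theta x}\rangle_{\tilde H}$ as a sum over currents $\mathbf n$ on the enlarged edge-set with sources $\{0,\theta x\}$, split $\mathbf n = \mathbf n^{(o)} + \mathbf n^{(m)}$ into its original and mirror parts, and exploit that $0$ and $\theta x$ both lie in $\Omega_-$, so that $\mathbf n^{(o)}$ may carry both sources while $\mathbf n^{(m)}$ is sourceless and confined to the $R$-slab around the plane. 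A switching-lemma computation combined with the translation invariance and finite range of $J$ should then telescope the mirror-current contributions in the numerator and the denominator into a single constant $C(\beta,J,R)$, uniform in $(x,L)$. This is the technical core of the argument and the point at which hypotheses~(ii) and~(iii) enter decisively.
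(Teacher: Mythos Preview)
Your proposal has a genuine gap at the reverse comparison $\langle\sigma_0\sigma_{\theta x}\rangle_{\tilde H} \le C\,\langle\sigma_0\sigma_{\theta x}\rangle_H$, which you correctly flag as the crux but do not actually establish. The claim that $\mathbf n^{(m)}$ is sourceless is not correct: the constraint $\partial(\mathbf n^{(o)}+\mathbf n^{(m)})=\{0,\theta x\}$ only forces $\partial\mathbf n^{(o)}=\{0,\theta x\}\Delta S$ and $\partial\mathbf n^{(m)}=S$ for an \emph{arbitrary} even subset $S$ of slab vertices. Factoring the current sum accordingly gives
\[
\langle\sigma_0\sigma_{\theta x}\rangle_{\tilde H}\;=\;\frac{\sum_S M_S\,\langle\sigma_0\sigma_{\theta x}\,\sigma_S\rangle_H}{\sum_S M_S\,\langle\sigma_S\rangle_H},
\]
where $M_S$ is the mirror partition function with sources $S$. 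Bounding this by $C\,\langle\sigma_0\sigma_{\theta x}\rangle_H$ uniformly in $x,L$ would require $\langle\sigma_0\sigma_{\theta x}\,\sigma_S\rangle_H \le C\,\langle\sigma_0\sigma_{\theta x}\rangle_H\langle\sigma_S\rangle_H$ uniformly over all even $S$ in an \emph{infinite} slab --- a decorrelation statement of at least the same strength as the theorem itself, and not something translation invariance and finite range alone deliver. There is no visible ``telescoping'': the mirror couplings are spread along the whole hyperplane, not localised near $0$ or $\theta x$, so neither the numerator nor the denominator factors in any useful way.

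The paper's argument avoids this difficulty by never modifying the Hamiltonian. It works directly in the random current representation: decompose $\n=\n_-+\n_0+\n_+$ according to whether edges lie left of, across, or right of $\mathcal P_L$, and reflect the multigraph of $\n_+$ onto the left half-space. Because the range is $R$, the backbone from $x$ to $0$ cannot skip over the entire stack of planes $\mathcal P_{L},\ldots,\mathcal P_{L+R-1}$, so the reflected multigraph must contain a path from $\mathcal R_{L+j}(x)$ to $\mathcal P_{L+j}$ for \emph{some} $j\in\{0,\ldots,R-1\}$. A single application of the switching principle for that $j$ yields
\[
\langle\sigma_0\sigma_x\rangle_{\G,\beta}\;\le\;\sum_{j=0}^{R-1}\langle\sigma_0\sigma_{\mathcal R_{L+j}(x)}\rangle_{\G,\beta}\, ,
\]
and the constant $C$ then enters only through the \emph{local} Griffiths bound $\langle\sigma_0\sigma_{\mathcal R_{L+j}(x)}\rangle \le \langle\sigma_0\sigma_{\mathcal R_L(x)}\rangle\,/\,\langle\sigma_{\mathcal R_{L+j}(x)}\sigma_{\mathcal R_L(x)}\rangle$, which is uniform since $\|\mathcal R_{L+j}(x)-\mathcal R_L(x)\|\le 2(R-1)$. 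The whole argument is a few lines, and the only place the constant appears is in bridging a gap of at most $2R$ lattice steps --- precisely the kind of local correction your approach was hoping to isolate but could not.
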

The theorem and its proof naturally extend  to reflections of $\Z^d$  with respect to other symmetry hyperplanes.  In two dimensions, this means reflections with respect to lines passing through lattice sites which are   parallel to one of the diagonals $x_1\pm x_2 =0 $. 

Our proof involves  a random current reflection principle, which also provides an alternative derivation of the original Messager-Miracle-Sol\'e theorem for the nearest-neighbor interactions (for which $C(\beta) =1$).   
The reflection part of the argument can be summarized by the following statement.

\begin{lemma}\label{lem:extendMMS}
In the situation of Theorem~\ref{thm:mono}, for all $ x = (x_1,\tilde x) $ with $ x_1 > L $, we have
\be \label{eq:MMSR}
\langle\sigma_0\sigma_x\rangle_{G,\beta}\le \sum_{j=0}^{R-1} \langle\sigma_0\sigma_{\mathcal{R}_{L+j}(x)}\rangle_{G,\beta} \, .
\ee
\end{lemma}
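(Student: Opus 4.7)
My plan is to work within the random current representation, writing
\[
\langle\sigma_0\sigma_x\rangle_{G,\beta}=Z^{\{0,x\}}/Z^{\emptyset},\qquad Z^A:=\sum_{\n:\partial\n=A}w(\n),
\]
and reducing~\eqref{eq:MMSR} to the combinatorial bound $Z^{\{0,x\}}\le\sum_{j=0}^{R-1}Z^{\{0,\mathcal{R}_{L+j}(x)\}}$. The strategy is, for each $\n$ with $\partial\n=\{0,x\}$, to single out an index $j(\n)\in\{0,\ldots,R-1\}$ and a weight-preserving injection $\n\mapsto\n'$ into currents with $\partial\n'=\{0,\mathcal{R}_{L+j(\n)}(x)\}$; summing the resulting identity and partitioning by $j(\n)$ then yields the lemma.

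To define $j(\n)$, I would fix a canonical path $\gamma(\n)$ from $0$ to $x$ in the multigraph $\mathcal{N}$, e.g.\ the backbone relative to the lexicographic order $\prec$ introduced in Section~\ref{sec:4.1}. Since $0_1\le L<x_1$, the path contains a first edge $e^{\ast}=\{a^{\ast},b^{\ast}\}$ with $a^{\ast}_1\le L<b^{\ast}_1$, and the finite-range hypothesis forces $b^{\ast}_1-a^{\ast}_1\le R$. The key choice is then
\[
j(\n):=\begin{cases} b^{\ast}_1-L & \text{if } b^{\ast}_1\le L+R-1, \\ 0 & \text{if } b^{\ast}_1=L+R \text{ (which forces } a^{\ast}_1=L\text{)};\end{cases}
\]
in both cases, at least one endpoint of $e^{\ast}$ lies on the reflection plane $P_{j(\n)}:=\{z_1=L+j(\n)\}$. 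The reflection $\theta:=\mathcal{R}_{L+j(\n)}$ therefore pivots $e^{\ast}$ around a point on $P_{j(\n)}$, and by the coupling symmetry $J(a_1,\tilde a)=J(-a_1,\tilde a)$ implied by~(iv) it maps the pivoted edge, as well as any edge whose two endpoints lie strictly to the right of $P_{j(\n)}$, to an edge of identical coupling. Applying $\theta$ to the sub-path $\gamma_+$ of $\gamma(\n)$ running from the pivot vertex to $x$ then transforms the source at $x$ into $\theta(x)=\mathcal{R}_{L+j(\n)}(x)$ while leaving $0$ unchanged.

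The main obstacle is handling edges of $\n$ other than $e^{\ast}$ that straddle $P_{j(\n)}$: such edges may appear inside loops of $\mathcal{N}$ or in the portion of $\gamma(\n)$ that returns to the left half-space and later crosses back, and a naive global reflection would not preserve their coupling constants. The plan is to restrict the reflection to $\gamma_+$ alone, leave every other edge of $\n$ in place, and use the switching principle (Lemma~\ref{lem:switch}) together with Griffiths' inequality to repackage and upper-bound the contribution of the remaining sourceless portion of $\n\setminus\gamma_+$ by the full partition function $Z^{\{0,\mathcal{R}_{L+j(\n)}(x)\}}$ on $G$. Injectivity of $\n\mapsto(j(\n),\n')$ then follows from being able to read off $\gamma_+$ inside $\n'$, and summing over $j\in\{0,\ldots,R-1\}$ yields~\eqref{eq:MMSR}. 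The most delicate technical point will be making this Griffiths-type comparison rigorous once $\gamma_+$ has been singled out inside the output current.
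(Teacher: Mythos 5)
Your overall skeleton---random currents, the pigeonhole over the $R$ planes $\mathcal{P}_L,\dots,\mathcal{P}_{L+R-1}$ forced by the finite range, and the reflection symmetry (iv)---matches the paper's, but the surgery you propose at the crucial step has a genuine gap. Writing $\theta:=\mathcal{R}_{L+j(\n)}$, reflecting only the backbone segment $\gamma_+$ while leaving the rest of $\n$ in place is not a weight-preserving operation on currents: when one unit of current is moved from an edge $e$ to its mirror $\theta(e)$, the coupling factors cancel by (iv), but the factorials in $w(\n)=\prod_e(\beta J_e)^{\n(e)}/\n(e)!$ do not---the weight gets multiplied by $\n(e)/(\n(\theta(e))+1)$, which can be smaller or larger than $1$---so neither ``weight-preserving'' nor even ``weight-non-decreasing plus injective'' is available. (Incidentally, couplings are not the issue at all: by (iv) the reflection preserves the coupling of \emph{every} edge, straddling ones included; the obstruction is the recombination of multiplicities and the leftover current.) Your fallback---repackaging the remainder $\n\setminus\gamma_+$ via switching and Griffiths---runs into a direction problem: once a backbone piece is singled out, the remaining sum is a correlation in a depleted graph, and Griffiths (as in \eqref{eq:ki}) only says that depletion \emph{decreases} correlations. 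This lets you bound $\langle\sigma_0\sigma_x\rangle_{G,\beta}$ from above by sums of the form $\rho_G(\gamma_-)\langle\sigma_v\sigma_{\mathcal{R}_{L+j}(x)}\rangle_{G,\beta}$ (using the lattice symmetry at the pivot $v\in\mathcal{P}_{L+j}$), but there is no inequality in the needed direction to reassemble such sums into $\langle\sigma_0\sigma_{\mathcal{R}_{L+j}(x)}\rangle_{G,\beta}$: reinstalling the depletion makes the target smaller, not larger, and a comparison of $\rho_G(\gamma_-\circ\gamma_+)$ with $\rho_G(\gamma_-\circ\theta(\gamma_+))$ is exactly the kind of statement Griffiths does not provide. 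The injectivity claim (reading off $\theta(\gamma_+)$ inside the output current) is likewise unsubstantiated once the reflected path is superimposed on untouched multiplicities.

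The paper's proof avoids all of this by never selecting a path to be moved: for a fixed plane $\mathcal{P}_{L+j}$ it splits the edges into $E_-,E_+,E_0$, reflects the \emph{entire} restriction $\n_+$ of the current to $E_+$, forms the multigraph union with $\calN_-$, and applies the switching principle (Lemma~\ref{lem:switch}) along a path from $\mathcal{R}_{L+j}(x)$ to the plane, whose existence is the event $\mathcal{E}_{L+j}$; the multigraph bookkeeping \eqref{comb}/\eqref{eq:key} handles the recombination of multiplicities exactly, no depleted-graph comparison is needed, and one obtains $\langle\sigma_0\sigma_x\rangle_{G,\beta}\,\mathbf{P}^{\{0,x\}}_{G,\beta}[\mathcal{E}_{L+j}]\le\langle\sigma_0\sigma_{\mathcal{R}_{L+j}(x)}\rangle_{G,\beta}$ for each $j$; summing over $j$ gives \eqref{eq:MMSR} because $\sum_{j=0}^{R-1}\mathbf{P}^{\{0,x\}}_{G,\beta}[\mathcal{E}_{L+j}]\ge1$ (your first-crossing-edge observation is exactly why these events cover). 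A further small point: your partition functions $Z^A$ are ill-defined on the infinite lattice; the paper runs the argument in finite volume, symmetric under one reflection at a time, and passes to the limit for each $j$ separately.
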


\begin{proof} 
Given a hyperplane $ \mathcal{P}_L := \{ (L,\tilde x) : \, \tilde x \in \mathbb{Z}_+^{d-1} \} $, which passes through lattice sites, let us partition the edge-set $E(\G)$ into three disjoint sets 
\begin{align*}
E_-(\mathcal{P}_L) &:=\{\{u,v\}  : \text{none of the sites is to the right of $\mathcal{P}_L$ and at least one is on the left} \}\, , \notag \\
E_+(\mathcal{P}_L) &:=\{\{u,v\} : \text{none of the sites is to the left of $\mathcal{P}_L$ and at least one is on the right} \}\, ,  \notag \\
E_{0}(\mathcal{P}_L) &:= E(G) \backslash ( E_-(\mathcal{P}_L) \cup E_+(\mathcal{P}_L)) \,  . 
\end{align*}

In  the case of the nearest-neighbor interaction,  any path (along edges with $J_{u,v} \neq 0$) starting on the right of  $\mathcal{P}_L$ and ending to the left of it will at some point hit the plane $\calP_L$.  Let us consider this case first. 

Assume for a moment that $G$ is finite and symmetric under the reflection $\calR_L$.
Write the correlation in terms of  random currents $\n$  (cf.~\eqref{eq:spin correlations}). Then,  decompose the currents $\n$ into $ \n_-$, $\n_+$, and $\n_0$ according to the above decomposition of $E(\G)$ -- $\n_\alpha$ is the restriction of $\n$ to edges in $E_\alpha(\calP_L)$.   For the purpose of applying the switching principle,  consider  the multigraph $\calM$ obtained by taking the union of the multigraph $\calN_-$ associated with $\n_-$ and the reflection with respect to $\mathcal{P}_L$ of the multigraph $\calN_+$ associated with $\n_+$. 

Let  $x =\{x_1, \tilde x\}$ be a vertex of $\G$ with $x_1 >  L$.  Since any current contributing to $ \langle\sigma_0\sigma_{x} \rangle  $ has sources $ \partial \n = \{ 0, x \} $, the graph $\calM$ must contain a path  
from $\mathcal{R}_L(x)$ to a site on $\mathcal{P}_L$, an event which we denote by $\mathcal{E}_L(\calM)$.  Let us denote by $\calK_L (\calM)$ the first  such path according to some arbitrary ordering of the finite number of possibilities.  Using the switching principle with respect to $\calK _L(\calM)$, we find that
\begin{align}\label{eq:arguMMS}
\sum_{\n:\partial \n = \{ 0, x \} } w(\n) &  =   \sum_{\n_{0} } w(\n_{0}) \mkern-20mu \sum_{\substack{ \n_-,\n_+: \\ \partial(\n_-+\n_+)=\{0,x\}\Delta \partial \n_{0} }}w(\n_-)w(\n_+) \;  \mathbbm I[\mathcal{E}_L(\calM)] \notag \\
&= \sum_{\n_{0} } w(\n_{0}) \mkern-20mu \sum_{\substack{ \n_-,\n_+:\\ \partial(\n_-+\n_+)=\{0,x\}\Delta \partial \n_{0}}} w(\n_-)w_\beta(\n_+) \;  \mathbbm I[\mathcal{E}_L(\calM)] \notag  \\
&\le\sum_{\n:\partial\n=\{0,\mathcal{R}_L(x)\}}w(\n).
\end{align}
Hence, for the nearest-neighbor case, we get
\be
\langle\sigma_0\sigma_x\rangle_{\G,\beta}\ \le \  \langle\sigma_0\sigma_{\mathcal{R}_L(x)}\rangle_{\G,\beta} \, ,
\ee 
which is the original Messager Miracle-Sol\'e inequality (one may pass to the limit to obtain the result on $\bbZ^d$ or $\bbZ\times\bbZ_+^{d-1}$).

In the finite-range case, the above argument has to be modified since a path from $x$ to $0$ may skip over any a priori specified line.  However, no path along edges with $J_{u,v} \neq 0$ may avoid the stack of lines  $\mathcal{P}_{L+j}$ with $j = \{0,\dots, R-1\}$ so that $\mathcal E_{L+j}(\calM)$ must occur for some $0\le j<R$. 
A slight technical problem arises from the fact that $G$ had to be finite in the previous reasoning to ensure that weighted sums over currents were well-defined. Here, one cannot consider a graph $G$ which is simultaneously symmetric under the reflections $\calR_{L+j}$ for $0\le j<R$. Nevertheless, note that the previous reasoning, applied to a fixed $j$, implies that 
$$\langle\sigma_0\sigma_x\rangle_{\G,\beta}{\bf P}_{G,\beta}^{\{0,x\}}[\calE_{L+j}(\calM)]\le \langle\sigma_0\sigma_{\mathcal{R}_{L+j}(x)}\rangle_{\G,\beta}.$$
Taking the limit as $G$ tends to $\bbZ\times\bbZ_+^{d-1}$ or $\bbZ^d$, one deduces the corresponding inequality in infinite-volume.
The above argument, carried for each $j$ separately then leads to
\begin{align*}
\langle\sigma_0\sigma_x\rangle_{\G,\beta}&  \leq    
\langle\sigma_0\sigma_x\rangle_{\G,\beta} \Big(\sum_{j=0}^{R-1} {\bf P}_{G,\beta}^{\{0,x\}}[\calE_{L+j}(\calM)]\Big) \le  \sum_{j=0}^{R-1}  \langle\sigma_0\sigma_{\mathcal{R}_{L+j}(x)}\rangle_{\G,\beta}
\end{align*}
from which \eqref{eq:MMSR} follows. 
\end{proof} 

The above result will be combined with a ``local'' estimate, whose purpose is to bridge the gap between \eqref{eq:MMSR} and \eqref{eq:MMSh}.

\begin{proof}[Proof of Theorem~\ref{thm:mono}]
Starting with Lemma~\ref{lem:extendMMS}, we get
\begin{eqnarray}
\langle\sigma_0\sigma_x\rangle_{G,\beta} &\leq&  \sum_{j=0}^{R-1}\langle\sigma_0\sigma_{\mathcal{R}_{L+j}(x)}\rangle_{G,\beta} \notag \\ 
&\leq& \sum_{j=0}^{R-1}\frac{\langle\sigma_0\sigma_{\mathcal{R}_L(x)}\rangle_{G,\beta}}{\langle\sigma_{\mathcal{R}_{L+j}(x)}\sigma_{\mathcal{R}_L(x)}\rangle_{G,\beta}}\ \le \  C \, \langle\sigma_0\sigma_{\mathcal{R}_L(x)}\rangle_{G,\beta} \, .
\end{eqnarray} 
Here, the second inequality follows from Griffiths' inequality and the last by setting
 $$ C=C(\beta,R,J,d):=   \frac{R}{ \inf\{ \langle\sigma_{0}\sigma_{y}\rangle_{G,\beta} \, : \, \|y\| \leq R \}} ,$$  which is finite due to the assumed connectedness of the graph $G$. 
\end{proof}

\section{Gluing Lemmas}

Below, consider an ordering $\prec$ of the vertices of $G$ which is such that the distance to the origin is non-increasing in the ordering (this will be relevant only for Theorem~\ref{thm:glueCirc}). We order directed paths of $G$ by taking the corresponding lexicographical order, as in Section 2.3 of \cite{DumSidTas16} (notice that here, paths are defined to be edge-avoiding, while \cite{DumSidTas16} uses vertex-avoiding paths, but the argument applies exactly in the same way). 

We will repeatedly use the following combinatorial lemma. \begin{lemma}
\label{lem:combi}
Fix $\beta>0$. There exists a constant $C_0=C_0(J,\beta)>0$ such that the following holds.  Consider two events $E$ and $F$ and a map $\Phi$ from $E$ to $F$. We assume that there exists $s>0$ such that $\Phi(\omega)$ differs from $\omega$ in only $s$ edges, and 
  $|\Phi^{-1}(\omega')|\le 2^s$.
    Then,
  \begin{equation}
     \phi_{\bbZ^2,\beta}[E]\leq C_0^s \phi_{\bbZ^2,\beta}[F].
  \end{equation}
\end{lemma}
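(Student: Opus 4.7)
The statement is a finite-energy estimate dressed up to accommodate maps $\Phi$ whose set of flipped edges and fiber sizes may both depend on $\omega$. The plan is to separate these two pieces: the cost of flipping up to $s$ edges, to be absorbed by a finite-energy bound for $\phi_{\bbZ^2,\beta}$; and a combinatorial bound on over-counting, to be absorbed by the fiber hypothesis $|\Phi^{-1}(\omega')|\le 2^s$.

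First, I would record the finite-energy property. For any single edge $e=\{x,y\}\in E(G)$, conditional on $\omega|_{E(G)\setminus\{e\}}$, the law of $\omega(e)$ is Bernoulli with parameter either $p_{x,y}$ or $p_{x,y}/(2-p_{x,y})$, according to whether the endpoints are already connected in $\omega|_{E(G)\setminus\{e\}}$. Since $J$ is finite-range and translation invariant, $p_{x,y}=1-e^{-2\beta J_{x,y}}$ takes only finitely many values, all strictly between $0$ and $1$; hence there exists $c=c(J,\beta)\in(0,1]$ such that
\[
\phi_{\bbZ^2,\beta}\bigl[\omega(e)=a \,\big|\, \omega|_{E(G)\setminus\{e\}}\bigr]\ \ge\ c, \qquad a\in\{0,1\},
\]
uniformly in $e$, $a$ and the conditioning. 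Iterating via the DLR equations, for every finite edge set $S$ with $|S|\le s$ and every $\zeta\in\{0,1\}^S$,
\[
\phi_{\bbZ^2,\beta}\bigl[\omega|_S=\zeta \,\big|\, \omega|_{E(G)\setminus S}\bigr]\ \ge\ c^{|S|}\ \ge\ c^{s}.
\]

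Next, I would decompose $E$ according to where and how $\Phi$ flips edges. Set $D(\omega):=\{e:\omega(e)\ne\Phi(\omega)(e)\}$, so $|D(\omega)|\le s$. For each finite edge set $S$ and each $\zeta\in\{0,1\}^S$, define
\[
E_{S,\zeta}:=\{\omega\in E: D(\omega)=S\ \text{and}\ \Phi(\omega)|_S=\zeta\}, \qquad F_{S,\zeta}:=\Phi(E_{S,\zeta})\subset F.
\]
On $E_{S,\zeta}$, the restriction $\omega|_S$ is forced to equal $1-\zeta$, and the action of $\Phi$ simply replaces $\omega|_S$ with $\zeta$ while leaving $\omega|_{E(G)\setminus S}$ untouched. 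Applying the finite-energy bound displayed above (first to $\omega|_S=1-\zeta$, then to $\omega|_S=\zeta$, both conditioned on $\omega|_{E(G)\setminus S}$) yields
\[
\phi_{\bbZ^2,\beta}[E_{S,\zeta}]\ \le\ c^{-s}\,\phi_{\bbZ^2,\beta}[F_{S,\zeta}].
\]

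Finally, I would sum over $(S,\zeta)$ and invoke the fiber hypothesis. The $E_{S,\zeta}$ are pairwise disjoint, so $\phi_{\bbZ^2,\beta}[E]=\sum_{S,\zeta}\phi_{\bbZ^2,\beta}[E_{S,\zeta}]$. On the other hand, for each $\omega'\in F$, the condition $\omega'\in F_{S,\zeta}$ is equivalent to the existence of $\omega\in\Phi^{-1}(\omega')$ with $D(\omega)=S$ and $\omega'|_S=\zeta$; the latter equation determines $\zeta$ from $(S,\omega')$, so the number of pairs $(S,\zeta)$ with $\omega'\in F_{S,\zeta}$ is at most $|\Phi^{-1}(\omega')|\le 2^s$. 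Consequently
\[
\phi_{\bbZ^2,\beta}[E]\ \le\ c^{-s}\sum_{S,\zeta}\phi_{\bbZ^2,\beta}[F_{S,\zeta}]\ \le\ (2/c)^s\,\phi_{\bbZ^2,\beta}[F],
\]
which is the claim with $C_0:=2/c$. I do not foresee a substantial obstacle: the only mild technical point is to justify finite energy directly for $\phi_{\bbZ^2,\beta}$, which follows from weak convergence of the finite-volume measures together with the fact that $c$ depends only on $(J,\beta)$. Equivalently, one can first run the entire argument in a finite volume $G_n\uparrow\bbZ^2$ with free boundary conditions, where $c$ is $n$-independent, and then pass to the limit on cylinder events $E,F$ that arise in the applications of this lemma.
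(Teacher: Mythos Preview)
Your proof is correct and follows the same approach the paper invokes: the paper does not spell out a proof but simply notes that the argument of \cite[Lemma~7]{DumSidTas16} carries over verbatim because the random-cluster measure has the finite-energy property. Your decomposition over $(S,\zeta)$, the finite-energy comparison $\phi[E_{S,\zeta}]\le c^{-s}\phi[F_{S,\zeta}]$, and the fiber bound controlling $\sum_{S,\zeta}\mathbb I[\omega'\in F_{S,\zeta}]\le 2^s$ are exactly the ingredients of that argument, written out in full.
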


\begin{proof}
The lemma was stated and proved in \cite{DumSidTas16} in the context of Bernoulli percolation. 
 The proof uses that the measure has the finite-energy property \cite[(3.4)]{Gri06}, i.e.~the property that conditioned on the other edges, an edge is open or closed with probability bounded away from 0 and 1. Since the random-cluster measure $\phi_{\bbZ^2,\beta}$ also satisfies the finite-energy property, the following lemma can be proved exactly in the same way as \cite[Lemma~7]{DumSidTas16}.
\end{proof}

\begin{figure}[htbp]
  \centering
  \includegraphics[width=1\linewidth]{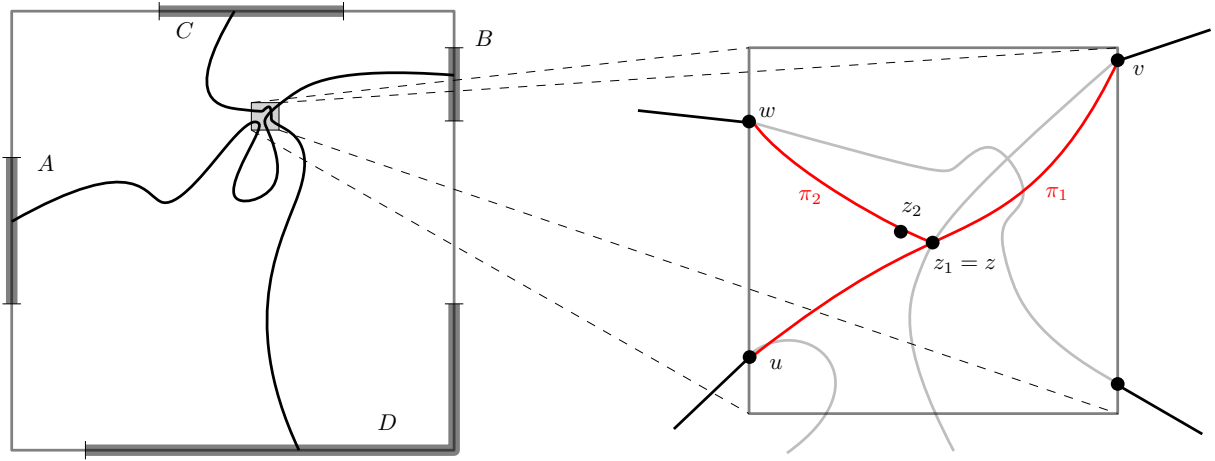}
   \caption{A picture of the configuration near $z$. }
  \label{fig:surgery}
\end{figure}

\begin{theorem}
  \label{thm:gluepaths} Fix $\beta>0$. There exists a constant $c_{\mathrm{glue}}>0$ such that the
  following holds. Let $S=[a,b]\times[c,d]$ with $b-a $ and $d-c$ large enough. Let $A,B,C,D$ be
  four disjoint subsets of $\partial S$   such that 
  any continuous path from $A$ to $B$ in $S$ intersects 
  any continuous path from $C$ to $D$ in $S$. Then,
\begin{equation}
\label{eq:190}
       \phi_{\mathbb Z^2,\beta}[A\lr[S] C] \ge c_{\mathrm{glue}}\, \phi_{\mathbb Z^2,\beta}[A\lr[S]B] \,\phi_{\mathbb Z^2,\beta}[C\lr[S]D].
  \end{equation}
\end{theorem}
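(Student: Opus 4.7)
The plan is to combine three ingredients: the FKG inequality, a topological argument, and a localized finite-energy modification quantified by Lemma~\ref{lem:combi}. First, since the events $\{A\lr[S] B\}$ and $\{C\lr[S] D\}$ are both increasing in $\omega$, the FKG inequality~\eqref{eq:FKG} yields
\begin{equation*}
\phi_{\mathbb{Z}^2,\beta}\big[\{A\lr[S] B\}\cap\{C\lr[S] D\}\big]\ \ge\ \phi_{\mathbb{Z}^2,\beta}[A\lr[S] B]\cdot\phi_{\mathbb{Z}^2,\beta}[C\lr[S] D].
\end{equation*}
It therefore suffices to produce a constant $c_{\mathrm{glue}}=c_{\mathrm{glue}}(J,\beta)>0$ such that $\phi_{\mathbb{Z}^2,\beta}[A\lr[S] C]\ge c_{\mathrm{glue}}\,\phi_{\mathbb{Z}^2,\beta}[\{A\lr[S] B\}\cap\{C\lr[S] D\}]$.

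On the intersection event, let $\gamma_{AB}(\omega)$ be the minimal (in the induced $\prec$ ordering on paths) open continuous path from $A$ to $B$ inside $S$, and similarly $\gamma_{CD}(\omega)$. The topological hypothesis forces these two continuous curves to intersect in $\mathbb{R}^2$, so there is a canonical first intersection point $z=z(\omega)$ along $\gamma_{AB}$. This $z$ lies at the geometric crossing of two edges $e_{AB}\in\gamma_{AB}$ and $e_{CD}\in\gamma_{CD}$, each of Euclidean length at most $R$; consequently, their four endpoints all lie in a box of size $O(R)$ around $z$. Since the associated graph is connected and finite-range, there exists (deterministically as a function of $z$) a set $\mathcal{S}(z)$ of at most $s_0=s_0(J)$ edges inside $\Lambda_{2R}(z)$ whose opening guarantees that the four endpoints of $e_{AB}\cup e_{CD}$ are joined inside $\Lambda_{2R}(z)$. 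Define the surgery map $\Phi$ by opening every edge of $\mathcal{S}(z)\cup\{e_{AB},e_{CD}\}$; then in $\Phi(\omega)$ the clusters of $\gamma_{AB}$ and $\gamma_{CD}$ merge into a single cluster meeting both $A$ and $C$ inside $S$, so $\Phi(\omega)\in\{A\lr[S] C\}$.

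To apply Lemma~\ref{lem:combi}, we need the bound $|\Phi^{-1}(\omega')|\le 2^{s_0+2}$ uniformly in $\omega'$. Granting this, the number of modified edges being at most $s_0+2$, the lemma yields
\begin{equation*}
\phi_{\mathbb{Z}^2,\beta}[A\lr[S] C]\ \ge\ C_0^{-(s_0+2)}\,\phi_{\mathbb{Z}^2,\beta}\big[\{A\lr[S] B\}\cap\{C\lr[S] D\}\big],
\end{equation*}
and combining with the FKG step gives the claim with $c_{\mathrm{glue}}=C_0^{-(s_0+2)}$.

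The hard part will be verifying the preimage bound, since the set of modified edges depends on the random point $z(\omega)$, which must be recoverable from the output $\omega'$. In the planar nearest-neighbor case this is automatic: the curves $\gamma_{AB}$ and $\gamma_{CD}$ meet at a common vertex that is transparent in the output. For finite-range $J$, one must define the surgery so that $z$ is canonically readable off $\Phi(\omega)$: a workable choice is to take $z$ to be the first geometric crossing of the minimal open continuous paths from $A$ to $B$ and from $C$ to $D$ in $\Phi(\omega)$. The point is that $\gamma_{AB}$ and $\gamma_{CD}$ are altered only inside $\Lambda_{2R}(z)$, so the $\prec$-minimal paths in $\Phi(\omega)$ agree with those of $\omega$ outside this bounded window (provided one chooses the $\prec$-ordering consistently), and the first crossing of the new minimal paths is again $z$ up to an $O(R)$ deterministic offset. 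Once $z$ is pinpointed, the modified edge set $\mathcal{S}(z)\cup\{e_{AB},e_{CD}\}$ is determined, so the only remaining freedom in reconstructing $\omega$ from $\omega'$ lies in the original (binary) states of these $s_0+2$ edges, giving the bound $|\Phi^{-1}(\omega')|\le 2^{s_0+2}$ required by Lemma~\ref{lem:combi}.
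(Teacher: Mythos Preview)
Your outline is close to the paper's in spirit (FKG, then a local surgery quantified by Lemma~\ref{lem:combi}), but the reconstruction step has a genuine gap.

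The claim that ``the $\prec$-minimal paths in $\Phi(\omega)$ agree with those of $\omega$ outside this bounded window'' is not justified, and in general it is false. Your surgery only \emph{opens} edges. Once the $A$- and $C$-clusters merge at $z$, new routes from $A$ to $B$ (and from $C$ to $D$) through the merged cluster become available, and the lex-minimal paths in $\Phi(\omega)$ may diverge from $\gamma_{AB}(\omega)$, $\gamma_{CD}(\omega)$ arbitrarily far from $\Lambda_{2R}(z)$. For instance, the lex-minimal path from $C$ to $D$ in $\Phi(\omega)$ can begin with a prefix lying in the old $C$-cluster that geometrically crosses $\gamma_{AB}(\omega)$ at some point $z''$ long before it ever visits $\Lambda_{2R}(z)$; nothing ties $z''$ to $z$. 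Hence the ``first crossing of the new minimal paths'' need not recover $z$ even up to $O(R)$, and the preimage bound $|\Phi^{-1}(\omega')|\le 2^{s_0+2}$ is not established.

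The paper handles exactly this difficulty by (i) restricting to $E=\{A\lr[S]B\}\cap\{C\lr[S]D\}\cap\{A\lr[S]C\}^c$, (ii) exploring only the \emph{single} minimal path $\Gamma$ from $A$ to $B$ and locating a vertex $z\in\Gamma$ within distance $r$ of the $C$-cluster, and (iii) performing a surgery in $\Lambda_r(z)$ that first \emph{closes} every edge with an endpoint in the box except the three boundary edges reaching $u,v,w$, and then opens a controlled pair of paths $\pi_1,\pi_2$ together with a ``maximal neighbour'' marker $z_2$ of $z_1=z$. The closing step is the key missing ingredient: it isolates the box so that in $\omega'$ one can enter $\Lambda_r(z)$ only through $u$, $v$ or $w$; combined with $\{A\lr[S]C\}^c$ (which rules out reaching $w$ from the $A$-side in $\omega$) and the $z_1/z_2$ device (which forces the minimal path to pass through $z_1$ rather than branch to $z_2$), one gets that $\Gamma(\omega')$ coincides with $\Gamma(\omega)$ outside the box and that $z$ is the \emph{unique} vertex of $\Gamma(\omega')$ connected to $C$ without using edges of $\Gamma(\omega')$. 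Only with this identification does the preimage bound, and hence Lemma~\ref{lem:combi}, apply.
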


\begin{remark}We note that in the simpler case where $G$ is planar, the FKG
  inequality \eqref{eq:FKG} implies that 
\begin{align*} \phi_{\mathbb Z^2,\beta}[A\lr[S] C] &\ge\phi_{\mathbb Z^2,\beta}\left[ \{A\lr[S]B\}\cap\{  C\lr[S] D\}\right]\\
&\ge\phi_{\mathbb Z^2,\beta}\left[ A\lr[S]B\right] \phi_{\mathbb Z^2,\beta}\left[  C\lr[S] D\right], \end{align*} and thus
  \eqref{eq:190} is valid with $c_{\mathrm{glue}}=1$.\end{remark}

For completeness, and because the strategy will be useful in the next theorem, we present an outline of the proof, which can be found in \cite[Section 2.3]{DumSidTas16}. 
\begin{proof} [Proof of Theorem~\ref{thm:gluepaths}] For this proof, we slightly abuse notation and use $\Lambda_{s}(z)$ to denote $\Lambda_{s}(z) \cap S$. We will use $r$ for a large but fixed integer (see below for the constraint that $r$ must satisfy).
Define
\begin{align*}
E&:=\left\{A\lr[S]B\right\}\cap\left\{ C\lr[S]D\right\} \cap \left\{ A\lr[S] C \right\} ^{c},\\
F&:=\left\{ A\lr[S]C \right\}.
\end{align*}
We construct a map
$\Phi : E \rightarrow F,
$
modifying each $\omega$ in at most $\deg(G)|\Lambda_r|$ places such that for any $\omega ^{\prime }$ in the image of $\Phi$, we have $|\Phi^{-1}(\omega')|\le 2^{\deg(G)|\Lambda_r|}$. 

The construction of $\Phi $ is similar to that in \cite{DumSidTas16}, as we now describe. We recommend that the reader take a look at Fig.~\ref{fig:surgery}. Consider $\omega\in E$.   Let $\Gamma(\omega)$ be the minimal (for $\prec$) open path (for the lexicographical ordering) from $A$ to $B$ in $S$.  Consider a vertex $z \in \Gamma(\omega)$  such that $\Lambda_r\left( z\right)$ is connected to $C$ in $S$ by an open path $\pi $, such that $\mathrm{dist}_{\|\cdot\|_\infty}(\pi, \Gamma)\le r$. We define the configuration $\Phi(\omega)$ in three steps:
\begin{enumerate}
\item Define $u$ and $v$
to be respectively the first and last vertices (when going from $A$ to $B$)
of $\Gamma(\omega)$ which are in $\Lambda_{r}\left(
z\right) $. Choose $w$ on the boundary of $
{\Lambda_{r}\left( z\right) }$ in such a way that there exists an open
path $\pi$ (which could be a singleton) from $w$ to $C$. By definition, $w$ is distinct from $u$, $z$
and $v$. Choose two vertices $z_1=z$ and $z_2$ in $\Lambda_r(z)$ that are distinct from $u,v,w$ and such that $z_2$ is maximal among neighbors of $z_1$ (for the ordering $\prec$). 
\item Close all edges of $\omega $ having one endpoint in ${\Lambda_{r}\left( z\right) }$ except the edges of $\Gamma(\omega)$ reaching $u$, $v$ and $ w $.

\item Open a path $\pi_1$ going from $u$ to $v$ in $\Lambda_r(z)$ and passing by $z_1$, and a path $\pi_2$ starting at $z_1$ and going to $w$ by avoiding $\pi_1$ and by going first to $z_2$. Here, one may choose $r$ in such a way that such paths always exist. This is the only constraint on $r$ we will assume, in particular, $r$ will be independent of everything else.
 \end{enumerate}
Set $\Phi(\omega)$ to be the resulting configuration.
By construction, $\Phi(\omega)\in F$ and $\omega$ was modified in only $\deg(G)|\Lambda_r|$ places. 

Now, given
$\omega ^{\prime }$ in the image of $\Phi$, by the same argument as in
\cite{DumSidTas16}, $z$ is the only site in the new minimal path $\Gamma (\omega')$
(from $A$ to $B$) that is connected to $C$ without using any edge in $\Gamma
(\omega')$. The only non-trivial claim of the previous sentence is that $z$ belongs to $\Gamma(\omega')$. One can see that as follows: since $C$ is not connected to  $A$ in $S$, the path $\Gamma(\omega')$ agrees
with $\Gamma(\omega)$ up to $u$. Then, because $z_2$ is maximal among neighbors of $z_1$, $\Gamma(\omega')$ cannot go through $z_2$ and must go to $v$, and then
$\Gamma(\omega')$ must agree with $\Gamma(\omega)$ from $v$ to the end. Therefore, $\omega
^{\prime }$ must be equal to $\Phi(\omega)$ for some $z$ that can be uniquely
determined by $\omega^{\prime }$. Thus, a configuration in $\Phi^{-1}(\omega')$ must agree with $\omega'$ except possibly at  the edges intersecting  $\Lambda_r( z)$, where $z$ is determined from $\omega'$. 

This shows that
$\Phi $ satisfies the conditions of Lemma~\ref{lem:combi}, with
$s=\mathrm{deg}(G)\vert\Lambda_{r}\vert $. The claim follows trivially by setting $c_{\rm glue}:=C_0^{\mathrm{deg}(G)|\Lambda_r|}$.
\end{proof}

\begin{remark}
The constant $r$ must be chosen large enough that the rewiring works. The same feature will arise in the next proof.
\end{remark}

\begin{remark}\label{rmk:h}
The above strategy  clearly extends to the quantities 
$B_n$ and $F_n(k)$ defined in Sections~\ref{sec:P2} and \ref{sec:P3},  
and leads to a proof of the bound
\begin{equation}
\label{eq:190a}
       \phi_{\mathbb Z^2,\beta}[B_n] \ge c_{\rm glue}\,\phi_{\bbZ^2,\beta}[F_n(k)]\phi_{\bbZ^2,\beta}[F_n(k)'],
       \end{equation}
       where $F_n(k)'$ is the symmetric of $F_n(k)$ with respect to the $x$-axis. Indeed, simply consider the minimal path from $\{-n\}\times\bbR_-$ to $\{n\}\times\bbR_-$ passing above $(k,0)$, and glue it to a path going from $\{-n\}\times\bbR_+$ to $\{n\}\times\bbR_+$ and passing below $(k,0)$. The reconstruction works as above.
\end{remark}

The previous claim could have been directly extracted from \cite{DumSidTas16} and there was no need to include the proof here. The reason why we did so is that we will need a slightly more tedious ``gluing lemma'' enabling us to create circuits, and that the proof of the previous statement is somehow a baby example of what we do for the next one.

Let $C_n$ be the event that there exists a circuit  surrounding $0$ in a connected component that is connected to $\partial\Lambda_n$. 

\begin{theorem}
\label{thm:glueCirc}
  There exists a constant $c_{\rm glue}>0$ such that for every $n\ge1$, we have
  \begin{equation*}
    \phi_{\mathbb Z^2,\beta}[C_n]\ge c_{\rm glue}\cdot\phi_{\mathbb Z^2,\beta}[\Lambda_n \text{ is crossed horizontally}]^{20}. 
  \end{equation*}
\end{theorem}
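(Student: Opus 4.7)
The plan is to combine the surgery argument used to prove Theorem~\ref{thm:gluepaths} with an iterated application of that theorem, so as to build a circuit around $0$ within $\Lambda_n$ out of four crossings (one horizontal above $0$, one horizontal below, one vertical on each side). In the planar (nearest-neighbor) case, such four crossings automatically produce a circuit (Figure~\ref{fig:A7}); for finite-range interactions, they can leap over each other, and the task is to force the four actual topological intersections at the four ``corners'' of the circuit by local rewirings.

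As a first step, I would exploit the horizontal reflection symmetry of $\phi_{\mathbb Z^2,\beta}$: splitting a horizontal crossing of $\Lambda_n$ into crossings passing above vs.\ below $0$ gives $\phi_{\mathbb Z^2,\beta}[H_n^+] \ge \tfrac{1}{2} \phi_{\mathbb Z^2,\beta}[\Lambda_n\text{ is crossed horizontally}]$, and analogously for $H_n^-$, $V_n^L$, $V_n^R$. By FKG~\eqref{eq:FKG}, all four crossings coexist with probability at least a constant multiple of $\phi_{\mathbb Z^2,\beta}[\Lambda_n\text{ crossed}]^4$. This is the configuration which, in the planar setting, guarantees $C_n$.

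The next step is to upgrade ``four such crossings exist'' into an actual circuit. I would apply Theorem~\ref{thm:gluepaths} one corner at a time: at (say) the top-right corner of $\Lambda_n$, choose a small rectangle $S$ around the corner and take $A,B,C,D \subset \partial S$ so that any path from $A$ to $B$ in $S$ is a substitute for the horizontal-above crossing near the corner and any path from $C$ to $D$ in $S$ is a substitute for the vertical-right crossing, with the property that the two families must intersect in $S$. The gluing theorem then produces an honest intersection near that corner, losing only a factor $c_{\mathrm{glue}}$; the ``input'' crossing probabilities in each side of the gluing are lower-bounded, by symmetry and possibly further FKG+gluing, by a bounded power of $\phi[\Lambda_n \text{ crossed}]$. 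Iterating for the four corners (together with additional crossings produced to feed into each gluing, which is where further factors of $\phi[\Lambda_n \text{ crossed}]$ accumulate) yields, via a final FKG step, the simultaneous existence of four quadrant-arcs which are pairwise connected at the corners. Their union contains a circuit surrounding $0$, and since each arc extends to a side of $\Lambda_n$, the circuit lies in a connected component that touches $\partial\Lambda_n$, proving $C_n$.

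The main obstacle is combinatorial bookkeeping: one must arrange the four corner gluings so that the hypotheses of Theorem~\ref{thm:gluepaths} are satisfied simultaneously (each pair of families forced to intersect in its own small rectangle $S_i$), and so that the rewirings at the different corners do not destroy each other, a small issue that can be handled by choosing the $S_i$ disjoint and by bookkeeping the total number of modified edges. The exponent $20$ is not tight but is a convenient upper bound: counting at most $5$ basic crossings entering each of the $4$ corners (through nested applications of FKG and Theorem~\ref{thm:gluepaths}) gives $4\times 5 = 20$. Any finite exponent suffices for the application in Lemma~\ref{lem:P1}, so no effort is made to optimize $20$.
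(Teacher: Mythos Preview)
Your plan has a genuine gap: iterating Theorem~\ref{thm:gluepaths} does not produce a circuit, because the reconstruction step underlying Lemma~\ref{lem:combi} breaks down once a circuit is closed. In the proof of Theorem~\ref{thm:gluepaths}, the surgery site $z$ is recoverable from $\omega'$ as the unique vertex on the minimal path $\Gamma(\omega')$ from $A$ to $B$ that is connected to $C$ off that path---uniqueness relying on $A$ and $C$ remaining disconnected outside the modified box. When a surgery closes a circuit, the glued component wraps around on itself and every vertex on it is connected to every other, so this identification fails and the preimage count in Lemma~\ref{lem:combi} is lost. The paper names this the ``Second difficulty'' and resolves it by anchoring the surgery not on a horizontal crossing but on the minimal path $\Gamma(\omega)$ from the innermost boundary-connected vertex $x_0$ to $\partial\Lambda_n$; this radial path is robust under the circuit-closing modification and allows one to locate the one or two surgery boxes from $\omega'$. (Your ``small rectangle around the corner of $\Lambda_n$'' also fails for a more elementary reason: a horizontal crossing passing above $0$ need not come anywhere near $(n,n)$---it could stay in $[-n,n]\times[1,2]$---so there is no fixed box $S$ in which the hypotheses of Theorem~\ref{thm:gluepaths} are met.)

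The paper also handles a subtlety you pass over (its ``First difficulty''): in the non-planar setting a horizontal crossing need not simply go above or below $0$; it may wind nontrivially around the origin. The paper defines a winding number and splits into events $E_0,E_1,E_{\ge2}$. A single winding-$\ge2$ crossing already nearly contains a circuit and is closed by one or two surgeries along the radial path $\Gamma$; four winding-$1$ crossings combined by FKG play the role of your four quadrant crossings; and winding $0$ is upgraded to winding $1$ by five successive glues of the Theorem~\ref{thm:gluepaths} type. The exponent $20=4\times5$ records the worst chain $E_0\to E_1\to C_n$, not a count of five crossings per corner.
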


\medskip

\begin{remark}The theorem above is straightforward when the graph $G$ is planar, as seen in the proof of Lemma~\ref{lem:P1} (we simply combined four crossings of $\Lambda_n$ together to create a circuit).
When the graph $G$ is not planar, we wish to apply the same strategy and use a gluing procedure similar to the proof of Theorem \ref{thm:gluepaths} in order to create a circuit out of four well-chosen square crossings. This approach is more complicated than it may look in first thought, due to two difficulties:
\begin{description}
\item[First difficulty] It is not  true that a left-right continuous path either goes above or below the origin: it may have a more complicated topology and wind non trivially (in a topological sense) around the origin (see Fig.~\ref{fig:A1}, \ref{fig:A2} and \ref{fig:A3}). In the proof, we will define a formal notion of winding around 0 for a horizontal crossing of $\Lambda$ and divide into cases depending on the value of a typical winding.
\item[Second difficulty]Another difficulty arises when gluing paths in order to create a circuit. Due to the fact that the topology of the ``glued'' connected component changes when a circuit is created, the reconstruction step consisting in identifying the vertex $z$ is much harder than in Theorem~\ref{thm:gluepaths}. This difficulty explains why the proof below is more technical than the proof of Theorem~\ref{thm:gluepaths}: we will not be able to glue directly horizontal crossings together, and will use an intermediate path $\Gamma$ that crosses the annulus $\Lambda_n\setminus\Lambda_m$ from inside to outside (where $\Lambda_m$ is the smallest box centered at $0$ connected to $\partial \Lambda_n$) to be able to localize the places where we modified the configuration.\end{description}
\end{remark}
\begin{figure}[htbp]
  \centering
  \begin{minipage}[t]{.31\linewidth} \centering
    \includegraphics[width=.7\textwidth]{pathBelow}
    \caption{A left-right path with winding 0 passing below the origin.}\label{fig:A1}
  \end{minipage}
\hfill
\begin{minipage}[t]{.31\linewidth} \centering
    \includegraphics[width=.7\textwidth]{pathWinding1}
    \caption{A left-right path with winding 1 going counterclockwise around 0.}\label{fig:A2}
  \end{minipage}
\hfill
\begin{minipage}[t]{.31\linewidth} \centering
    \includegraphics[width=.7\textwidth]{pathWinding2}
    \caption{A left-right path with winding 2.}\label{fig:A3}
  \end{minipage}
\end{figure}

\begin{proof}[Proof of Theorem~\ref{thm:glueCirc}]

Set $\Lambda_n^\bullet:= \Lambda_n\setminus\{0\}$ for the punctured box. 

We consider the event $C_n$ that there exists an open circuit in $\Lambda_n^\bullet$ that surrounds $\{0\}$ and is connected to  the external boundary of $\Lambda_n^\bullet$. We also use the denomination $\Lambda_n^\bullet$ is horizontally crossed to denote the event that $\Lambda_n$ is horizontally crossed by a path which does not hit $0$. We wish to prove  that there exists a constant $c>0$ such that    
  \begin{equation}
\label{eq:28}
    \phi_{\mathbb Z^2,\beta}[C_n] \ge c\cdot  \phi_{\mathbb Z^2,\beta}[\Lambda_n^\bullet\text{ is horizontally crossed}]^{20}.  
  \end{equation}
This immediately concludes the proof since the probability of $\Lambda_n$ being crossed by a path staying in $\Lambda_n^\bullet$ can easily be compared to the probability of $\Lambda_n$ being crossed using the finite-energy \cite[(3.4)]{Gri06} (we leave it as an easy exercise to the reader).

We now define the winding of a left-to-right continuous path\footnote{Formally, we should restrict our attention to a left-to-right continuous path that intersects itself finitely many times. One can easily alter the definition of the continuous path associated with a path $\pi$ of vertices in such a way that this property is trivially satisfied (simply modify the path from $v_i$ to $v_{i+1}$ locally).} $\gamma$ in $\Lambda_n^\bullet$. We say that $\gamma$ has \emph{winding $0$} if it is homotopic among left-to-right continuous paths in $\Lambda_n^\bullet$ to a simple left-to-right continuous path included in $\mathbb R^2\times\{0\}$ (a simple path is a path with no double point). For $k\ge1$, we say that $\gamma$ has \emph{winding $k$} if it is homotopic to a left-right path in $\mathbb R^2\times\{0\}$ with $k$ double points and it is not homotopic to a left-to-right path in $\mathbb R^2\times\{0\}$ with $k-1$ double points (see Fig. \ref{fig:A1}, \ref{fig:A2} and \ref{fig:A3} for paths with different windings).

 Let $E_0$ (resp.\@ $E_1$, $E_{\ge 2}$) be the event that there exists a horizontal crossing of $\Lambda_n^\bullet$ with winding $0$ (resp.\@ $1$, at least $2$). The union bound ensures that at least one of the three events $E_0$, $E_1$, $E_{\ge 2}$ occurs with probability larger than $1/3$ of the probability that $\Lambda_n^\bullet$ is crossed horizontally.
Therefore,  \eqref{eq:28} follows from the following three inequalities:
\begin{align}
  \label{eq:34} &\phi_{\mathbb Z^2,\beta}[C_n]\ge c\, \phi_{\mathbb Z^2,\beta}[E_{\ge2}],\\
  \label{eq:35} &\phi_{\mathbb Z^2,\beta}[C_n]\ge c\, \phi_{\mathbb Z^2,\beta}[E_1]^4,\\
  \label{eq:44} &\phi_{\mathbb Z^2,\beta}[E_1]\ge c\, \phi_{\mathbb Z^2,\beta}[E_0]^5,
\end{align}
where $c=c(J,\beta)>0$ denotes a positive constant that is  independent of $n$. 
     
The main difficulty is the proof of \eqref{eq:34}. The inequality \eqref{eq:35} can be proved similarly. Also, \eqref{eq:44} is simply a successive application (five times) of a ``gluing lemma'' very similar to Theorem~\ref{thm:gluepaths}. Since this case is simpler than the other ones, and not conceptually harder than the proof of Theorem~\ref{thm:gluepaths}, we omit the proof here and focus on the proofs of \eqref{eq:34} and \eqref{eq:35}.

\paragraph{Proof of \eqref{eq:34}.}  We will use the combinatorial statement of Lemma~\ref{lem:combi}. Once again, fix $r$ large enough (but not depending on $n$).
Define a mapping  $\Phi:E_{\ge2}\setminus C_n\to C_n$ as follows.

Consider $\omega\in E_{\ge2}\setminus C_n$. Let $x_0$ be the minimum point (for the ordering $\prec$) connected to $\partial\Lambda_n$. Notice that the ordering $\prec$ is such that $x_0$ is chosen among the closest points to $0$ connected (in $\omega$) to $\partial\Lambda_n$. Let $\Gamma(\omega)$ be the minimal path in $\omega$ from $x_0$ to $\partial\Lambda_n$. To define  $\Phi(\omega)$, we distinguish between two cases:

\begin{figure}[htbp]
  \centering
  \includegraphics[width=1.00\linewidth]{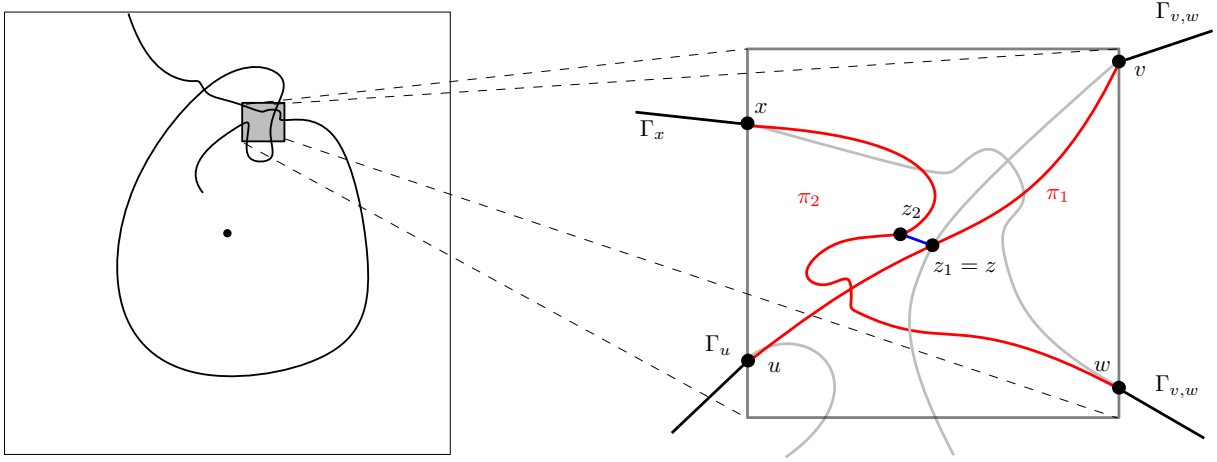}
   \caption{A picture of the topological configuration near the point $z$ in Case 1. }
  \label{fig:A10}
\end{figure}

\paragraph{Case 1:} The path $\Gamma(\omega)$ contains a circuit surrounding $0$. Thus, we can pick $z\in\Gamma(\omega)$ and four distinct vertices $u$, $v$, $w$ and $x$ in 
$\partial\Lambda_r(z)$ such that
  \begin{itemize}
  \item $u$ and $x$ are respectively the first and last vertices (when going from $x_0$ to the boundary of $\Lambda_n^\bullet$) of $\Gamma(\omega)$ which are in $\Lambda_r(z)$. We define $\Gamma_u$ and $\Gamma_x$ to be respectively the parts of $\Gamma(\omega)$ that connect $x_0$ to $u$, and $x$ to $\partial\Lambda_n$.
  \item there exists a subpath $\Gamma_{v,w}$ of $\Gamma(\omega)\setminus\Lambda_r(z)$ from $v$ to $w$ such that $\Gamma_{v,w}\cup[v,w]$ is a circuit of $\Lambda_n^\bullet$ surrounding $0$. Here, we do not require the segment $[v,w]$ to be an edge of $G$, therefore the circuit $\Gamma_{v,w}\cup[v,w]$ does not a priori correspond to a circuit of $G$.
  \end{itemize}
   These vertices are illustrated on Fig.~\ref{fig:A10}.
   We now construct a configuration by performing a local modification in $\Lambda_r(z)$. Choose $z_1=z$ and $z_2$ in $\Lambda_r(z)$ in such a way that 
   \begin{itemize}
   \item $z_2$ is the maximal neighbour of $z_1$ for $\prec$,
   \item there exist two disjoint paths $\pi_1$, $\pi_2$ of $G$ connecting respectively $u$ to $v$ and $w$ to $x$ inside $\Lambda_r(z)$, and such that $z_1$ belongs to $\pi_1$ and $z_2$ belongs to $\pi_2$. We further assume that $\pi_1$ and $\pi_2$ do not contain vertices smaller than $x_0$ for the ordering $\prec$.
   \end{itemize}
   (The existence of $z_1$, $z_2$ and $\pi_1,\pi_2$ is guaranteed by choosing $r$ large enough.) Then, close all the edges intersecting $\Lambda_r(z)$ and, finally, open the edges in $\pi_1$, $\pi_2$ as well as the edge from $z_1$ to $z_2$. We set $\Phi(\omega)$ to be the resulting configuration. One can check that $\omega'\in \Phi(\omega)$ contains a circuit.

\begin{figure}[htbp]
  \centering
  \includegraphics[width=0.40\linewidth]{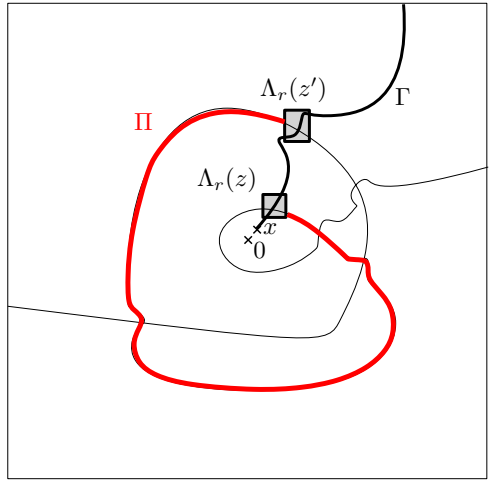}
   \caption{A picture of the topological configuration and the points $z$ and $z'$ in Case 2. }
  \label{fig:A5}
\end{figure}

\paragraph{Case 2:}
The path $\Gamma(\omega)$ does not contain a circuit surrounding $0$. Since $\mathcal E_{\ge 2}$ occurs, there exists an open path $\pi$ in $\bbR^2\setminus \Gamma(\omega)$ such that $(\Gamma(\omega)+\Lambda_r)\cup\pi$ contains a circuit around 0, and the vertices in $\pi$ are connected to $\partial\Lambda_n$; see Fig.~\ref{fig:A5} \footnote{The existence of such a path can be obtained by observing that this property is preserved by continuous deformation of paths, and by drawing a generic example.}. We define $\Pi$ to be the minimal such path for the lexicographical ordering. Now, let $z$ and $z'$ be in $\Gamma$ so that $\Pi$ starts on $\Lambda_r(z)$ and ends on $\Lambda_r(z')$. Choose $u$ and $v$ (resp.~$u'$ and $v'$) to be the first and last visits of $\Gamma(\omega)$ to $\Lambda_r(z)$  and $\Lambda_r(z')$ respectively. We also set $w$ and $w'$ to be the beginning and end of $\Pi$. 

  Construct a configuration by performing a local modification in $\Lambda_{r}(z)$ and $\Lambda_{r}(z'$) very similar to the one made in Theorem~\ref{thm:gluepaths}. Choose $z_1$ and $z_2$ in $\Lambda_r(z)$ in such a way that 
   \begin{itemize}
   \item $z_2$ is the maximal neighbour of $z_1$,
   \item there exist two disjoint paths $\pi_1$, $\pi_2$ of $G$ connecting respectively $u$ to $v$ and $z_1$ to $w$ inside $\Lambda_r(z)$, and such that $z_1$ belongs to $\pi_1$ and $z_2$ is the first step of $\pi_2$. We further assume that $\pi_1$ and $\pi_2$ do not contain vertices smaller than $x_0$ for the ordering $\prec$.
   \end{itemize}
   Now, close all the edges with one endpoint in $\Lambda_r(z)$ which are not the edges of $\Gamma(\omega)$ and $\Pi$ respectively arriving at $u$, $v$ and $w$. Then, open the paths $\pi_1$ and $\pi_2$.
   
   One can perform a similar modification in $\Lambda_r(z')$. We set $\Phi(\omega)$ for the resulting configuration. Note that $\omega'=\Phi(\omega)$ contains a circuit.

\paragraph{Reconstruction of the pre-images.}
Let $\omega'$ be in the image of $\Phi$. We wish to prove that 
\begin{equation}\label{eq:45}
|\Phi^{-1}(\omega')|\le 2^{(\mathrm{deg}(G)|\Lambda_r|)^2}.
\end{equation}
First, $x_0$ can be reconstructed as the minimal (for $\prec$) vertex of $\omega'$ connected to $\partial\Lambda_n$, since the modifications were made in such a way that no vertex smaller than $x_0$ was connected to the boundary by the procedure (notice that the fact that $\Pi$ was made of vertices connected to $\partial\Lambda_n$ guarantees that none of these vertices is smaller than $x_0$). Also, the local modifications were made in such a way that $\Gamma(\omega')$ must coincide with $\Gamma(\omega)$ except in possibly one box of size $r$ (if we are in Case 1) or two boxes of size $2r$ (if we are in Case 2). Hence, the modifications are such that whether $\Gamma(\omega')$ contains a circuit surrounding $0$ enables us to decide whether $\omega'$ was constructed from $\omega$ from Case 1 or Case 2. 
Now,
\begin{itemize}
\item In Case 1, the edge from $z_1$ to $z_2$ can be reconstructed since it is the only open edge that does not belong to $\Gamma(\omega')$ and that is pivotal for the existence of a circuit around $0$ (closing this edge destroys the circuit).

\item In Case 2, the path $\Pi$ can be reconstructed as the minimal path in the complement of $\Gamma(\omega')$ such that $(\Gamma(\omega')+\Lambda_r) \cup \Pi$ contains a circuit around 0, since nothing was modified outside $\Gamma$ except possibly near $z$ and $z'$, but that the path $\Pi$ was the only path not destroyed by the procedure of closing edges having one endpoint in the boxes around $z$ and $z'$.
\end{itemize}
This finishes the proof of \eqref{eq:34}. 

\paragraph{Proof of \eqref{eq:35}.} A horizontal crossing in $\Lambda_n^\bullet$ with winding 1 can either go clockwise or counterclockwise around $0$ (see Fig.~\ref{fig:A2}).  By symmetry, there exists a horizontal crossing with winding 1 that goes counterclockwise around 0 with probability larger than $\phi_{\mathbb Z^2,\beta}[E_1]/2$. Let $F$ be the event that there exist in $\Lambda_n^\bullet$, 
\begin{itemize}[noitemsep,nolistsep]
\item a horizontal crossing with winding 1 that goes clockwise around $0$,
\item a horizontal crossing with winding 1 that goes counterclockwise around $0$,
\item a vertical crossing with winding 1 that goes clockwise around $0$, and
\item a vertical crossing with winding 1 that goes counterclockwise around $0$.
\end{itemize}
By the FKG inequality \eqref{eq:FKG}, we have
$$\phi_{\bbZ^2,\beta}[F]\ge(\phi_{\mathbb Z^2,\beta}[E_1]/2)^4.$$ Now, the event $F$ satisfies a property similar to the event $E_{\ge 2}$: for any deterministic path $\gamma$ from 0 to the boundary of $\Lambda_n^\bullet$, there exists an open path $\pi$ in $\gamma$ such that $\pi\cup \gamma$ contains a circuit.
Based on this property, we can prove 
\begin{equation}
  \phi_{\mathbb Z^2,\beta}[C_n]\ge c\,\phi_{\mathbb Z^2,\beta}[F]
\end{equation}
exactly as we proved \eqref{eq:34} (simply replace any instance of $E_{\ge2}$ by $F$ in the proof above). This concludes the proof of \eqref{eq:35}.
\end{proof}

\begin{remark}
  The proof of \eqref{eq:28} can be adapted in a straightforward way to show that for every fixed $k$ and every $n$ large enough,
\begin{equation}
\label{eq:3}
    \phi_{\mathbb Z^2,\beta}[C_{k,n}] \ge c_{\rm glue}\,\phi_{\mathbb Z^2,\beta}[\exists \textrm{ continuous path from left to right in $\Lambda_n\setminus\Lambda_k$}]^{20},
  \end{equation}    
  where $C_{k,n}$ denotes the event that there exists an open circuit in $\Lambda_n\setminus\Lambda_k$ that surrounds 0 and is connected to $\partial\Lambda_n$.\end{remark}

\paragraph{Acknowledgements} This work was supported in parts by the NSF grants  PHY-1305472, DMS-1613296, the NCCR SwissMAP, the IDEX grant from Paris-Saclay, and a Princeton University/University of Geneva partnership fund.

\end{document}